\newcommand{\R}{\mathbb{R}}                  
\newcommand{\N}{\mathbb{N}}
\newcommand{\Z}{\mathbb{Z}}
\newcommand{\norm}[1]{\lVert#1\rVert}      
\newcommand{\Id}{I}
\def\A{\mathcal{A}}
\def\K{\mathcal{K}}
\def\wconv{\rightharpoonup}
\def\usquare{u^{\scaleto{\square}{3pt}}}
\def\sPi{\scaleto{\Pi}{4pt}}
\newcommand{\olda}{b}
\newcommand{\eort}{\mathsf{e}_\perp}
\newcommand{\er}{\mathsf{e}_r}
\newcommand{\calpha}{c_\alpha}
\def\Cquad{C_0}
\def\unitsphere{\mathbb{S}^2}
\def\minloop{u_I^*}
\def\minloopgen{u_I^*}
\def\npm{\mathsf{n}^\pm}
\def\ord_h{\mathfrak{o}_h}
\theoremstyle{plain}
\newtheorem{theorem}{Theorem}[section]
\newtheorem{conjecture}{Conjecture}[section]
\newtheorem{lemma}{Lemma}[section]
\newtheorem{proposition}{Proposition}[section]
\newtheorem{corollary}{Corollary}[section]
\theoremstyle{definition}
\newtheorem{definition}{Definition}[section]
\theoremstyle{remark}
\newtheorem{remark}{Remark}[section]
\numberwithin{equation}{section}
\title{\bf Symmetric constellations of satellites moving around a central body of large mass}
\author[1,2]{M. Fenucci\thanks{email: \texttt{fenucci@mail.dm.unipi.it}}}
\author[1]{G. F. Gronchi}
\affil[1]{Dipartimento di Matematica, Università di Pisa, Largo
  B. Pontecorvo 5, 56127 Pisa, Italy}
\affil[2]{Department of Astronomy, Faculty of Mathematics, University
  of Belgrade, Studentski trg 16, 11000 Belgrade, Serbia}
\begin{document}
\maketitle
\begin{abstract}
   We consider a $(1+N)$-body problem in which one particle has mass
   $m_0 \gg 1$ and the remaining $N$ have unitary mass.  We can
   assume that the body with larger mass (central body) is at rest
   at the origin, coinciding with the center of mass of the $N$ bodies
   with smaller masses (satellites).  The interaction force between
   two particles is defined through a potential of the form
   \[
      U \sim \frac{1}{r^\alpha}, 
   \]
   where $\alpha \in [1,2)$ and $r$ is the distance between the
     particles.  Imposing symmetry and topological constraints, we
     search for periodic orbits of this system by variational
     methods. Moreover, we use $\Gamma$-convergence theory to study
     the asymptotic behaviour of these orbits, as the mass of the
     central body increases.  It turns out that the Lagrangian action
     functional $\Gamma$-converges to the action functional of a
     Kepler problem, defined on a suitable set of loops. In some cases, minimizers of
     the $\Gamma$-limit problem can be easily found, and they are
     useful to understand the motion of the satellites for large
     values of $m_0$. We discuss some examples, where the symmetry is
     defined by an action of the groups $\Z_4$ , $\Z_2 \times \Z_2$
     and the rotation groups of Platonic polyhedra on the set of
     loops.

\vskip0.2truecm
\noindent
\textbf{AMS Subject Classification:} 70F10, 34C25, 37N05, 37J50

\vskip0.1truecm
\noindent
\textbf{Keywords:} N-body problem, periodic solutions, choreographies,
$\Gamma$-convergence, variational methods
\end{abstract}
\tableofcontents
\section{Introduction}
In the last two decades several new periodic solutions of the
Newtonian $N$-body problem have been found by the direct methods of
Calculus of Variations, see for instance \cite{hiphop, simo2000,
  simo:newfamilies, chen03, FT2004, ch05, barrabes2006, terrvent07,
  fgn10}.  In fact an intensive search of such solutions followed
\cite{CM2000}, where the authors proved the existence of a new
periodic solution of the $3$-body problem, with three equal masses
following the same eight-shaped path, by minimizing the Lagrangian action
over a fundamental domain of a set of symmetric loops.  Indeed the use
of variational methods to search for periodic solutions of the
$3$-body problem was already proposed by Poincar\'e
in \cite{poincare1896, poincare1897}, where he noticed that collision
solutions have a finite value of the action, see also \cite{saariBook,
  wintner1941}.

The obstructions to the variational approach are essentially
two.
The first is that the Lagrangian action functional $\A$ is
not coercive on the whole Sobolev space of $T$-periodic loops
$H^1_T(\R, \R^{3N})$, which is the natural domain for
$\A$.  We can gain the coercivity
by restricting the domain of the
action, imposing symmetry or/and topological constraints on the admissible loops,
see for instance \cite{bcz91, dgg89, gordon77}.
The second problem is the possible presence of collisions in the minimizers.
Since we are interested in classical periodic solutions, we have to exclude them.
This can be done by level estimates and local perturbations
\cite{marchal01, ch02, chen03}.


In this paper we consider a $(1+N)$-body problem, composed by a
particle of mass $m_0 \gg 1$ (central body) and $N$ particles of equal
mass $m=1$ (satellites). The central body is at rest at the center of
mass of the whole system.  The $(1+N)$-body problem was already 
considered by Maxwell \cite{maxwell1859} to study the dynamical
structure of Saturn's rings.
Here we assume that the
interaction force between two particles is defined by a
potential of the form
\[
U \sim \frac{1}{r^\alpha}, 
\]
where $\alpha \in [1,2)$ and $r$ is the distance between the
  particles.  Imposing symmetry and topological constraints on the
  possible configurations of the satellites we find
  periodic orbits as minimizers of the Lagrangian action functional for each value of $m_0$ in a diverging sequence. Moreover,
  using $\Gamma$-convergence theory
  \cite{dalmaso1993, braides2002gamma}, we study the asymptotic
  behavior of the related sequence of minimizers.
  After
  a suitable rescaling, it turns out that the
  $\Gamma$-limit functional is the functional of a Kepler problem, defined on a
  set which may not contain planar loops, depending on the symmetry
  and topological constraints that we impose.
%
  We shall show some examples, with different symmetry constraints. In particular, we
  shall consider symmetries defined by the group $\Z_4$ (leading to
  the Hip-Hop solution \cite{hiphop}), by $\Z_2 \times \Z_2$, and by
  the rotation groups of Platonic polyhedra (used for instance in
  \cite{ferrario07, fgn10, fgfg14, FG18, FJ19}).
  $\Gamma$-convergence was already applied to the $N$-body problem in
  \cite{fgfg14}, where the authors considered the exponent $\alpha$ of
  the potential as a parameter, and studied the behavior of the
  minimizers as $\alpha\to +\infty$.
  
The paper is structured as follows.  In Section \ref{s:gammaConv} we
recall the definition of $\Gamma$-convergence and the results needed
for our purpose.  In Section \ref{s:Np1} we introduce the $(1+N)$-body
problem with symmetries and prove the $\Gamma$-convergence of the
Lagrangian action to the functional of a Kepler problem.
In Sections \ref{s:hiphopGamma}, \ref{s:kleinGamma}, and
\ref{s:platoGamma} we consider respectively the symmetry defined by
$\Z_4$, $\Z_2 \times \Z_2$, and the symmetry of a Platonic polyhedron.
In all the considered cases, we prove the existence of sequences of
collision-free minimizers, depending on $m_0$, and study the
minimizers of the corresponding $\Gamma$-limit problem.

\section{Definition and properties of $\Gamma$-convergence}
\label{s:gammaConv}
Many mechanical systems appearing in different branches of applied
mathematics depend on a parameter. As this parameter varies, sometimes it is
possible to imagine a certain limit behavior.
Studying such systems with variational techniques, we often deal with
a family of minimum problems
\[
   \min\{ \A_\varepsilon(u): u \in X\},
\]
with $\varepsilon>0$, where $X$ is a set
endowed with a notion of convergence. 
$\Gamma$-convergence theory can be used to describe the asymptotic
behavior of this family by means of a limit problem: this theory was
introduced by De Giorgi in the mid 70s in a series of papers, see
for example \cite{deGiorgi75, deGiorgi77, deGiorgi75_I}.

In the literature we can find many equivalent definitions of
$\Gamma$-convergence, as reported for example in
\cite{dalmaso1993}. Here we state the definition given in
\cite{braides2002gamma}, and list the main properties that we are
going to use in this paper.
\begin{definition}
   A sequence $\A_j : X \to \overline{\R}, \, j \in \N$ of functionals
   \textit{$\Gamma$-converges} in $X$ to a functional $\A_\infty: X
   \to \overline{\R}$ if for all $u \in X$ we have
   \begin{itemize}
      \item[(i)]($\liminf$ inequality) for every sequence $\{u_j\}_{j\in\N}$ converging to $u$ in $X$
         \begin{equation}
            \A_\infty(u) \leq \liminf_{j\to\infty}\A_j(u_j);
            \label{eq:limInfIn}
         \end{equation}
      \item[(ii)]({$\limsup$} inequality) there exists a sequence $\{u_j\}_{j\in\N}$ converging
         to $u$ in $X$ such that
         \begin{equation}
            \A_\infty(u) \geq \limsup_{j\to \infty}\A_j(u_j).
            \label{eq:limSupIn}
         \end{equation}
         This is usually called \textit{recovery sequence}.
   \end{itemize}
   We say that the functional $\A_\infty$ is the $\Gamma$-limit of $\{\A_j\}_{j \in \N}$, and write
   \[
   \Gamma\text{-}\lim_{j\to\infty} \A_j = \A_\infty\].
   \label{def:gammaLim}
\end{definition}

\begin{definition}
   Given a family of functionals $\A_\varepsilon:X \to \overline{\R}$
   depending on a real parameter $\varepsilon >0$, that can attain all
   the values in a right interval of $0$, we say that $\{
   \A_\varepsilon \}_{\varepsilon>0}$ $\Gamma$-converges to $\A_0$ if
   for all sequences $\{ \varepsilon_j \}_{j\in\N}\subseteq\R$
   converging to $0$ we have $\Gamma$-$\lim_{j\to\infty}
   \A_{\varepsilon_j} = \A_0$. If this is the case, we write
   $\Gamma$-$\lim_{\varepsilon\to0 }\A_\varepsilon = \A_0$.
\end{definition}

\begin{definition}
   We say that a sequence $\A_j : X \to
   \overline{\R}, \, j \in \N$ is \textit{equi-coercive} if
   there exists a compact set $K \subseteq X$ such that
   \[
      \inf_X \A_j = \inf_K \A_j,
   \]
   for every $j \in \N$.
   \label{def:equicoercive}
\end{definition}
Given an equi-coercive sequence of functionals, $\Gamma$-convergence
can be used to study the asymptotic behavior of sequences of
minimizers of such functionals.

\begin{theorem}[Convergence of minima and minimizers]
  Assume that 
  \[
  \Gamma\text{-}\lim_{j\to\infty} \A_j = \A_\infty,
  \]
  and the sequence $\{ \A_j \}_{j\in\N}$ is equi-coercive, and let $K$ be the compact set of
  Definition \ref{def:equicoercive}. Then
  \begin{enumerate}
  \item[(i)] $\A_\infty$ has a minimum in $X$;
  \item[(ii)] the sequence of infimum values converges to the minimum of $\A_\infty$, i.e.
    \begin{equation}
      \lim_{j\to \infty} \inf_X \A_j = \min_X \A_\infty;
      \label{eq:minConvTeo}
    \end{equation}
  \item[(iii)] if $\{ u_j \}_{j \in \N} \subset K$ is a sequence such that
    \[
       \lim_{j\to\infty}  \A_j(u_j) = \lim_{j\to\infty} \inf_{X}\A_j,
    \]
    then for every limit point $u_\infty\in X$ of $\{ u_{j} \}_{j \in \N}$ we have
    \[
    \A_\infty(u_\infty) = \min_X \A_\infty.
    \]
  \end{enumerate}
  \label{th:convMin}
\end{theorem}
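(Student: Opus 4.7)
The plan is to establish two matching inequalities between $\inf_X \A_\infty$ and the $\liminf$/$\limsup$ of $\inf_X \A_j$, from which (i), (ii), and (iii) will follow in sequence. The two ingredients from the hypothesis play distinct roles: the $\limsup$ inequality \eqref{eq:limSupIn} will produce the upper bound on $\limsup_j \inf_X \A_j$, while the $\liminf$ inequality \eqref{eq:limInfIn} combined with equi-coercivity will produce the matching lower bound.

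First I would prove $\limsup_j \inf_X \A_j \leq \inf_X \A_\infty$. Fix any $u \in X$ and invoke \eqref{eq:limSupIn} to obtain a recovery sequence $\{u_j\} \subset X$ converging to $u$ with $\limsup_j \A_j(u_j) \leq \A_\infty(u)$. Since $\inf_X \A_j \leq \A_j(u_j)$ for every $j$, passing to the $\limsup$ and then taking the infimum over $u \in X$ gives the bound.

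Next I would prove the matching lower bound $\inf_X \A_\infty \leq \liminf_j \inf_X \A_j$, and this is the step where equi-coercivity is genuinely used. Extract a subsequence $\{j_k\}$ along which $\inf_X \A_{j_k} \to \liminf_j \inf_X \A_j$. By Definition \ref{def:equicoercive}, $\inf_X \A_{j_k} = \inf_K \A_{j_k}$, so we can pick $u_{j_k} \in K$ with $\A_{j_k}(u_{j_k}) \leq \inf_X \A_{j_k} + 1/k$ (with the obvious modification if the infimum is $-\infty$). Compactness of $K$ yields a further subsequence, not relabelled, such that $u_{j_k} \to u_\infty$ for some $u_\infty \in K \subseteq X$; then \eqref{eq:limInfIn} gives $\A_\infty(u_\infty) \leq \liminf_k \A_{j_k}(u_{j_k}) = \liminf_j \inf_X \A_j$, and the inequality $\inf_X \A_\infty \leq \A_\infty(u_\infty)$ closes the chain.

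Chaining the two bounds yields $\inf_X \A_\infty = \lim_j \inf_X \A_j$, while the above construction simultaneously exhibits a point $u_\infty \in X$ at which $\A_\infty$ attains this common value, giving (i) and (ii) at once. Finally, for (iii), I would repeat the compactness plus $\liminf$ argument applied to the hypothesised minimizing sequence $\{u_j\} \subset K$: any limit point $u_\infty$ arises as the limit of a subsequence $u_{j_k} \to u_\infty$, and \eqref{eq:limInfIn} together with (ii) gives $\A_\infty(u_\infty) \leq \liminf_k \A_{j_k}(u_{j_k}) = \lim_j \inf_X \A_j = \min_X \A_\infty$, forcing equality. The only subtle point is really the clean packaging of equi-coercivity; without it, the $\liminf$ inequality would be available only on sequences unconnected to the infima, and no lower bound on $\inf_X \A_\infty$ could be extracted.
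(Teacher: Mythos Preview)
Your argument is correct and is precisely the standard proof of the fundamental theorem of $\Gamma$-convergence: the recovery sequence gives the upper bound $\limsup_j \inf_X \A_j \leq \inf_X \A_\infty$, while equi-coercivity plus the $\liminf$ inequality along a suitable subsequence gives the matching lower bound and exhibits a minimizer. The only point worth making explicit is that when you apply \eqref{eq:limInfIn} along a subsequence $\{j_k\}$ you are implicitly using that $\Gamma$-convergence is inherited by subsequences, which is elementary but not stated in Definition~\ref{def:gammaLim}.

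As for the comparison: the paper does not give its own proof of this theorem at all---it simply cites Theorem~1.21 of Braides' monograph. Your write-up therefore supplies exactly the details the paper omits, and follows the same route one finds in that reference.
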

\begin{proof}
   The proof can be deduced from Theorem 1.21 of \cite{braides2002gamma}.
\end{proof}

For monotone sequences of functionals, we can state more properties of the $\Gamma$-limit. To this purpose we introduce the following definition.
\begin{definition}
   Let $\A:X \to \overline{\R}$ be a functional. The \textit{lower semicontinuous envelope} of $\A$ is defined as
   \begin{equation}
   \begin{split}
       \overline{\A}(u) =  \sup  \bigl\{ & \mathcal{B}(u) \text{ s.t. } \mathcal{B}:X\to\overline{\R} \text{ is lower semicontinuous and 
       } \\ & \mathcal{B}(v) \leq \A(v) \text{ for all } v\in X \bigr\}.
       \label{eq:SCIenv}
   \end{split}
   \end{equation}
   \label{def:SCIenv}
\end{definition}

\begin{theorem}
   Let $\{ \A_j \}_{j \in \N}$ be a sequence of functionals such that $\A_{j+1} \leq \A_j$ for all $j \in \N$. Then the $\Gamma$-limit exists and corresponds to the lower semicontinuous envelope of $\inf_j \A_j$, i.e.
   \begin{equation}
       \Gamma\text{-}\lim_{j\to\infty}\A_j = \overline{\inf_j \A_j}.
       \label{eq:gammaLimMonotone}
   \end{equation}
   \label{th:gammaLimMonotone}
\end{theorem}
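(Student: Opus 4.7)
The plan is to set $\A_\infty^0 := \inf_{j \in \N} \A_j$ pointwise. Because the sequence is monotone decreasing, $\A_j(u) \searrow \A_\infty^0(u)$ for every $u \in X$, and I must show that the two $\Gamma$-convergence inequalities of Definition \ref{def:gammaLim} hold with $\A_\infty = \overline{\A_\infty^0}$.

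For the $\liminf$ inequality, I would use the characterization of the lower semicontinuous envelope from \eqref{eq:SCIenv} directly. Let $\mathcal{B}: X \to \overline{\R}$ be any lower semicontinuous functional with $\mathcal{B} \leq \A_\infty^0$. Since $\A_\infty^0 \leq \A_j$ for every $j$, we have $\mathcal{B} \leq \A_j$ as well. Thus, for any sequence $u_j \to u$ in $X$, lower semicontinuity of $\mathcal{B}$ gives
\[
   \mathcal{B}(u) \leq \liminf_{j\to\infty} \mathcal{B}(u_j) \leq \liminf_{j\to\infty} \A_j(u_j).
\]
Taking the supremum over all admissible $\mathcal{B}$ yields $\overline{\A_\infty^0}(u) \leq \liminf_{j\to\infty} \A_j(u_j)$, which is \eqref{eq:limInfIn}.

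The harder part is the $\limsup$ inequality, where a constant recovery sequence does not work in general: the sequence $u_j \equiv u$ satisfies $\A_j(u) \to \A_\infty^0(u)$, but this limit may strictly exceed $\overline{\A_\infty^0}(u)$. The plan is to exploit the approximation of $\overline{\A_\infty^0}(u)$ by values of $\A_\infty^0$ along sequences converging to $u$: working in a metric (or first countable) setting and using the standard representation $\overline{\A_\infty^0}(u) = \inf \{ \liminf_{n\to\infty} \A_\infty^0(v_n) : v_n \to u \}$, I would pick a sequence $v_n \to u$ with $\A_\infty^0(v_n) \leq \overline{\A_\infty^0}(u) + 1/n$. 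Because $\A_j(v_n) \searrow \A_\infty^0(v_n)$ as $j \to \infty$, for each $n$ there exists $j_n$, which I can take strictly increasing with $j_n \to \infty$, such that
\[
   \A_{j_n}(v_n) \leq \A_\infty^0(v_n) + \tfrac{1}{n} \leq \overline{\A_\infty^0}(u) + \tfrac{2}{n}.
\]

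The final step is a standard diagonal construction: define $u_j := v_{n(j)}$ where $n(j)$ is the largest index $n$ such that $j_n \leq j$ (setting $u_j$ arbitrarily for $j < j_1$). Since $j_n \to \infty$, we have $n(j) \to \infty$ and therefore $u_j \to u$. Moreover, by the monotonicity $\A_j \leq \A_{j_{n(j)}}$ for $j \geq j_{n(j)}$, one gets
\[
   \A_j(u_j) = \A_j(v_{n(j)}) \leq \A_{j_{n(j)}}(v_{n(j)}) \leq \overline{\A_\infty^0}(u) + \tfrac{2}{n(j)},
\]
so $\limsup_{j\to\infty} \A_j(u_j) \leq \overline{\A_\infty^0}(u)$, which is \eqref{eq:limSupIn}. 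The main obstacle is precisely this diagonal extraction, which relies on both the approximating-sequence characterization of the lsc envelope and the pointwise monotone convergence $\A_j \searrow \A_\infty^0$; the two ingredients must be combined carefully so that the indices $j_n$ and the values $v_n$ are coordinated. Once both inequalities are established, the existence of the $\Gamma$-limit and the identification in \eqref{eq:gammaLimMonotone} follow immediately.
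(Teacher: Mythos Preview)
Your argument is correct. The $\liminf$ inequality via the envelope characterization \eqref{eq:SCIenv} is clean, and the $\limsup$ construction---choosing $v_n \to u$ with $\A_\infty^0(v_n) \leq \overline{\A_\infty^0}(u) + 1/n$ (which requires the sequential/relaxation characterization of the envelope, i.e.\ the content of Theorem~\ref{th:envelope_eq_relaxed}), then using $\A_j(v_n)\searrow\A_\infty^0(v_n)$ and a diagonal extraction exploiting monotonicity---is the standard route and all the steps check out. One small remark: when $\overline{\A_\infty^0}(u)=+\infty$ the $\limsup$ inequality is vacuous, so your extraction of $v_n$ with finite $\A_\infty^0(v_n)$ is legitimate.

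As for the comparison: the paper does not actually prove this result but simply refers the reader to Remark~1.41 of \cite{braides2002gamma}. Your proposal therefore supplies a self-contained proof where the paper gives only a citation. Note that your argument implicitly relies on the identification of the lower semicontinuous envelope with the relaxed functional (Theorem~\ref{th:envelope_eq_relaxed}), which in the paper's ordering appears \emph{after} Theorem~\ref{th:gammaLimMonotone}; this is harmless since Theorem~\ref{th:envelope_eq_relaxed} is independent of Theorem~\ref{th:gammaLimMonotone}, but it is worth being aware of if you intend to slot your proof into the paper as written.
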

\begin{proof}
  The proof can be found in Remark 1.41 of \cite{braides2002gamma}.
\end{proof}

To explicitly compute the lower semicontinuous envelope we introduce the definition of relaxed functional, and prove that they are the same.
\begin{definition}
   Let $\A:X \to \overline{\R}$ be a functional. The \textit{relaxed functional} of $\A$ is defined as
   \begin{equation}
       \widehat{\A}(u) = \inf \bigl\{ \liminf_{j\to\infty}\A(u_j) : \{u_j\}_{j\in\N} \mbox{ is a sequence converging to } u \mbox{ in } X\bigr\}.
       \label{eq:relaxed}
   \end{equation}
   \label{def:relaxed}
\end{definition}
\begin{remark}
    The functionals $\overline{\A}$ and $\widehat{\A}$ are both lower semicontinuous (see for instance \cite{braides2002gamma}).
\end{remark}

\begin{theorem}
  We have
    \[ \overline{\A}(u) = \widehat{\A}(u) \]
    for all $u \in X$.
    \label{th:envelope_eq_relaxed}
\end{theorem}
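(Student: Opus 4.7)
The plan is to prove the two inequalities $\widehat{\A} \leq \overline{\A}$ and $\overline{\A} \leq \widehat{\A}$ separately, exploiting the fact that both quantities can be characterized as the largest lower semicontinuous minorant of $\A$.

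For the first inequality, I would rely on the remark preceding the theorem, which asserts that $\widehat{\A}$ is itself lower semicontinuous. I then need to check the pointwise bound $\widehat{\A}(u) \leq \A(u)$ for every $u \in X$. This follows by choosing the constant sequence $u_j \equiv u$ as a competitor in the infimum of \eqref{eq:relaxed}, which gives
\[
\widehat{\A}(u) \leq \liminf_{j\to\infty} \A(u) = \A(u).
\]
Hence $\widehat{\A}$ is one of the functionals $\mathcal{B}$ entering the supremum in \eqref{eq:SCIenv}, so $\widehat{\A}(u) \leq \overline{\A}(u)$.

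For the reverse inequality, I would pick an arbitrary lower semicontinuous functional $\mathcal{B}:X \to \overline{\R}$ with $\mathcal{B}(v) \leq \A(v)$ for all $v \in X$, and an arbitrary sequence $u_j \to u$ in $X$. Using the sequential characterization of lower semicontinuity of $\mathcal{B}$ and the pointwise bound $\mathcal{B} \leq \A$,
\[
\mathcal{B}(u) \leq \liminf_{j\to\infty} \mathcal{B}(u_j) \leq \liminf_{j\to\infty} \A(u_j).
\]
Taking the infimum over all sequences converging to $u$ yields $\mathcal{B}(u) \leq \widehat{\A}(u)$, and then the supremum over all admissible $\mathcal{B}$ gives $\overline{\A}(u) \leq \widehat{\A}(u)$. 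Combining the two inequalities concludes the proof.

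There is no real analytical obstacle here: the argument is purely structural and follows from the variational characterizations of $\overline{\A}$ and $\widehat{\A}$. The only point that needs to be kept in mind is that the sequential definition of lower semicontinuity — used both in the preceding remark and in the chain of inequalities above — is the relevant one in $X$; this is implicit in the paper, where convergence in $X$ is always understood in a sequential sense (and in the concrete applications $X$ will be a Sobolev space of loops, hence metrizable).
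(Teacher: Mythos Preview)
Your proof is correct and follows essentially the same approach as the paper: both directions use that $\widehat{\A}$ is lower semicontinuous with $\widehat{\A}\leq\A$ (hence admissible in the supremum defining $\overline{\A}$), and that any lower semicontinuous minorant of $\A$ lies below every $\liminf_j \A(u_j)$ (hence below $\widehat{\A}$). The only cosmetic difference is that the paper applies the second argument directly with $\mathcal{B}=\overline{\A}$ rather than with a generic $\mathcal{B}$ followed by taking the supremum.
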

\begin{proof}
  Let $u \in X$ and $\{ u_j \}_{j\in \N} \subset X$ be a sequence such that $u_j \to u$. Then
  \begin{equation}
    \liminf_{j\to\infty} \A(u_j) \geq \liminf_{j\to\infty} \overline{\A}(u_j) \geq \overline{\A}(u),
    \label{eq:rel1}
  \end{equation}
  where
  we used Definition~\ref{def:SCIenv}
  and the lower semicontinuity of $\overline{\A}$. Since \eqref{eq:rel1} holds for every sequence converging to $u$, we obtain $\overline{\A}(u) \leq \widehat{\A}(u)$.
  
  On the other hand, $\widehat{\A}$ is lower semicontinuous and
  $\widehat{\A}(v) \leq \A(v)$ for all $v \in X$. Indeed, using the
  constant sequence $v_j = v$ we get $\widehat{\A}(v) \leq \liminf_j
  \A(v_j) = \A(v)$, where the first inequality follows from
  Definition~\ref{def:relaxed}.
  Hence we have that $\overline{\A}(u) \geq \widehat{\A}(u)$ for all $u\in X$.
\end{proof} 

\section{The $(1+N)$-body problem with symmetries}
\label{s:Np1}
Let us consider a system of $N$ satellites with masses
$m_1=\dots=m_N=1$, and a central body with mass $m_0 \gg 1$, and
denote their positions with $u_i \in \R^3, i=0,\dots,N$. We assume
that
\begin{enumerate}
   \item[(1)] the center of mass of the whole system corresponds to
     the origin of the reference frame:
      \[
         \sum_{i=0}^N m_i u_i \equiv 0;
      \]
   \item[(2)] the central body is in equilibrium at the origin:
      \[
         u_0 \equiv 0.
      \]
\end{enumerate}
We define the configuration space $\mathcal{X}$ as
\[
   \mathcal{X} = \Bigl\{ u=(u_0,\dots,u_N) \in \R^{3(N+1)} : u_0 = 0, \, \sum_{i=1}^N u_i =
0 \Bigr\}.
\]
The particles move under the interaction forces generated by
potentials of the form $1/r^\alpha$, where $\alpha \in [1, 2)$ and $r$
  is the distance between two particles.  Note that for $\alpha=1$ we
  obtain the usual Newtonian gravitational interaction.  We write the
  potential separating the contribution of the central body from the
  interaction among the satellites:
\begin{equation}
   U_\alpha(u) = 
   \sum_{i=1}^{N}\frac{m_0}{|u_i|^\alpha}+ 
\sum_{1\leq i<j\leq N}\frac{1}{|u_i-u_j|^\alpha}.
   \label{eq:potentialNPU}
\end{equation}
Since $m_0$ is at rest, the kinetic energy contains only the terms due to
the motion of the satellites, that is
\begin{equation}
   K(\dot{u}) = \frac{1}{2}\sum_{i=1}^N|\dot{u}_i|^2,
   \label{eq:kineticPNU}
\end{equation}
and the Lagrangian is given by the sum
\[
   L_\alpha(u,\dot{u}) = K(\dot{u})+U_\alpha(u).
\]
For a fixed period $T>0$, consider the set of $T$-periodic loops
\[
H^1_T(\R, \mathcal{X}) = \{u \in H^1(\R, \mathcal{X}): u(0) = u(T)\}
\]
and define the Lagrangian action functional as
\[
   \A^\alpha(u) = \int_{0}^{T}L_\alpha(u,\dot{u}) \, dt,
\]
for $u \in H^1_T(\R, \mathcal{X})$. In the following we restrict
$\A^\alpha$ to sets of loops which are invariant under an action of a
group of rotations.  Let us denote with $\mathcal{G}$ a
subgroup of the $3$-D orthogonal group $O(3)$, containing as many
elements as the number of satellites,
i.e. $|\mathcal{G}|=N$. Then, labeling the satellites with the
elements of $\mathcal{G}$, we introduce the space of symmetric loops
\[
   \Lambda_\mathcal{G} = \big\{ u \in H^1_T(\R, \mathcal{X}) : u_R(t)=R u_I(t), \, R \in
      \mathcal{G}, \, t \in \R \big\},
\]
where $u_I:[0,T] \to \R^3$ is the motion of an arbitrarily selected
satellite, that we call the \textit{generating particle}\footnote{Note that since the
masses of the satellites are all equal and the loops in
$\Lambda_{\mathcal{G}}$ are invariant under rotations of the group $\mathcal{G}$, the
assumption $u_0 \equiv 0$ is admissible from a dynamical point of view, and non-collision
minimizers with this constraint are classical solutions of the $N$-body problem. We
remark that this would not be true if the sum of the forces exerted by $m_1,\dots,m_N$ on
$m_0$ did not vanish identically.}.  In the
following we shall discuss some examples, considering the $\Z_4$ group
(leading to the Hip-Hop solution \cite{hiphop, barrabes2006}
with a central body), the Klein group $\Z_2 \times \Z_2$ and the
symmetry groups of Platonic polyhedra \cite{fgn10, fgfg14, FG18}.
If we restrict the action $\A^\alpha$ to $\Lambda_{\mathcal{G}}$, then it depends only on the motion 
of the generating particle:
\begin{equation}
   \A^\alpha(u_I) = N \int_{0}^{T}\bigg( \frac{|\dot{u}_I|^2}{2} +\frac{m_0}{|u_I|^\alpha}
   + \frac{1}{2}\sum_{R \in \mathcal{G} \setminus \{ I\}}
   \frac{1}{|(R-\Id)u_I|^\alpha}  \bigg)dt.
   \label{eq:symmActFunAlpha}
\end{equation}
Note that a collision occurs if and only if there exist $R \in \mathcal{G}\setminus \{ I
\}$ and
$t_c \in [0,T]$ such that
\[
   u_I(t_c) = R u_I(t_c).
\]
We denote with
\[
   \Gamma = \{ x \in \R^3: Rx=x \text{ for some } R \in \mathcal{G} \setminus \{ I \} \},
\]
the set of collisions.
In the following we shall apply this scheme:
\begin{itemize}
   \item[(i)] impose additional constraints on the set of admissible loops
      to obtain coercivity of $\A^\alpha$;
   \item[(ii)] prove that, with these constraints, there exists a collision-free minimizer for each value
      of $m_0\geq 0$;
   \item[(iii)] find the $\Gamma$-limit and study the properties of its minimizers.
\end{itemize}

\subsection{The $(1+N)$-body problem and $\Gamma$-convergence}
Here we focus on the determination of the $\Gamma$-limit.  If we
consider the limit $m_0 \to \infty$, the integrand function in \eqref{eq:symmActFunAlpha} tends to $+\infty$, and it is not
clear what the $\Gamma$-limit is.
The usual technique to deal with this case is to perform a suitable rescaling of the motion.
We set
\[
   u_I(t) = m_0^{\beta} v_I(t), \quad t \in \R,
   \]
  where $\beta>0$ is the rescaling parameter, to be determined, 
and get
\[
   \A^\alpha(u_I) = N\int_{0}^{T} \bigg( 
   m_0^{2\beta}\frac{\lvert \dot{v}_I\rvert^2}{2} +
   \frac{m_0^{1-\alpha\beta}}{\lvert v_I \rvert^\alpha} +
   \frac{1}{2 m_0^{\alpha\beta}} 
   \sum_{ R \in \mathcal{G} \setminus \{I\}}\frac{1}{\lvert (R-\Id)v_I \rvert^{\alpha}} 
    \bigg) dt.
\]
We choose $\beta$ in a way to balance the exponent of $m_0$ in the
first and second terms inside the parentheses above, i.e we set
$2\beta=1-\alpha\beta$, so that
\[
   \beta = \frac{1}{2+\alpha}.
\]
Using this value, the action becomes
\[
\begin{split}
   \A^\alpha(u_I) & =N \int_{0}^{T} \bigg(
   m_0^{\frac{2}{2+\alpha}}\frac{\lvert  \dot{v}_I\rvert^2}{2}
   +
   \frac{m_0^{\frac{2}{2+\alpha}}}{\lvert v_I \rvert^\alpha} 
   +
   \frac{1}{2m_0^{\frac{\alpha}{2+\alpha}}} \sum_{R \in \mathcal{G} \setminus
   \{I\}}\frac{1}{\lvert (R-\Id)v_I \rvert^{\alpha}}
   \bigg) dt \\
   & = N m_0^{\frac{2}{2+\alpha}} \int_{0}^{T}\bigg(
   \frac{\lvert \dot{v}_I\rvert^2}{2}
   + 
   \frac{1}{\lvert v_I \rvert^\alpha}
   +
   \frac{1}{2m_0} \sum_{R \in \mathcal{G} \setminus \{I\}} \frac{1}{\lvert (R-\Id)v_I \rvert^{\alpha}} 
   \bigg) dt.
\end{split}
\]
Setting 
\[
   \varepsilon = \frac{1}{m_0}, 
\]
and neglecting the constants in front of the integral, we can consider the functional 
\begin{equation}
   \A^\alpha_\varepsilon(v_I) 
    = \int_{0}^{T}\bigg( 
    \frac{\lvert \dot{v}_I\rvert^2}{2} 
    +
    \frac{1}{\lvert v_I \rvert^\alpha} 
    +
    \frac{\varepsilon}{2} \sum_{R \in \mathcal{G} \setminus \{I\}} \frac{1}{\lvert
       (R-\Id)v_I \rvert^{\alpha}}
    \bigg) dt.
   \label{eq:actionFunctionEpsilon}
\end{equation}
This is the action of a perturbed Kepler problem, where the
perturbation becomes smaller and smaller as the mass of the central
body increases and, in the limit $\varepsilon \to 0$, it disappears.

Let us denote with
\[
   \mathcal{K} \subseteq H^1_T(\R, \R^3 \setminus \Gamma), 
\]
a set where $\A_\varepsilon^\alpha$ is defined and coercive for all $\varepsilon > 0$. 
We assume that $\K$ is open and connected in the $H^1$ topology, and the loops belonging to $\mathcal{K}$ are all collision-free.
Denoting with $\overline{\mathcal{K}}^{H^1}$ the $H^1$-closure of $\mathcal{K}$, collision loops 
in $\overline{\mathcal{K}}^{H^1}$ necessarily belong to the boundary
\[
\partial \K = \overline{\K}^{H^1} \setminus \K.
\]
Moreover, we assume the following property on the loops in $\mathcal{K}$, which
will be satisfied in all the examples that we are going to consider: there exists
a constant $c_{\mathcal{K}}>0$ such that, for every $v_I \in \mathcal{K}$ and for every
$\tau \in [0,T]$, we have
\begin{equation}
   \lvert v_I(\tau) \rvert \leq c_{\mathcal{K}}
       \max_{t,s \in [0,T]} \lvert
  v_I(t)-v_I(s) \rvert.
   \label{eq:coneCondition}
\end{equation}
Note also that the coercivity of
$\A^\alpha$ follows from condition \eqref{eq:coneCondition} because, along a sequence of
loops diverging in the $H^1$ norm, this condition implies that the trajectories become more 
and more elongated, and the kinetic energy goes to infinity.

Then we define\footnote{for simplicity, here we write $v$ instead of $v_I$.}
\begin{equation}
   \A_\varepsilon^\alpha(v) = 
   \begin{cases}
      \displaystyle \int_{0}^{T} \bigg(
      \frac{\lvert \dot{v} \rvert^2}{2} 
      +
      \frac{1}{\lvert v \rvert^\alpha} 
      +
      \frac{\varepsilon}{2} \sum_{R \in \mathcal{G} \setminus \{ I
      \}} \frac{1}{\lvert (R-\Id)v \rvert^\alpha} 
      \bigg)  dt, & v \in
      \overline{\mathcal{K}}^{H^1}, \\[2ex]
      +\infty, & v \in \overline{\mathcal{K}}^{L^2} \setminus
      \overline{\mathcal{K}}^{H^1},
   \end{cases}
   \label{eq:actionSeq}
\end{equation}
and
\begin{equation}
   \A_0^\alpha(v) = 
   \begin{cases}
      \displaystyle \int_{0}^{T} \bigg( \frac{\lvert \dot{v} \rvert^2}{2}+ \frac{1}{\lvert v
      \rvert^\alpha} \bigg) dt, & v \in
      \overline{\mathcal{K}}^{H^1}, \\[2ex]
      +\infty, & v \in  \overline{\mathcal{K}}^{L^2} \setminus
      \overline{\mathcal{K}}^{H^1},
   \end{cases}
   \label{eq:gammaLim}
\end{equation}
where $\overline{\mathcal{K}}^{L^2}$ denotes the $L^2$-closure of
$\mathcal{K}$. We also introduce 
\begin{equation}
    \mathcal{S} = \bigg\{ v \in \partial\K : 
    \int_0^T \sum_{R \in \mathcal{R}\setminus\{I\}} \frac{1}{|(R-I)v|^\alpha} dt = +\infty
    \bigg\}
\end{equation}
and define
\begin{equation}
    \widetilde{\A}_0^\alpha(v) = \inf_\varepsilon \A_\varepsilon^\alpha(v) = 
    \begin{cases}
      \displaystyle \int_{0}^{T} \bigg( \frac{\lvert \dot{v} \rvert^2}{2}+ \frac{1}{\lvert v
      \rvert^\alpha} \bigg) dt, & v \in
      \overline{\mathcal{K}}^{H^1} \setminus \mathcal{S}, \\[2ex]
      +\infty, & v \in  \big(\overline{\mathcal{K}}^{L^2} \setminus
      \overline{\mathcal{K}}^{H^1}\big) \cup \mathcal{S}.       
    \end{cases}
\end{equation}
In what follows we endow the space $\overline{\mathcal{K}}^{L^2}$ with the topology induced
by the $L^2$ norm. 
\begin{theorem}
  For every $\alpha \geq 1$, we have
  \begin{itemize}
   \item[\rm{i)}]$
      \Gamma\text{-}\lim_{\varepsilon \to 0} \A_\varepsilon^\alpha = \overline{\widetilde{\A}_0^\alpha};
      $
   \item[\rm{ii)}] $\A_0^\alpha(v) \leq \overline{\widetilde{\A}_0^\alpha}(v)$ for all $v \in \overline{\mathcal{K}}^{L^2}$, and if $v \in \overline{\mathcal{K}}^{H^1}$ does not pass through the origin then the equality holds;
   \item[\rm{iii)}] if a minimizer $v^*$ of $\A_0^\alpha$ does not pass through the origin, then $v^*$ is also a minimizer
   of $\overline{\widetilde{\A}_0^\alpha}$;
    \item[\rm{iv)}] the sequence $\{\A_\varepsilon^\alpha\}_{\varepsilon>0}$ is equi-coercive.
  \end{itemize}
   \label{th:gammaConv}
\end{theorem}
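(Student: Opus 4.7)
My approach is to establish assertions (iv), (ii), (iii), (i) in that order, since the $L^2$--lower-semicontinuity tools needed for (iv) also power the argument for (ii); assertion (iii) is then immediate, and (i) is obtained by a direct verification of both $\Gamma$-inequalities using the monotonicity of the family.

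For the equi-coercivity (iv), fix any $v_0 \in \K$: it is collision-free, so $M_0 := \A_1^\alpha(v_0) < \infty$, and monotonicity of $\varepsilon \mapsto \A_\varepsilon^\alpha$ forces $\inf_X \A_\varepsilon^\alpha \leq M_0$ for every $\varepsilon \in (0,1]$. I claim that $K := \{ v \in \overline{\K}^{L^2} : \A_0^\alpha(v) \leq M_0 + 1 \}$ works. For $v \in K$, the kinetic bound $\|\dot v\|_{L^2}^2 \leq 2(M_0+1)$ combined with \eqref{eq:coneCondition} yields an $L^\infty$ and hence $H^1$ control of $v$; Rellich--Kondrachov compactness together with the $L^2$-lower semicontinuity of $\A_0^\alpha$ makes $K$ $L^2$-compact. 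Since $\A_\varepsilon^\alpha \geq \A_0^\alpha$ pointwise, any $v \notin K$ satisfies $\A_\varepsilon^\alpha(v) > M_0 + 1 \geq \inf_X \A_\varepsilon^\alpha$, so $\inf_X \A_\varepsilon^\alpha = \inf_K \A_\varepsilon^\alpha$ uniformly in $\varepsilon$.

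For (ii), the pointwise bound $\A_0^\alpha \leq \widetilde{\A}_0^\alpha$ is immediate from the definitions: the two functionals coincide off $\mathcal{S} \cap \overline{\K}^{H^1}$, and $\widetilde{\A}_0^\alpha = +\infty$ on that set even where $\A_0^\alpha$ is finite. Since $\A_0^\alpha$ is $L^2$-l.s.c.\ (weak-$H^1$ lower semicontinuity of the kinetic term, Fatou for the Kepler integral, and \eqref{eq:coneCondition} upgrading energy-bounded $L^2$-convergent sequences to weakly $H^1$-convergent ones), Definition~\ref{def:SCIenv} gives $\A_0^\alpha \leq \overline{\widetilde{\A}_0^\alpha}$. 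For the reverse inequality when $v \in \overline{\K}^{H^1}$ avoids the origin, pick $v_j \in \K$ with $v_j \to v$ in $H^1$ (possible because $\K$ is $H^1$-open and $v \in \overline{\K}^{H^1}$). Sobolev embedding makes the convergence uniform, so the bound $|v(t)| \geq \delta > 0$ transfers to $|v_j(t)| \geq \delta/2$ eventually; each $v_j$ is collision-free, so $v_j \notin \mathcal{S}$, and by dominated convergence $\widetilde{\A}_0^\alpha(v_j) = \A_0^\alpha(v_j) \to \A_0^\alpha(v)$. Theorem~\ref{th:envelope_eq_relaxed} then yields $\overline{\widetilde{\A}_0^\alpha}(v) = \widehat{\widetilde{\A}_0^\alpha}(v) \leq \A_0^\alpha(v)$, closing the equality. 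Assertion (iii) is a two-line consequence: if $v^*$ minimizes $\A_0^\alpha$ and avoids the origin, then $\overline{\widetilde{\A}_0^\alpha}(v^*) = \A_0^\alpha(v^*) \leq \A_0^\alpha(v) \leq \overline{\widetilde{\A}_0^\alpha}(v)$ for every admissible $v$.

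Finally for (i), I verify both $\Gamma$-inequalities directly. The monotonicity relation $\A_\varepsilon^\alpha \geq \widetilde{\A}_0^\alpha = \inf_\eta \A_\eta^\alpha$ holds pointwise for every $\varepsilon > 0$, so for any $\varepsilon_j \to 0^+$ and any $v_j \to v$ in $L^2$,
\[
\liminf_{j \to \infty} \A_{\varepsilon_j}^\alpha(v_j) \geq \liminf_{j \to \infty} \widetilde{\A}_0^\alpha(v_j) \geq \widehat{\widetilde{\A}_0^\alpha}(v) = \overline{\widetilde{\A}_0^\alpha}(v),
\]
the last equality being Theorem~\ref{th:envelope_eq_relaxed}. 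For the recovery sequence, Definition~\ref{def:relaxed} supplies $u_k \to v$ in $L^2$ with $\widetilde{\A}_0^\alpha(u_k) \to \widehat{\widetilde{\A}_0^\alpha}(v)$; for each $k$ with $\widetilde{\A}_0^\alpha(u_k) < \infty$ the perturbation term in $\A_\varepsilon^\alpha(u_k)$ is a finite multiple of $\varepsilon$, so $\A_\varepsilon^\alpha(u_k) \to \widetilde{\A}_0^\alpha(u_k)$ as $\varepsilon \to 0$, and a standard diagonal extraction produces $v_j := u_{k(j)} \to v$ satisfying $\limsup_j \A_{\varepsilon_j}^\alpha(v_j) \leq \overline{\widetilde{\A}_0^\alpha}(v)$. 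The main difficulty throughout is the $L^2$ lower semicontinuity of $\A_0^\alpha$ at loops lying in $\overline{\K}^{L^2} \setminus \overline{\K}^{H^1}$, which reduces to a weak-$H^1$ closedness property of $\overline{\K}^{H^1}$ inherited from the symmetry and topological structure of $\K$ in each concrete case treated later in the paper.
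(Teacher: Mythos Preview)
Your proof is correct and follows essentially the same approach as the paper: both rely on the $L^2$-lower-semicontinuity of $\A_0^\alpha$ (via kinetic-energy bounds, \eqref{eq:coneCondition}, Ascoli--Arzel\`a, and Fatou) to handle (ii) and (iv), and both deduce (iii) immediately from (ii). The only notable difference is that the paper obtains (i) in one line by invoking the monotone $\Gamma$-convergence theorem (Theorem~\ref{th:gammaLimMonotone}), whereas you unroll that result by verifying the two $\Gamma$-inequalities directly via the pointwise bound $\A_\varepsilon^\alpha \geq \widetilde{\A}_0^\alpha$ and a diagonal recovery-sequence construction---this is a minor elaboration rather than a different strategy.
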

\begin{proof}
Point i) follows from the fact that $\{ \A_\varepsilon^\alpha \}_{\varepsilon>0}$ is a decreasing sequence of functionals and by Theorem~\ref{th:gammaLimMonotone}.

To prove point ii) we first show that $\A_0^\alpha$ is lower semicontinuous in the topology induced by the $L^2$ norm.
Let $\{ v_j\}_{j \in \N}\subseteq \overline{\K}^{L^2}$ such that $v_j \to v$
in $L^2$. If $\liminf_{j} \A_{0}^\alpha(v_j) = + \infty$ there is nothing to prove. 
Therefore we assume that
\begin{equation}
  \liminf_{j\to\infty} \A_{0}^\alpha(v_j) < +\infty.
  \label{finite_liminf}
\end{equation}
Then, up to subsequences, there exists $M>0$ such that
\begin{equation}
  \int_{0}^{T}\frac{\lvert \dot{v}_j \rvert^2}{2} dt \le \A_{0}^\alpha(v_j) \leq
    M,
  \label{eq:kinIneq}
\end{equation}
hence $\{\norm{v_j}_{H^1}\}_{j\in\N}$ is bounded and, again up to
subsequences, $v_j \wconv v$ in $H^1$. 
From H{\"o}lder's inequality and \eqref{eq:kinIneq} it follows that for all $t,s \in [0,T]$ and for all $j\in\N$ we
have
\[
\lvert v_j(t) - v_j(s) \rvert \le \int_{s}^{t} \lvert
\dot{v}_j(\tau) \rvert d\tau \le \sqrt{2TM}.
\]
Moreover, the functions are all bounded by the same constant, since
for every $\tau \in [0,T]$, by assumption \eqref{eq:coneCondition},
we have
\[
  \lvert v_j(\tau) \rvert \le c_{\mathcal{K}} 
  \max_{t,s \in [0,T]} \lvert
  v_j(t)-v_j(s) \rvert \le c_{\mathcal{K}} 
  \sqrt{2TM}.
\]
Then, by the Ascoli-Arzel\`a theorem, $v_j \to v$
uniformly in $[0,T]$, up to subsequences. We conclude that there exists a
subsequence $\{v_{j_k}\}_{k \in \N}\subseteq\overline{\K}^{H^1}$ such that
\begin{itemize}
       \item[(1)] $\lim_{k} \A^\alpha_{0}(v_{j_k}) =
          \liminf_{j} \A^\alpha_{0}(v_j)$;
       \item[(2)] $v_{j_k} \wconv v$ in $H^1$;
       \item[(3)] $v_{j_k} \to v$ uniformly in $[0,T]$.
\end{itemize}
It follows that
\[
    \begin{split}
       \liminf_{j \to \infty} \A^\alpha_{0}(v_j) & = \lim_{k \to
       \infty}\A^\alpha_{0}(v_{j_k}) \\
       & = \liminf_{k \to \infty} \int_{0}^{T}\bigg(\frac{\lvert \dot{v}_{j_k}
    \rvert^2}{2} +
       \frac{1}{\lvert v_{j_k} \rvert^\alpha}\bigg)dt \\
       & \geq \int_{0}^{T}\frac{\lvert \dot{v} \rvert^2}{2} dt + \int_{0}^{T}\bigg( \liminf_{k\to
       \infty} \frac{1}{\lvert v_{j_k} \rvert^\alpha} \bigg) dt \\
       & \geq \int_{0}^{T}\bigg( \frac{\lvert \dot{v} \rvert^2}2 +
       \frac{1}{\lvert v \rvert^\alpha} \bigg) dt \\
       & = \A_0^\alpha(v),
   \end{split}
\]
where we used the lower semicontinuity of the $L^2$ norm with
respect to the weak convergence and Fatou's lemma. 
Since $\A_0^\alpha$ is lower semicontinuous and $\A_0^\alpha(v) \leq \widetilde{\A}_0^\alpha(v)$ for all $v \in \overline{\mathcal{K}}^{L^2}$, the inequality $\A_0^\alpha(v) \leq \overline{\widetilde{\A}_0^\alpha}(v)$ follows from Definition~\ref{def:SCIenv}. Let $v \in \overline{\mathcal{K}}^{H^1}$ be a loop that does not pass through the origin. Then there exists a sequence $\{ v_j \}_{j\in\N}\subset \K$ such that $v_j \to v$ in $H^1$, and $v_j \to v$ uniformly. Since $v$ does not vanish, then $1/|v_j|^\alpha \to 1/|v|^\alpha$ uniformly. It follows that
\[
    \overline{\widetilde{\A}_0^\alpha}(v) \leq \liminf_{j\to \infty}\widetilde{\A}_0^\alpha(v_j) = \liminf_{j\to \infty} \A_0^\alpha(v_j) = \lim_{j\to\infty}\A_0^\alpha(v_j) = \A_0^\alpha(v),
\]
where the inequality follows from \eqref{eq:rel1} and from Definition \ref{def:relaxed} of relaxed functional, and the first equality from the fact that $\widetilde{\A}_0^\alpha$ and $\A_0^\alpha$ are the same on $\K$. The other two equalities follow from both the $H^1$ and the uniform convergence, used for the kinetic and the potential term respectively.
Point iii) follows immediately from ii). 

Let us prove iv), i.e. that the sequence $\{
\A^\alpha_\varepsilon\}_{\varepsilon > 0}$ is equi-coercive.  The
functional $\A_0^\alpha$ is coercive in
$\overline{\mathcal{K}}^{L^2}$ and lower semicontinuous, hence a minimizer exists. We observe that the sequence $\{
\A_\varepsilon^\alpha(v)\}_{\varepsilon> 0}\subseteq\R$ is decreasing
with $\varepsilon$ for each $v \in \overline{\mathcal{K}}^{L^2}$ and
\begin{equation}
   \A_0^\alpha(v) \leq \A_\varepsilon^\alpha(v), \quad 0 < \varepsilon \leq\varepsilon_0,
   \label{eq:diseqFunz}
\end{equation}
where $\varepsilon_0>0$.
Given $s\in \R$, we introduce the sub-levels 
\begin{gather}
   \K_s^{\varepsilon, \alpha} = \big{\{} v \in \overline{\mathcal{K}}^{L^2} : \A_\varepsilon^\alpha(v) \leq s
   \big{\}}, \\
   \K_s^{0,\alpha} = \big{\{} v \in \overline{\mathcal{K}}^{L^2} : \A_0^\alpha(v) \leq s
   \big{\}}.
\end{gather}
From \eqref{eq:diseqFunz} we have $\K_s^{\varepsilon,\alpha} \subseteq
\K_s^{0,\alpha}$ for all $\varepsilon>0$ and for $s \in \R$ large enough they are all non-empty.
Moreover, the sub-levels $\K_s^{0,\alpha}$ are compact w.r.t. the $L^2$ convergence since $\A_0^\alpha$ is coercive.
Therefore, the set $\K_s^{0,\alpha}$, for a fixed $s \in \R$ large enough, satisfies Definition \ref{def:equicoercive} 
of equi-coercivity for the sequence $\{\A_\varepsilon^\alpha\}_{\varepsilon>0}$.

\end{proof}


Theorems \ref{th:convMin} and \ref{th:gammaConv} implies that, to understand the
asymptotic behaviour of the minimizers of $\A^\alpha_\varepsilon$, we
can simply study the minimizers of the functional
$\A_0^\alpha$.

\section{$\Z_4$ symmetry: Hip-Hop constellations}
\label{s:hiphopGamma}
In this section we consider $N=4$ and discuss the existence of periodic
orbits called \textit{Hip-Hop solutions}, appearing in \cite{hiphop}
in the case without central body.  These solutions oscillates between
the square central configuration and the tetrahedral one.
%

Here we consider only the Keplerian case $\alpha=1$.
The rotation group of the Hip-Hop solution is isomorphic to 
\begin{equation}
   \Z_4 = \big\{ \Id, \mathsf{R}, \mathsf{R}^2, \mathsf{R}^3\big\}, \quad 
   \mathsf{R} = 
   \begin{pmatrix}
      0& -1& 0\\
      1 & 0 & 0 \\
      0 & 0 &-1
   \end{pmatrix}.
   \label{eq:hiphopgroup}
\end{equation}
Moreover, the collision set $\Gamma$ corresponds to the vertical axis
\[
   \Gamma = \R e_3,
\]
where $e_3 \in \R^3$ is the unit vector corresponding to the third coordinate axis.
To obtain the coercivity of the action functional, we restrict its domain to 
the loops $u \in \Lambda_{\Z_4}$ such that 
\begin{equation}
   u_I\bigg(t+\frac{T}{2}\bigg) = -u_I(t), \quad t \in \R.
   \label{eq:italianSymmetry}
\end{equation}
Relation \eqref{eq:italianSymmetry} is often called the \textit{Italian symmetry}, because it was already used in \cite{dgg89, zelati90}. Therefore, the set of admissible loops is
\begin{equation}
   \K = \big\{u_I \in H^1_T(\R, \R^3\setminus\Gamma) : u_I(t+T/2)=-u_I(t), \, t \in \R\big\}.
   \label{eq:Khiphop}
\end{equation}
\begin{proposition}
    The action $\A$ is coercive on $\K$ defined by \eqref{eq:Khiphop}, for all $m_0 \geq 0$.
    \label{prop:HiphopCoercivity}
\end{proposition}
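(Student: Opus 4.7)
The plan is to establish coercivity by the classical trick of coupling the Italian symmetry with the Poincar\'e--Wirtinger inequality, so that the kinetic term alone controls the $H^1$-norm of $u_I$, and then observe that the two potential contributions are nonnegative for every $m_0\ge 0$.

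First I would exploit the antiperiodicity constraint $u_I(t+T/2)=-u_I(t)$ to show that every loop in $\K$ has zero mean on $[0,T]$. Indeed, splitting the integral at $T/2$ and using the substitution $s=t-T/2$ in the second half, one gets $\int_0^T u_I\,dt = \int_0^{T/2} u_I(t)\,dt + \int_0^{T/2} u_I(s+T/2)\,ds = 0$. For zero-mean $T$-periodic functions in $H^1$, the Poincar\'e--Wirtinger inequality then furnishes a constant $C_T>0$ such that $\|u_I\|_{L^2}^2\le C_T\|\dot u_I\|_{L^2}^2$, hence
\[
\|u_I\|_{H^1}^2 \;\le\; (1+C_T)\,\|\dot u_I\|_{L^2}^2.
\]

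Second, I would take any sequence $\{u_I^{(n)}\}\subset\K$ with $\|u_I^{(n)}\|_{H^1}\to+\infty$. By the previous display one necessarily has $\|\dot u_I^{(n)}\|_{L^2}\to+\infty$, so that the kinetic term $N\int_0^T|\dot u_I^{(n)}|^2/2\,dt\to+\infty$. Looking at \eqref{eq:symmActFunAlpha} with $\alpha=1$, both remaining summands $m_0/|u_I|$ and $\tfrac12\sum_{R\ne I}1/|(R-I)u_I|$ are nonnegative for every $m_0\ge 0$, so they cannot spoil the divergence of the kinetic part. Therefore $\A(u_I^{(n)})\to+\infty$, which is coercivity.

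The main step requiring care is really just the zero-mean reduction: without a symmetry constraint, loops can have arbitrarily large $H^1$-norm coming from their constant part while the action stays bounded, and that is why simple Wirtinger-type arguments fail on the full $\Lambda_{\Z_4}$. The Italian symmetry is precisely what kills the constant component, making the whole argument essentially automatic. Note also that the collision set $\Gamma$ plays no role here: it is relevant only later, when one must ensure that the minimizer stays in $\K$ rather than on $\partial\K$. Finally, the bound on $\A$ is uniform in $m_0\ge 0$, as required by the statement.
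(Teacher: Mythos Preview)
Your proof is correct, but the route differs from the paper's. The paper also reduces coercivity to the divergence of the kinetic term, but it argues pointwise rather than via Poincar\'e--Wirtinger: given a sequence with $\|u_I^{(k)}\|_{H^1}\to\infty$, it picks times $t_k$ with $|u_I^{(k)}(t_k)|\to\infty$, then uses the Italian symmetry directly to obtain $|u_I^{(k)}(t_k)-u_I^{(k)}(t_k+T/2)|=2|u_I^{(k)}(t_k)|\to\infty$, which forces $\|\dot u_I^{(k)}\|_{L^2}\to\infty$ by H\"older. Your observation that the antiperiodicity kills the mean and hence makes Wirtinger available is cleaner and gives an explicit quantitative bound $\|u_I\|_{H^1}^2\le(1+C_T)\|\dot u_I\|_{L^2}^2$; the paper's argument, on the other hand, is the template reused verbatim (with small modifications) for the $\Z_2\times\Z_2$ and Platonic cases, where no global zero-mean condition is available and one really needs the geometric constraint \eqref{eq:coneCondition} instead.
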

\begin{proof}
Let $\{ u_I^{(k)} \}_{k \in \N} \subset \K$ be a sequence such that $\norm{u_I^{(k)}}_{H^1}\to \infty$, 
hence necessarily 
\begin{equation}
    \int_0^T |\dot{u}^{(k)}_I|^2 dt \to \infty.
    \label{eq:infVel}
\end{equation}
Indeed, if
\[
    \int_0^T |u_I^{(k)}(t)|^2 dt \to \infty,
\]
then there exists a sequence $\{t_k\}_{k\in\N}\subset [0,T]$ such that $|u_I^{(k)}(t_k)| \to \infty$. By the symmetry \eqref{eq:italianSymmetry}, it follows that
\[
    \big| u_I^{(k)}(t_k) - u_I^{(k)}(t_k+T/2)  \big|  = 2 \big| u_I^{(k)}(t_k) \big|\to\infty,
\]
and this is sufficient to have \eqref{eq:infVel}, hence the coercivity.
\end{proof}
Therefore, for each value of $m_0 \geq 0$, there exists a minimizer in the $H^1$ closure of $\K$, possibly with collisions.
The next step is the exclusion of collisions, in order to obtain a sequence of classical
solutions, depending on the parameter $m_0$, of the Newtonian $(1+4)$-body problem.
Note that in $\K$ there exists a $T$-periodic solution of the $(1+4)$-body problem with the satellites
placed at the vertexes of a square, which uniformly rotates around the central
body with period $T$. Let us denote with $\usquare_I(t)$ this
solution. With straightforward computations we get that
\begin{equation}
   \usquare_I(t) = 
   \begin{pmatrix}
     a \cos(\omega t) \\
     a \sin(\omega t) \\
      0
   \end{pmatrix}, \quad 
   a =
   \bigg(\frac{T}{2\pi}\bigg)^{2/3}\bigg(\frac{1}{\sqrt{2}}+\frac{1}{4}+m_0\bigg)^{1/3},
   \quad \omega = \frac{2\pi}{T}.
   \label{eq:solSquare1}
\end{equation}
\begin{lemma}
   The solution $\usquare_I \in \K$ given by \eqref{eq:solSquare1} is not a local
   minimizer of the action.
   \label{lemma:squareSolMin}
\end{lemma}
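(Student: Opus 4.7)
The plan is to show that $\usquare_I$ is not a local minimizer by exhibiting an admissible variation along which the second variation of $\A$ is strictly negative. Geometrically, since the Hip-Hop solution of \cite{hiphop} oscillates out of the plane of the square toward the tetrahedral configuration, the natural candidate is a perturbation in the direction $e_3$ orthogonal to the plane of $\usquare_I$, with the fundamental frequency $\omega=2\pi/T$, so as to be compatible with the Italian symmetry \eqref{eq:italianSymmetry}.

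Concretely, I would set $u_I^\varepsilon(t)=\usquare_I(t)+\varepsilon\,\eta(t)\,e_3$ with $\eta$ smooth, $T$-periodic and $\eta(t+T/2)=-\eta(t)$, so that $u_I^\varepsilon\in\K$ for $|\varepsilon|$ small (the orbit of $\usquare_I$ is a circle of radius $a$ in the $xy$-plane, hence stays away from the collision set $\Gamma=\R e_3$). The block structure of $\mathsf{R}$ in \eqref{eq:hiphopgroup} is such that $\mathsf{R}^2$ fixes $e_3$ while $\mathsf{R}$ and $\mathsf{R}^3$ reverse it, so a direct computation yields
\begin{equation*}
|u_I^\varepsilon|^2=a^2+\varepsilon^2\eta^2,\quad |(\mathsf{R}^2-\Id)u_I^\varepsilon|^2=4a^2,\quad |(\mathsf{R}^k-\Id)u_I^\varepsilon|^2=2a^2+4\varepsilon^2\eta^2\ (k=1,3),
\end{equation*}
while $|\dot u_I^\varepsilon|^2=|\dot{\usquare}_I|^2+\varepsilon^2\dot\eta^2$, since $\dot{\usquare}_I\perp e_3$ kills the cross term. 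Expanding the action \eqref{eq:symmActFunAlpha} (with $\alpha=1$) to second order in $\varepsilon$, the diagonal $\mathsf{R}^2$-contribution remains constant and the linear term vanishes by symmetry, so the quadratic form controlling the second variation equals, up to the overall factor $N$,
\begin{equation*}
Q[\eta]=\int_0^T\left[\frac{\dot\eta^2}{2}-\left(\frac{m_0}{2a^3}+\frac{1}{\sqrt 2\,a^3}\right)\eta^2\right]dt.
\end{equation*}

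The last step is to select $\eta(t)=\sin(\omega t)$, which is antiperiodic and hence admissible. Using $\int_0^T\dot\eta^2\,dt=\omega^2\int_0^T\eta^2\,dt$, the sign of $Q[\eta]$ is that of $\omega^2 a^3-(m_0+\sqrt 2)$; on the other hand, reading the radius $a$ off \eqref{eq:solSquare1} is exactly Kepler's third law $\omega^2 a^3=m_0+\tfrac{1}{\sqrt 2}+\tfrac{1}{4}$ for the square relative equilibrium. The required negativity therefore reduces to the purely numerical inequality $\tfrac{1}{4}<\tfrac{1}{\sqrt 2}$, which holds uniformly in $m_0\ge 0$; hence $\A(u_I^\varepsilon)<\A(\usquare_I)$ for small $\varepsilon>0$ and $\usquare_I$ cannot be a local minimizer. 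I do not foresee a conceptual obstacle: the only delicate point is bookkeeping in the second-order expansion of the two pair potentials depending on $\eta$, and the mechanism is morally the same out-of-plane instability that drives the Hip-Hop bifurcation in the pure $N$-body case $m_0=0$.
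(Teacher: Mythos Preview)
Your proposal is correct and follows essentially the same approach as the paper: both use a vertical variation of the form $\eta(t)e_3$ with $\eta$ a fundamental trigonometric mode (the paper takes $\cos(\omega t)$, you take $\sin(\omega t)$), compute the second variation, and reduce the negativity to the same numerical inequality $1/4<1/\sqrt{2}$ via Kepler's relation $\omega^2 a^3=m_0+\tfrac{1}{\sqrt{2}}+\tfrac{1}{4}$. Your version is arguably slightly tidier in that you explicitly check the Italian symmetry of the variation and expand the pair potentials directly rather than going through the Hessian of $U$, but the substance is identical.
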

\begin{proof}
   To prove this result, it is sufficient to compute the second variation
   $\delta^2\A(\usquare_I)$ of the action
   and see that there exists a periodic variation $w:[0,T]\to \R^3$ for which 
   \begin{equation}
      \delta^2\A(\usquare_I)(w) < 0.
      \label{eq:negativeSecVar}
   \end{equation}
   To this end we consider vertical variations, i.e. we take 
   \[
      w(t) =
      \begin{pmatrix}
         0 \\0 \\w_z(t)
      \end{pmatrix},
      \]
      with $w_z:[0,T]\to\R$.
   Using the symmetries, the potential
   \[
      U(u_I) = \frac{m_0}{|u_I|} + \frac{1}{2}\sum_{R \in \Z_4 \setminus \{ I\}}
      \frac{1}{|(R-I)u_I|},
   \]
   can be written as
   \[
      U(u_I) = 
       \frac{m_0}{\sqrt{x^2+y^2+z^2}}
      +
      \frac{1}{2}\bigg[ \frac{\sqrt{2}}{\sqrt{x^2+y^2+2z^2}} +
      \frac{1}{2 \sqrt{x^2+y^2}} \bigg],
   \]
   where we have set $u_I = (x,y,z)$.
   The second variation $\delta^2\A$ is given by
   \[
      \delta^2\A(\usquare_I)(w) = \int_{0}^{T}\bigg( |\dot{w}(t)|^2 + w(t)\cdot \frac{\partial^2
      U\big(\usquare_I(t)\big)}{\partial u^2} w(t)\bigg) dt.
   \]
   Since we consider vertical variations, we only need to consider the following second derivatives
   \[
      \begin{split}
         \frac{\partial^2 U}{\partial z \partial x} & =
         3\sqrt{2}\frac{xz}{(x^2+y^2+2z^2)^{5/2}} + 3m_0
         \frac{xz}{(x^2+y^2+z^2)^{5/2}}, \\
         \frac{\partial^2 U}{\partial z \partial y} & =
         3\sqrt{2}\frac{yz}{(x^2+y^2+2z^2)^{5/2}} + 3m_0
         \frac{yz}{(x^2+y^2+z^2)^{5/2}}, \\
         \frac{\partial^2 U}{\partial z^2} & = -\frac{\sqrt{2}}{(x^2+y^2+2z^2)^{3/2}} -
         \frac{m_0}{(x^2+y^2+z^2)^{3/2}}\\&+3\sqrt{2}\frac{2z^2}{(x^2+y^2+2z^2)^{5/2}} + 3m_0
         \frac{z^2}{(x^2+y^2+z^2)^{5/2}}.
      \end{split}
   \]
   When we evaluate them at $\usquare_I(t)$ the only non-zero derivative is 
   \[
      \frac{\partial^2 U\big(\usquare_I(t)\big)}{\partial z^2} = -\frac{\sqrt{2}+m_0}{a^3}.
   \]
   Therefore, substituting in the second variation and using the
   expressions of $a$ and $\omega$ given in \eqref{eq:solSquare1} we
   obtain
   \[
      \delta^2\A(\usquare_I)(w) = \int_{0}^{T}\bigg( \dot{w}_z(t)^2 - w_z(t)^2
      \frac{\sqrt{2}+m_0}{\frac{1}{\sqrt{2}}+\frac{1}{4}+m_0} \omega^2 \bigg) dt.
   \]
   Using as vertical variation the function
   \[
      w_z(t) = \cos(\omega t)
   \]
   we get
   \[
      \delta^2\A(\usquare_I)(w) = \frac{\omega^2 T}{2}\bigg(1 -
      \frac{\sqrt{2}+m_0}{\frac{1}{\sqrt{2}}+\frac{1}{4}+m_0}\bigg) < 0,
   \]
   hence $\usquare_I$ is not a local minimizer.
\end{proof}
From this lemma and from the discussion in the next sections we can conclude that
minimizers of $\A$ are not planar, in a way similar to \cite{hiphop}.

\subsection{Total collisions}
To exclude total collisions we use level estimates. First we find an upper bound for the action 
of a solution with total collisions, and then we search for a collision-free loop which has a value of the
action smaller than that bound.
\begin{proposition}
    Let $u_I^* \in \overline{\K}$ be a solution with total collisions. Then the action satisfies
\begin{equation}
   \begin{split}
      \A(u_I^*) \geq 3 \cdot 2^{1/3}(2\pi)^{2/3}(3+4m_0)^{2/3}T^{1/3}.
   \end{split}
   \label{eq:totCollEstHipHop}
\end{equation}
    \label{prop:hiphopTotColl}
\end{proposition}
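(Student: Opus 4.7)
The plan is to reduce the problem to a classical Kepler lower bound by (a) replacing the full potential $U$ by its spherically symmetric minorant and (b) exploiting the Italian symmetry to see the $T$-periodic collision trajectory as the concatenation of two collision-to-collision arcs of duration $T/2$.

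\textbf{Step 1 (pointwise potential bound).} Writing $u_I=(x,y,z)$, one has $x^2+y^2\le |u_I|^2$ and $x^2+y^2+2z^2=|u_I|^2+z^2\le 2|u_I|^2$. Plugging these into the explicit form of the $\Z_4$ potential computed in Lemma~\ref{lemma:squareSolMin} gives
\[
\frac{m_0}{|u_I|}+\frac{1}{2}\Bigl[\frac{\sqrt{2}}{\sqrt{x^2+y^2+2z^2}}+\frac{1}{2\sqrt{x^2+y^2}}\Bigr]\ \ge\ \frac{m_0}{|u_I|}+\frac{1}{2}\Bigl[\frac{1}{|u_I|}+\frac{1}{2|u_I|}\Bigr]=\frac{\mu}{|u_I|},
\]
with $\mu=(3+4m_0)/4$. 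Consequently $\A(u_I^*)\ge 4\int_0^T\bigl(|\dot u_I^*|^2/2+\mu/|u_I^*|\bigr)\,dt$, i.e.\ the action is bounded below by four times the action of a Kepler problem with central mass $\mu$.

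\textbf{Step 2 (splitting via Italian symmetry).} By assumption there is a total collision time $t_c$ with $u_I^*(t_c)=0$. The Italian symmetry \eqref{eq:italianSymmetry} gives $u_I^*(t_c+T/2)=-u_I^*(t_c)=0$, so $u_I^*$ restricted to each of the two half-periods $[t_c,t_c+T/2]$ and $[t_c+T/2,t_c+T]$ is a collision-to-collision arc of duration $T/2$ in the Kepler problem with mass $\mu$.

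\textbf{Step 3 (Kepler bound for a collision-to-collision arc).} For any absolutely continuous $u:[0,\tau]\to\R^3$ with $u(0)=u(\tau)=0$, set $r(t)=|u(t)|$; then $|\dot u|^2\ge \dot r^2$ a.e., so
\[
\int_0^\tau\!\Bigl(\tfrac{|\dot u|^2}{2}+\tfrac{\mu}{|u|}\Bigr)dt\ \ge\ \int_0^\tau\!\Bigl(\tfrac{\dot r^2}{2}+\tfrac{\mu}{r}\Bigr)dt.
\]
The right-hand minimization (over $r\ge0$ vanishing at the endpoints) is solved by the one-dimensional degenerate Kepler orbit bouncing from $0$ to some $r_{\max}$ and back in time $\tau$; from $\tfrac12\dot r^2-\mu/r=-\mu/r_{\max}$ one obtains $\tau=\pi\sqrt{r_{\max}^3/(2\mu)}$ and, by the virial identity $\int_0^\tau L\,dt=-3E\tau$, the corresponding action value is $\tfrac{3}{2}(2\pi)^{2/3}\mu^{2/3}\tau^{1/3}$. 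This is the standard Gordon-type bound specialized to collision-to-collision arcs.

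\textbf{Step 4 (conclusion).} Applying the bound of Step 3 with $\tau=T/2$ to each of the two halves supplied by Step 2, and multiplying by the prefactor $N=4$, gives
\[
\A(u_I^*)\ \ge\ 4\cdot 2\cdot\tfrac{3}{2}(2\pi)^{2/3}\mu^{2/3}(T/2)^{1/3}=\frac{12}{2^{5/3}}(2\pi)^{2/3}(3+4m_0)^{2/3}T^{1/3},
\]
and the numerical factor $12/2^{5/3}=3\cdot 2^{1/3}$ yields exactly \eqref{eq:totCollEstHipHop}. The only non-routine point is the collision-to-collision Kepler bound of Step 3; everything else is an algebraic manipulation and a direct application of the Italian symmetry.
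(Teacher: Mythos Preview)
Your proof is correct and follows essentially the same strategy as the paper: bound the potential below by a Kepler term, use the Italian symmetry to split into two collision-to-collision arcs of duration $T/2$, and apply the standard lower bound for such arcs. The only cosmetic difference is that the paper packages the reduction to the one-dimensional Kepler minimum via the general moment-of-inertia estimate of Proposition~\ref{prop:lower_bound_gen}, whereas you carry it out directly via $|\dot u_I|^2\ge \dot r^2$ and the explicit pointwise bound on the potential; the resulting constants coincide.
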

\begin{proof}
To estimate the action of a loop with total collisions we use Proposition \ref{prop:lower_bound_gen}.
The total mass is $\mathcal{M} = 4+m_0$,
and, if $u \in \Lambda_{\Z_4}$ and relation \eqref{eq:italianSymmetry} holds, 
the distance between two satellites satisfies
\[
   |u_h-u_k| \leq  2|u_I|, \quad h,k=1,\dots,4, \quad h \neq k,
   \]
where $u_j$ stands for $u_{\mathsf{R}^j}$, with $j=1,\ldots,4$.
Therefore, the potential $\mathcal{U}_\alpha$ of Proposition 
\ref{prop:lower_bound_gen}, with $\alpha=1$, satisfies 
\[
   \begin{split}
      \mathcal{U}_1(u) & = \frac{1}{4+m_0}\bigg( \sum_{\substack{h,k=1 \\ h \neq k}}^4
      \frac{1}{|u_h-u_k|} + 2 \sum_{i=1}^4 \frac{m_0}{|u_i|} \bigg) \\
      & = \frac{1}{4+m_0}\bigg( \sum_{\substack{h,k=1 \\ h \neq k}}^4
      \frac{1}{|u_h-u_k|} + \frac{8 m_0}{|u_I|} \bigg) \\
      & \geq \frac{1}{4+m_0} \bigg( \frac{6}{|u_I|} + \frac{8 m_0}{|u_I|} \bigg).
   \end{split}
\]
Moreover, defining
\[
   \rho(u) = \bigg( \sum_{i=1}^4 \frac{|u_i|^2}{\mathcal{M}} \bigg)^{1/2}, 
\]
because of the symmetry we have that
\[ 
\rho(u) = \frac{2}{\sqrt{4+m_0}} |u_I|. 
\]
The minimum of $\mathcal{U}_1(u)$ restricted to $\rho(u)=1$ satisfies
\[
   U_{1,0} := \min_{\rho(u)=1} \mathcal{U}_1(u) \geq \frac{4}{(4+m_0)^{3/2}}(3+4m_0).
\]
Consider now a solution $u_I^* \in \overline{\K}$ with a total collision. Because of the symmetry
\eqref{eq:italianSymmetry}, there are at least two total collisions per period,
therefore, from Proposition \ref{prop:lower_bound_gen}, the action functional satisfies
\[
   \begin{split}
      \A(u_I^*)  & \geq 2 \cdot \frac{3}{2} (4+m_0) (\pi
      U_0)^{2/3}\bigg(\frac{T}{2}\bigg)^{1/3} \\
      & \geq 3 \cdot 2^{1/3}(2\pi)^{2/3}(3+4m_0)^{2/3}T^{1/3}.
   \end{split}
\]
\end{proof}
\begin{proposition}
    Let $u_I^* \in \overline{\K}$ be a solution with total collisions. Then we have
    \begin{equation}
    \A(u_I^*) > \A(\usquare_I),
    \label{eq:totCollExclusion} 
    \end{equation}
    where $\usquare_I$ is the rotating square solution.
    \label{prop:actionSquare}
\end{proposition}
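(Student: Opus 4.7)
The strategy is to compute the action $\A(\usquare_I)$ of the rotating square solution in closed form and compare it with the lower bound for totally collisional loops given in Proposition~\ref{prop:hiphopTotColl}. Since $\usquare_I$ is a uniform circular motion, all the quantities of interest are time-independent along its orbit, so the computation reduces to elementary algebra combined with Kepler's third law.

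First I would evaluate the integrand of \eqref{eq:symmActFunAlpha} along $\usquare_I$. From \eqref{eq:solSquare1} one reads $|\dot{\usquare}_I|^2 = a^2\omega^2$ and $|\usquare_I| = a$, while a direct calculation of the side and diagonal of the square configuration gives $|(\mathsf{R}-I)\usquare_I| = |(\mathsf{R}^3-I)\usquare_I| = \sqrt{2}\,a$ and $|(\mathsf{R}^2-I)\usquare_I| = 2a$; hence
\[
\frac{1}{2}\sum_{R \in \Z_4 \setminus\{I\}} \frac{1}{|(R-I)\usquare_I|} = \frac{1}{a}\left(\frac{1}{\sqrt{2}} + \frac{1}{4}\right).
\]
Setting $C = m_0 + \tfrac{1}{\sqrt{2}} + \tfrac{1}{4}$, the identity $\omega^2 a^3 = C$ built into \eqref{eq:solSquare1} collapses the integrand to the Keplerian value $\tfrac{3C}{2a}$, and after multiplying by $N = 4$ and $T$ I would obtain
\[
\A(\usquare_I) = \frac{6\, C\, T}{a} = 6(2\pi)^{2/3} T^{1/3} \left(m_0+\tfrac{1}{\sqrt{2}}+\tfrac{1}{4}\right)^{2/3}.
\]

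It then remains to check that the lower bound of Proposition~\ref{prop:hiphopTotColl} strictly exceeds $\A(\usquare_I)$. Cancelling the common factor $(2\pi)^{2/3} T^{1/3}$ reduces the required inequality to $3 \cdot 2^{1/3}(3+4m_0)^{2/3} > 6\, C^{2/3}$, and raising both sides to the $3/2$ power further simplifies it to $2m_0 > \sqrt{2} - \tfrac{5}{2}$. The right-hand side is strictly negative, so the inequality holds (strictly) for every $m_0 \geq 0$, and the chain
\[
\A(u_I^*) \geq 3 \cdot 2^{1/3}(2\pi)^{2/3}(3+4m_0)^{2/3} T^{1/3} > \A(\usquare_I)
\]
closes the argument. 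I do not foresee any real obstacle: the proof is essentially bookkeeping of constants, and the only delicate point is keeping the factors $N=4$ and the $\tfrac{1}{2}$ in \eqref{eq:symmActFunAlpha} consistent when collecting the Keplerian relation $K + U = \tfrac{3C}{2a}$ along the circular orbit.
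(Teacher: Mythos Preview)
Your proof is correct and follows essentially the same route as the paper: compute $\A(\usquare_I)$ in closed form using the Keplerian relation $\omega^2 a^3 = C$, then compare with the lower bound of Proposition~\ref{prop:hiphopTotColl} by reducing to an elementary inequality in $m_0$ whose critical value is negative. The only cosmetic difference is that the paper packages the comparison as $f(m_0)=g(m_0)$ having its unique root at $m_0=(2\sqrt{2}-5)/4<0$, whereas you raise to the $3/2$ power and arrive at the equivalent $2m_0 > \sqrt{2}-\tfrac{5}{2}$.
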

\begin{proof}
The action of the rotating square solution $\usquare_I$ given by
\eqref{eq:solSquare1} is
\begin{equation}
   \begin{split}
   \A(\usquare_I) & = \bigg( \frac{3+6\sqrt{2}}{2} + 6m_0 \bigg) \frac{(2\pi)^{2/3}}{\bigg(
      \frac{1}{\sqrt{2}} +\frac{1}{4} +m_0\bigg)^{1/3} }T^{1/3} \\
      & = \frac{3}{2^{1/3}} \big(1+2\sqrt{2}+4m_0\big)^{2/3} (2\pi)^{2/3}T^{1/3}.
   \end{split} 
   \label{eq:actionSquareSol}
\end{equation}
Set 
\[
   f(m_0)  = 3\cdot 2^{1/3} \big(3+4m_0\big)^{2/3}, 
   \quad
   g(m_0)  = 
\frac{3}{2^{1/3}} \big(1+2\sqrt{2}+4m_0\big)^{2/3}.
\]
With this notation, the bound \eqref{eq:totCollEstHipHop} for the action of a solution with total collisions $u_I^*$ 
and the action of the rotating square solution $\usquare_I$ can be written as
\[
   \A(u^*_I) \geq f(m_0) (2\pi)^{2/3}T^{1/3}, \qquad \A(\usquare_I) = g(m_0)(2\pi)^{2/3}T^{1/3},
\]
respectively.
The equation
$
   f(m_0) = g(m_0)
$
has a unique real solution
\[
   m_0 = \frac{2\sqrt{2}-5}{4} < 0,
\]
and $f(0)>g(0)$. This means that 
\[
   f(m_0) > g(m_0), \quad \forall m_0 \geq 0,
\]
hence we obtain $\A(u_I^*) > \A(\usquare_I)$.
\end{proof}
Propositions \ref{prop:hiphopTotColl} and \ref{prop:actionSquare} yield the following result. 
\begin{corollary}
    Minimizers of $\A$ in $\overline{\K}$ are free of total collisions, for every value of the central mass $m_0 \geq 0$.
\end{corollary}
\subsection{Partial collisions}
The method used to exclude partial collisions is similar to the one used in
\cite{hiphop}, where the central body is missing. 
Using cylindrical coordinates for the generating particle
\begin{equation}
   u_I(t) = \frac{1}{2} 
   \begin{pmatrix}
    \rho(t) \cos\varphi(t) \\ \rho(t)\sin\varphi(t) \\ \zeta(t) 
   \end{pmatrix},
   \label{ew:cylCoord}
\end{equation}
the Lagrangian of the functional \eqref{eq:symmActFunAlpha} is $L = K + U,$
where
\begin{equation}
   K= \frac{\dot{\rho}+\rho^2\dot{\varphi}^2+\dot{\zeta}^2}{2},\qquad
   U = \frac{4\sqrt{2}}{(\rho^2+2\zeta^2)^{1/2}} + \frac{2}{\rho} + \frac{8
   m_0}{(\rho^2+\zeta^2)^{1/2}}.
   \label{eq:lagHipHopCyl}
\end{equation}

Let us consider a solution $u_I^*\in \overline{\K}$ which has a partial collision at time $t=0$.
Since partial collisions can occur only on the vertical axis, we have
\[
   \rho^*(0) = \rho^*(T/2) = 0, \qquad \zeta^*(0)=-\zeta^*(T/2) \neq 0.
\]
Moreover, since the total energy and the angular momentum $\Phi = \rho^2\dot{\varphi}$ are first
integrals, we can easily deduce that $\Phi = 0$ for a solution with partial collisions,
hence it is contained in a vertical plane.
If there are several partial collisions, then every piece of solution between two consecutive partial collisions is
contained in a vertical plane. Using rotations around the vertical axis, we can reduce the problem to the case where all the pieces
are contained in the same vertical plane.
%
Therefore, without loss of generality, we can assume that $\varphi=0$, hence
\[
   u^*_I(t) =\frac{1}{2} 
   \begin{pmatrix}
      \rho^*(t)\\ 0\\ \zeta^*(t)
   \end{pmatrix}.
\]
\begin{lemma}
   If the trajectory of a solution $u^*_I(t)$ lies in a vertical plane, then it does not minimize the
   action.
   \label{lemma:vertPlane}
\end{lemma}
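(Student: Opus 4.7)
The plan is to exhibit, for every solution $u_I^*$ whose trajectory lies in a vertical plane, a competitor $u_I^\epsilon \in \overline{\K}$ with strictly smaller action. After a rotation about the $z$-axis (a symmetry of the potential), I may assume the plane is the $xz$-plane, so $u_I^*(t)=(x^*(t),0,z^*(t))$. The corollary at the end of the preceding subsection already excludes the case of a total collision, so I may assume only partial collisions occur. The Italian symmetry gives $x^*(t+T/2)=-x^*(t)$, so the intermediate value theorem yields $t_0\in[0,T/2]$ with $x^*(t_0)=0$; since $u_I^*$ has no total collision, $z^*(t_0)\neq 0$, and $u_I^*(t_0)=(0,0,z^*(t_0))=\mathsf{R}^2 u_I^*(t_0)$ is a partial collision on the vertical axis, at which the pair $u_I^*,\mathsf{R}^2 u_I^*$ coincides.

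I would then use the perpendicular perturbation
\[
u_I^\epsilon(t) = u_I^*(t) + \epsilon\, w(t)\, e_2, \qquad w(t) = \cos\bigl(\omega(t-t_0)\bigr),\quad \omega=\tfrac{2\pi}{T},
\]
which lies in the $H^1$-closure of $\K$ (as $w(t+T/2)=-w(t)$) and has $w(t_0)=1$, so the collision at $t_0$ is removed. Because $\dot u_I^*$ has vanishing $e_2$-component, the kinetic contribution splits off cleanly as $\tfrac{\epsilon^2}{2}\int_0^T\dot w^2\,dt = O(\epsilon^2)$. In the potential, the terms $m_0/\sqrt{x^2+y^2+z^2}$ and $\tfrac{\sqrt2/2}{\sqrt{x^2+y^2+2z^2}}$ remain smooth at the collision (their denominators stay bounded away from $0$ because $z^*(t_0)\neq 0$) and change by a non-positive $O(\epsilon^2)$ via Taylor expansion, since $\partial^2/\partial y^2$ of each $1/\sqrt{\,\cdot\,}$ is negative at $y=0$. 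The real gain comes from the singular $\mathsf{R}^2$-pair term $\tfrac{1}{4\sqrt{x^{*2}+y^2}}$, which satisfies the pointwise inequality $\tfrac{1}{4\sqrt{x^{*2}+\epsilon^2 w^2}}\le \tfrac{1}{4|x^*|}$.

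Using the Sundman-type asymptotic for a binary Keplerian collision, $|x^*(t)|\sim c|t-t_0|^{2/3}$ as $t\to t_0$, the rescaling $t-t_0=(\epsilon/c)^{3/2}\,s$ converts the singular contribution in a small neighborhood of $t_0$ into
\[
\frac{\epsilon^{1/2}}{4 c^{3/2}}\int_{\R}\!\Bigl[\tfrac{1}{\sqrt{|s|^{4/3}+1}}-\tfrac{1}{|s|^{2/3}}\Bigr]\,ds \;+\; o(\epsilon^{1/2}),
\]
and the bracketed integrand is strictly negative and integrable both near $0$ and at $\infty$, so this equals $-C\epsilon^{1/2}+o(\epsilon^{1/2})$ for some $C>0$; the twin collision at $t_0+T/2$ forced by the Italian symmetry doubles the gain. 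Assembling,
\[
\A(u_I^\epsilon)-\A(u_I^*) = -2C\,\epsilon^{1/2} + O(\epsilon^2) < 0
\]
for all sufficiently small $\epsilon>0$, so $u_I^*$ is not a minimizer. The main obstacle will be justifying the Sundman asymptotic $|x^*(t)|\sim c|t-t_0|^{2/3}$ in the present symmetry-reduced setting; this should follow from the classical analysis of binary collisions applied to the pair $u_I^*,\mathsf{R}^2 u_I^*$ of equal unit mass, the other two satellites and the central body being at bounded distance at $t_0$.
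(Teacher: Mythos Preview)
Your argument is correct and takes a genuinely different route from the paper. The paper does not perturb additively: it applies a rigid rotation by a small angle $\gamma$ about the $x$-axis to the planar solution. Because this is an isometry the kinetic energy is exactly preserved, and a direct computation in cylindrical coordinates shows that the central body term $m_0/|u_I|$ is preserved as well (since $\rho^2+\zeta^2$ is invariant). The comparison therefore reduces to the satellite--satellite interaction alone, where two terms compete with opposite signs; at that point the paper simply invokes Lemma~4 of \cite{hiphop}.

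Your construction instead adds $\epsilon\cos(\omega(t-t_0))e_2$, exploits the pointwise monotonicity of every potential term in $y^2$, and extracts an $O(\epsilon^{1/2})$ gain from the singular $\mathsf{R}^2$-pair term via the Sundman asymptotic and rescaling. What the paper's method buys is that no Sundman estimate is needed and fewer terms enter the balance (kinetic and central body drop out exactly). What your method buys is an explicit scale separation ($\epsilon^{1/2}$ versus $\epsilon^2$) and independence from the cited external lemma; it also adapts readily to $\alpha$-homogeneous potentials. One minor point: your final displayed equality should be an inequality, since if $x^*$ happens to have further zeros the $\mathsf{R}^2$-term contribution away from $t_0$ and $t_0+T/2$ is only $\le 0$ rather than $O(\epsilon^2)$; but this only strengthens the conclusion.
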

\begin{proof}
   We show that the action decreases if we rotate the orbit around the
   $x$ axis by a small angle $\gamma$. Let us denote with $\bar{u}_I$
   the rotated orbit and with $\bar{\rho}, \bar{\varphi}, \bar{\zeta}$
   the corresponding cylindrical coordinates.  The kinetic part
   remains unchanged:
   \[
      K(\dot{\bar{u}}_I) = \frac{1}{2}\bigg(\dot{\bar{\rho}}^2  +
      \bar{\rho}^2\dot{\bar{\varphi}}^2 + \dot{\bar{\zeta}}^2\bigg) = \frac{1}{2} (\dot{\rho}^2 +
      \dot{\zeta}^2) = K(\dot{u}^*_I).
   \]
   On the other hand, the potential becomes
   \[
      \begin{split}
         U(\bar{u}_I) & =
      \frac{4\sqrt{2}}{\big(\bar{\rho}^2 + 2 \bar{\zeta}^2\big)^{1/2}} + \frac{2}{\bar{\rho}} +
      \frac{8m_0}{\big(\bar{\rho}^2+\bar{\zeta}^2\big)^{1/2}} \\
      & = \frac{4\sqrt{2}}{(\rho^2+2\zeta^2-\zeta^2\sin^2\gamma)^{1/2}} +
      \frac{2}{(\rho^2+\zeta^2\sin^2\gamma)^{1/2}}+ \frac{8m_0}{(\rho+\zeta^2)^{1/2}}.
      \end{split}
   \]
   The difference between the actions of the two loops is
   \[
      \A(\bar{u}_I) - \A(u^*_I) = 2\int_{0}^{T/2}\big(U(\bar{u}_I)-U(u^*_I)\big)dt,
   \]
   and the term $U(\bar{u}_I)-U(u^*_I)$ in the integral does not contain the 
   part of the attraction due to the central body, like in
   the case with $m_0=0$. Hence, to prove that
   \[
       \A(\bar{u}_I) - \A(u^*_I) < 0,
   \]
   we can simply use the same proof given in \cite[Lemma 4]{hiphop}.
\end{proof}
From Lemma \ref{lemma:vertPlane}, we can conclude that minimizers are free of
collisions for every value of the mass $m_0\geq0$, hence they are
classical periodic solutions of the $(1+4)$-body problem.

\subsection{Minimizers of the $\Gamma$-limit}
\label{minimizers_hiphop}
In this setting, circular Keplerian orbits are compatible with the
set $\K$ of admissible loops.  Indeed, fixed a plane $\Pi \subseteq \R^3$ passing
through the origin, there exists a unique (up to phase shifts and
inversion of time) circular Keplerian orbit $u_I^{\sPi}:\R\to\R^3$
with period $T$, lying on $\Pi$ and satisfying
\eqref{eq:italianSymmetry},
hence it is an element of $\K$. Therefore, there is an infinite number
of minimizers of the $\Gamma$-limit functional in $\K$, represented by
circular motions.  Indeed, from \cite{gordon77} it is known that
all the $T$-periodic Keplerian ellipses (including the circular and the degenerate rectilinear ones) are minimizers of the
action of the Kepler problem in the set of planar $T$-periodic loops winding around the origin only once, either clockwise or counter-clockwise. Moreover non-circular orbits are not
compatible with relation \eqref{eq:italianSymmetry}.

\begin{figure}[!ht]
    \centerline{\includegraphics[ height=6cm]{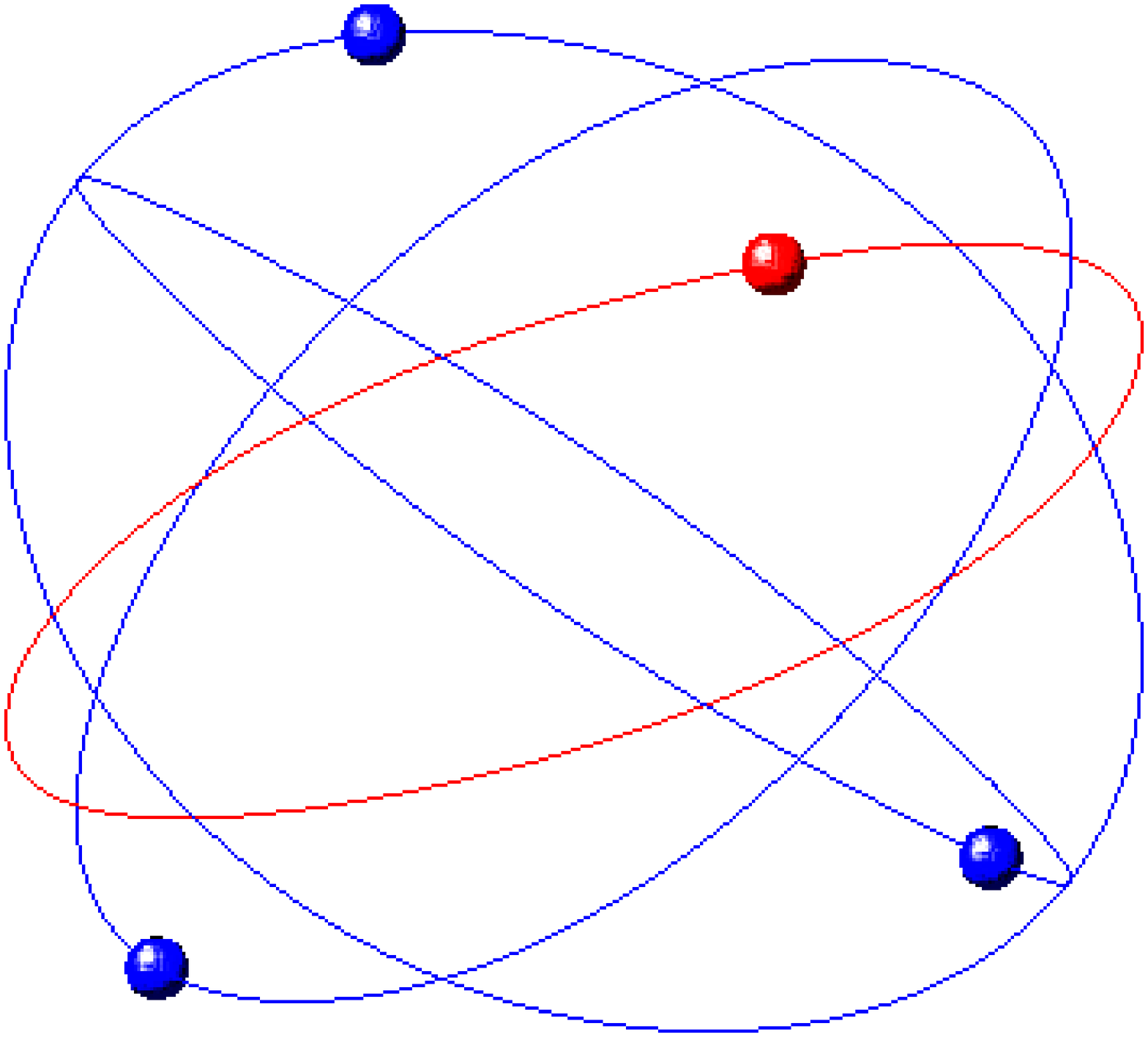}\hskip 1cm
    \includegraphics[ height=6cm]{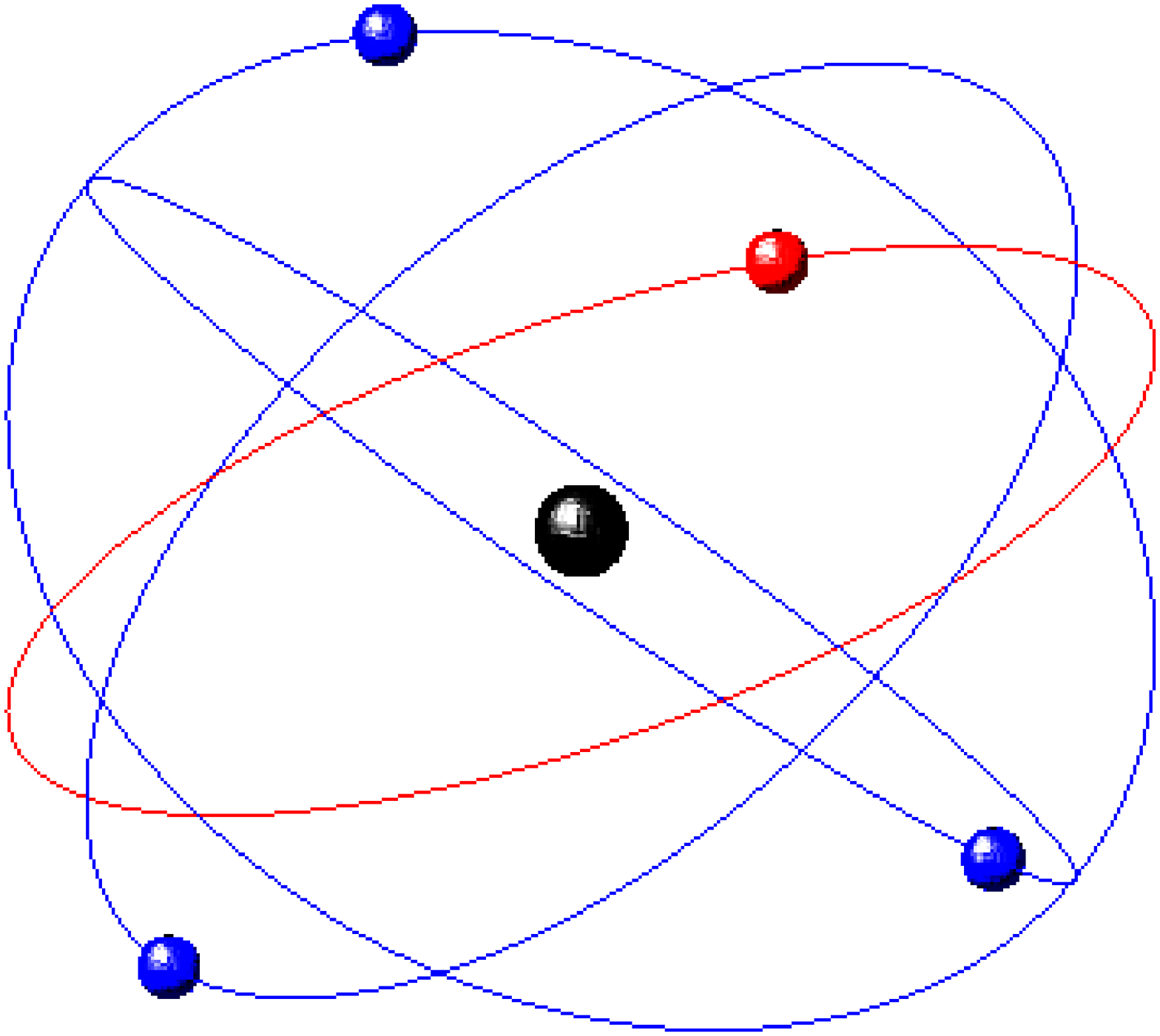}}
    \caption{The Hip-Hop solution. On the left we show the solution without central
    body, on the right the solution with a central body of mass $m_0=100$. The red curve
    represents the trajectory of the generating particle $u_I$.}
\label{fig:hiphopGamma}
\end{figure}
Note also that a solution with collisions cannot be a
minimizer. Indeed, because of relation \eqref{eq:italianSymmetry},
there are at least two collisions per period. In \cite{gordon77} these
are called multiple {\em legs} solutions and it is shown that their
action is strictly larger than the action of a circular orbit with
minimal period $T$.
This also means that all the
minimizers in a sequence $\{u^*_{m_0}\}_{m_0\geq0}$ are bounded away from the origin.
In Figure \ref{fig:hiphopGamma} we show two orbits, computed without a central body
(on the left) and with a central body of mass $m_0=100$ (on the right). Since the orbit with no
    central mass is almost circular, the difference in the trajectories of the satellites cannot be really appreciated in the two pictures.


\subsection{Constellations with $2N$ satellites}
In \cite{barrabes2006, terrvent07} the Hip-Hop
solution has been generalized to the case of $2N$ equal masses. 
Here we do the same in the case of the $(1+2N)$-body problem, with a massive central body
at the origin. The computations become longer, but techniques and arguments
are similar to the ones used above.
The symmetry group $\mathcal{G}$ in this case is 
\[
   \Z_{2N} = \{ I, \mathsf{R}, \dots, \mathsf{R}^{2N-1} \},
\]
where the group generator is
\[
   \mathsf{R} = 
   \begin{pmatrix}
      \cos\frac{\pi}{N} & -\sin\frac{\pi}{N} & 0 \\
      \sin\frac{\pi}{N} & \cos\frac{\pi}{N} & 0 \\
      0 & 0 & -1
   \end{pmatrix}.
\] 
As before, the collision set $\Gamma$ corresponds to the vertical axis
$ \R e_3$.
The loop set $\K$ is still defined imposing the symmetry \eqref{eq:italianSymmetry}:
\[
   \K = \big\{u_I \in H^1_T(\R, \R^3\setminus\Gamma) : u_I(t+T/2)=-u_I(t), \, t \in
   \R\big\},
\]
and the argument used to prove the coercivity of the action functional on $\K$ is the
same as before. 
\begin{figure}[!b]
    \centerline{\includegraphics[ height=6cm]{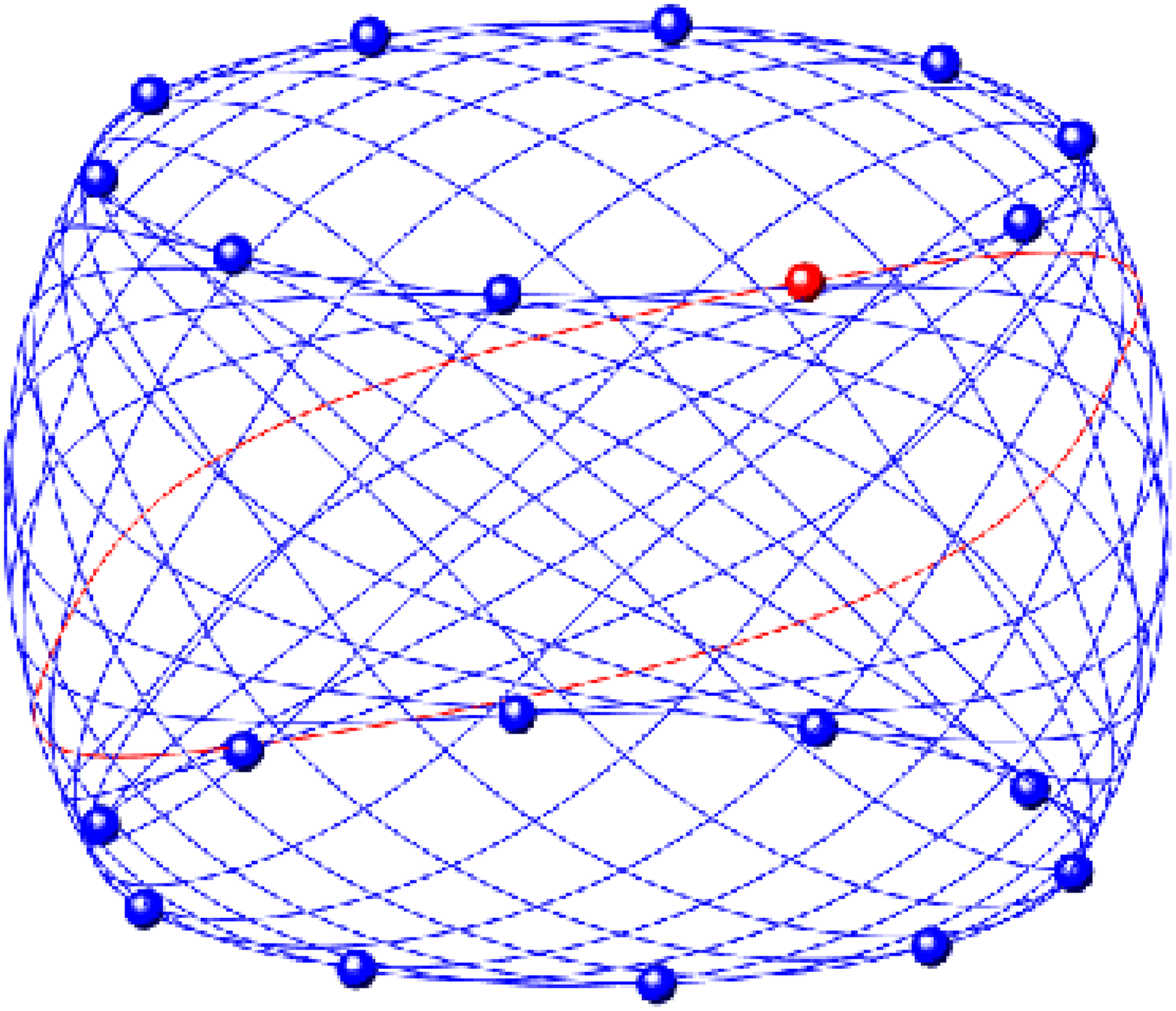}\hskip 1cm
    \includegraphics[ height=6cm]{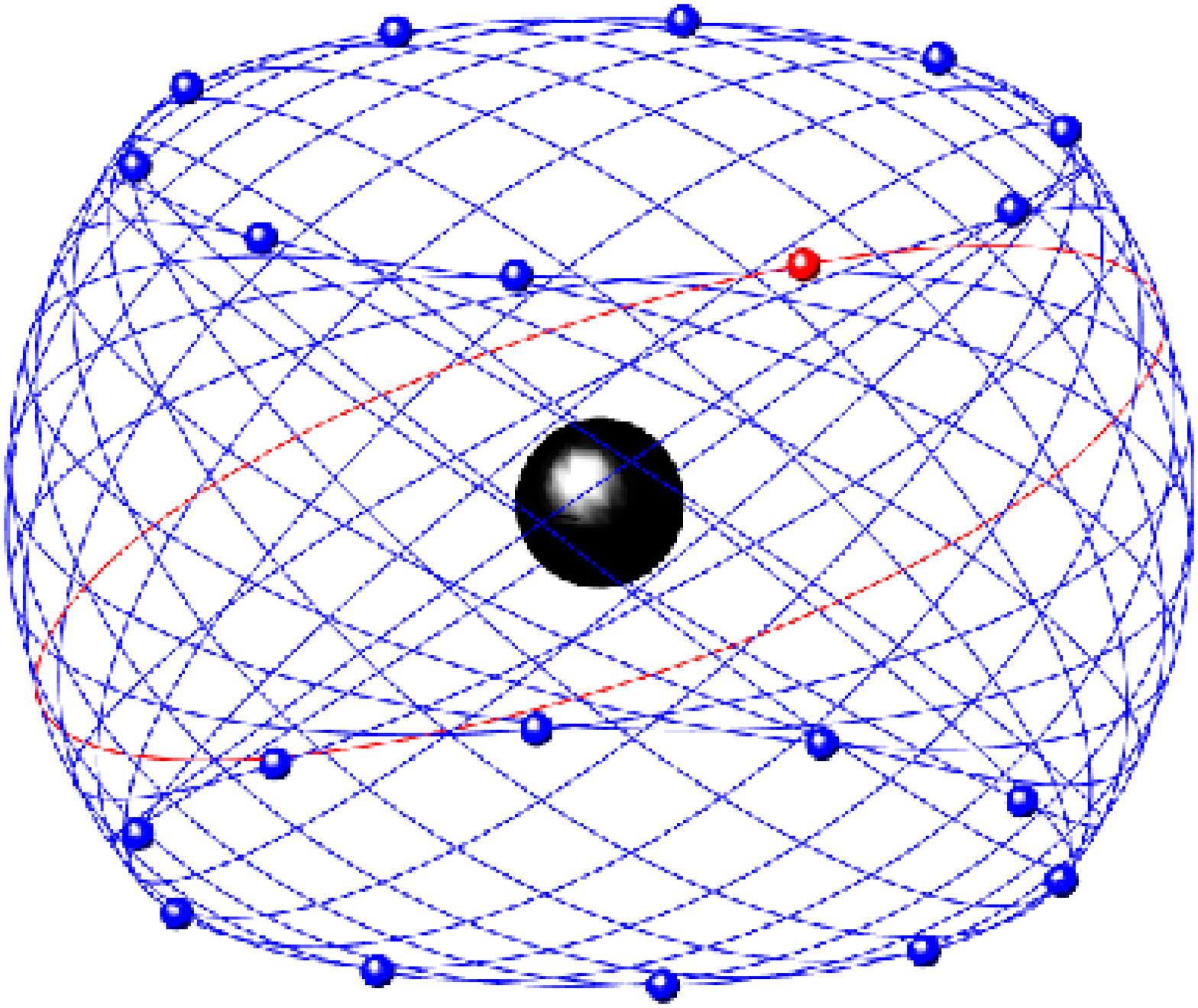}}
    \caption{The Hip-Hop solution for $20$ bodies. On the left it is shown the solution without central
    body, on the right the solution with a central body of mass $m_0=100$. The red curve
    represents the trajectory of the generating particle $u_I$.
    Note that here the difference
    between the trajectory of the satellites can be appreciated in the two pictures.}
   \label{fig:generalHipHop}
\end{figure}
The action of a solution $u_I^*$ with total collisions can be estimated with the results of Proposition
\ref{prop:lower_bound_gen}.
Then we can compare it with the action of the solution where
the satellites are placed at the vertexes of a planar regular $2N$-gon, 
which rotates uniformly around the origin, and check that the latter 
is lower.
Moreover, this problem is invariant under rotations around the
vertical axis, therefore solutions with partial collisions must lie on
a vertical plane.  Hence, we only have to find a small perturbation
$\bar u_I$ without collisions and with a lower value of the action.  This is
obtained in a way similar to \cite[Lemma 4]{hiphop} by applying a rotation of a
small angle $\gamma$ to the collision solution $u_I^*$. In this
case we have
\[
 {\cal A}(\bar{u}_I) - {\cal A}(u^*_I) = 2\int_{0}^{T/2}(A + B)dt,
\]
with
\small
\[
\begin{split}
  A&=  2\sqrt{2}N\sum_{h=1}^{N}\Biggl(\frac{1}{\sqrt{\rho^2\bigl(1-\cos\frac{(2h-1)\pi}{N}\bigr) + \zeta^2\bigl(1 - \sin^2\gamma\cos\frac{(2h-1)\pi}{N}\bigr)}} -  
  \frac{1}{\sqrt{\rho^2\bigl(1-\cos\frac{(2h-1)\pi}{N}\bigr)+ \zeta^2}} \Biggr),\cr
  B&=  2\sqrt{2}N\sum_{h=1}^{N-1}\Biggl(\frac{1}{\sqrt{(\rho^2+\zeta^2\sin^2\gamma)\bigl(1-\cos\frac{2h\pi}{N}\bigr)}} -  
  \frac{1}{\sqrt{\rho^2\bigl(1-\cos\frac{2h\pi}{N}\bigr)}} \Biggr).\cr
\end{split}
\]
\normalsize
The discussion for the $\Gamma$-limit is the same as in Section~\ref{minimizers_hiphop}, since the generating particle still
moves on a circular orbit.
%
An example of these orbits for $N=10$ is shown in Figure
\ref{fig:generalHipHop}, together with an approximation of the
minimizer of the $\Gamma$-limit.

\section{$\Z_2\times\Z_2$ symmetry}
\label{s:kleinGamma}
In this section we consider $N=4, \, \alpha=1$ and discuss the existence of periodic orbits with the symmetry of the
Klein group $\mathcal{G}=\Z_2 \times \Z_2$, appearing in \cite{fgn10} in the case without a central body. 
Using the rotations in $\R^3$, the Klein group can be written as 
\[
   \Z_2\times\Z_2 = \{ \Id,R_2,R_3,R_4 \},
\]
where
\[
   R_2 = 
   \begin{pmatrix}
      1 &  0 & 0 \\
      0 & -1 & 0 \\
      0 & 0 & -1
   \end{pmatrix}, \quad
   R_3 = 
   \begin{pmatrix}
      -1 &  0 & 0 \\
      0 & 1 & 0 \\
      0 & 0 & -1
   \end{pmatrix}, \quad
   R_4 = 
   \begin{pmatrix}
      -1 &  0 & 0 \\
      0 & -1 & 0 \\
      0 & 0 & 1
   \end{pmatrix}
\]
are the rotations of $\pi$ around the three coordinate axes.
Moreover, the collisions set $\Gamma$ corresponds to the union of the three coordinate
axes:
\[
   \Gamma = \bigcup_{i=1}^3 \{\alpha e_i, \, \alpha \in \R \},
\]
where $e_i \in \R^3$ is the unit vector corresponding to the $i$-th coordinate axis.

We consider loops $u \in \Lambda_{\Z_2 \times \Z_2}$ with the additional
symmetry
\begin{equation}
\begin{cases}
   u_I(t) = \widetilde{R}_3 u_I(-t), \\
   u_I(t) = \widetilde{R}_2 u_I(T/2 - t),
\end{cases}
\label{eq:refSym}
\end{equation}
where $\widetilde{R}_j$ is the reflection with respect to the plane $\{ x_j=0 \}$, passing through the origin and orthogonal to $e_j$.
Moreover, we restrict the action functional to the set 
\begin{equation}
   \K = \bigg{\{} u_I \in H^1_T(\R, \R^3 \setminus \Gamma) :
   u_I \text{ satisfies \eqref{eq:refSym} and } u_I(0) \in S_1, u_I(T/4) \in S_2 \bigg{\}},
   \label{eq:Kklein}
\end{equation}
where
\[
S_1 = \{ \alpha e_1 + \beta e_2, \alpha,\beta >0 \},\qquad  S_2 = \{ -\alpha e_1 + \beta
e_3, \alpha,\beta> 0\}
\]
are two quadrants of the planes $\{ x_3 = 0 \},
\, \{x_2 = 0 \}$, respectively. 
\begin{proposition}
The action $\A$ is coercive on $\K$ defined by \eqref{eq:Kklein}, for all $m_0 \geq 0$.
\end{proposition}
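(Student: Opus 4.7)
The plan is to mimic the scheme of Proposition~\ref{prop:HiphopCoercivity}: given a sequence $\{u_I^{(k)}\}_{k\in\N}\subset\K$ with $\|u_I^{(k)}\|_{H^1}\to\infty$, I will show that $\int_0^T|\dot u_I^{(k)}|^2\,dt\to\infty$, which together with the non-negativity of the potential yields $\A(u_I^{(k)})\to\infty$ and hence coercivity. The route I intend to follow is to verify the cone condition \eqref{eq:coneCondition} on $\K$, because then $|u_I(\tau)| \le c_{\mathcal{K}}\sqrt{T}\,\|\dot u_I\|_{L^2}$ by Cauchy--Schwarz, so that the whole $H^1$ norm is controlled by $\|\dot u_I\|_{L^2}$ alone.

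The key geometric step, and the one I expect to be the main obstacle, exploits the fact that the two endpoint constraints place $u_I(0)$ and $u_I(T/4)$ in quadrants that are separated by a coordinate hyperplane. Writing $u_I(0)=(a,b,0)$ with $a,b>0$ (from $u_I(0)\in S_1$) and $u_I(T/4)=(-c,0,d)$ with $c,d>0$ (from $u_I(T/4)\in S_2$), a direct computation gives
\[
|u_I(T/4)-u_I(0)|^2 = (a+c)^2 + b^2 + d^2 \;\ge\; a^2 + b^2 = |u_I(0)|^2,
\]
hence $|u_I(0)| \le \max_{t,s\in[0,T]}|u_I(t)-u_I(s)|$. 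The essential feature used is that the first coordinate has strictly opposite signs on $S_1$ and $S_2$, so that $(a+c)^2$ stays above $a^2$ rather than possibly collapsing to zero; without this sign separation the argument would fail. Applying the triangle inequality $|u_I(\tau)|\le |u_I(0)|+|u_I(\tau)-u_I(0)|$ will then yield the cone condition \eqref{eq:coneCondition} with $c_{\mathcal{K}}=2$.

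With the cone condition in place, coercivity is immediate: if $\|\dot u_I^{(k)}\|_{L^2}$ remained bounded along a subsequence, then by Cauchy--Schwarz also $\|u_I^{(k)}\|_\infty$ and hence $\|u_I^{(k)}\|_{H^1}$ would stay bounded, contradicting the assumption. Therefore $\|\dot u_I^{(k)}\|_{L^2}\to\infty$, and
\[
\A(u_I^{(k)}) \;\ge\; \frac{N}{2}\int_0^T |\dot u_I^{(k)}|^2\,dt \;\to\; \infty,
\]
which proves coercivity of $\A$ on $\K$ for every $m_0\ge 0$, uniformly in the presence or absence of the central mass.
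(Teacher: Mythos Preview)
Your proof is correct and follows essentially the same route as the paper: both arguments rest on the geometric observation that $S_1$ and $S_2$ lie in opposite half-spaces with respect to $\{x_1=0\}$, which forces $|u_I(0)-u_I(T/4)|$ to dominate $|u_I(0)|$ (the paper states the slightly stronger bound $|u_I(0)-u_I(T/4)|^2\ge |u_I(0)|^2+|u_I(T/4)|^2$, but your weaker form suffices). The only presentational difference is that you package this as the cone condition \eqref{eq:coneCondition} with $c_{\mathcal K}=2$ and conclude directly, whereas the paper argues by a short case split on whether $u_I^{(k)}(0)$ stays bounded; your version has the small bonus of explicitly verifying \eqref{eq:coneCondition}, which the paper later relies on for the $\Gamma$-convergence argument.
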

\begin{proof}
The proof is similar to Proposition~\ref{prop:HiphopCoercivity}. 
Note that by the definition of $S_1$ and $S_2$, we have
\begin{equation}
    \big| u_I(0) - u_I(T/4) \big| \geq \sqrt{|u_I(0)|^2 + |u_I(T/4)|^2},
    \label{eq:ineqT4}
\end{equation}
for every $u_I \in \K$. Let $\{ u_I^{(k)} \}_{k \in \N} \subset \K$ be a sequence such that $\norm{u_I^{(k)}}_{H^1}\to \infty$, hence necessarily 
\begin{equation}
    \int_0^T |\dot{u}^{(k)}_I(t)|^2 dt \to \infty.
    \label{eq:infVel2}
\end{equation}
Indeed, if
\[
    \int_0^T |u_I^{(k)}(t)|^2 dt \to \infty,
\]
then there exists a sequence $\{t_k\}_{k\in\N}\subset [0,T]$ such that $|u_I^{(k)}(t_k)| \to \infty$. If $u_I^{(k)}(0)$ is 
bounded, then $| u_I^{(k)}(0) - u_I^{(k)}(t_k)| \to \infty$, while if $u_I^{(k)}(0)$ is unbounded, then 
$| u_I^{(k)}(0) - u_I^{(k)}(T/4) | \to \infty$ by \eqref{eq:ineqT4}. Therefore in both cases \eqref{eq:infVel2} holds, and the coercivity follows from the fact that the kinetic energy goes to $+\infty$ along $\{ u_I^{(k)} \}_{k \in \N}$.
\end{proof}
Therefore, minimizers exist for every value of $m_0 \geq 0$. Next step is the exclusion of collisions. 
\subsection{Exclusion of collisions}
\paragraph{Total collisions.}
To exclude total collisions we still use level estimates.
\begin{proposition}
Let $u_I^* \in \overline{\K}$ be a solution with total collisions. Then the action satisfies
\begin{equation}
   \A(u_I^*)  \geq  6 \pi^{2/3}\bigg(3\sqrt{\frac{3}{2}} + 4m_0\bigg)^{2/3}T^{1/3}. 
   \label{eq:totCollEst4}
\end{equation}
\end{proposition}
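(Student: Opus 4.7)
The plan is to mimic Proposition~\ref{prop:hiphopTotColl}: invoke the general level-estimate Proposition~\ref{prop:lower_bound_gen} with $\alpha=1$, total mass $\mathcal{M}=4+m_0$, and $n=2$ total collisions per period. The ingredients needed are (i) an explicit formula for the symmetrized potential $\mathcal{U}_1(u)$ on the Klein orbit, (ii) a lower bound $U_{1,0}$ for the minimum of $\mathcal{U}_1$ on $\{\rho(u)=1\}$, and (iii) the fact that the symmetries \eqref{eq:refSym} enforce at least two total collisions per period, separated by $T/2$.

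For (i), writing $u_I=(x,y,z)$, the six inter-satellite distances $|u_h-u_k|$ take exactly three values, $2\sqrt{y^2+z^2}$, $2\sqrt{x^2+z^2}$, $2\sqrt{x^2+y^2}$, each occurring twice, while $|u_i|=|u_I|$ for every satellite. Hence
\[
   \mathcal{U}_1(u) = \frac{1}{4+m_0}\bigg(\frac{2}{\sqrt{y^2+z^2}}+\frac{2}{\sqrt{x^2+z^2}}+\frac{2}{\sqrt{x^2+y^2}}+\frac{8m_0}{|u_I|}\bigg).
\]
Since $\rho(u)=2|u_I|/\sqrt{4+m_0}$, the constraint $\rho(u)=1$ reads $|u_I|^2=(4+m_0)/4$.

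For (ii), unlike in the Hip-Hop case one cannot simply bound $|u_h-u_k|\leq 2|u_I|$: that would produce only the constant $3$ in place of the required $3\sqrt{3/2}$. Instead I would apply Jensen's inequality to the convex function $t\mapsto t^{-1/2}$ with arguments $y^2+z^2$, $x^2+z^2$, $x^2+y^2$, whose average is $\tfrac{2}{3}|u_I|^2$, obtaining
\[
   \frac{1}{\sqrt{y^2+z^2}}+\frac{1}{\sqrt{x^2+z^2}}+\frac{1}{\sqrt{x^2+y^2}}\geq \frac{3\sqrt{3/2}}{|u_I|},
\]
with equality along the diagonal $x^2=y^2=z^2$. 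Substituting $|u_I|^2=(4+m_0)/4$ yields $U_{1,0}\geq 4(3\sqrt{3/2}+4m_0)/(4+m_0)^{3/2}$.

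For (iii), composing the two identities in \eqref{eq:refSym} and using the elementary relation $\widetilde{R}_2\widetilde{R}_3=R_2$, I would derive $u_I(t+T/2)=R_2 u_I(t)$; since $R_2$ is orthogonal, any total collision $u_I^*(t_c)=0$ is mirrored by a second one at $t_c+T/2$. Feeding $n=2$, $T/n=T/2$, $\mathcal{M}=4+m_0$ and the above $U_{1,0}$ into Proposition~\ref{prop:lower_bound_gen}, and simplifying the resulting powers of $2$, reduces the estimate to the claimed $6\pi^{2/3}(3\sqrt{3/2}+4m_0)^{2/3}T^{1/3}$. The only place where more care is needed than in the Hip-Hop argument is step (ii): the cruder estimate used there is insufficient, and one has to exploit the strict convexity of $t\mapsto t^{-1/2}$ via Jensen to extract the sharper constant $3\sqrt{3/2}$.
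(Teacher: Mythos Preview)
Your proposal is correct and follows essentially the same route as the paper: compute $\mathcal{U}_1$ for the Klein configuration, minimize it on $\rho(u)=1$, observe from \eqref{eq:refSym} that total collisions come in pairs separated by $T/2$, and plug into Proposition~\ref{prop:lower_bound_gen}. The paper simply asserts the value $U_{1,0}=\frac{4}{(4+m_0)^{3/2}}\bigl(3\sqrt{3/2}+4m_0\bigr)$ and the existence of two collisions per period without justification, so your Jensen argument and your derivation of $u_I(t+T/2)=R_2 u_I(t)$ actually fill in details the paper leaves implicit.
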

\begin{proof}
With a notation similar to Section~\ref{s:hiphopGamma}, we have $\mathcal{M} = 4+m_0$ and
\[
   \begin{split}
   \mathcal{U}_1(u) &  = \frac{1}{4+m_0} \bigg( \sum_{\substack{h,k=1 \\ h \neq k}}^4 \frac{1}{|u_h-u_k|} +
   2\sum_{i=1}^4 \frac{m_0}{|u_i|} \bigg) \\
   & = \frac{2}{4+m_0} \bigg(  \sum_{j=1}^3\frac{1}{|u_I \times e_j|} +
   \frac{4 m_0}{|u_I|} \bigg).
   \end{split}
\] 
Moreover, $\rho(u) = 1$ if and only if 
\[
   |u_I| = \frac{\sqrt{4+m_0}}{2},
\]
hence
\[
   U_{1,0}:= \min_{\rho(u)=1} \mathcal{U}_1(u) =
   \frac{4}{(4+m_0)^{3/2}}\bigg(3\sqrt{\frac{3}{2}}+4m_0\bigg). 
\]
Let $u_I^* \in \overline{\K}$ be a solution with a total collision. Because of the
symmetry \eqref{eq:refSym}, there are at least two total collisions per period, therefore, from
Proposition \ref{prop:lower_bound_gen}, its action satisfies
\[
   \begin{split}
   \A(u_I^*) & \geq 2\cdot \frac{3}{2} (4+m_0)(\pi U_{1,0})^{2/3}\bigg( \frac{T}{2}\bigg)^{1/3} \\
   & = \frac{3}{2^{1/3}}\pi^{2/3}4^{2/3}\bigg(3\sqrt{\frac{3}{2}} +
   4m_0\bigg)^{2/3}T^{1/3} \\
   & = 6 \pi^{2/3}\bigg(3\sqrt{\frac{3}{2}} +
   4m_0\bigg)^{2/3}T^{1/3}. 
   \end{split}
\]
\end{proof}

We search for a collision-less loop whose action is less than the lower
bound in \eqref{eq:totCollEst4}. To do that, let $\rho > 0 $ and take
a loop $v_I \in \K$ such that the generating particle moves with
uniform velocity on a closed curve constructed as the union of four half
circles $C_1^{\pm}, C_2^{\pm}$ of radius $\rho$.  $C_1^{\pm}$ lies on
the plane $\{ x_3 = \pm \rho \}$, with its center on the axis $x_3$,
and $C_2^{\pm}$ lies on the plane $\{ x_2 = \pm \rho \}$, with its
center on the axis $x_2$, see Figure \ref{fig:kleinClass} for a
sketch.
\begin{figure}[!ht]
    \centerline{
       \includegraphics[width=6cm]{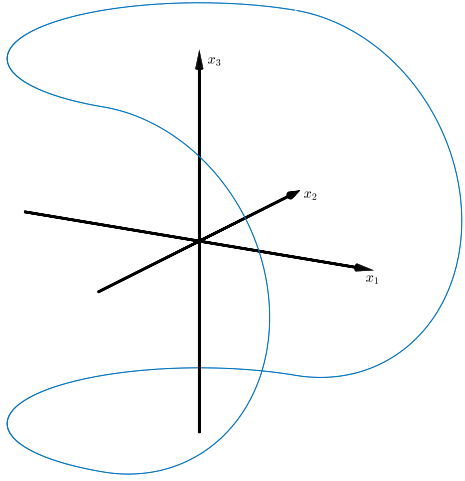}}
    \caption{The shape of the generating particle of the loop $v$ used to exclude total collisions.}
  \label{fig:kleinClass}
\end{figure}
\begin{proposition}
    Let $u_I^* \in \overline{\K}$ be a solution with total collisions. Then we have 
    \begin{equation}
    \A(u_I^*) > \A(v_I),
    \label{eq:totCollExclusionKlein} 
    \end{equation}
    for some $\rho>0$.
\end{proposition}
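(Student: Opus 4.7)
The plan is to construct the loop $v_I$ with uniform speed on the four half-circles, compute $\A(v_I)$ explicitly as a function of $\rho$, optimize in $\rho$, and check that the minimum value is strictly less than the total-collision lower bound \eqref{eq:totCollEst4}. The upshot is that both quantities scale like $T^{1/3}$ and like $m_0^{2/3}$ for large $m_0$, so the comparison reduces to a numerical inequality of two explicit functions of $m_0$, exactly in the spirit of Proposition \ref{prop:actionSquare}.

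\textbf{Computation of $\A(v_I)$.} The total length of the four half-circles is $4\pi\rho$, so $|\dot v_I|\equiv 4\pi\rho/T$ is constant; the kinetic contribution to $\A(v_I)$ (including the prefactor $N=4$ of \eqref{eq:symmActFunAlpha}) is therefore $32\pi^2\rho^2/T$. For the potential term, the $\Z_2\times\Z_2$ symmetries send each half-circle to the others, so the four contributions are equal and one only needs to integrate over $C_1^+$, parametrized as $(\rho\cos\theta,\rho\sin\theta,\rho)$. A direct calculation, using $|v_I|=\rho\sqrt 2$ and the three distances $|(R_j-I)v_I|$ as explicit trigonometric functions of $\theta$, yields a potential contribution of the form $T\,B(m_0)/\rho$, with
\[
B(m_0)=2\sqrt{2}\,m_0+\tfrac{1}{\pi}(2I_0+\pi),\qquad I_0=\int_0^\pi\!\frac{d\theta}{\sqrt{1+\sin^2\theta}}=\sqrt{2}\,K(1/\sqrt 2),
\]
where $K$ is the complete elliptic integral of the first kind.

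\textbf{Optimization in $\rho$ and comparison.} One then has $\A(v_I)=32\pi^2\rho^2/T+B(m_0)T/\rho$, a convex function of $\rho>0$. An elementary AM--GM (or first-order) optimization gives
\[
\min_{\rho>0}\A(v_I)=\tfrac{3}{2^{2/3}}(32\pi^2)^{1/3}B(m_0)^{2/3}T^{1/3}=6\pi^{2/3}B(m_0)^{2/3}T^{1/3},
\]
attained at some explicit $\rho^*>0$. Plugging into \eqref{eq:totCollEst4}, the desired inequality $\A(u_I^*)>\A(v_I^{})$ reduces to the purely numerical statement
\[
B(m_0)<3\sqrt{3/2}+4m_0,\qquad \forall\,m_0\ge 0.
\]
Since $2\sqrt 2<4$, the right-hand side grows strictly faster in $m_0$ than $B(m_0)$, so it is enough to verify the inequality at $m_0=0$, i.e.\ to check $(2I_0+\pi)/\pi<3\sqrt{3/2}$. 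Using $K(1/\sqrt 2)\approx 1.854$ (hence $I_0\approx 2.62$) this is immediate.

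\textbf{Main obstacle.} The one non-routine ingredient is the evaluation of the elliptic integral $I_0$ and the check that the resulting constant is small enough to make the $m_0=0$ comparison work with room to spare; everything else is standard Keplerian scaling. A different geometric choice for the test loop would leave the structural argument intact but force a different elliptic integral into the comparison, so the tightness of the test-loop construction is what actually does the work.
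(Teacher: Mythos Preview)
Your argument is correct, but it takes a harder route than the paper's. The paper does not compute the interaction potential along $v_I$ exactly; it simply observes that $|(R-I)v_I(t)|\ge 2\rho$ for every $R\in\Z_2\times\Z_2\setminus\{I\}$ and every $t$, which immediately gives the upper bound
\[
\A(v_I)\le \frac{32\pi^2\rho^2}{T}+\frac{(3+2\sqrt{2}\,m_0)T}{\rho},
\]
and after optimizing in $\rho$ this yields $\A(v_I)\le 6\pi^{2/3}(3+2\sqrt{2}\,m_0)^{2/3}T^{1/3}$. The comparison with \eqref{eq:totCollEst4} then reduces to $3+2\sqrt{2}\,m_0<3\sqrt{3/2}+4m_0$, which is immediate for all $m_0\ge 0$ without any numerics. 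Your exact computation produces the sharper constant $(2I_0+\pi)/\pi\approx 2.67$ in place of the paper's $3$, but this extra precision is not needed and forces an appeal to the numerical value of an elliptic integral. One minor point: the statement that ``the $\Z_2\times\Z_2$ symmetries send each half-circle to the others'' is not quite right (the rotations $R_j$ only exchange $C_1^+\leftrightarrow C_1^-$ and $C_2^+\leftrightarrow C_2^-$); what actually equalizes the integrals over $C_1^\pm$ and $C_2^\pm$ is the coordinate swap $x_2\leftrightarrow x_3$, which permutes $R_3$ and $R_4$ and hence preserves the potential. Your direct computation on $C_1^+$ is correct regardless, so this does not affect the argument.
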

\begin{proof}
From the definition of $v_I$ we have
\[
|(R-I)v_I(t)|\geq 2\rho,
\]
for all $R \in \mathcal{R} \setminus \{ I \}$ and $t \in \R$, and $|v_I(t)|=\sqrt{2} \rho$, for all
$t \in \R$. Hence for the action we have the
estimate
\begin{equation}
   \A(v_I) \leq 32\frac{\pi^2 \rho^2}{T} + \frac{3+2\sqrt{2}m_0}{\rho}T.
   \label{eq:circEst}
\end{equation}
Choosing
\[
   \rho = \bigg(\frac{3+2\sqrt{2}m_0}{64\pi^2}T^2\bigg)^{1/3}
\]
we minimize the value of the right hand side of \eqref{eq:circEst}, therefore
\[
   \begin{split}
   \A(v_I) & \leq \bigg( \frac{32\pi^2}{(64\pi^2)^{2/3}} + (64\pi^2)^{1/3}
   \bigg)(3+2\sqrt{2}m_0)^{2/3}T^{1/3} \\
   & = 6\pi^{2/3}(3+2\sqrt{2}m_0)^{2/3}T^{1/3}.
\end{split}
\]
Comparing this estimate with \eqref{eq:totCollEst4} we see that
\[
   \A(v_I) < \A(u_I^*), 
\]
for every value of $m_0 \geq 0$. 
\end{proof}
\begin{corollary}
    Minimizers of $\A$ in $\overline{\K}$ are free of total collisions, for every value of the central mass $m_0 \geq 0$.
\end{corollary}
\paragraph{Partial collisions.}
Let $u^*_I \in \overline{\K}$ be a minimizer with partial collisions and $(t_1,t_2)$ be an interval of
regularity. Then $u^*_I$ is a solution of the equation
\begin{equation}
   \ddot{w} = \sum_{R \in \Z_2 \times \Z_2 \setminus\{ I \}} \frac{(R-I)w}{|(R-I)w|^3}-m_0\frac{w}{|w|^3}, \quad t \in (t_1,t_2).
   \label{eq:newton4b}
\end{equation}
Partial collisions can be excluded as in \cite{fgn10}. Indeed, they can only occur on a coordinate axes 
and, using the blow-up technique \cite{FT2004}, they can be seen locally as parabolic double collisions in a perturbed Kepler 
problem. The term due to the presence of the central body with mass $m_0$ turns out to be irrelevant
for the discussion (as for the case of the Hip-Hop solution of Section \ref{s:hiphopGamma}), 
since it is included in the perturbation, and does not play any relevant role in the
estimates. The proof is similar to the one recalled in
Section \ref{ss:platoPartialColl}, where the symmetry of Platonic polyhedra is considered.

%
%
%
Therefore, for every choice of the mass $m_0 \geq 0$, there exists
a collision-free minimizer, hence a classical solution of the $(1+4)$-body problem. 

\subsection{Minimizers of the $\Gamma$-limit}
Let $u_I^* \in \overline{\K}$ be a minimizer of the $\Gamma$-limit
functional \eqref{eq:gammaLim}. We note that there exists $\bar{u}_I \in \partial\K$ such
that $\bar{u}_I([0,T])$ is a Keplerian circle with center at the
origin, hence we can exclude total collisions in $u_I^*$ like in Section~\ref{s:hiphopGamma}.
%
It follows that, up to translations of time, $(0,T/4)$ is an interval of regularity of the solution,
and $u_I^*$ solves the Keplerian equations of motion.
Therefore $u_I^*\big( [0,T/4] \big) \subseteq \Pi$ where $\Pi \subset \R^3$ is a plane
passing through the origin.
Moreover
\begin{equation}
   u_I^*(0) \in \overline{S}_1, \quad u_I^*(T/4) \in \overline{S}_2.
   \label{eq:BCKlein}
\end{equation}
\begin{figure}[!ht]
    \centerline{
    \includegraphics[width=6cm, height=6cm]{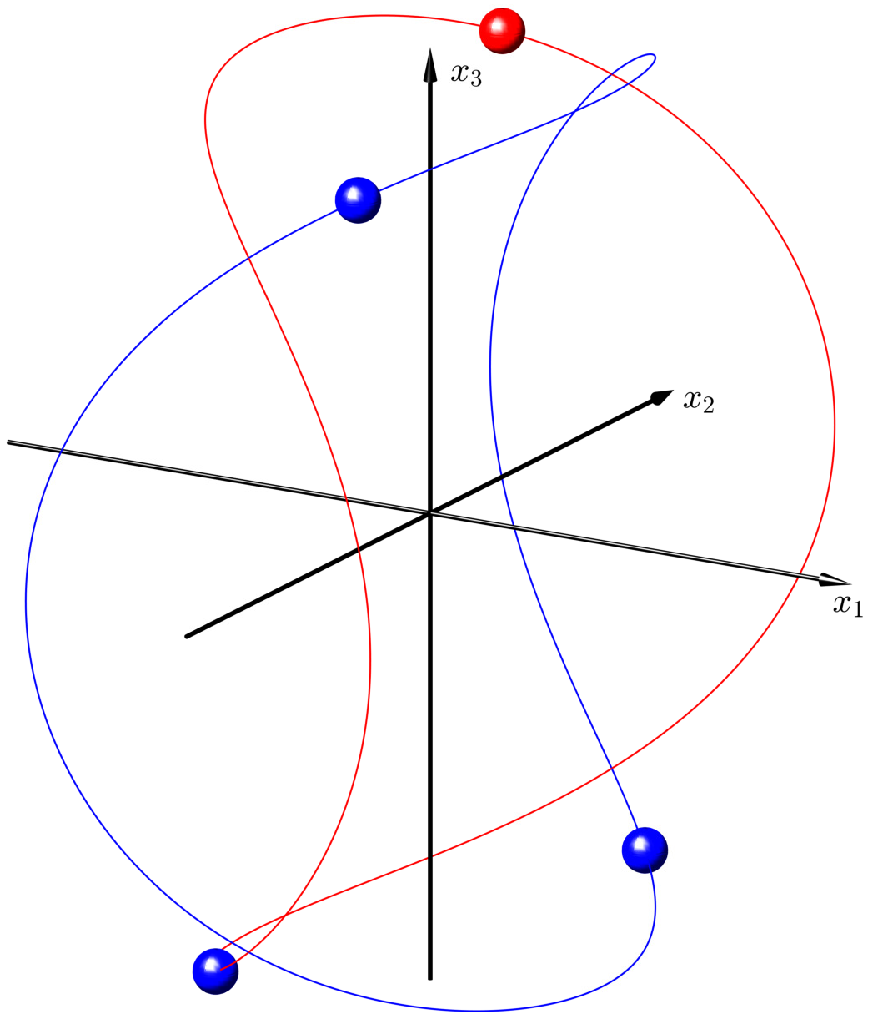}
    \hskip 0.5cm
    \includegraphics[width=6cm, height=6cm]{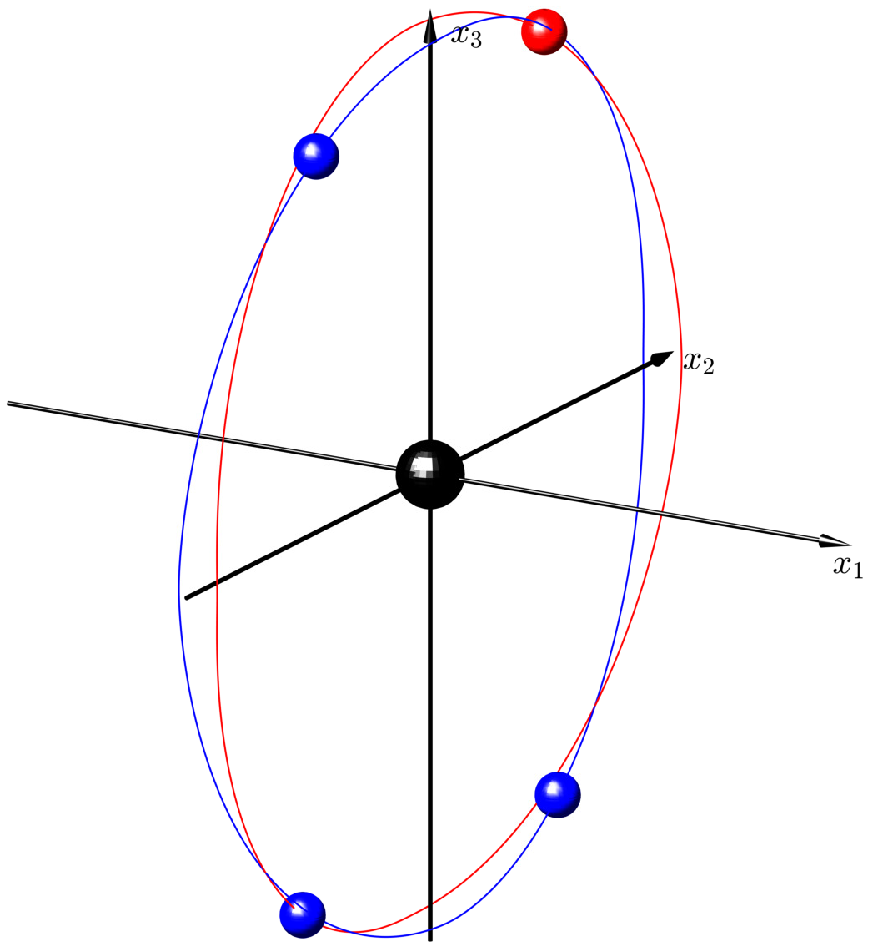}
    }
  \caption{Two minimizers of the action functional with the $\Z_2\times\Z_2$ symmetry. On
  the left $m_0=0$, on the right $m_0=1000$. The blue curve is the trajectory of the
  generating particle.}
  \label{fig:gammaMin4body}
\end{figure}
By conditions \eqref{eq:BCKlein} and \eqref{eq:refSym}, $\Pi$ coincides with a coordinate plane
$\{x_1=0\}, \{x_2=0\}$ or $\{x_3 = 0\}$. 
In fact, if this were not the case, then $u_I^*$ would lie on the same plane also for times $t>T/4$,
so that it would not belong to $\overline{\K}$. 
Moreover, $\Pi$ cannot be $\{x_2 = 0\}$ or $\{x_3 = 0\}$. Indeed, loops in $\overline{\K}$
are entirely contained in these two planes and, since they satisfy \eqref{eq:refSym}, are necessarily
multiple legs solutions, which have at least two collisions per period, therefore
they cannot be minimizers.  
The only possibility is that $\Pi = \{x_1 = 0\}$, and extending the minimizer to the whole time interval
$[0,T]$ using reflections, we obtain that $u_I^*$ lies entirely on $\Pi$.
Moreover, $u_I^*([0,T])$ cannot be an ellipse, since ellipses do not satisfy \eqref{eq:refSym}.
The only remaining possibility is that $u_I^*$ is circular, hence the minimizer $u_I^* \in \partial\K$ is 
\[
   \begin{split}
      u_{I}^*(t) = r\big(\cos(\omega t) e_2 +  \sin(\omega t)e_3\big), \quad \omega =
   \frac{2\pi}{T}, 
   \end{split}
\]
where  
\[
r = \bigg(\frac{T}{2\pi}\bigg)^{\frac{2}{3}}
\]
is given by the third Kepler's law.
In Figure \ref{fig:gammaMin4body} we draw the trajectories of a minimizer for different
values of the mass $m_0$. As $m_0$ increases, the
trajectory of the generating particle becomes closer and closer to the circular loop lying
in the plane $\{x_1=0\}$, and the four satellites pass closer and closer to two 
simultaneous double collisions.

It is worth noting that, in the limit $m_0 \to \infty$, the satellites do not bounce 
back and forth at double collisions, as it happens when we regularize the
Keplerian equations of motion (see, for instance, \cite{mcgehee1981}). 
Instead, they continue moving on the same circular trajectory and cross each other at collisions.
\section{Symmetry of Platonic polyhedra}
\label{s:platoGamma}
In this section we take into account the symmetry groups of 
Platonic polyhedra.  Periodic orbits sharing these symmetries have already been 
found in \cite{ferrario07,fgn10, FG18}, in the case without central
mass.

We denote with $\mathcal{T}$ the rotation group of the Tetrahedron,
with $\mathcal{O}$ the rotation group of the Octahedron and the Cube,
and with $\mathcal{I}$ the rotation group of the Icosahedron and the
Dodecahedron.  Then we take $\mathcal{G} = \mathcal{R} \in \{
\mathcal{T}, \mathcal{O}, \mathcal{I} \}$ and set the number of
satellites to be $N = |\mathcal{R}|$.  Therefore $N$ can be equal to
$12, 24$ or $60$, depending on the selected group.  Here we consider
 $\alpha \in [1,2)$ for the exponent of the potential energy,
  defining the force of attraction.
  Besides the constraint given by $u\in\Lambda_{\cal R}$, i.e.
  \begin{itemize}
\item[(a)] \hskip 4.5cm $u_R(t) = Ru_I(t), \qquad\forall R\in{\cal R}$,
  \end{itemize}
we assume that
\begin{itemize}
   \item[(b)] the trajectory of the generating particle $u_I$ belongs
     to a given non-trivial free-homotopy class of $\R^3 \setminus
     \Gamma$, where
      \[
         \Gamma = \cup_{R \in \mathcal{R} \setminus \{ I \}} r(R), 
      \]
      is the set of collisions and $r(R)$ is the rotation axis of $R$;
   \item[(c)] there exist $R \in \mathcal{R}$ and an integer $M$ such that
      \begin{equation}
         u_I(t+T/M) = Ru_I(t),
         \label{eq:extrasym}
      \end{equation}
      for all $t \in \R$.
\end{itemize}
We search for minimizers of $\A^\alpha$ in the cone
\begin{equation}
   \K = \{ u_I \in H_1^T(\R, \R^3\setminus\Gamma) : \text{ (b) and (c)
     hold} \}.
     \label{eq:Kplato}
\end{equation}
\begin{proposition}
    Assume that the free-homotopy class in (b) is not represented by a loop winding around one rotation axis only. Then the action $\A^\alpha$ is coercive on $\K$ defined by \eqref{eq:Kplato}, for all $m_0\geq 0 $ and for all $\alpha \geq 1$.
\label{prop:PlatoCoercivity}
\end{proposition}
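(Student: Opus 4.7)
The plan is to argue by contradiction: I suppose there exists a sequence $\{u_I^{(k)}\}_{k\in\N}\subset \K$ with $\|u_I^{(k)}\|_{H^1}\to\infty$ along which $\A^\alpha(u_I^{(k)})$ stays bounded, and deduce a topological contradiction from condition (b). Since the potential $U_\alpha$ is non-negative, a bound on the action gives a uniform kinetic bound
\[
\int_0^T |\dot u_I^{(k)}(t)|^2\,dt \leq 2E,
\]
so by H\"older's inequality the oscillation $\max_{t,s}|u_I^{(k)}(t)-u_I^{(k)}(s)|$ is bounded by $D := \sqrt{2ET}$. Because the $H^1$ norm diverges while the kinetic part is controlled, $\|u_I^{(k)}\|_{L^2}\to +\infty$, hence $\max_t|u_I^{(k)}(t)|\to +\infty$, and combining with the diameter bound also $\min_t|u_I^{(k)}(t)|\to +\infty$. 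Thus for large $k$ the trajectory sits in a ball $B(x_k,D)$ with $|x_k|\to+\infty$.

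The main (and only non-routine) step is a geometric argument based on the finiteness of the rotation axes. Let $\theta_{\min}>0$ be the minimum angle between two distinct axes of $\mathcal{R}$. For any point $x\in\R^3\setminus\{0\}$ and any line $L$ through the origin the distance from $x$ to $L$ is $|x|\sin\theta_{x,L}$; in particular $L\cap B(x,D)\neq\emptyset$ iff $\sin\theta_{x,L}\leq D/|x|$. As soon as $|x_k|$ is so large that $D/|x_k|<\sin(\theta_{\min}/2)$, at most one rotation axis of $\mathcal{R}$ can meet $B(x_k,D)$, since two distinct axes cannot both lie within an angular cone of half-aperture $\theta_{\min}/2$ around the direction $x_k/|x_k|$.

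Combining these two observations yields the contradiction. If $B(x_k,D)\cap\Gamma=\emptyset$, then the ball is a simply connected subset of $\R^3\setminus\Gamma$ and the loop $u_I^{(k)}$ is null-homotopic in $\R^3\setminus\Gamma$, contradicting the non-triviality of the free-homotopy class in assumption (b). Otherwise $B(x_k,D)\setminus\Gamma$ equals a ball punctured by a single segment of one rotation axis $r(R_0)$, hence is homotopy equivalent to a circle encircling that axis. The image of the inclusion $\pi_1(B(x_k,D)\setminus\Gamma)\to\pi_1(\R^3\setminus\Gamma)$ consists of iterates of the loop around $r(R_0)$, so the free-homotopy class of $u_I^{(k)}$ is represented by a loop winding around a single rotation axis, which is excluded by hypothesis. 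In either case we reach a contradiction, so $\int_0^T|\dot u_I^{(k)}|^2\,dt\to +\infty$, proving the coercivity of $\A^\alpha$ on $\K$.

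The main obstacle is precisely the geometric/topological step: the quantitative estimate that for $|x_k|$ large only one rotation axis can meet $B(x_k,D)$, together with the identification of the image of $\pi_1$ of a ball-minus-one-line into $\pi_1(\R^3\setminus\Gamma)$ with the subgroup generated by a single axial winding. Notice that condition (c) plays no role here; coercivity follows from (b) alone, while (c) is needed later to enrich the symmetry of the solutions.
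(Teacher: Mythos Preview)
Your proof is correct and follows essentially the same strategy as the paper's. The paper's argument hinges on the inequality
\[
\max_{t,s\in[0,T]}|u_I(t)-u_I(s)|\ \geq\ c_{\K}\,\min_{t\in[0,T]}|u_I(t)|\qquad\text{for all }u_I\in\K,
\]
which it simply asserts as a consequence of the hypothesis on the free-homotopy class, and then uses to show that if $\|u_I^{(k)}\|_{H^1}\to\infty$ the kinetic term must diverge. Your contradiction argument (bounded action $\Rightarrow$ bounded diameter $\Rightarrow$ for large $|x_k|$ the loop lives in a ball $B(x_k,D)$ meeting at most one rotation axis $\Rightarrow$ the class is trivial or winds around a single axis) is precisely a proof of that asserted inequality in contrapositive form. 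In other words, you have supplied the geometric/topological detail that the paper omits. Your observation that condition~(c) plays no role in coercivity is also correct; the paper's proof likewise uses only~(b).
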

\begin{proof}
    Again, the proof is similar to Proposition~\ref{prop:HiphopCoercivity}.  By the assumption made on the free-homotopy class, we have that there exists $c_{\K} > 0$ such that
    \begin{equation}
        \max_{t,s \in [0,T]} |u_I(t) - u_I(s)| \geq c_{\K}\min_{t \in [0,T]}|u_I(t)|,
    \end{equation}
    for all $u_I \in \K$.
    Let $\{ u_I^{(k)} \}_{k \in \N} \subset \K$ be a sequence such that $\norm{u_I^{(k)}}_{H^1}\to \infty$, 
    hence necessarily 
    \begin{equation}
        \int_0^T |\dot{u}^{(k)}_I|^2 dt \to \infty.
        \label{eq:infVel3}
    \end{equation}
    Indeed, if
    \[
        \int_0^T |u_I^{(k)}(t)|^2 dt \to \infty,
    \]
    then there exists a sequence $\{t_k\}_{k\in\N}\subset [0,T]$ such that $|u_I^{(k)}(t_k)| \to \infty$. Let $t_k^m$ be such that $|u_I^{(k)}(t_k^m) |= \min_{t\in[0,T]} |u_I^k(t)|$. Then
    \[
    \begin{split}
        |u_I^{(k)}(t_k)| & \leq |u_I^{(k)}(t_k^m)| + |u_I^{k}(t_k) - u_I^{k}(t_k^m)| \\
        & \leq (1/c_\K + 1) \max_{t,s \in [0,T]} |u_I^{(k)}(t) - u_I^{(k)}(s)| \to \infty, 
    \end{split}
    \]
    and this is sufficient to have \eqref{eq:infVel3}.
\end{proof}
Therefore, minimizers exist for every value
of the mass $m_0$ of the central body, but they may have collisions. In
the following, we shall state sufficient conditions to exclude partial and
total collisions, depending on the choice of the free-homotopy class.

\subsection{Representation of the free-homotopy classes}
\label{ss:encoding}
We use two different representations of the free-homotopy classes of
$\R^3\setminus\Gamma$.
\begin{enumerate}
   \item Let $\widetilde{\mathcal{R}}$ be the full symmetry group
     associated to $\mathcal{R}$, including reflections. These
     reflections induce a tessellation $\mathscr{T}_{\cal R}$ of the unit sphere
     $\unitsphere$, composed by $2N$ spherical triangles (see
     Figure~\ref{fig:encoding}, left), whose vertexes correspond to
     the set of poles $\mathcal{P} = \Gamma \cap \unitsphere$.  A
     free-homotopy class of $\R^3 \setminus \Gamma$ is described by a
     periodic sequence $\mathfrak{t}=\{\tau_k\}_{k\in\Z}$ of adjacent
     triangles, which is uniquely determined up to translations.
   \item We can also associate an Archimedean polyhedron
     $\mathcal{Q}_{\mathcal{R}}$ to $\mathcal{R}$, constructed in the 
     following way (see \cite{fgn10} for details). 
     Let $\tau$ be a triangle of the tessellation, and let $q$ be the middle point of the
     edge opposite to the right dihedral angle of $\tau$. Let $q_1, q_2$ be the points
     obtained by reflecting $q$ with respect to the planes passing through the two sides
     of $\tau$ not containing $q$ and the origin, and denote with $[q,q_{i}], i=1,2$ the segment connecting
     $q$ and $q_i, i=1,2$. Then the edges of $\mathcal{Q}_{\mathcal{R}}$ correspond
     to the union of the images of the two segments through the rotations in
     $\mathcal{R}$, i.e. $\{ R[q, q_1], R[q,q_2]\}_{R \in \mathcal{R}}$ (see
     Figures~\ref{fig:encoding} and \ref{fig:3poly} for a visual representation of the
     polyhedra).
     The faces of this polyhedron are in 1-1
     correspondence with the poles $p \in \mathcal{P}$, so that each
     rotation axis $r$ passes through the center of two opposite
     faces.  A free-homotopy class of $\R^3\setminus\Gamma$ is
     described by a periodic sequence $\nu=\{\nu_k\}_{k\in\Z}$ of
     vertexes of ${\cal Q}_{\cal R}$ such that each segment
     $[\nu_k,\nu_{k+1}]$ is an edge of ${\cal Q}_{\cal R}$.  Also the
     sequence $\nu$ is uniquely determined up to translations, and it
     can be used to construct a piecewise linear loop $v_I^\nu$ of
     $\K$ (see Figure~\ref{fig:encoding}, right).
\end{enumerate}

\begin{figure}[!h]
\centerline{
\epsfig{figure=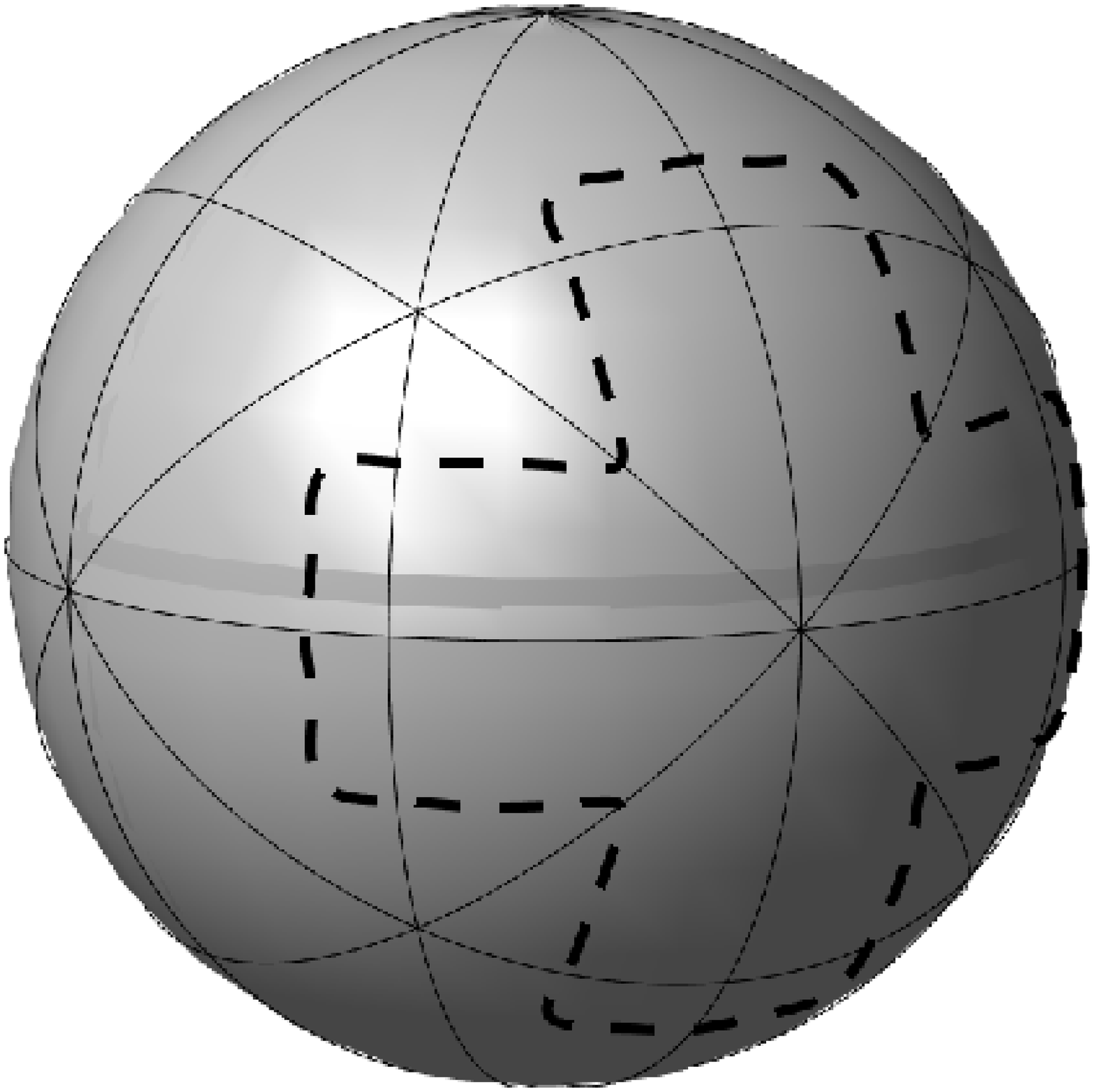,width=4.5cm}
\hskip 1.5cm
\epsfig{figure=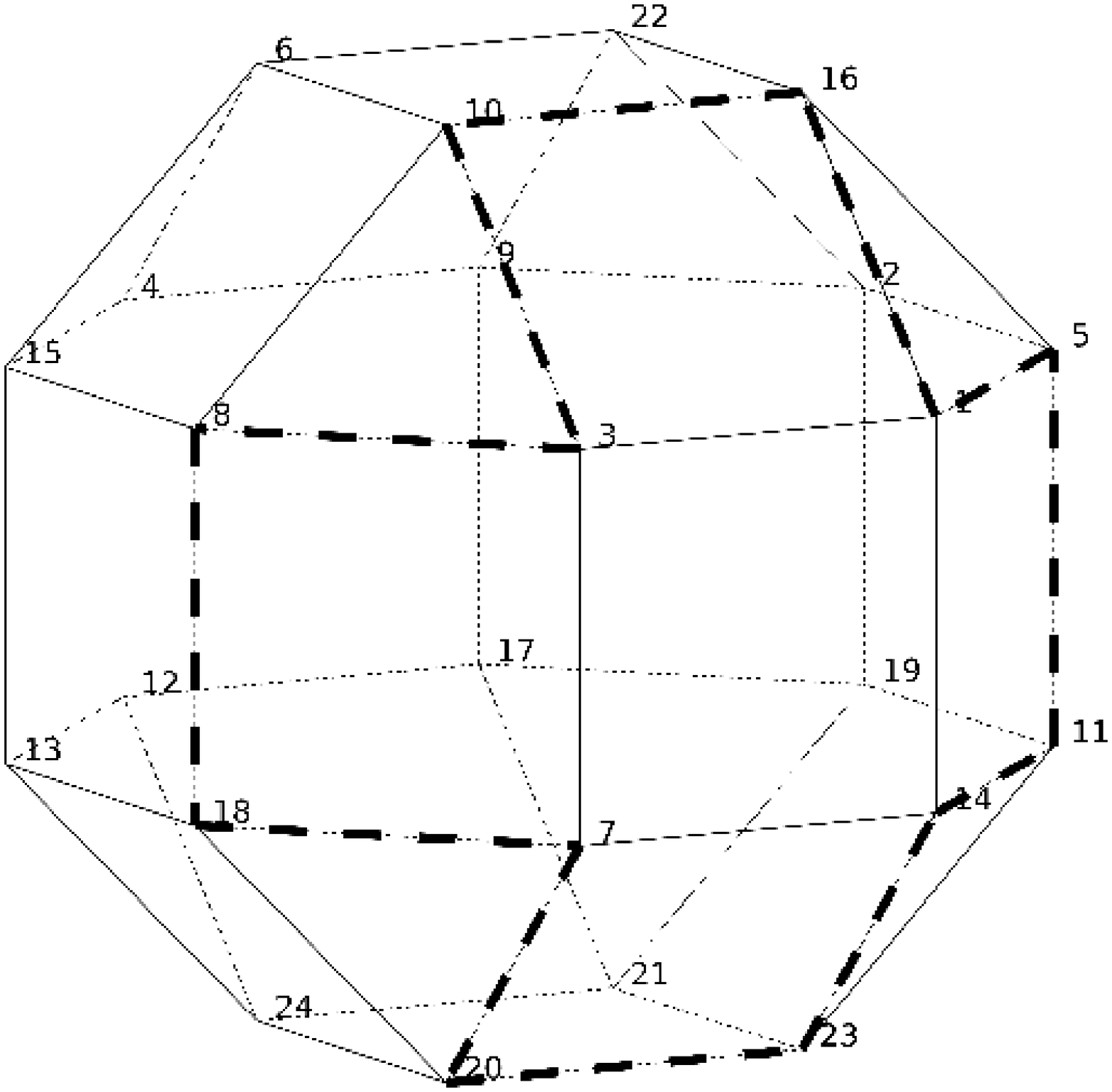,width=4.2cm}}
\caption{ The two ways of encoding a free-homotopy class of $\R^3
  \setminus \Gamma$, for $\mathcal{R}=\mathcal{O}$.  A possible
  trajectory of the generating particle (dashed path) is determined by
  a sequence $\mathfrak{t}$ of spherical triangles of the
  tessellation $\mathscr{T}_{\cal O}$ (on the left) or by a sequence $\nu$ of vertexes of the Archimedean polyhedron $\mathcal{Q}_{\mathcal{O}}$ (on the right).}
\label{fig:encoding}
\end{figure}
Therefore, to select a cone $\K$ we can use either a sequence
$\mathfrak{t}$ of triangles of $\mathscr{T}_{\cal R}$ or a sequence $\nu$ of vertexes of $\mathcal{Q}_{\mathcal{R}}$.  For
later use we introduce the following definitions.
\begin{definition}
   We say that a cone $\K$ is $\alpha$-\textit{simple} if the
   corresponding sequence $\mathfrak{t}$ does not contain a string
   $\tau_k\ldots \tau_{k+2[\frac{1}{2-\alpha}]\mathfrak{o}_p}$ such that
\[
   \bigcap_{j=0}^{2[\frac{1}{2-\alpha}]\mathfrak{o}_p}\overline{\tau_{k+j}} = p,
\]
where $p\in\cal P$, $\mathfrak{o}_p$ is the order of $p$ and $[\, \cdot \,]$ denotes the integer part of a real number.
\label{def:simpCone}
\end{definition}

\begin{definition}
   We say that a cone $\K$ is \textit{tied to two coboundary axes} if
  \begin{itemize}
  \item[i)] there exist two different poles $p_1, p_2$ such that the
    sequence $\mathfrak{t}$ is the union of strings
    $\sigma_i$ of the
    form $\tau_{k_i+1}\ldots \tau_{k_i+2 n_i\mathfrak{o}_j}$, with
    $n_i\in\N$ and $j\in\{1,2\}$, and
    \[
    \bigcap_{h=1}^{2 n_i\mathfrak{o}_j}\overline{\tau_{k_i+h}} = p_j,
    \]
    where $\mathfrak{o}_j$ is the order of $p_j$;
  \item[ii)] there exists $\tau_k\in\mathfrak{t}$ such that $p_1, p_2\in
    \overline{\tau_k}$.
  \end{itemize}
  \label{def:2ax}
\end{definition}
Therefore, a cone $\K$ is $\alpha$-simple if the trajectories of its loops do not wind around any rotation 
axis with an angle exceeding the maximum deflection angle of the potential $1/r^\alpha$. 
Moreover, a cone $\K$ is tied to two coboundary axes if the trajectories of its loops wind around two 
rotation axes only, that pass through two vertexes of the same spherical triangle, 
see Figure~\ref{fig:around2ax}.
\begin{figure}[!ht]
   \centering
   \includegraphics[scale=0.35]{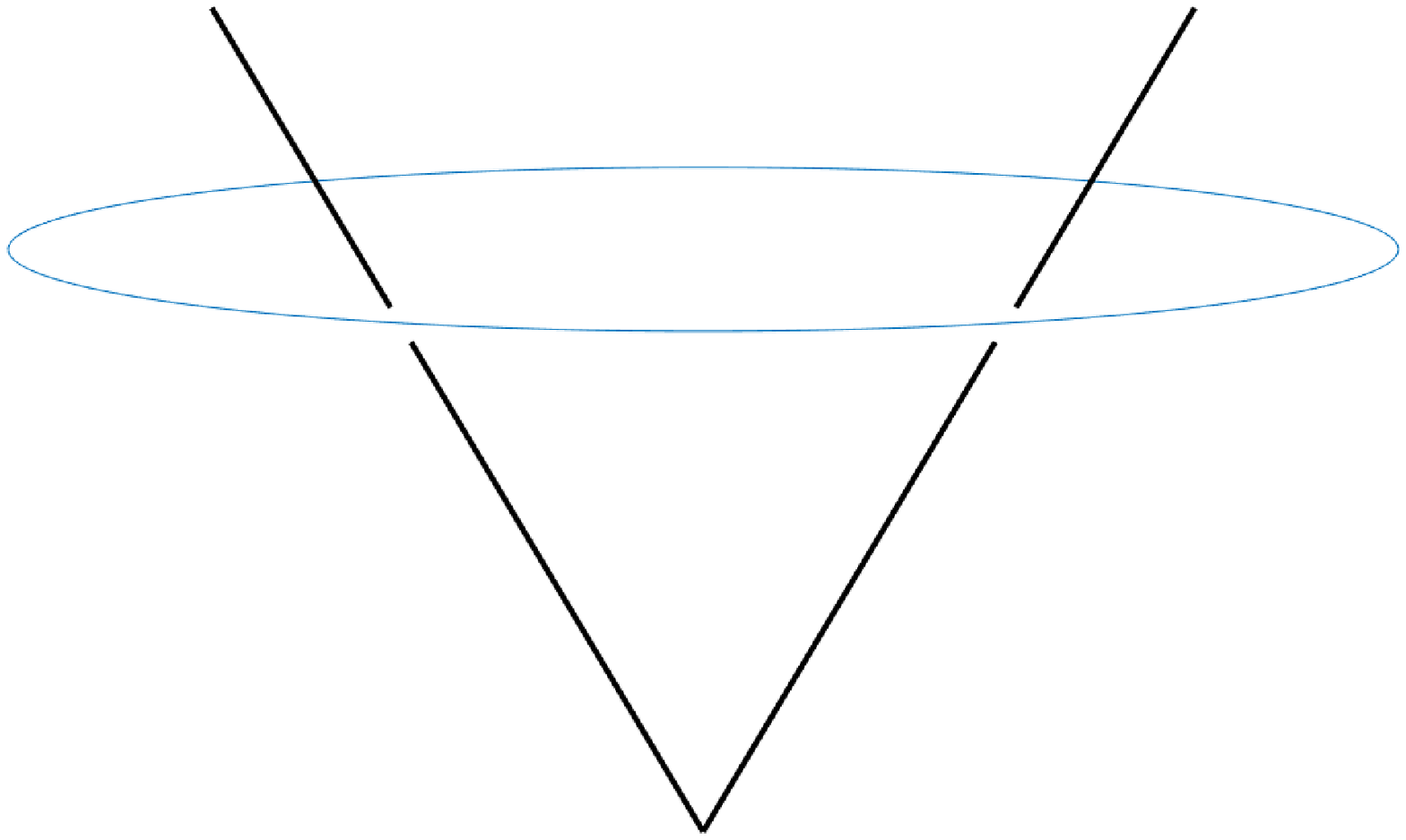}
   \hskip 1cm
   \includegraphics[scale=0.35]{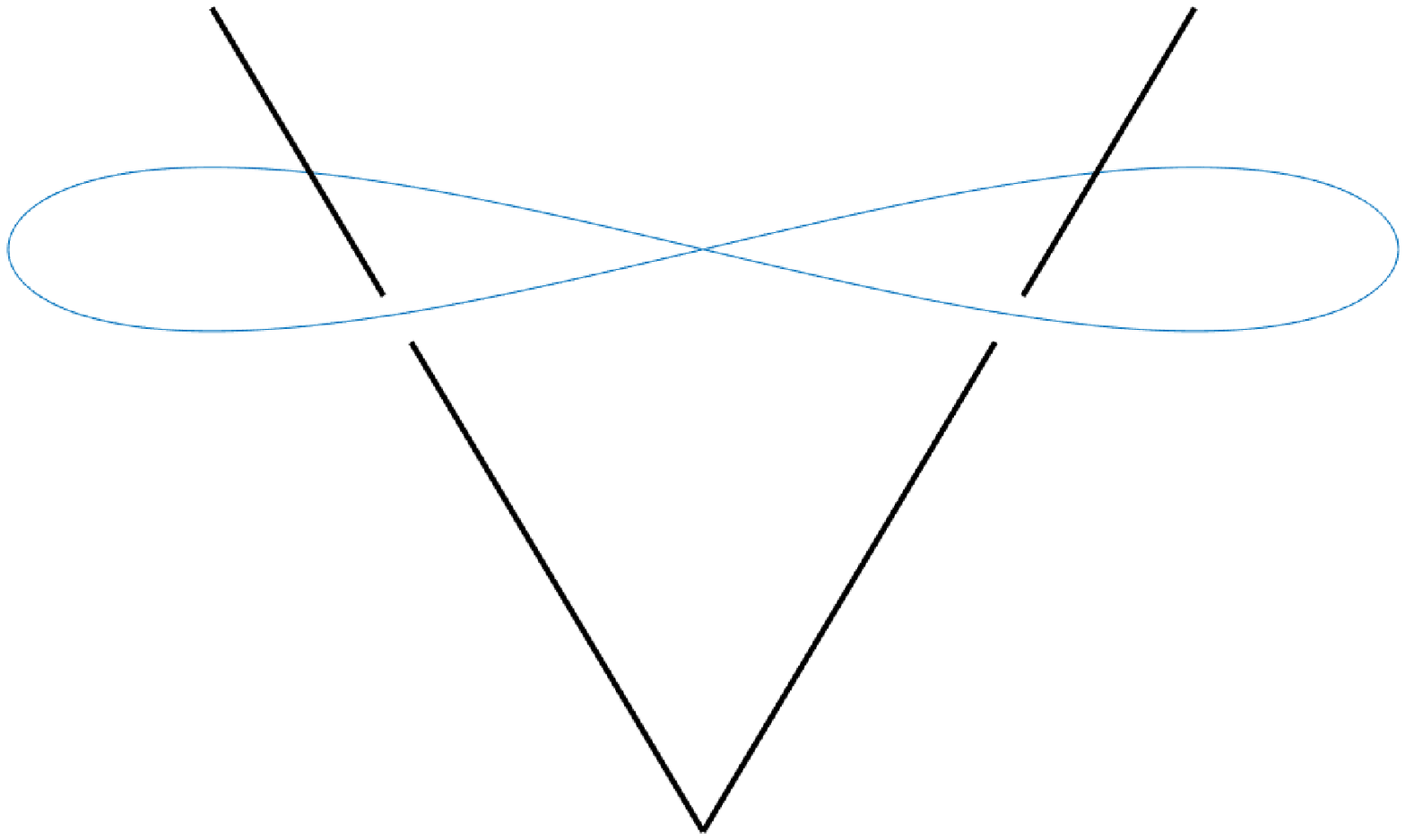}
   \caption{Sketch of two loops belonging to two different cones tied to two
     coboundary axes. The black half-lines correspond to the directions identified by the
     two poles $p_1, p_2$ belonging to the same triangle.}
   \label{fig:around2ax}
\end{figure}

\begin{definition}
We say that a cone $\K$ is \textit{central} if it contains a
loop lying in a plane passing through the origin $O$.
\end{definition}

To show that for a suitable choice of $\K$ the minimizers are
collision-free we consider total and partial collisions separately.


\subsection{Total collisions}
\label{ss:platoTotColl}
Also in this case we use level estimates to exclude total collisions.
In Appendix~\ref{app:LE} we show a general estimate useful for this
purpose, valid for $\alpha\in[1,2)$.

\subsubsection{A priori estimates}
\begin{proposition}
Let $u_I^* \in \overline{\K}$ be a solution with $M$ total collisions per period. Then the action satisfies
\begin{equation}
      \A^\alpha(u_I^*)\geq 
      \frac{2+\alpha}{2-\alpha}\frac{N}{2}\big(\tilde{U}_{\alpha,0}+2m_0\big)^{\frac{2}{2+\alpha}}
      \bigg(\frac{M\pi}{\alpha}\bigg)^{\frac{2\alpha}{2+\alpha}}T^{\frac{2-\alpha}{2+\alpha}},
    \label{eq:totCollEstAlpha}
\end{equation}
where 
\begin{equation}
      \tilde{U}_{\alpha,0} = \frac{1}{2^{\alpha+1}}\sum_{p \in \mathcal{P}}\frac{k_{\alpha,p}}{\displaystyle\max_{u_I \in \overline{\tau}}|u_I \times p|^\alpha}, \qquad    k_{\alpha,p} = \sum_{j=1}^{\mathfrak{o}_p-1}\frac{1}{\sin^\alpha(\frac{j\pi
   }{\mathfrak{o}_p})},
      \label{eq:U0tilde}
\end{equation}
and $\tau$ is any triangle of the tessellation of the sphere $\unitsphere$.  
\label{prop:platoLE}
\end{proposition}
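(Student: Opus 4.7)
The strategy is to reduce the estimate to the general Sundman--Sperling-type bound stated in Proposition~\ref{prop:lower_bound_gen}, whose application requires an explicit lower bound on the quantity $U_{\alpha,0}=\min_{\rho(u)=1}\mathcal{U}_\alpha(u)$ attached to the symmetric $(1+N)$-body problem with $\mathcal{G}=\mathcal{R}$. Most of the work is therefore concentrated on evaluating this minimum, and the bookkeeping of the symmetry group will be the delicate step.

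I would first exploit the constraint $u_R=Ru_I$ to express $\mathcal{U}_\alpha$ entirely in terms of the generating particle $u_I$. Since $|u_i|=|u_I|$ for every satellite, and each non-identity rotation $S\in\mathcal{R}\setminus\{I\}$ is counted exactly $N$ times when one reparametrizes the pairwise sum through $S=R_k^{-1}R_h$, one gets
\[
\mathcal{U}_\alpha(u)=\frac{N}{\mathcal{M}\,|u_I|^\alpha}\bigg(2m_0+\sum_{S\in\mathcal{R}\setminus\{I\}}\frac{|u_I|^\alpha}{|(S-I)u_I|^\alpha}\bigg),\qquad \mathcal{M}=N+m_0.
\]
On the level set $\rho(u)=1$ the radial variable is frozen to $|u_I|^2=\mathcal{M}/N$, so that only the angular variable $\hat u_I:=u_I/|u_I|\in\unitsphere$ remains to be optimized.

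The second step is the passage from rotations to poles. For each $S\in\mathcal{R}\setminus\{I\}$ with unit axis $\hat p$ and rotation angle $\theta_S$ one has the identity $|(S-I)x|=2\sin(\theta_S/2)\,|x\times\hat p|$. Grouping the $\mathfrak{o}_p-1$ non-trivial rotations around each axis (whose half-angles are $j\pi/\mathfrak{o}_p$ for $j=1,\dots,\mathfrak{o}_p-1$) and using the fact that the two antipodal poles of every axis contribute identically to $|\hat u_I\times p|$, a direct count will yield
\[
\sum_{S\in\mathcal{R}\setminus\{I\}}\frac{1}{|(S-I)u_I|^\alpha}=\frac{1}{2^{\alpha+1}\,|u_I|^\alpha}\sum_{p\in\mathcal{P}}\frac{k_{\alpha,p}}{|\hat u_I\times p|^\alpha},
\]
which explains the factor $1/2^{\alpha+1}$ appearing in definition \eqref{eq:U0tilde}. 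By the $\widetilde{\mathcal{R}}$-invariance of the resulting sum, its minimum over $\unitsphere$ coincides with its minimum over any single closed fundamental triangle $\overline\tau$; the crude termwise bound $|\hat u_I\times p|\leq\max_{x\in\overline\tau}|x\times p|$ then produces
\[
U_{\alpha,0}\ \geq\ \frac{N^{1+\alpha/2}}{\mathcal{M}^{1+\alpha/2}}\bigl(\tilde{U}_{\alpha,0}+2m_0\bigr).
\]

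The conclusion follows by inserting this inequality into Proposition~\ref{prop:lower_bound_gen} with $M$ total collisions per period. The exponent $2/(2+\alpha)$ attached there to $U_{\alpha,0}$ is precisely the one that kills the explicit factor $\mathcal{M}$ appearing in the Sundman--Sperling bound, thanks to the scaling identity $(1+\alpha/2)\cdot 2/(2+\alpha)=1$; after this cancellation the inequality \eqref{eq:totCollEstAlpha} drops out with no further manipulation. I expect the genuinely delicate piece to be the second step above, namely the group-theoretic accounting producing the prefactor $1/2^{\alpha+1}$ and the constants $k_{\alpha,p}$. Everything else is either a direct invocation of the appendix estimate or an algebraic simplification forced by the natural rescaling exponent $\beta=1/(2+\alpha)$ introduced in Section~\ref{s:Np1}.
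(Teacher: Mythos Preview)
Your proposal is correct and follows essentially the same route as the paper: rewrite $\mathcal{U}_\alpha$ via the symmetry constraint, convert the sum over $\mathcal{R}\setminus\{I\}$ into a sum over poles using $|(S-I)x|=2\sin(\theta_S/2)\,|x\times\hat p|$, exploit $\widetilde{\mathcal{R}}$-invariance to reduce the minimization to a single triangle $\overline\tau$, bound termwise to extract $\tilde U_{\alpha,0}$, and then plug into the Sundman--Sperling estimate so that the factor $(N/\mathcal{M})^{(2+\alpha)/2}$ cancels against $\mathcal{M}$. The only cosmetic difference is that the paper invokes Corollary~\ref{cor:LE} (which already contains the estimate $\int_0^{2\pi}|\sin t|^{2/\alpha}\,dt>\pi$) rather than Proposition~\ref{prop:lower_bound_gen} directly, and it makes the $M$-collision step explicit by splitting $[0,T]$ into $M$ equal subintervals dictated by the choreography symmetry~\eqref{eq:extrasym}; you should mention this splitting, since Proposition~\ref{prop:lower_bound_gen} as stated covers a single ejection--collision arc.
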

\begin{proof}
  
The total mass is $\mathcal{M} = N+m_0$ and, by the symmetry
constraint, the potential $\mathcal{U}_\alpha$ defined in
Proposition~\ref{prop:lower_bound_gen} can be written as
\[
   \mathcal{U}_\alpha(u) = \frac{N}{N+m_0}\bigg( \sum_{R \in
      \mathcal{R}\setminus\{I\}}\frac{1}{|(R-I)u_I|^\alpha} +
      \frac{2 m_0}{|u_I|^\alpha}\bigg).
\]
We can also write 
\begin{equation}
   \mathcal{U}_\alpha(u) = \frac{N}{N+m_0}\bigg(\frac{1}{2^{\alpha+1}} \sum_{p \in
      \mathcal{P}}\frac{k_{\alpha,p}}{|u_I\times p|^\alpha} +
      \frac{2m_0}{|u_I|^\alpha}\bigg),
   \label{eq:potentialAlphaKappa}
\end{equation}
where the sum is done over the set of poles $\mathcal{P}$.
This equality follows from the fact that, for a rotation $R$ of an angle $2j\pi/\mathfrak{o}_p$ around an axis identified by pole $p \in \unitsphere$, we have
\[
\lvert (R-I)u_I \rvert^\alpha = 2^\alpha \lvert u_I \times p \rvert^\alpha \sin^\alpha \bigg( \frac{j \pi}{\mathfrak{o}_p} \bigg).
\]
Note that the order of the pole $p$ corresponds to the order of the rotation $R$.
Moreover, the additional 1/2 factor in \eqref{eq:potentialAlphaKappa} comes from the fact that there are two poles associated to each axis.
Using the definition of $\rho(u)$ given in Proposition~\ref{prop:lower_bound_gen}, by the symmetry we also have
\[
   \rho(u) =\sqrt{\frac{N}{N+m_0}}|u_I|,
\]
hence $\rho(u)=1$ if and only if
$|u_I|=\sqrt{\frac{N+m_0}{N}}$. Therefore, restricting to $\rho(u)=1$
and using the fact that $\mathcal{U}_\alpha$ is an $\alpha$-homogeneous
function, we have
\[
\begin{split}
   U_{\alpha,0} & := \min_{\rho(u)=1}\mathcal{U}_\alpha(u) \\
   & = \min_{|u_I|=\sqrt{\frac{N+m_0}{N}}}\mathcal{U}_\alpha(u) \\
   & = \bigg(\frac{N}{N+m_0}\bigg)^{\frac{\alpha}{2}} \min_{|u_I|=1}\mathcal{U}_\alpha(u) \\
   & = \bigg(\frac{N}{N+m_0}\bigg)^{\frac{2+\alpha}{2}}
   \bigg( \frac{1}{2^{\alpha+1}} \min_{|u_I|=1} \sum_{p \in
      \mathcal{P}}\frac{k_{\alpha,p}}{|u_I\times p|^\alpha} + 2m_0 \bigg) \\
   & = \bigg(\frac{N}{N+m_0}\bigg)^{\frac{2+\alpha}{2}}
   \bigg( \frac{1}{2^{\alpha+1}} \min_{u_I\in \overline{\tau}} \sum_{p \in
      \mathcal{P}}\frac{k_{\alpha,p}}{|u_I\times p|^\alpha} + 2m_0 \bigg),
\end{split}
\]
where $\tau$ is any triangle of the tessellation of $\mathbb{S}^2$. 
The last equality follows from the invariance of $\mathcal{U}_\alpha$ with respect to the full symmetry group $\widetilde{\mathcal{R}}$.
Indeed, given a triangle $\tau$ and $\widetilde{R} \in \widetilde{\mathcal{R}}$, we have 
\[
\begin{split}
    \min_{u_I \in \widetilde{R}\bar{\tau}}  \sum_{p \in \mathcal{P}} \frac{k_{\alpha,p}}{|u_I\times p|^\alpha} & =
    \min_{u_I \in \bar{\tau}}  \sum_{p \in \mathcal{P}} \frac{k_{\alpha,p}}{|\widetilde{R}u_I\times p|^\alpha}\\ 
    &= \min_{u_I \in \bar{\tau}}  \sum_{p \in \mathcal{P}} \frac{k_{\alpha,\widetilde{R} p}}{|\widetilde{R}u_I\times \widetilde{R} p|^\alpha} \\
    & = \min_{u_I \in \bar{\tau}}  \sum_{p \in \mathcal{P}} \frac{k_{\alpha,p}}{|u_I\times p|^\alpha}.
\end{split}
\]
We obtain that
\begin{equation}
   U_{\alpha,0} \geq
   \bigg(\frac{N}{N+m_0}\bigg)^{\frac{2+\alpha}{2}}\big(\tilde{U}_{\alpha,0} + 2m_0\big),
   \label{eq:Ualpha0geq}
\end{equation}
where
\[
   \tilde{U}_{\alpha,0} = \frac{1}{2^{\alpha+1}}\sum_{p \in \mathcal{P}}\frac{k_{\alpha,p}}{\displaystyle\max_{u_I \in \overline{\tau}}|u_I \times p|^\alpha}.
\]
Consider now a solution $u_I^* \in \overline{\K}$ with $M$ total
collisions per period.  From Corollary~\ref{cor:LE}
and from the above computations we have
\[
   \begin{split}
      \A^\alpha(u_I^*) & >
      \frac{2+\alpha}{2-\alpha}\frac{\big(N+m_0\big)}{2}U_{\alpha,0}^{\frac{2}{2+\alpha}}
      \bigg(\frac{M\pi}{\alpha}\bigg)^{\frac{2\alpha}{2+\alpha}}T^{\frac{2-\alpha}{2+\alpha}}
      \\
      & \geq 
      \frac{2+\alpha}{2-\alpha}\frac{N}{2}\big(\tilde{U}_{\alpha,0}+2m_0\big)^{\frac{2}{2+\alpha}}
      \bigg(\frac{M\pi}{\alpha}\bigg)^{\frac{2\alpha}{2+\alpha}}T^{\frac{2-\alpha}{2+\alpha}}.
   \end{split}
   \]
\end{proof}
\begin{remark}
Setting 
\[
\mathfrak{p}_{\max} = \max_{j \in \{1,2,3\}} p_j\cdot p,  \qquad
\mathfrak{p}_{\min} = \min_{j \in \{1,2,3\}} p_j\cdot p,
\]
where $p_j \in \mathcal{P}, \, j=1,2,3$ are the vertexes of the triangle $\tau$,
we have
\[
   \max_{u_I \in \overline{\tau}} |u_I\times p | = 
   \left\{
   \begin{array}{cl}
   \displaystyle \max_{j \in \{1,2,3\}}| p_j \times p| &\mbox{if } \mathfrak{p}_{\min}>0 \mbox{ or } \mathfrak{p}_{\max} <0, \cr
   1                                                   &\mbox{if } \mathfrak{p}_{\min}< 0 < \mathfrak{p}_{\max}. \cr
   \end{array}
\right.
\]
\end{remark}
%



\subsubsection{Constructing test loops}
Let $\nu$ be the periodic sequence of vertexes of
$\mathcal{Q}_{\mathcal{R}}$, used to select the free-homotopy class
in condition (b), and let $v_I^\nu$ be the linear piecewise loop
defined by $\nu$, travelling along the edges of the Archimedean
polyhedron $\mathcal{Q}_{\mathcal{R}}$ with constant speed.
\begin{definition}
  Two sides of ${\cal Q}_{\cal R}$ have
  different type if they belong to the boundary of different pairs of
  regular polygons. In particular,
  \begin{itemize}
      \item[-] for $\mathcal{R} = \mathcal{T}$ we have sides with only one type, separating a triangle and a square;
      \item[-] for $\mathcal{R} = \mathcal{O}$ we say that a side has type 1 if it separates a triangle and a square, while it has type 2 if it separates two squares; 
      \item[-] for $\mathcal{R} = \mathcal{I}$ we say that a side has type 1 if it separates a triangle and a square, while it has type 2 if it separates a square and a pentagon.
  \end{itemize}
\end{definition}
Figure~\ref{fig:3poly} shows the three Archimedean polyhedra for $\mathcal{R} = \mathcal{T},
\mathcal{O},\mathcal{I}$ and the corresponding sides of type 1 and 2.
\begin{figure}[!ht]
   \centering
   \includegraphics[width=0.3\textwidth]{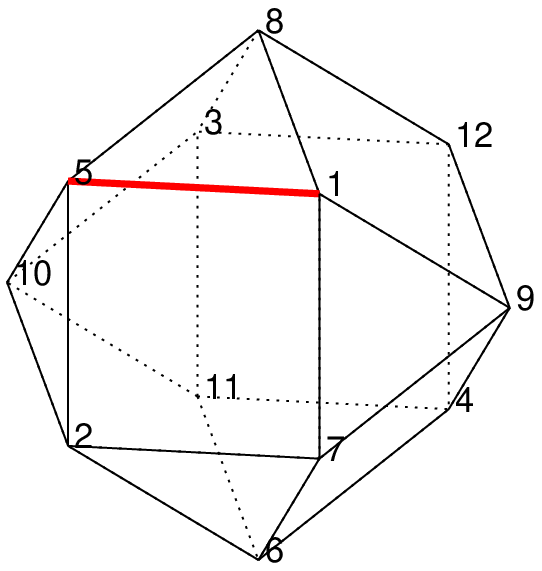}
   \includegraphics[width=0.3\textwidth]{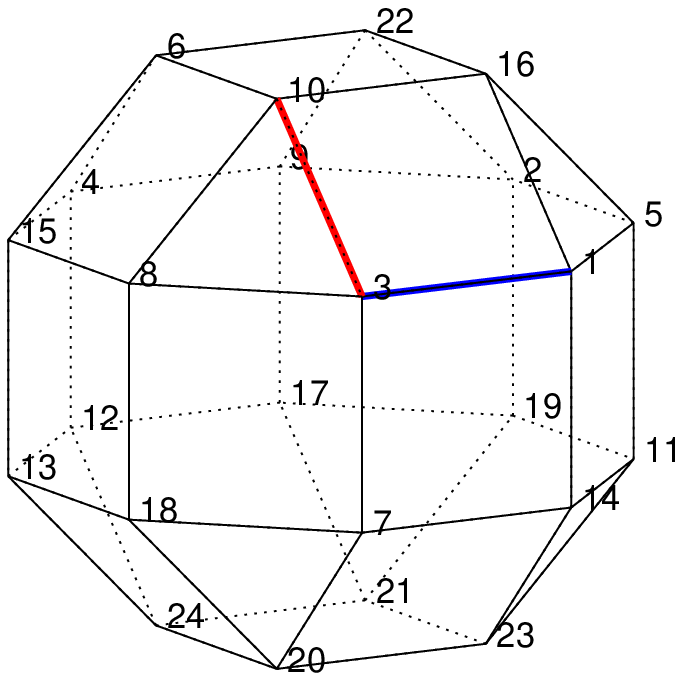}
   \includegraphics[width=0.3\textwidth]{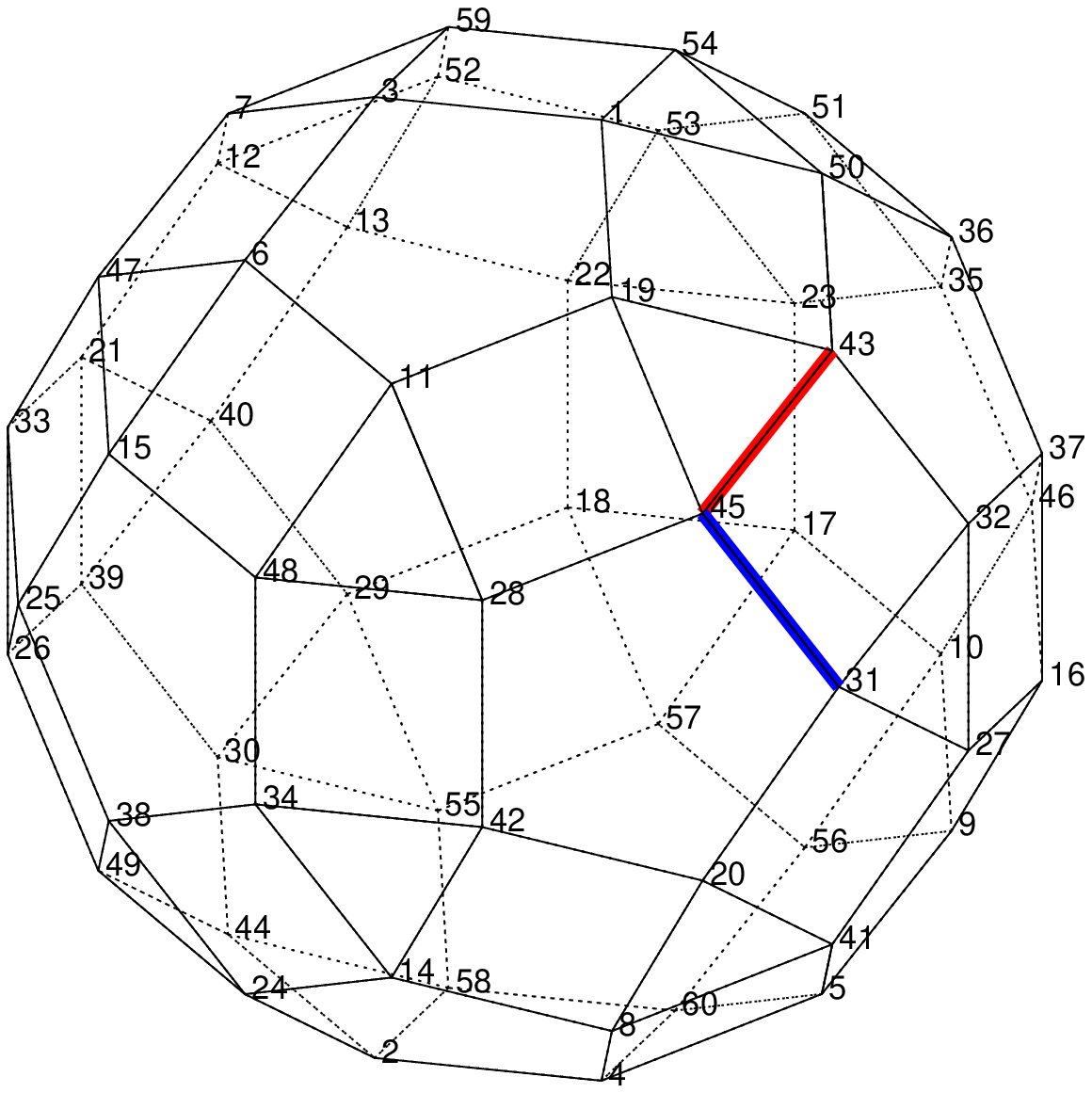}
   \caption{The three Archimedean polyhedra for $\mathcal{R} = \mathcal{T},
   \mathcal{O},\mathcal{I}$, from the left to the right. For
   $\mathcal{R}=\mathcal{T}$ there is only one type of side, while for
   $\mathcal{R}=\mathcal{O}, \mathcal{I}$ the red side is of type 1 and the blue side is
of type 2.}
   \label{fig:3poly}
\end{figure}

\begin{proposition}
The minimum value of the action $\A^\alpha$ over the
1-parameter family of rescaled loops $\{\lambda v_I^\nu\}_{\lambda>0}$ is
\small
\begin{equation}
\A^\alpha(\bar{\lambda}v_I^\nu) = \frac{2+\alpha}{2}N\left[\frac{\ell^{2\alpha}\big(k_1\zeta_{\alpha,1} +
   k_2\zeta_{\alpha,2}+m_0(k_1+k_2)\zeta_{\alpha,0}\big)^2
k_\nu^{2(\alpha-1)}}{4\alpha^\alpha}\right]^{\frac{1}{2+\alpha}}T^{\frac{2-\alpha}{2+\alpha}},
\label{eq:testLoopAction}
\end{equation}
\normalsize
for a suitable $\bar\lambda$. Here $\ell$ is the length of the edges of the Archimedean polyhedron $\mathcal{Q}_{\mathcal{R}}$ inscribed in the unit sphere $\unitsphere$, while
\[
   \zeta_{\alpha,i} = \int_0^1\sum_{R\in{\mathcal{R}}\setminus\{I\}}
   \frac{ds}{\bigl|(R-I)[(1-s)q+s q_i]\bigr|^\alpha}, \qquad i=1,2,
\]
and
\[
   \begin{split}
      \zeta_{\alpha,0}& = \int_{0}^{1}\frac{2}{|(1-s)q+sq_1|^\alpha}ds
                       = \int_{0}^{1}\frac{2}{|(1-s)q+sq_2|^\alpha}ds. 
   \end{split}
\]
The points $q, q_1, q_2 \in \unitsphere$ are the ones used for the construction of
$\mathcal{Q}_{\mathcal{R}}$ introduced in Section~\ref{ss:encoding}, $k_1, k_2$ correspond to the
number of sides $[\nu_{j-1}, \nu_j]$ of type 1 and 2 in the sequence
$\nu$, respectively,
  and $k_\nu$ is the minimal period of $\nu$. 
  \label{prop:actionTest}
\end{proposition}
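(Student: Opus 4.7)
The plan is to reduce the minimization over $\lambda>0$ to the one-dimensional problem of minimizing a sum of the form $\lambda^2 K_0+\lambda^{-\alpha}U_0$, where $K_0$ is the kinetic integral and $U_0$ the total potential integral of the unscaled loop $v_I^\nu$. Indeed, since $\alpha v_I^\nu$ rescales $|\dot v_I^\nu|^2$ by $\lambda^2$ and every distance appearing in $U_\alpha$ by $\lambda$, we have
\[
\A^\alpha(\lambda v_I^\nu)=N\Bigl(\lambda^2 K_0+\lambda^{-\alpha}U_0\Bigr),
\]
with
\[
K_0=\int_0^T\frac{|\dot v_I^\nu|^2}{2}\,dt,\qquad
U_0=\int_0^T\Bigl(\frac{m_0}{|v_I^\nu|^\alpha}+\frac12\sum_{R\in{\cal R}\setminus\{I\}}\frac{1}{|(R-I)v_I^\nu|^\alpha}\Bigr)\,dt.
\]
Setting the derivative in $\lambda$ to zero gives $\bar\lambda^{2+\alpha}=\alpha U_0/(2K_0)$, and substituting back produces the closed form
\[
\A^\alpha(\bar\lambda v_I^\nu)=\frac{2+\alpha}{2}\,N\,U_0^{\frac{2}{2+\alpha}}\Bigl(\frac{2K_0}{\alpha}\Bigr)^{\frac{\alpha}{2+\alpha}}.
\]

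Next I would compute $K_0$ and $U_0$ explicitly for the piecewise-linear loop $v_I^\nu$. Since $v_I^\nu$ traverses $k_\nu$ edges of length $\ell$ with constant speed in time $T$, the speed is $k_\nu\ell/T$ and hence $K_0=k_\nu^2\ell^2/(2T)$. For $U_0$, I would split the integral into a sum over the $k_\nu$ edges of one period. On an edge of type $i$ issuing from a vertex $R''q$ in the direction of $R''q_i$, parametrised by $s\in[0,1]$ with $v_I^\nu=R''\bigl((1-s)q+sq_i\bigr)$, the change of variable $dt=(T/k_\nu)\,ds$ converts the edge contribution into $(T/k_\nu)$ times an integral in $s$.

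The key observation is that both sums in $U_0$ are invariant under the prefactor $R''$: orthogonality of $R''$ yields $|R''((1-s)q+sq_i)|=|(1-s)q+sq_i|$ for the central-body term, and the change of summation variable $R\mapsto R''^{-1}R R''$ in ${\cal R}\setminus\{I\}$ gives the same identity for the satellite-satellite term. Hence each type-$i$ edge contributes the same amount, namely $(T/(2k_\nu))\zeta_{\alpha,i}$ for the satellite-satellite part and $(m_0T/(2k_\nu))\zeta_{\alpha,0}$ for the central-body part. Summing over the $k_1$ type-1 and $k_2$ type-2 edges of one period yields
\[
U_0=\frac{T}{2k_\nu}\bigl[k_1\zeta_{\alpha,1}+k_2\zeta_{\alpha,2}+m_0(k_1+k_2)\zeta_{\alpha,0}\bigr].
\]

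Finally I would substitute these expressions for $K_0$ and $U_0$ into the closed form above and collect powers of $T$, $k_\nu$, $\ell$, $2$ and $\alpha$: the exponents of $T$ combine to $(2-\alpha)/(2+\alpha)$, those of $k_\nu$ to $2(\alpha-1)/(2+\alpha)$, and those of $\ell$ to $2\alpha/(2+\alpha)$, matching \eqref{eq:testLoopAction}. The only nontrivial step is the group-theoretic reduction that identifies each type-$i$ edge with the canonical edge $[q,q_i]$ modulo a rotation; once this invariance is in hand, the rest is bookkeeping.
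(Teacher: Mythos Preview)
Your proof is correct and follows essentially the same route as the paper: split the action of the rescaled loop into $\lambda^2$ times a kinetic term and $\lambda^{-\alpha}$ times a potential term, minimize in $\lambda$, then compute each term explicitly by using the constant speed $k_\nu\ell/T$ and the conjugation invariance $R\mapsto (R'')^{-1}R R''$ to reduce every edge of type $i$ to the canonical edge $[q,q_i]$. The only differences are cosmetic (you factor out $N$ and work with $K_0,U_0$ rather than the paper's $A_K=NK_0$, $A_{U_\alpha}=NU_0$), and there is a harmless typo where you wrote ``$\alpha v_I^\nu$ rescales'' instead of ``$\lambda v_I^\nu$ rescales''.
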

\begin{proof}
Let us define
\begin{equation}
    \begin{split}
        A_{U_\alpha} & = N \int_0^T \bigg( \frac{m_0}{\lvert v_I^\nu \rvert^\alpha} + \frac{1}{2}\sum_{R \in \mathcal{R}\setminus\{I\}}\frac{1}{\lvert (R-I)v_I^\nu \rvert^\alpha} \bigg) dt, \\
        A_K & = N\int_0^T \frac{\lvert \dot{v}_I^\nu \rvert^2}{2} dt = \frac{N}{2} \frac{\ell^2 k_\nu^2}{T},  
    \end{split}    
    \label{eq:AUAK}
\end{equation}
where in the last equality we used $\lvert \dot{v}_I^\nu \rvert = \ell k_\nu/T$. From the $\alpha$-homogeneity 
of the potential, the action of a rescaled loop $\lambda v_I^\nu, \, \lambda > 0$ is
\begin{equation}
    \A^\alpha(\lambda v_I^\nu) = \lambda^2 A_K + \frac{1}{\lambda^\alpha}A_{U_\alpha},
    \label{eq:rescActFormula}
\end{equation}
and the minimum is achieved for 
\begin{equation}
    \bar{\lambda} = \bigg( \frac{\alpha}{2}\frac{A_{U_\alpha}}{A_K} \bigg)^{\frac{1}{\alpha+2}}.
    \label{eq:lambdaBar}
\end{equation}
Substituting \eqref{eq:lambdaBar} in \eqref{eq:rescActFormula}, the minimum action of the rescaled loops is
\begin{equation}
    \A^\alpha(\bar{\lambda} v_I^\nu) = (2+\alpha) 
    \bigg[ \bigg(\frac{A_{U_\alpha}}{2}\bigg)^2 \bigg(\frac{A_K}{\alpha}\bigg)^{\alpha} \bigg]^{\frac{1}{\alpha+2}}.
    \label{eq:actionLambdaBar}
\end{equation}
The term $A_{U_\alpha}$ can be expressed through the quantities $\zeta_{\alpha, i}, \, i = 0,1,2$. 
%
%
Indeed, from the construction of $\mathcal{Q}_{\mathcal{R}}$ it follows that we can associate to
each $j \in \Z$ a uniquely determined pair $(R_j,i_j)\in{\mathcal{R}}\times\{1,2\}$ 
such that $[\nu_{j-1}, \nu_j] = R_j[q, q_{i_j}]$. For each given $R'\in{\mathcal{R}}$, set
\[
   \zeta_{\alpha,i}(R') = \int_0^1\sum_{R\in{\mathcal{R}}\setminus\{I\}}
   \frac{ds}{\bigl|(R-I)R'[(1-s)q+s q_i]\bigr|^\alpha}, \qquad i=1,2.
\]
Since
\[
\bigl|(R-I)R'[(1-s)q+s q_i]\bigr| = \bigl|((R')^{-1}RR' -I)[(1-s)q+s
q_i]\bigr|,
\]
and the map $R\mapsto (R')^{-1}RR'$ is an isomorphism of $\mathcal{R}$ onto
itself, we have
\[
\zeta_{\alpha,i}(R') = \zeta_{\alpha,i}(I) = \zeta_{\alpha,i},
\quad i=1,2.
\]
%
Since $v_I^\nu$ travels along each side $[\nu_{j-1}, \nu_j]$ in a time interval of size 
$T/k_\nu$, it follows that
\begin{equation}
   \begin{split}
A_{U_\alpha} &= 
\frac{N}{2}\frac{T}{k_\nu}\sum_{j=1}^{k_\nu}
\int_0^1\bigg(\sum_{R\in\mathcal{R}\setminus\{I\}}
\frac{1}{\bigl|(R-I)R_j[(1-s)q+s q_{i_j}]\bigr|^\alpha} +
\frac{2m_0}{|(1-s)q+sq_{i_j}|^\alpha}\bigg)ds
\\
&=\frac{N}{2}\frac{T}{k_\nu}\big(k_1\zeta_{\alpha,1} + k_2\zeta_{\alpha,2}+
m_0(k_1+k_2)\zeta_{\alpha,0}\big).
\end{split}
\label{eq:AUalpha}
\end{equation}
%
From \eqref{eq:AUAK}, \eqref{eq:actionLambdaBar}, and \eqref{eq:AUalpha} we finally obtain relation \eqref{eq:testLoopAction}.
\end{proof}

\noindent For the Keplerian case, the integrals defining
$\zeta_{1,0}, \zeta_{1,1}, \zeta_{1,2}$ can be expressed by
elementary functions, and their values are provided in Table
\ref{tab:zetaAlphaEst}.  For the case $\alpha \in
(1,2)$, we can estimate the values $\zeta_{\alpha,i}, \, i=0,1,2$
using the values for the Keplerian case. 
\begin{proposition}
    In the same hypotheses of Proposition~\ref{prop:actionTest}, for every $\alpha\in(1,2)$ we have 
    \begin{equation}
    \A^\alpha(\bar{\lambda}v_I^\nu) <
    \frac{2+\alpha}{2}N\Biggl[\frac{\ell^{2\alpha}\bigg(k_1\displaystyle\frac{\zeta_{1,1}}{\delta_1} +
       k_2\frac{\zeta_{1,2}}{\delta_2}+\frac{8m_0}{4-\ell^2}(k_1+k_2)\bigg)^2
    k_\nu^{2(\alpha-1)}}{4\alpha^\alpha}\Biggr]^{\frac{1}{2+\alpha}}T^{\frac{2-\alpha}{2+\alpha}},
    \label{eq:testLoopActionEst}
    \end{equation}
    where $\delta_1, \delta_2 \in \R$ are given in Table~\ref{tab:zetaAlphaEst}.
\end{proposition}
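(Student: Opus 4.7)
The plan is to start from the exact expression for $\A^\alpha(\bar\lambda v_I^\nu)$ provided by Proposition~\ref{prop:actionTest}, and replace each of the three quantities $\zeta_{\alpha,1}$, $\zeta_{\alpha,2}$, $\zeta_{\alpha,0}$ by a strict upper bound that does not depend on $\alpha$. Since these three quantities enter only through the nonnegative, increasing combination
\[
   k_1\zeta_{\alpha,1}+k_2\zeta_{\alpha,2}+m_0(k_1+k_2)\zeta_{\alpha,0},
\]
which appears squared inside a positive $(2+\alpha)$-th root, any strict pointwise upper bound on these three numbers propagates to the desired strict inequality \eqref{eq:testLoopActionEst}.

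First I would bound $\zeta_{\alpha,0}$. Setting $w(s)=(1-s)q+sq_i$, the identities $|q|=|q_i|=1$ and $|q-q_i|=\ell$ give the elementary computation
\[
   |w(s)|^2 \,=\, 1 - s(1-s)\ell^{\,2} \,\geq\, 1 - \frac{\ell^{\,2}}{4} \,=\, \frac{4-\ell^{\,2}}{4},
\]
with equality only at $s=1/2$. Because $|w(s)|\leq 1$ and $\alpha\in[1,2)$, one has $|w(s)|^\alpha\geq |w(s)|^2$, and hence
\[
   \zeta_{\alpha,0} \,=\, \int_0^1 \frac{2}{|w(s)|^\alpha}\,ds \,<\, \frac{8}{4-\ell^{\,2}},
\]
strictly, because equality in $|w(s)|^\alpha\geq|w(s)|^2$ fails on a set of positive measure whenever $\alpha<2$. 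This explains the $m_0$-term in \eqref{eq:testLoopActionEst}.

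Second I would bound $\zeta_{\alpha,i}$ for $i=1,2$. Define, using the same parametrization $w_i(s)=(1-s)q+sq_i$,
\[
   d_i \,=\, \min_{R\in\mathcal{R}\setminus\{I\},\, s\in[0,1]} |(R-I)w_i(s)|.
\]
This minimum is strictly positive, as the open segment $[q,q_i]$ does not meet the collision set $\Gamma$ by construction of $\mathcal{Q}_{\mathcal{R}}$. For each fixed cone $\K$ and each type $i$, the constants $d_i$ are computed geometrically from the locations of $q,q_i$ inside a fundamental triangle of $\mathscr{T}_{\cal R}$, and one verifies case-by-case that $d_i\leq 1$ for $\mathcal{R}\in\{\mathcal{T},\mathcal{O},\mathcal{I}\}$. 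Setting $\delta_i=d_i$ (the values tabulated in Table~\ref{tab:zetaAlphaEst}), the bounds $|(R-I)w_i(s)|\geq\delta_i$ and $0<\alpha-1<1$ together with $\delta_i\leq 1$ yield the chain
\[
   \frac{1}{|(R-I)w_i(s)|^\alpha}
   \,\leq\, \frac{1}{\delta_i^{\,\alpha-1}}\cdot\frac{1}{|(R-I)w_i(s)|}
   \,\leq\, \frac{1}{\delta_i}\cdot\frac{1}{|(R-I)w_i(s)|}.
\]
Summing over $R\in\mathcal{R}\setminus\{I\}$ and integrating in $s$ gives $\zeta_{\alpha,i}\leq\zeta_{1,i}/\delta_i$, and substitution into \eqref{eq:testLoopAction} delivers \eqref{eq:testLoopActionEst}.

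The main obstacle is not analytical but bookkeeping: one must check, for each Archimedean polyhedron $\mathcal{Q}_{\mathcal{R}}$ and each type of edge $(q,q_1)$ or $(q,q_2)$, that the minimum distance $d_i$ is achieved in the interior of the segment (rather than at an endpoint on a rotation axis) and that $d_i\leq 1$; together with an explicit evaluation of the Keplerian integrals $\zeta_{1,i}$ using elementary functions. Once Table~\ref{tab:zetaAlphaEst} is populated with these constants, the strict inequality follows immediately from the strict inequality already established for the $m_0$-term, while the remaining inequalities are non-strict but harmless inside the monotone outer function.
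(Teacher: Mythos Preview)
Your argument is essentially the paper's own: bound $\zeta_{\alpha,0}$ via the pointwise minimum of $|w(s)|$ on the edge (the paper also observes that $|(R-I)w_i(s)|^2$ is convex and symmetric in $s$, so the minimum is attained at $s=1/2$, giving exactly your $\delta_i$), and bound $\zeta_{\alpha,i}$ by factoring out $\delta_i^{\alpha-1}$ from the $\alpha$-power and then using $\delta_i<1$ to pass to $\delta_i$.

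One small gap: you attribute the strict inequality in \eqref{eq:testLoopActionEst} solely to the $m_0$-term, which fails if $m_0=0$. In fact your chain for $\zeta_{\alpha,i}$ is already strict: the first step $|(R-I)w_i(s)|^{\alpha-1}\geq\delta_i^{\alpha-1}$ is strict on a set of positive measure (since $|(R-I)w_i(s)|$ is not constant in $s$), and the second step $\delta_i^{\alpha-1}>\delta_i$ is strict for $\alpha<2$ and $\delta_i<1$. Since $k_1+k_2\geq 1$, at least one $\zeta_{\alpha,i}$ contributes, so strictness holds regardless of $m_0$. The paper records both inequalities in \eqref{eq:zetaAlphaEst} as strict for precisely this reason.
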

\begin{proof}
Let
\[
   d_i(R) = \frac{1}{2}\big| (R-I)(q+q_i) \big|, \quad i=1,2
\]
be the minimal distance between the particle $u_I$ travelling along the segment $[q,q_i]$, 
and $u_R = R u_I$ travelling along $R[q,q_i]$. 
In fact the function 
\[
    f(s) = \big\lvert (R-I)[(1-s)q+sq_i]\big\rvert^2, \quad s \in [0, 1],
\]
is strictly convex and the relation $f(1-s) = f(s)$ holds, thus $\min_{[0,1]} f = f(1/2)$. 
For $i=1,2$ we introduce the minimal distance
\begin{equation}
   \delta_i = \min_{R \in \mathcal{R} \setminus \{ I \}} d_i(R), 
   \label{eq:minDistDeltai}
\end{equation}
whose values are reported in Table
\ref{tab:zetaAlphaEst}. We can estimate $\zeta_{\alpha,i}$
using the relations
\begin{equation}
   \zeta_{\alpha, i} < \frac{\zeta_{1,i}}{\delta_{i}^{\alpha-1}} <
   \frac{\zeta_{1,i}}{\delta_i}, 
   \label{eq:zetaAlphaEst}
\end{equation}
where the last inequality follows from $\delta_i < 1$.
Moreover, we have
\begin{equation}
   \zeta_{\alpha,0} < \frac{2}{\big(1-\frac{\ell^2}{4}\big)^{\frac{\alpha}{2}}} <
   \frac{8}{4-\ell^2},
   \label{eq:zetaAlpha0Est}
\end{equation}
where for the first inequality we used the relation
\[
\big|(1-s)q + sq_1\big| \geq \frac{|q+q_1|}{2} = \bigg( 1 - \frac{\ell^2}{4}\bigg)^{1/2}, \qquad s\in [0,1].
\]
Therefore, using \eqref{eq:testLoopAction}, \eqref{eq:zetaAlphaEst} and \eqref{eq:zetaAlpha0Est}, for $\alpha\in(1,2)$ we obtain the estimate \eqref{eq:testLoopActionEst}.
\end{proof}

{\renewcommand{\arraystretch}{1.2}%
\begin{table}[!ht]
   \centering
   \begin{tabular}{cccc}
      \toprule
      & $\mathcal{T}$ & $\mathcal{O}$ & $\mathcal{I}$ \\
      \midrule
      $\delta_1$                  & $0.35740$ & $\phantom{-}0.35740$ &  $\phantom{-}0.36230$ \\
      $\delta_2$                  & $0.35740$ & $\phantom{-}0.50544$ &  $\phantom{-}0.22391$ \\
      $\zeta_{1,0}$               & $2.19722$ & $\phantom{-}2.09234$ &  $\phantom{-}2.03446$ \\
      $\zeta_{1,1}$               & $9.50838$ & $20.32244$ &  $53.99031$ \\
      $\zeta_{1,2}$               & $9.50838$ & $19.73994$ &  $52.57615$ \\
      $\frac{8}{4-\ell^2}$        & $2.66666$ & $\phantom{-}2.29297$ &  $\phantom{-}2.10560$ \\
      \bottomrule
   \end{tabular}
   \caption{Rounded values of $\delta_1,\delta_2, \zeta_{1,0},
   \zeta_{1,1}, \zeta_{1,2}$ and $8/(4-\ell^2)$ for the three different rotation groups.}
   \label{tab:zetaAlphaEst}
\end{table}
}

%
\noindent Using relations \eqref{eq:totCollEstAlpha}, \eqref{eq:testLoopAction} and \eqref{eq:testLoopActionEst}, for
some free-homotopy classes $\nu$, we have
\begin{equation}
  \A^\alpha(\bar{\lambda}v_I^\nu) < \A^\alpha(u_I^*)
  \label{actionless}
\end{equation}
if $m_0$ is large enough, therefore minimizers of $\A^\alpha$ on the cones $\K$ generated
by those $\nu$ are free of total collisions. 
It is worth noting that both $\A^\alpha(\bar{\lambda}v_I^\nu)$ and $\A^\alpha(u_I^*)$ have order $O(m_0^{1/(2+\alpha)})$ as
$m_0\to+\infty$. Hence, to check that \eqref{actionless} holds, we have to compare the coefficients of $m_0$ in the right hand 
sides of \eqref{eq:totCollEstAlpha} and \eqref{eq:testLoopActionEst}.
Some practical examples, in which we verify the inequality \eqref{actionless} comparing the two asymptotic behaviours for $m_0 \to
\infty$, are given in Section~\ref{ss:platoExamples}.



\subsection{Partial collisions}
\label{ss:platoPartialColl}
Partial collisions can only take place on the rotation axes
$\Gamma\setminus\{0\}$. Let $u_I^* \in \overline{\K}$ be a collision
solution and $(t_1,t_2)$ an interval of regularity. Then it is a
solution of the Euler-Lagrange equation
\begin{equation}
   \ddot{w} = \alpha\sum_{R \in
      \mathcal{R}\setminus\{I\}}\frac{(R-I)w}{|(R-I)w|^{2+\alpha}}-\alpha m_0\frac{w}{|w|^{2+\alpha}}, \quad t \in (t_1,t_2). 
   \label{eq:eqMotionAlpha}
\end{equation}
Let $r$ be the rotation axis where the generating particle has a partial collision and let
$\mathcal{C}$ be the subgroup (of order $\mathfrak{o}_{\mathcal{C}}$) of the rotations in ${\cal R}$ 
with axis $r$. We can rewrite equation \eqref{eq:eqMotionAlpha} and the first integral of the energy in the form
\begin{gather}
   \ddot{w} = \alpha c_\alpha \frac{(R_\pi - \Id)w}{\lvert (R_\pi - \Id)w
      \rvert^{2+\alpha}}+V_1(w), \quad 
   c_\alpha = \sum_{j=1}^{\mathfrak{o}_\mathcal{C} - 1} \frac{1}{\sin^\alpha \big{(}
   \frac{j\pi}{\mathfrak{o}_{\mathcal{C}}}
   \big{)}}, \label{eq:newtEqPartAlpha} \\ 
   \lvert \dot{w} \rvert^2 - c_\alpha\frac{1}{\lvert (R_\pi-\Id)w \rvert^\alpha} - V(w) = h, \label{eq:energyPartAlpha}
\end{gather}
where $R_\pi$ is the rotation of $\pi$ around $r$, and $V_1(w)$, $V(w)$ are smooth
functions defined in an open set $\Omega \subseteq \R^3$ that contains $r \setminus \{ 0
\}$. Moreover, if $\tilde{R} \in \tilde{\mathcal{R}}$ is a reflection such that
$\tilde{R}r=r$, then $V_1,V$ satisfy the conditions
\begin{equation}
   V_1(\tilde{R}w) = \tilde{R}V_1(w), \quad V(\tilde{R}w) = V(w).
   \label{eq:symmCondPartCollAlpha}
\end{equation}
Therefore, partial collisions can be seen as binary collisions in a perturbed Kepler
problem and asymptotic collision and ejection directions $\mathtt{n}^{\pm}$, orthogonal to $r$, can be
defined, see Appendix \ref{app:collparz}. 
Partial collisions are excluded by the following theorem.
\begin{theorem}
   Assume that $\K$ is $\alpha$-simple and it is not tied to two coboundary axes. 
   Then any minimizer $u_I^* \in \overline{\K}$ of $\A^\alpha$ that has no total collisions, does not have partial collisions either. 
   \label{th:noPartialCollisions}
\end{theorem}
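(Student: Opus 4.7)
The plan is to argue by contradiction: suppose $u_I^*\in\overline{\K}$ is a minimizer with no total collisions but with a partial collision at some time $t_c$ on a rotation axis $r\in\Gamma\setminus\{0\}$. I would exploit the structure of the Euler--Lagrange equation in the form \eqref{eq:newtEqPartAlpha}--\eqref{eq:energyPartAlpha}, noting that near the collision the singular part of the potential is $c_\alpha/|(R_\pi-\Id)w|^\alpha$ coming from the coboundary subgroup $\mathcal{C}$ fixing $r$, while all other terms (including the central-body contribution proportional to $m_0$) are smooth in a neighborhood of $r\setminus\{0\}$ and can be absorbed into the smooth field $V_1(w)$ satisfying \eqref{eq:symmCondPartCollAlpha}. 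Thus the collision is locally indistinguishable from a binary collision in a perturbed $\alpha$-Kepler problem on the plane orthogonal to $r$, exactly the setting treated in Appendix \ref{app:collparz}.

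Next I would invoke the blow-up technique from \cite{FT2004}: rescaling $u_I^*$ at the collision yields a limit that is a parabolic ejection/collision orbit of the unperturbed $\alpha$-Kepler problem, which in turn produces well-defined asymptotic ejection and collision directions $\npiu,\nmeno\in\unitsphere\cap r^\perp$. The fundamental geometric fact is the bound on the deflection angle of such parabolic arcs: the rotation performed by the trajectory around $r$ during a single ejection/collision episode cannot exceed $2\lfloor 1/(2-\alpha)\rfloor\mathfrak{o}_p$ elementary wedges of the tessellation $\mathscr{T}_{\mathcal R}$ at the pole $p$ corresponding to $r$. This is exactly the combinatorial content captured by Definition \ref{def:simpCone}, and it tells me that the $\alpha$-simplicity of $\K$ means the free-homotopy class encoded by the sequence $\mathfrak{t}$ cannot be ``paid for'' by a local parabolic loop alone.

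With $\npm$ at hand I would construct a local perturbation: rotate a short post-collision portion of $u_I^*$ around $r$ by a small angle $\theta$, gluing it smoothly to the pre-collision branch and to the unchanged remainder of the loop, then apply an averaging/Marchal-type estimate together with the parabolic profile $w(t)\sim C|t-t_c|^{2/(2+\alpha)}\npm$ to show that the action admits a direction in $\theta$ along which it strictly decreases. For this deformation to remain in $\overline{\K}$, its induced triangle sequence $\mathfrak{t}_\theta$ must represent the same free-homotopy class as $\mathfrak{t}$ for $|\theta|$ small; this is where the two hypotheses enter. $\alpha$-simplicity ensures the local winding around $r$ is strictly less than what the parabolic collision imposes, giving room to modify one of the two asymptotic directions without changing $\mathfrak{t}$; the assumption that $\K$ is not tied to two coboundary axes rules out the exceptional configurations in which the only available local deformations are forced to use exactly the two poles lying in a common triangle $\overline{\tau_k}$, which would lock $\mathfrak{t}_\theta\equiv\mathfrak{t}$ only in the trivial (zero-variation) case.

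The main obstacle is the combinatorial/topological bookkeeping in the third step: showing that after the rotation by $\theta$ the perturbed path is still homotopic in $\R^3\setminus\Gamma$ to $u_I^*$, and simultaneously matching this with the symmetry constraint (c). This requires translating the analytic rotation into a transformation of the triangle sequence $\mathfrak{t}$, checking that for each $\tau_k$ crossed near the collision the admissible small-angle deformations preserve adjacency, and verifying that the two hypotheses rule out every obstruction. Once this is done, the variational inequality produced in the second step contradicts the minimality of $u_I^*$, completing the proof.
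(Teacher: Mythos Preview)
Your outline captures the right analytic setup (the reduction to a perturbed binary $\alpha$-Kepler collision via \eqref{eq:newtEqPartAlpha}, the blow-up giving well-defined asymptotic directions $\npm$, and the use of $\alpha$-simplicity to bound the collision angle), but there is a genuine gap in how you dispatch the remaining case and, correspondingly, in how you use the hypothesis that $\K$ is not tied to two coboundary axes.

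The local perturbation you describe (rotating the ejection branch around $r$ by a small angle and applying a Marchal-type estimate) cannot succeed in the symmetric case $\npiu=\nmeno=\mathsf{n}$ with the plane $\pi_{r,\mathsf{n}}$ fixed by some reflection $\tilde R\in\widetilde{\mathcal R}$; this is precisely the type-$(\rightrightarrows)$ collision of Definition~\ref{def:doppiaFreccia}. In that configuration the two available connecting arcs are mirror images under $\tilde R$ and have equal action, so no local averaging or rotational variation gives a strict decrease, and your third step breaks down. The paper does not try to kill this case locally. Instead it proves a global rigidity result: since $\tilde R u_I^*$ solves the same equation \eqref{eq:eqMotionAlpha} with the same energy $h$, the same $b$ and the same $\mathsf n$, the uniqueness statement of Proposition~\ref{prop:unique} forces $\tilde R u_I^*=u_I^*$ on the maximal interval of regularity around $t_c$, hence (iterating through the next collisions, which by the first step are again of type $(\rightrightarrows)$) the whole trajectory $u_I^*([0,T])$ lies in the reflection plane $\pi_{r,\mathsf n}$ and bounces back and forth between two rotation axes contained in $\pi_{r,\mathsf n}$ that share a tessellation triangle. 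Only now does the hypothesis ``$\K$ is not tied to two coboundary axes'' enter: it is a global topological obstruction saying that such a bouncing loop cannot be approximated from within $\K$, not a local condition ensuring that your deformation stays in the right homotopy class. Your proposal conflates these two roles, and without the uniqueness step (Proposition~\ref{prop:unique}) the argument does not close.

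A secondary point: the perturbation actually used in the non-$(\rightrightarrows)$ case is not a rotation of one branch but the replacement of a short portion of the parabolic ejection-collision arc $\omega$ by a genuine non-collision arc of the $\alpha$-Kepler problem (the direct or indirect arc), whose existence with the required winding number is guaranteed by Lemma~\ref{lemma:connectingArcs} and whose strictly smaller action comes from Lemma~\ref{lemma:marchal}. The $\alpha$-simplicity bound on $\theta$ is what ensures such an arc exists within the admissible range of winding numbers.
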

To prove Theorem~\ref{th:noPartialCollisions} we introduce the following definition. 
\begin{definition}
   With the notation above, a partial collision is said to be of type
   $(\rightrightarrows)$ if
   \begin{itemize}
      \item[(1)] $\mathtt{n}^+=\mathtt{n}^-$;
      \item[(2)] the plane generated by $r$ and $\mathtt{n} = \mathtt{n}^{\pm}$ is fixed by
         some reflection $\widetilde{R} \in \widetilde{\mathcal{R}}$.
   \end{itemize}
   \label{def:doppiaFreccia}
\end{definition}
\noindent We also introduce the {\em ejection-collision}
parabolic motion
\begin{equation}
\omega(\pm t) = \npm s^\alpha(t), \qquad t\geq 0,
\label{parab_eject_coll}
\end{equation}
with
\[
s^\alpha(t) =
\frac{(2+\alpha)^{2/(2+\alpha)}}{2}\calpha^{1/(2+\alpha)}
t^{2/(2+\alpha)}, \qquad t\in[0,+\infty).
\]
The proof of Theorem~\ref{th:noPartialCollisions} is divided in three steps: 1) if the cone is $\alpha$-simple, then partial collisions are necessarily of type $(\rightrightarrows)$; 2) if the collision if of type $(\rightrightarrows)$, then the trajectory of a minimizer is contained in a single reflection plane, and the generating particle bounces back and forth between two rotation axes; 3) if $\K$ is not tied to two coboundary axes, then the conclusion of point 2) is not possible. 

\begin{proposition}
    Assume that $\K$ is $\alpha$-simple, and let $u_I^* \in \overline{\K}$ be a minimizer of $\A^\alpha$
    with a partial collision at time $t_c$. Then the partial collision is of type $(\rightrightarrows)$.
    \label{prop:SoloDoppiaFreccia}
\end{proposition}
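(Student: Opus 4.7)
The plan is to perform a blow-up analysis at the collision time $t_c$ and then rule out each way in which the collision could fail to be of type $(\rightrightarrows)$ by exhibiting a local variation inside $\K$ that strictly decreases $\A^\alpha$, contradicting the minimality of $u_I^*$.

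First I would invoke Appendix~\ref{app:collparz} to record the asymptotic behaviour near the collision: $u_I^*$ admits well-defined ejection and collision directions $\npm\in\unitsphere$ orthogonal to the rotation axis $r$, and
\[
u_I^*(t_c+\tau)=\npiu\,s^\alpha(\tau)+o\bigl(s^\alpha(\tau)\bigr)\quad\text{as }\tau\to 0^+,
\]
and analogously with $\nmeno$ for $\tau\to 0^-$, where $s^\alpha$ is the parabolic profile from \eqref{parab_eject_coll}. This reduces the local analysis near $t_c$ to a planar two-body problem in the plane $\Pi^\perp$ orthogonal to $r$, with leading singular potential given by the first term of \eqref{eq:newtEqPartAlpha}.

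Suppose for contradiction that the collision is not of type $(\rightrightarrows)$. I first treat the case $\npiu\neq\nmeno$. I would construct competitors $u_I^\theta$ by composing the restriction of $u_I^*$ to $[t_c,t_c+\delta]$ with the rotation $R_\theta$ about $r$ of small angle $\theta$, smoothly reconnecting through a cutoff near the right endpoint, and leaving the rest of the loop unchanged. Because $r$ is fixed by $R_\theta$, the collision point on $r$ is preserved; because $\K$ is $\alpha$-simple, the total winding of $u_I^\theta$ around any pole stays strictly below the threshold of Definition~\ref{def:simpCone} for $|\theta|$ small, so $u_I^\theta\in\overline{\K}$ and the equivariant extension $u_R^\theta=Ru_I^\theta$ still defines an element of $\Lambda_{\mathcal{R}}$. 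Using the asymptotic formula above, I would expand $\A^\alpha(u_I^\theta)-\A^\alpha(u_I^*)$ in $\theta$: away from $t_c$ the integrand is smooth and contributes at order $O(\theta^2)$, while the leading near-collision change is linear in $\theta$ with coefficient proportional to the angular mismatch between $\npiu$ and $\nmeno$ measured in $\Pi^\perp$. Choosing the sign of $\theta$ appropriately yields $\A^\alpha(u_I^\theta)<\A^\alpha(u_I^*)$, contradicting minimality, so that $\npiu=\nmeno=:\mathsf{n}$.

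In the remaining case $\npiu=\nmeno=\mathsf{n}$ but the plane $\Pi=\Span(r,\mathsf{n})$ not fixed by any reflection of $\widetilde{\mathcal{R}}$, I would use an analogous reflection-based variation: replace $u_I^*$ on $[t_c-\delta,t_c]$ by its reflection across $\Pi$ and reconnect via a cutoff. Again $\alpha$-simplicity keeps the competitor in $\K$, and a first-variation computation, this time picking up the part of the asymptotic expansion which is antisymmetric under reflection across $\Pi$, produces a strictly lower-action competitor unless $\Pi$ is itself a reflection plane of $\widetilde{\mathcal{R}}$. This forces condition (2) of Definition~\ref{def:doppiaFreccia} and completes the proof.

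The main difficulty I anticipate is the quantitative control of the gluing: one must ensure that the $O(\theta^2)$, respectively $O(\delta)$, errors produced by smoothing the junction are genuinely dominated by the leading linear gain. This requires scaling the cutoff width carefully in terms of $\theta$ or $\delta$ and relies on the sharp remainder estimates of the parabolic asymptotics from Appendix~\ref{app:collparz}. A secondary check is that the threshold $2\bigl[\tfrac{1}{2-\alpha}\bigr]\mathfrak{o}_p$ appearing in Definition~\ref{def:simpCone} matches exactly the maximum deflection angle of a Keplerian scattering orbit in the potential $1/r^\alpha$, so that rotating or reflecting the loop locally by a small amount genuinely keeps it inside $\K$.
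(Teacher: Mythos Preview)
Your rotation argument for the case $\npiu\neq\nmeno$ has a genuine gap. The competitor $u_I^\theta$ you build still collides at $t_c$, and the dominant singular term near the collision --- namely $\calpha/|(R_\pi-I)u_I|^\alpha$ in \eqref{eq:newtEqPartAlpha} --- is \emph{invariant} under any rotation $R_\theta$ about $r$, since $R_\theta$ commutes with $R_\pi$. Consequently the near-collision contribution to $\A^\alpha(u_I^\theta)-\A^\alpha(u_I^*)$ comes only from the smooth remainder $V$, and carries no first-order term ``proportional to the angular mismatch between $\npiu$ and $\nmeno$''. More conceptually: the action of a parabolic ejection--collision solution of the planar $\alpha$-Kepler problem depends only on the time and the distance, not on the angle between the incoming and outgoing rays, so no variation that merely rotates one branch can detect that angle.

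What the paper does instead is \emph{remove} the collision. After blow-up, the colliding minimizer is asymptotic to the parabolic arc $\omega$ of \eqref{parab_eject_coll}, and the Marchal-type result (Lemma~\ref{lemma:marchal} together with Lemma~\ref{lemma:connectingArcs}) provides a non-collision arc of the $\alpha$-Kepler problem joining $\omega(-1)$ to $\omega(1)$ in the same time with \emph{strictly smaller} action; grafting it in yields a competitor that beats $u_I^*$. The $\alpha$-simplicity hypothesis is not just a licence for small deformations: it bounds the collision angle $\theta$ of Appendix~\ref{app:collang} by $2\pi\bigl[\tfrac{1}{2-\alpha}\bigr]$, which is precisely the range of winding numbers for which such connecting arcs exist. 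When the collision is not of type $(\rightrightarrows)$ one first reduces (via a reflection, not a rotation) to $\theta\neq 2\pi\bigl[\tfrac{1}{2-\alpha}\bigr]$, and then an arc with the required winding is available and lies in $\K$. Your reflection idea in the second case is in the right spirit, but in the actual argument it enters one step earlier, and the decisive step --- replacing the collision segment by a lower-action Keplerian arc --- is absent from your outline.
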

\begin{proof}

We sketch the proof of the exclusion of collisions that are not of type $(\rightrightarrows)$,
being the details already presented in \cite{fgn10}.
We can associate an angle $\theta$ to the minimizer $u_I^*$ at the
collision time $t_c$ (see Appendix~\ref{app:collang}). This angle is the same for all the loops in a
minimizing sequence $\{\hat{u}_I^{\ell}\}_{\ell\in\N}$ converging to $u_I^*$. It represents the angle
between the two asymptotic directions $\mathtt{n}^+, \mathtt{n}^-$
taking into account the (signed) number of revolutions of the
trajectories of the minimizing sequence converging to $u_I^*$ around
the collision axis $r$.
We discuss separately the cases $\alpha=1$ and $1<\alpha<2$.

\bigbreak
\noindent{\bf Keplerian case}
Assuming that $\K$ is $1$-simple, thanks to Proposition~\ref{thetabounds} we have that
\[
   -\frac{\pi}{\mathfrak{o}_{\mathcal{C}}} \leq \theta \leq 2\pi.
\]
If the collision is not of type $(\rightrightarrows)$, then we can always reduce the discussion to the
case $\theta\neq2\pi$.
In fact, if property (1) of Definition~\ref{def:doppiaFreccia}
holds, but not property (2), then there are two possible behaviors with $\theta=2\pi$, that we sketch in Figure~\ref{fig:thetanot2pi}. 
\begin{figure}[t]
    \centering
    \includegraphics[width=0.35\textwidth]{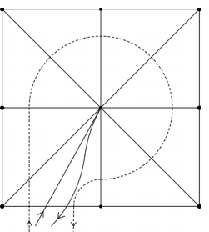}
    \hspace{1cm}
    \includegraphics[width=0.35\textwidth]{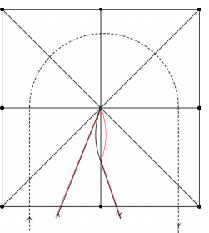}
    \caption{Two different behaviours when the collision angle $\theta$ is $2\pi$. The dashed curves indicate how the loops in the free-homotopy class wind around the colliding axis, while the black thick curves are the collision paths. On the left, the cone $\K$ is not $1$-simple. On the right, a portion of the path is reflected into a new colliding path (the red curve) whose collision angle is less than $2\pi$.}
    \label{fig:thetanot2pi}
\end{figure}
The case on the left can be excluded using the fact that $\K$ is 1-simple. For the case on the right, we observe that we can reflect part of the trajectories of $\hat{u}_I^\ell$, $u_I^*$ without changing the values $\A^1(u_I^*)$, $\A^1(u_I^\ell)$ of the action. In this way we would obtain another minimizer with collision axis $r$, but with $\theta\neq 2\pi$.  

Since $\theta\neq 2\pi$, we can exclude partial collisions by local
perturbations. Indeed, using the blow-up technique \cite{FT2004}, the collision solution
$u_I^*$ is asymptotic to the parabolic collision-ejection solution $\omega$
\eqref{parab_eject_coll}. Then, a local perturbation $\tilde{u}_I^*$ without partial
collisions can be constructed using either the direct or the indirect Keplerian arc (see Figure~
\ref{fig:surgery} for a sketch), and by Lemma \ref{lemma:marchal} we can prove that the
action of $\tilde{u}_I^*$ is lower than the action of $u_I^*$.
\begin{figure}[!ht]
   \centering
   \includegraphics[scale=1]{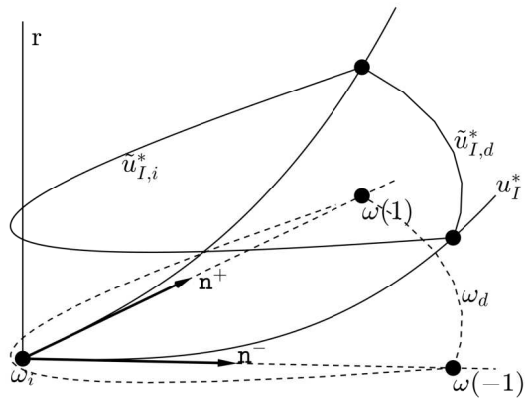}
   \caption{The sketch of the construction of collision-free perturbations. The arcs
   $\omega_d, \omega_i$ are the direct and indirect Keplerian arcs, respectively,
   connecting the two points $\omega(\pm 1)$ of the parabolic collision-ejection solution
   at $t=\pm1$. The two perturbations of $u_I^*$, where the collision is removed, are
   denoted with $\tilde{u}_{I,d}^*$ and $\tilde{u}_{I,i}^*$.}
   \label{fig:surgery}
\end{figure}


\bigbreak
\noindent{\bf Non-Keplerian case}
The blow-up technique can be also used for $\alpha$-homogeneous potentials, with $\alpha \in (1,2)$
(see \cite{FT2004}). Again, we can show that a solution with partial collisions
is asymptotic to the
parabolic collision-ejection solution $\omega$ defined in \eqref{parab_eject_coll}.
The generalization of Marchal's Lemma to the case $\alpha \in (1,2)$
can be deduced from the existent literature, see Lemma
\ref{lemma:marchal} in Appendix \ref{app:marchal}.
%
Assuming that the cone $\K$ is $\alpha$-simple, we have
\[
   -\frac{\pi}{\mathfrak{o}_{\mathcal{C}}} \leq \theta \leq 2\pi \bigg[ \frac{1}{2-\alpha}
   \bigg],
\]
where $[\, \cdot \,]$ denotes the integer part. Let $\bar{\theta} \in
[0, 2\pi)$ and $h\in \Z$ be such that 
\[
   \theta = \bar{\theta} + 2  \pi h, \qquad h = [\theta/(2\pi)].
\]
Note that $-1 \leq h \leq [1/(2-\alpha)]$.
Let us suppose that the collision is not of type ($\rightrightarrows$), hence we can
always assume (see the discussion for the Keplerian case)
\[
   \theta \neq 2\pi\bigg[\frac{1}{2-\alpha}\bigg],
\]
from which it follows that $h < [1/(2-\alpha)]$.
From Lemma \ref{lemma:connectingArcs}, the arcs which are always available, i.e.
independently from the value of $\bar{\theta} \in [0, 2\pi)$, sweep a total angle of
$\bar{\theta}+2\pi k$, where
\[
   \max_{\varphi \in (0,2\pi)}k_{\text{min}} \leq k \leq \min_{\varphi \in
   (0,2\pi)}k_{\text{max}},
\]
and $k_{\text{min}}, k_{\text{max}}$ are given by \eqref{eq:kappaMinApp},
\eqref{eq:kappaMaxApp}, respectively.
Note that
\[
   \max_{\varphi \in (0,2\pi)}k_{\text{min}} = -\bigg[\frac{1}{2-\alpha}\bigg], \quad
   \min_{\varphi \in (0,2\pi)}k_{\text{max}} = \bigg[\frac{1}{2-\alpha}\bigg]-1,
\]
hence it is always possible to find an arc sweeping an angle of $\theta = \bar{\theta} + 2
\pi h$ since $-1 \leq h < [1/(2-\alpha)]$.
The exclusion of collisions of type different from $(\rightrightarrows)$ can be done in
the same way as in the Keplerian case. Indeed, we can choose a suitable connecting arc and construct a local perturbation $\tilde{u}_I^*$,
which removes the collision and belongs to the cone $\K$.
Moreover, by Lemma \ref{lemma:marchal}, the action of $\tilde{u}_I^*$ is lower than the action of the colliding solution
$u_I^*$.

\end{proof}
\begin{proposition}
  Assume that $\K$ is $\alpha$-simple. Let $u_I^* \in \overline{\K}$ be a minimizer with a
  partial collision of type $(\rightrightarrows)$ at time $t_c$, and $\pi_{r, \mathtt{n}}$
  be the plane generated by $r$ and $\mathtt{n}=\mathtt{n}^+=\mathtt{n}^-$ of Definition~\ref{def:doppiaFreccia}. 
  Assume also that $u_I^*$ has no total collisions.
    Then $u_I^*(t) \in \pi_{r,\mathtt{n}}$ for all $t \in [0,T]$ and the generating particle bounces back and forth between two coboundary axes.
    \label{prop:OnRefPlane}
\end{proposition}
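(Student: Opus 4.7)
The plan is to exploit the reflection $\widetilde{R}\in\widetilde{\mathcal{R}}$ whose fixed plane is $\pi_{r,\mathtt{n}}$ in order to build a second minimizer that coincides with $u_I^*$ near $t_c$, and then to conclude via uniqueness of the Cauchy problem for the Euler--Lagrange equation away from collisions.

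First I would construct the candidate competitor $v(t)=u_I^*(t)$ for $t\le t_c$ and $v(t)=\widetilde{R}u_I^*(t)$ for $t>t_c$. Continuity at $t_c$ is automatic since $u_I^*(t_c)$ lies on $r\subset\pi_{r,\mathtt{n}}$ and is therefore fixed by $\widetilde{R}$. Using the asymptotic expansion recalled in Appendix~\ref{app:collparz} (cf.\ the parabolic motion \eqref{parab_eject_coll}) and property~(1) of Definition~\ref{def:doppiaFreccia}, both one-sided traces of $u_I^*$ at $t_c$ are asymptotic to $\mathtt{n}\,s^\alpha(|t-t_c|)$; since $\widetilde{R}\mathtt{n}=\mathtt{n}$, the trajectory $\widetilde{R}u_I^*$ has the same asymptotic development at $t_c^+$ as $u_I^*$, so $v$ inherits the same asymptotic collision/ejection directions. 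The action $\A^\alpha$ is invariant under $\widetilde{R}$ (both kinetic and every potential term are, because the set of rotations $R\in\mathcal{R}\setminus\{I\}$ is stable under conjugation by $\widetilde{R}$), hence $\A^\alpha(v)=\A^\alpha(u_I^*)$. Moreover $\widetilde{R}$ preserves $\Lambda_{\cal R}$ and the free-homotopy class selected in (b), so $v\in\overline{\K}$ and $v$ is again a minimizer.

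The second step is uniqueness. Both $u_I^*$ and $v$ are classical solutions of \eqref{eq:eqMotionAlpha} on every interval of regularity and they coincide on $[0,t_c]$. Since they share the parabolic ejection asymptotics at $t_c^+$, the regularized Cauchy data immediately after $t_c$ agree, and standard ODE uniqueness forces $u_I^*\equiv v$ on the subsequent interval of regularity; this exactly means $u_I^*(t)=\widetilde{R}u_I^*(t)$ there, i.e.\ $u_I^*(t)\in\pi_{r,\mathtt{n}}$. Since $u_I^*$ has no total collisions, we can propagate the planar constraint through any further partial collision by repeating the same symmetrization argument (which by Proposition~\ref{prop:SoloDoppiaFreccia} again applies), and by running time backwards across $t_c$ we obtain $u_I^*(t)\in\pi_{r,\mathtt{n}}$ for every $t\in[0,T]$.

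For the last assertion I would argue as follows. The plane $\pi_{r,\mathtt{n}}$ is fixed by the reflection $\widetilde{R}\in\widetilde{\mathcal{R}}$, so the great circle $\pi_{r,\mathtt{n}}\cap\unitsphere$ is one of the great circles of the tessellation $\mathscr{T}_{\cal R}$; the poles it contains are exactly the vertices of the spherical triangles lying on this circle, and consecutive ones belong to the closure of a common triangle, i.e.\ are coboundary. The planar motion $u_I^*$ avoids the origin and can only have partial collisions on the axes through such poles. By $\alpha$-simplicity, between two consecutive collisions the particle cannot wind around any single axis beyond the admissible deflection, which forces the next collision to occur on a distinct axis $r'$ lying in $\pi_{r,\mathtt{n}}$; since only neighbouring poles along the great circle are compatible with a single planar arc without total collisions, $r'$ must be coboundary to $r$. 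Iterating, the generating particle oscillates back and forth between $r$ and $r'$.

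The main obstacle is the uniqueness argument pushed through the collision: one has to make sure that the common asymptotic expansion at $t_c$ really forces $u_I^*$ and $\widetilde{R}u_I^*$ to agree on a full one-sided neighbourhood of $t_c$, so that standard uniqueness for \eqref{eq:eqMotionAlpha} can take over. This is precisely where the hypothesis of type~$(\rightrightarrows)$ enters through both properties of Definition~\ref{def:doppiaFreccia}: property~(1) guarantees that the gluing is compatible with the asymptotic direction, and property~(2) supplies the reflection $\widetilde{R}$ needed to keep the competitor $v$ inside $\overline{\K}$.
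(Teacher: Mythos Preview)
Your core idea---use the reflection $\widetilde{R}$ that fixes $\pi_{r,\mathtt{n}}$ together with a uniqueness result for ejection/collision solutions---is exactly the paper's mechanism. However, the way you implement it introduces a genuine gap and an unnecessary detour.

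\textbf{The gap.} Your competitor $v$, defined as $u_I^*$ on $[0,t_c]$ and $\widetilde{R}u_I^*$ on $(t_c,T]$, is in general \emph{not a $T$-periodic loop}: you would need $v(T)=v(0)$, i.e.\ $\widetilde{R}u_I^*(0)=u_I^*(0)$, which is precisely $u_I^*(0)\in\pi_{r,\mathtt{n}}$---the conclusion you are trying to prove. Without periodicity, $v\notin\overline{\K}$ and the minimality comparison collapses. Your further claim that ``$\widetilde{R}$ preserves the free-homotopy class selected in (b)'' is also not granted in general; reflections may send one cone $\K$ to a different one.

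\textbf{How the paper avoids this.} The paper never builds a global competitor. It simply observes that, thanks to the symmetry relations \eqref{eq:symmCondPartCollAlpha}, $\widetilde{R}u_I^*$ is another solution of \eqref{eq:newtEqPartAlpha} on the \emph{same} maximal intervals $(t_c-t^-,t_c)$ and $(t_c,t_c+t^+)$, with the same energy $h$, the same value of $b$ in \eqref{lim_pdot}, and the same asymptotic direction $\mathtt{n}$ (because $\widetilde{R}\mathtt{n}=\mathtt{n}$). Then Proposition~\ref{prop:unique}---the uniqueness result for ejection solutions classified by $(h,b,\mathtt{n})$---yields $\widetilde{R}u_I^*=u_I^*$ on the whole interval at once, hence $u_I^*\subset\pi_{r,\mathtt{n}}$ there. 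No membership in $\overline{\K}$, no minimality comparison, and no periodicity issue is needed. Your ``regularized Cauchy data'' sentence is morally the same statement, but you should invoke Proposition~\ref{prop:unique} directly rather than route the argument through a problematic competitor.

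The propagation step (no total collisions, so the endpoints of the maximal interval are again partial collisions, which by Proposition~\ref{prop:SoloDoppiaFreccia} are of type $(\rightrightarrows)$, and the argument repeats) and your discussion of why the two axes must be coboundary are fine and match the paper's reasoning.
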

\begin{proof}
Assume that $u_I^*(t)$ solves \eqref{eq:eqMotionAlpha} on $(t_c-t^-,t_c)$ and $(t_c,t_c+t^+)$,
 and these are the maximal intervals of existence for these collision and ejection solutions relative to the axis $r$.
Let $\tilde{R}$ be the reflection with respect to the plane $\pi_{r, \mathtt{n}}$. 
Then from \eqref{eq:symmCondPartCollAlpha} it follows that $\tilde{R}u_I^*(t)$ is also a solution
of \eqref{eq:eqMotionAlpha} with the same value of $b$ defined in
Proposition~\ref{prop:unique} and with the same energy $h$. 
Since the reflection $\tilde{R}$ does not change $\mathtt{n}$, by Proposition~\ref{prop:unique} it follows that $\tilde{R}u_I^*(t) = u_I^*(t)$ for all $t \in (t_c-t^-, t_c+t^+)$. 

By the assumptions on $\K$, the minimizer cannot have total collisions, hence by Proposition~\ref{prop:SoloDoppiaFreccia} if follows that at times $t=t_c-t^-, t_c+t^+$ the only possibility is that $u_I^*$ has a partial collision of type $(\rightrightarrows)$. With the same argument as above we obtain therefore that $u_I^*(t) \in \pi_{r, \mathtt{n}}$ for all $t\in [0,T]$, and the generating particle bounces back and forth between two axes lying in $\pi_{r,\mathtt{n}}$. 

\end{proof}

Note that a collision loop such as the one in Proposition~\ref{prop:OnRefPlane} can
be reached by a sequence of non-collision loops in $\mathcal{K}$ only if the cone is tied
to two coboundary axes\footnote{Actually, it can be reached also when the
free-homotopy class of the cone is represented by a loop winding around one rotation axis
only. However, we excluded this case to get the coercivity (see Proposition~\ref{prop:PlatoCoercivity}).}.
Therefore, we conclude the proof of Theorem~\ref{th:noPartialCollisions} using Propositions~\ref{prop:SoloDoppiaFreccia}, ~\ref{prop:OnRefPlane}, and the fact that $\K$ is not tied to two coboundary
axes.



\subsection{Minimizers of the $\Gamma$-limit}
\label{ss:platoGammaLim}
Provided that the cone $\K$ is not central, we show that the minimizers of the $\Gamma$-limit belong to the boundary
$\partial\K$, which means
that the satellites pass closer and closer to partial collisions as the mass
$m_0$ increases, and they collide in the limit, as we have already seen in the example of
Section \ref{s:kleinGamma} with the Klein group symmetry. However, as opposite to the previous case,
here it can happen that the minimizers of the $\Gamma$-limit are not
$C^1$, see Section~\ref{ss:platoExamples}. 
For the case $\alpha=1$ we can show the following result.
\begin{proposition}
   Assume $\K$ is not central and $\alpha=1$. Then there exists a minimizer of the $\Gamma$-limit functional
   $\overline{\widetilde{\A}_0^1}$ that either has total collisions,
   or is composed by circular arcs centered at the origin and passing through some
   rotation axes, swept with uniform motion.
   \label{prop:gammalimitKeplerian}
\end{proposition}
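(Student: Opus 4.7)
The plan is to combine the general $\Gamma$-convergence machinery of Theorem~\ref{th:gammaConv} with the classical geometry of Keplerian orbits and a Gordon-type comparison with circular arcs.

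First, by parts (i) and (iv) of Theorem~\ref{th:gammaConv}, $\{\A_\varepsilon^1\}_{\varepsilon>0}$ is equi-coercive and $\Gamma$-converges to $\overline{\widetilde{\A}_0^1}$; Theorem~\ref{th:convMin} then gives a minimizer $v^*\in\overline{\K}^{L^2}$. If $v^*$ passes through the origin, the first alternative in the statement holds. Otherwise, part (ii) of Theorem~\ref{th:gammaConv} gives $\overline{\widetilde{\A}_0^1}(v^*)=\A_0^1(v^*)$, so $v^*$ minimizes the Kepler action among loops in $\overline{\K}^{L^2}$ that avoid the origin. On each maximal interval where $v^*$ stays in $\R^3\setminus(\Gamma\cup\{0\})$, it satisfies the Kepler equation and is therefore a conic arc with focus at $O$. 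Since $\K$ is not central, $v^*$ cannot reduce to a single planar Keplerian ellipse; hence it must touch $\Gamma\setminus\{0\}$ on a discrete set of instants, and between two such consecutive contacts it is a Kepler arc joining two points $A,B$ on rotation axes in a prescribed time.

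The central step is to show that each such Kepler arc may be replaced, without increasing the action, by a uniform-speed great-circle arc centered at $O$ joining $A$ to $B$. I plan to obtain this via a Gordon-type comparison: the rotational invariance of the Kepler problem reduces the question to a planar one, where among all conic arcs with focus at $O$ traversed in a fixed time with fixed endpoints the circular one is extremal; optimization over the radius then yields the classical action of the form $(3/2)L^{2/3}T^{1/3}$, attained exactly by uniformly-traversed circular arcs. Applying this replacement on each sub-arc and exploiting the symmetry condition (c) to propagate the construction consistently around the whole period produces a minimizer composed of circular arcs centered at $O$ and passing through rotation axes.

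The main obstacle is the circular-arc comparison combined with the requirement that the resulting loop respect the symmetry (c) and the prescribed free-homotopy class, i.e.\ lie in $\overline{\K}^{L^2}\cap\partial\K$. A robust way to circumvent this is to first exhibit directly a sub-class of loops in $\partial\K$ made of circular arcs dictated by the topology of $\K$, compute the infimum of $\A_0^1$ on this sub-class, and then apply Theorem~\ref{th:gammaConv}(iii) to identify a member of this sub-class as a minimizer of $\overline{\widetilde{\A}_0^1}$.
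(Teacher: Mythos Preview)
Your outline has a genuine gap at the ``central step.'' The claim that, for two fixed points $A,B$ and a fixed transfer time, the action-minimizing Kepler arc is a uniform-speed circular arc centered at $O$ is not correct and in general not even well-posed: a circular arc centered at $O$ joining $A$ to $B$ exists only when $|A|=|B|$, and nothing in your argument rules out that the minimizer hits the rotation axes at points with different distances from the origin. The Gordon result you invoke is a statement about \emph{closed} $T$-periodic loops with winding number one, not about fixed-endpoint arcs; for the Lambert-type problem with fixed endpoints and fixed time the extremal is generically an elliptic arc, not a circle. Consequently the arc-by-arc replacement you propose does not go through, and the fallback plan of exhibiting a sub-class of circular-arc loops and invoking Theorem~\ref{th:gammaConv}(iii) begs the same question: you would still need to show that no non-circular competitor beats them.

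The paper closes this gap by a different mechanism that you are missing. Instead of replacing arcs, it exploits the minimality of $v^*$ to derive \emph{natural boundary conditions} at each axis-crossing time $\tau_i$: varying the crossing point along the axis gives continuity of $\dot v^*\cdot\hat r_i$, and varying the crossing time gives continuity of $|\dot v^*|$. Together these force the energy $E$ and the modulus of the angular momentum $c=|v\times\dot v|$ to be the \emph{same} on every arc. Hence the radial component $\rho(t)=|v^*(t)|$ satisfies the reduced Kepler equation $\ddot\rho=-\rho^{-2}+c^2\rho^{-3}$ on the whole period with periodic data, and the action of $v^*$ equals that of a single planar Kepler orbit with this radial profile sweeping a total angle $\Delta\theta$. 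Now, and only now, Gordon's closed-orbit result applies: if $\Delta\theta\neq 2k\pi$ the only $T$-periodic radial profile is the constant one (circular arcs), while if $\Delta\theta=2k\pi$ circular and elliptic profiles have the same action, so a circular-arc minimizer exists in either case. The key idea you need to add is precisely this first-variation argument at the axis crossings, which globalizes $E$ and $c$ and reduces the problem to a genuine closed-orbit comparison.
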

\begin{proof}
   Let us suppose that $v_I^*$ is a minimizer without total
   collisions. Then, by Theorem~\ref{th:gammaConv} ii), we can conclude that $v_I^*$ is also a local minimizer of $\A_0^1$.
   By contradiction, if $v_I^*$ did not pass through
   any rotation axis, then it would be a classical smooth solution in $\K$ of
   \begin{equation}
     \ddot{v}_I = -\frac{v_I}{|v_I|^{3}}.
     \label{CFalpha}
   \end{equation}
   In particular, $v_I^*$ would lie on a plane passing through
   the origin, and this would imply that $\K$ is central.  Therefore,
   we can associate to $v_I^*$ a sequence of
   rotation semi-axes represented by unit vectors $\hat{r}_1,\dots,\hat{r}_m$, and a sequence 
   $0 \leq \tau_1 < \dots < \tau_m < T$ of collision times such that
   \[
      v_I^*(\tau_i) \in \{\lambda\hat{r}_i, \lambda>0\}, \quad i=1,\dots,m.
   \]
   Set $\hat{r}_{m+1} = \hat{r}_1$, let $\theta_i>0$ be the angle between 
   $\hat{r}_i$ and $\hat{r}_{i+1}$, for $i=1,\dots,m$, and define the total angle
   \begin{equation}
   \Delta \theta = \sum_{i=1}^m \theta_i.
      \label{eq:deltaTheta}
   \end{equation}
   Natural boundary conditions arising at the passage through the axes are
   \begin{equation}
      \begin{cases}
         \dot{v}_I^*(\tau_i^+)\cdot \hat{r}_i = \dot{v}_I^*(\tau_{i}^-) \cdot \hat{r}_i,
         \\[2ex]
         \lvert\dot{v}_I^*(\tau_i^+)\rvert = \lvert \dot{v}_I^*(\tau_{i}^-)\rvert. \\
      \end{cases}
      \label{eq:naturalBC}
   \end{equation}
   The first condition in \eqref{eq:naturalBC} is obtained by taking $T$-periodic variations
   $\eta$ such that $\eta(\tau_i) \in \{\lambda\hat{r}_i, \lambda>0\}$ and $\eta(t)=0$ for
   $t \in [0,T] \setminus (\tau_i-\delta, \tau_i+\delta)$, where $\delta>0$ is small
   enough.
   In fact, in this case we obtain 
   \begin{equation}
      0=\frac{d}{d\lambda} \A^1_0(v_I^* + \lambda\eta)\bigg|_{\lambda=0} =
      \dot{v}_I^* \cdot \eta \bigg|_{\tau_i - \delta}^{\tau_i} +  
      \dot{v}_I^* \cdot \eta \bigg|_{\tau_i}^{\tau_i+\delta} = \big( \dot{v}^*_I(\tau_i^-) -
      \dot{v}_I^*(\tau_i^+) \big) \cdot \eta(\tau_i).
      \label{eq:firstVariationZero}
   \end{equation}
   The second condition in \eqref{eq:naturalBC} is obtained by taking variations
   $v_\lambda(t) = v_I^*\big(t + \lambda \gamma(t) \big)$, where $\gamma \in C^\infty_0(\R)$
   with support in $(\tau_i -\delta, \tau_i +\delta)$ with $\delta>0$ small: 
   \begin{equation}
      0=\frac{d}{d\lambda} \A^1_0(v_\lambda(t))\bigg|_{\lambda=0} = \gamma
      |\dot{v}_I^*|^2\bigg|_{\tau_i - \delta}^{\tau_i} + 
      \gamma |\dot{v}_I^*|^2\bigg|_{\tau_i}^{\tau_i+\delta}
      = 
      \big( |\dot{v}_I^*(\tau_i^-)|^2 - |\dot{v}_I^*(\tau_i^+)|^2\big)\gamma(\tau_i).
      \label{eq:firstVariationZero2}
   \end{equation}
   These conditions imply that the values $E$ of the energy $|\dot v_I|^2/2 - 1/|v_I|$ 
   and $c$ of the size of the angular momentum $|v_I \times \dot{v}_I|$ are the same for each
   arc connecting two consecutive semi-axes. 
   Therefore, the radial component $\rho(t)=|v_I^*(t)|$ of the minimizer fulfills the
   equation of the reduced dynamics of the Kepler problem
   \begin{equation}
      \ddot{\rho} = -\frac{1}{\rho^2} + \frac{c^2}{\rho^3},
      \label{eq:keplerReduced}
   \end{equation}
   on the whole interval $[0,T]$.
   Moreover, it follows that $\rho(0) = \rho(T)$ because $v_I^*(t)$ is periodic, and $\dot{\rho}(0) = \dot{\rho}(T)$ because the values $E, c$ are the same for each arc. 
   Note that the action $\A^1_0(v_I^*)$ is the same as the action of a solution of the
   Kepler problem (lying in a plane) with the same radial component $\rho(t)$ and the same value $c$ of the angular momentum, sweeping a total angle
   $\Delta \theta$ in the time interval $[0,T]$. 
   %

   %
   %

   Therefore, if $\Delta\theta \neq 2k\pi, k\in \N$ the only possibility is that
   $v_I^*$ is composed by circular arcs.  
   %
   In the case $\Delta\theta = 2k\pi$ for some $k\in\N$, $v_I^*$ can be either composed by circular arcs or by elliptic arcs, sharing the same value of the action (see \cite{gordon77}). In this case they are all 
   minimizers of $\overline{\widetilde{\A}_0^1}$. 
\end{proof} 

\begin{remark}
   \label{rmk:deltaThetaMin}
   Let $\omega$ be the angular velocity of a minimizer $v_I^*$ composed by circular arcs,
   hence by the third Kepler's law
   \begin{equation}
      \omega = \frac{\Delta \theta}{T}, \qquad 
      \rho = \bigg( \frac{1}{\omega^2} \bigg)^{\frac{1}{3}}.
      \label{eq:angVel}
   \end{equation}
   The action of $v_I^*$ is therefore
   \begin{equation}
      \A_0^1(v_I^*) = \sum_{i=1}^m
      \int_{\tau_{i}}^{\tau_{i+1}} \bigg( \frac{\omega^2\rho^2}{2} +
      \frac{1}{\rho} \bigg) dt =
      \frac{3}{2}\frac{T}{\rho} =
      \frac{3}{2^{\frac{1}{3}} }\Delta
      \theta^{\frac{2}{3}}T^{\frac{1}{3}}.
      \label{eq:minAction}
   \end{equation}
   Note that here the action depends only on the total angle 
   $\Delta\theta$. Let $(\Delta\theta)_{\min}$ be the minimum total angle compatible with
   the cone $\K$. By the proof of Proposition~\ref{prop:gammalimitKeplerian} and from 
   the above observation, it follows that if $(\Delta\theta)_{\min} < 2\pi$ then the
   minimizer is necessarily composed by circular arcs.
   %
   In addition, if the symmetry \eqref{eq:extrasym} is fulfilled with $M>1$, the same fact
   holds true if $(\Delta\theta)_{\min}/M<2\pi$.
\end{remark}

In the case $\alpha \in (1,2)$ it is still possible to show that the minimizers of the
$\Gamma$-limit functional pass through some rotation semi-axes, provided that the cone
$\K$ is not central, and the proof is the same as in the case $\alpha=1$. 
However, in this case we cannot draw the same conclusion. We only provide the following conjecture.
%
%
\begin{conjecture}
   Assume $\K$ is not central, let $M \in \N$ be as in
   \eqref{eq:extrasym}, and $(\Delta\theta)_{\min}$ be the minimum total angle compatible
   with $\K$. Let $v_I^* \in \overline{\K}$
   be a minimizer of the $\Gamma$-limit functional $\overline{\widetilde{\A}_0^\alpha}$.
   Then if $(\Delta\theta)_{\min}/M<2\pi$, the trajectory $v_I^*([0,T])$ is composed
   by circular arcs, centered at the origin and passing through some rotation axes, which 
   are swept with uniform motion.
   \label{th:minColl}
\end{conjecture}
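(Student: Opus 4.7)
The plan is to mirror the proof of Proposition~\ref{prop:gammalimitKeplerian} step by step, and to identify the single place where the argument must be genuinely upgraded beyond the Keplerian case.

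First, I would show that any minimizer $v_I^*$ must touch the collision set $\Gamma$. Suppose by contradiction that $v_I^*([0,T])\cap\Gamma=\emptyset$; in particular $v_I^*$ avoids the origin, so by Theorem~\ref{th:gammaConv}(ii) it is also a minimizer of $\A_0^\alpha$, and hence a classical smooth solution of $\ddot v_I=-\alpha v_I/|v_I|^{\alpha+2}$. Every such solution lies on a plane through the origin, contradicting the assumption that $\K$ is not central. Thus $v_I^*$ meets rotation semi-axes $\hat r_1,\ldots,\hat r_m$ at times $0\le\tau_1<\cdots<\tau_m<T$, and, as in the Keplerian case, I would set $\theta_i$ as the angle between $\hat r_i$ and $\hat r_{i+1}$ and $\Delta\theta=\sum_i\theta_i$.

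Next, the natural boundary conditions at each crossing follow exactly as in the Keplerian proof: test variations $\eta(t)\in\R\hat r_i$ supported near $\tau_i$ yield continuity of $\dot v_I^*\cdot\hat r_i$, while time reparametrizations $v_\lambda(t)=v_I^*(t+\lambda\gamma(t))$ yield continuity of $|\dot v_I^*|$. These identities force the energy and the modulus $c=|v_I^*\times\dot v_I^*|$ of the angular momentum to be the same on every arc. Consequently $\rho(t)=|v_I^*(t)|$ satisfies the reduced radial equation
\[
\ddot\rho=-\frac{\alpha}{\rho^{\alpha+1}}+\frac{c^2}{\rho^3}
\]
on $[0,T]$, with $\rho(0)=\rho(T)$ and $\dot\rho(0)=\dot\rho(T)$, and $\A_0^\alpha(v_I^*)$ coincides with the action of a planar $\alpha$-Kepler orbit having the same radial function $\rho$ and angular momentum $c$, sweeping the total angle $\Delta\theta$ in time $T$.

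The main obstacle is the last step. For $\alpha=1$, Gordon's identity $\A_0^1=\tfrac{3}{2^{1/3}}\Delta\theta^{2/3}T^{1/3}$ depends only on $\Delta\theta$ and $T$, so any planar arc can be replaced by a circular one at no cost. For $\alpha\in(1,2)$ the apsidal angle of a non-circular orbit depends non-trivially on energy and angular momentum (for small oscillations around the circular radius it is $\pi/\sqrt{2-\alpha}$), and no analogous identity is available. My plan is to combine the extra symmetry~\eqref{eq:extrasym} with the hypothesis $(\Delta\theta)_{\min}/M<2\pi$: after reducing to a fundamental window of length $T/M$ the generating particle sweeps an angle strictly less than $2\pi$, so any admissible $\rho$ has at most one pair of consecutive apsides on that window. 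I would then minimize the reduced action
\[
\mathcal{J}(\rho,c)=\int_0^T\bigg(\frac{\dot\rho^2}{2}+\frac{c^2}{2\rho^2}+\frac{1}{\rho^\alpha}\bigg)dt
\]
under the constraint $\int_0^T c/\rho^2\,dt=\Delta\theta$, exploiting the strict convexity of the effective potential $c^2/(2\rho^2)-1/\rho^\alpha$ at its circular minimum for $\alpha\in(1,2)$ to conclude that the circular profile is the unique minimizer. The genuinely hard part is this quantitative comparison between the apsidal angle and $2\pi/M$, which requires a careful precession analysis in the perturbed Kepler problem and is what makes the statement a conjecture rather than a theorem.
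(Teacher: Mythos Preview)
The paper does not prove this statement: it is stated as a \emph{Conjecture}, not a Proposition or Theorem. Immediately before the statement the authors write that the argument showing the minimizer must pass through some rotation semi-axes carries over from the Keplerian case, but then say explicitly ``However, in this case we cannot draw the same conclusion. We only provide the following conjecture.'' The only supporting evidence they offer is numerical, in the Remark at the end of Section~\ref{ss:platoExamples}.

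Your first three steps (the minimizer must touch $\Gamma$ if $\K$ is not central; the natural boundary conditions at each crossing; the reduced radial equation for $\rho$ with common energy and $|c|$ on all arcs) are exactly the part the paper says goes through unchanged, and your identification of the obstacle --- the failure of Gordon's identity for $\alpha\neq 1$ --- is precisely the reason the authors leave the statement open. So up to that point your outline and the paper agree.

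Your proposed final step is not in the paper at all; it is a plausible strategy but, as written, it is not a proof. Two concrete issues: (i) the admissible total angle $\Delta\theta$ is not a continuous parameter but ranges over a discrete set determined by which sequence of rotation semi-axes the loop hits, so the constrained minimization of $\mathcal{J}(\rho,c)$ is not a standard convex problem; (ii) strict convexity of the effective potential $c^2/(2\rho^2)-1/\rho^\alpha$ at its circular minimum is a local statement and does not by itself yield global uniqueness of the circular profile among all $T$-periodic radial functions with the angular constraint. Your own closing sentence acknowledges this, and it is consistent with the paper's decision to leave the result as a conjecture. One minor omission: like Proposition~\ref{prop:gammalimitKeplerian}, your argument tacitly assumes the minimizer has no total collision (otherwise the reduction to the radial problem and the application of Theorem~\ref{th:gammaConv}(ii) both fail), a caveat the conjecture's statement does not make explicit.
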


\subsection{Some examples}
\label{ss:platoExamples}
Here we discuss some examples, gathering the discussions of Sections \ref{ss:platoTotColl}, 
\ref{ss:platoPartialColl}, \ref{ss:platoGammaLim} to prove the existence of collision-free 
minimizers of $\A^\alpha_\varepsilon$ for $\varepsilon>0$.
In Table \ref{tab:sequences} we report the list of the selected free-homotopy classes. 
In Figure \ref{fig:exampleOrbits}, the trajectories of some of them are displayed, for different 
values of the central mass $m_0$. More images and videos can be found at the website
\cite{MFwebGamma}.
\begin{table}[ht!]
  \begin{center}
  \begin{tabular}{clcccc}
    \toprule
    $\mathcal{R}$ & $\nu$ & $M$ & $k_1$ & $k_2$ & $\alpha$ \\
    \midrule
    $\mathcal{T}$ & {\small$\nu_1 =[1, 5, 2, 6, 11, 3, 12, 9, 1]$} & $2$ & $8$ & / & $1$ \\
    & {\small $\nu_2 = [1, 5, 8, 3, 12, 4, 9, 7, 1]$ } & $2$ & $8$ & / & $1$ \\
    & {\small $\nu_3 = [1, 5, 8, 3, 10, 11, 3, 12, 4, 9, 12, 8, 1]$} & $3$ & $12$ & / & $1$ \\
    & {\small $\nu_4 = [1, 7, 6, 2, 7, 9, 12, 4, 9, 1, 5, 8, 1]$} & $3$ & $12$ & / & $1.7$ \\
    & {\small $\nu_5 = [1, 9, 7, 2, 5, 1, 7, 2, 10, 5, 1, 7, 2, 5, 1]$} & $2$ & $14$ & / &
    $1.85$ \\
    & {\small $\nu_6=[1, 9, 4, 12, 9, 4, 12, 9, 7, 2, 10, 3, 11, 10, 3, 11, 10, 5, 1]$} &
    $2$ & $18$ & / & $1.86$ \\
    $\mathcal{O}$ & {\small$\nu_{1} =  [1, 3, 7, 20, 24, 12, 4, 9, 2, 5, 1]$} & $2$
    &$4$ & $6$ & $1$ \\
    &\small{$\nu_{2} = [ 1, 3, 8, 18, 13, 12, 4, 9, 2, 19, 11, 14, 1]$} & $2$ & $4$&$8$ & $1$   \\
    &\small{$\nu_{3} = [ 1, 3, 7, 20, 18, 8, 15, 4, 6, 10, 16, 5, 1]$} & $3$ & $6$&$6$ & $1$ \\
    &\small{$\nu_{4} = [ 1, 3, 8, 15, 4, 9, 2, 5, 1]$} & $4$ & $4$&$4$ & $1$\\
    &\small{$\nu_5 = [1, 3, 10, 8, 15, 6, 4, 9, 22, 2, 5, 16, 1]$} & $4$ & $8$&$4$ &
    $1$ \\
    &\small{$\nu_{6}=[ 1, 3, 8, 10, 3, 7, 20, 18, 7, 14, 11, 23, 14, 1, 16, 5, 1]$}
    & $4$ & $12$&$2$ & $1.6$\\
    &\small{$\nu_{7} = [ 1, 14, 7, 20, 23, 14, 7, 3, 1, 16, 10, 3, 1]$} 
    & $2$ & $4$&$8$ & $1.7$\\
    &\small{$\nu_{8}=[ 1, 14, 7, 20, 23, 14, 7, 3, 1, 14, 7, 3, 1, 16, 10, 3, 1,
    14, 7, 3, 1]$} & $2$ & $4$&$ 16$ & $1.8$\\
    & \small{$\nu_{9} = [ 1, 16, 22, 6, 10, 16, 5, 1, 3, 7, 14, 1, 16, 5, 11, 19, 2, 5, 1]$}
    & $3$ & $6$&$ 12$ & $1.75$ \\
    $\mathcal{I}$ &\small{$\nu_1 = [1, 3, 6, 11, 48, 15, 25, 26, 33, 47, 7, 12, 52, 59,
    54, 50, 1]$} & $2$ & $6$&$10$ & $1$ \\
    &\small{$\nu_2=[1, 3, 59, 54, 51, 36, 35, 46, 10, 17, 57, 56, 60, 5, 4, 8, 14, 24, 38,
    34,$} & $3$ & $9$&$ 15$ & $1$ \\
 & \small{\phantom{00000}$48, 28, 11, 19, 1]$} &  && &  \\
    &\small{$\nu_3=[1, 3, 7, 12, 21, 39, 30, 44, 2, 4, 8, 20, 31, 45, 19, 1]$} & $5$ &
    $5$&$10$ & $1$ \\
    &\small{$\nu_4=[1, 3, 59, 7, 3, 6, 47, 15, 6, 11, 48, 28, 11, 19, 45, 43, 19, 1, 50,
    54, 1]$} & $5$ & $15$&$5$ & $1$ \\
    \bottomrule
  \end{tabular}
  \end{center}
  \caption{Sequences of vertexes of $\mathcal{Q}_{\mathcal{R}}$, defining the
  free-homotopy classes. The enumeration of the vertexes for
  $\mathcal{Q}_{\mathcal{O}}$ is referred to the one in Figure~\ref{fig:encoding}. 
  Images with the enumeration of the vertexes of the other two Archimedean polyhedra can
  be found at \cite{MFwebGamma}. 
  Since for $\mathcal{R} = \mathcal{T}$ the distinction
  between the two kind of sides is not relevant, only one value is reported.}
  \label{tab:sequences}
\end{table}
\begin{figure}[!ht]
   \centerline{
      \includegraphics[width=4.5cm, height=4.5cm]{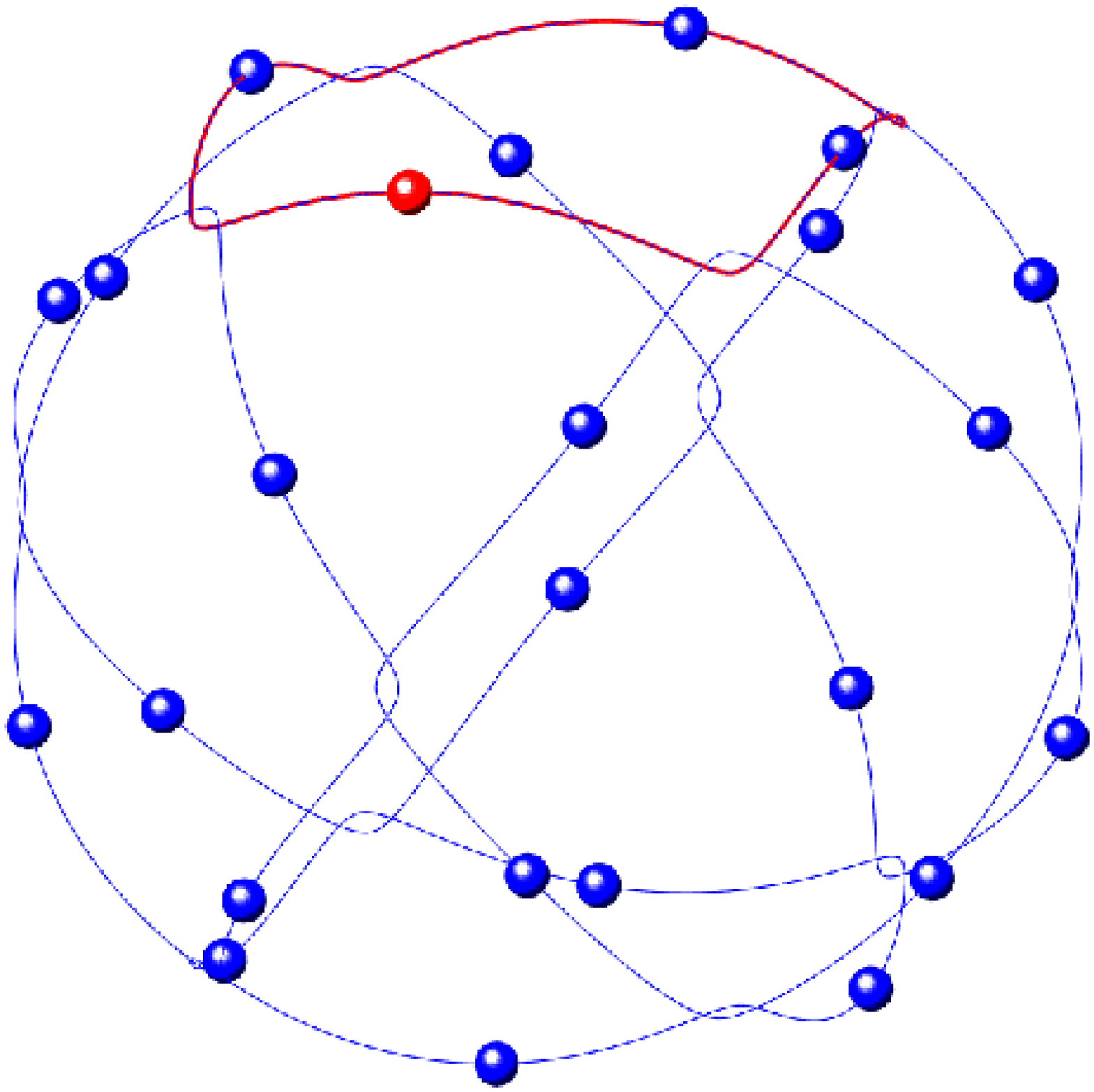}
      \includegraphics[width=4.5cm, height=4.5cm]{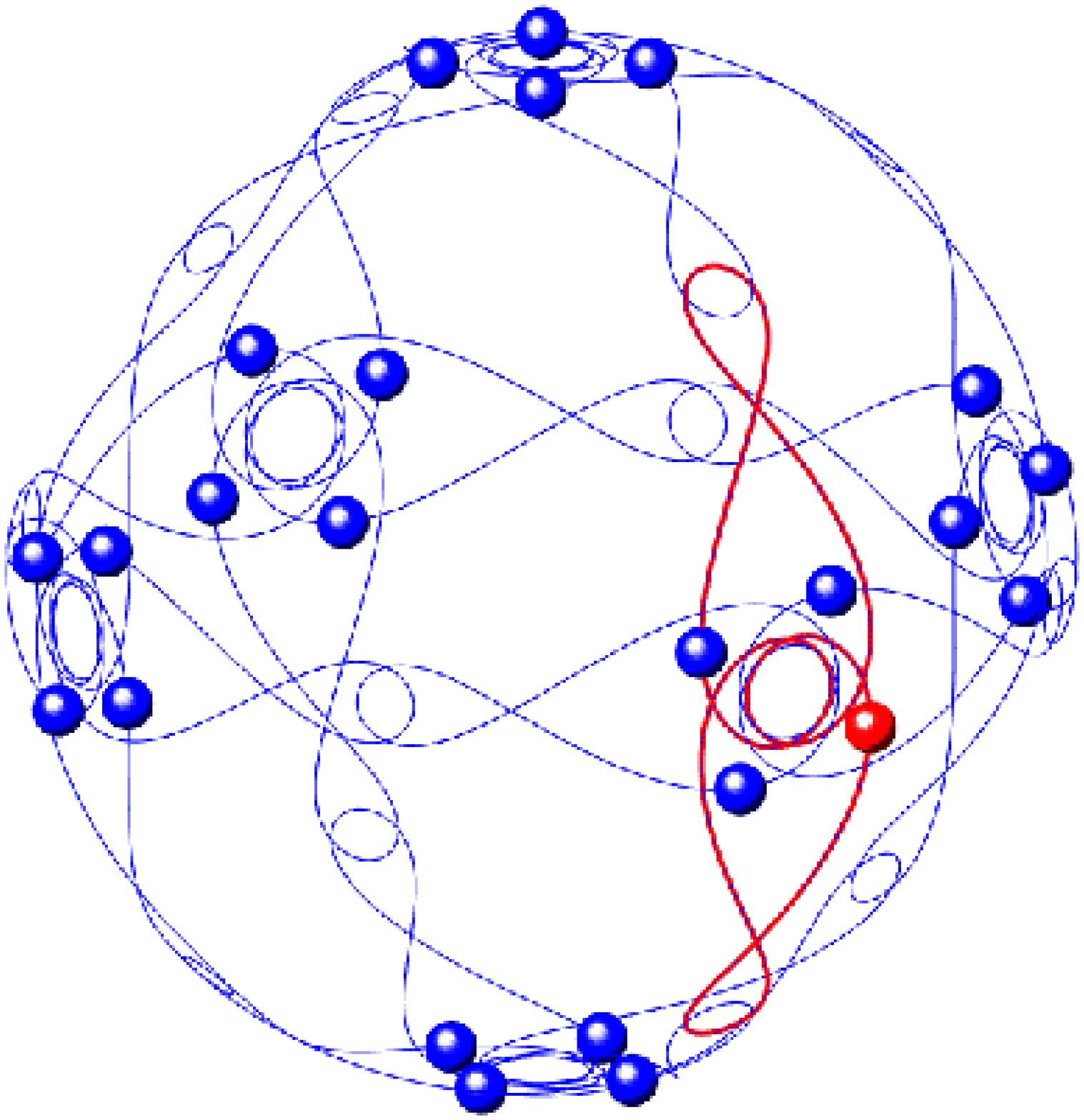}
      \includegraphics[width=4.5cm, height=4.5cm]{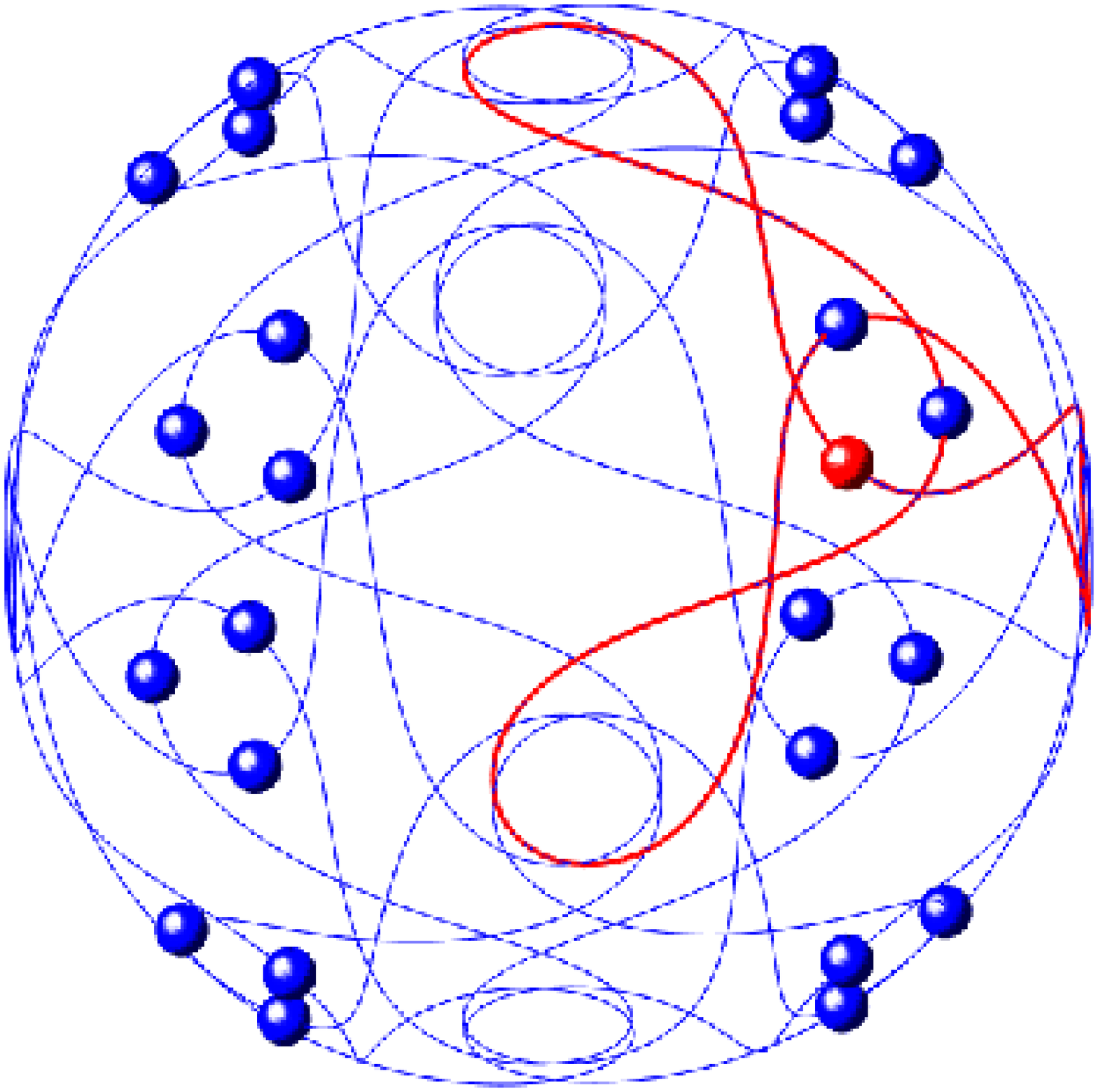}
   }
   \vskip 0.2cm 
   \centerline{
      \includegraphics[width=4.5cm, height=4.5cm]{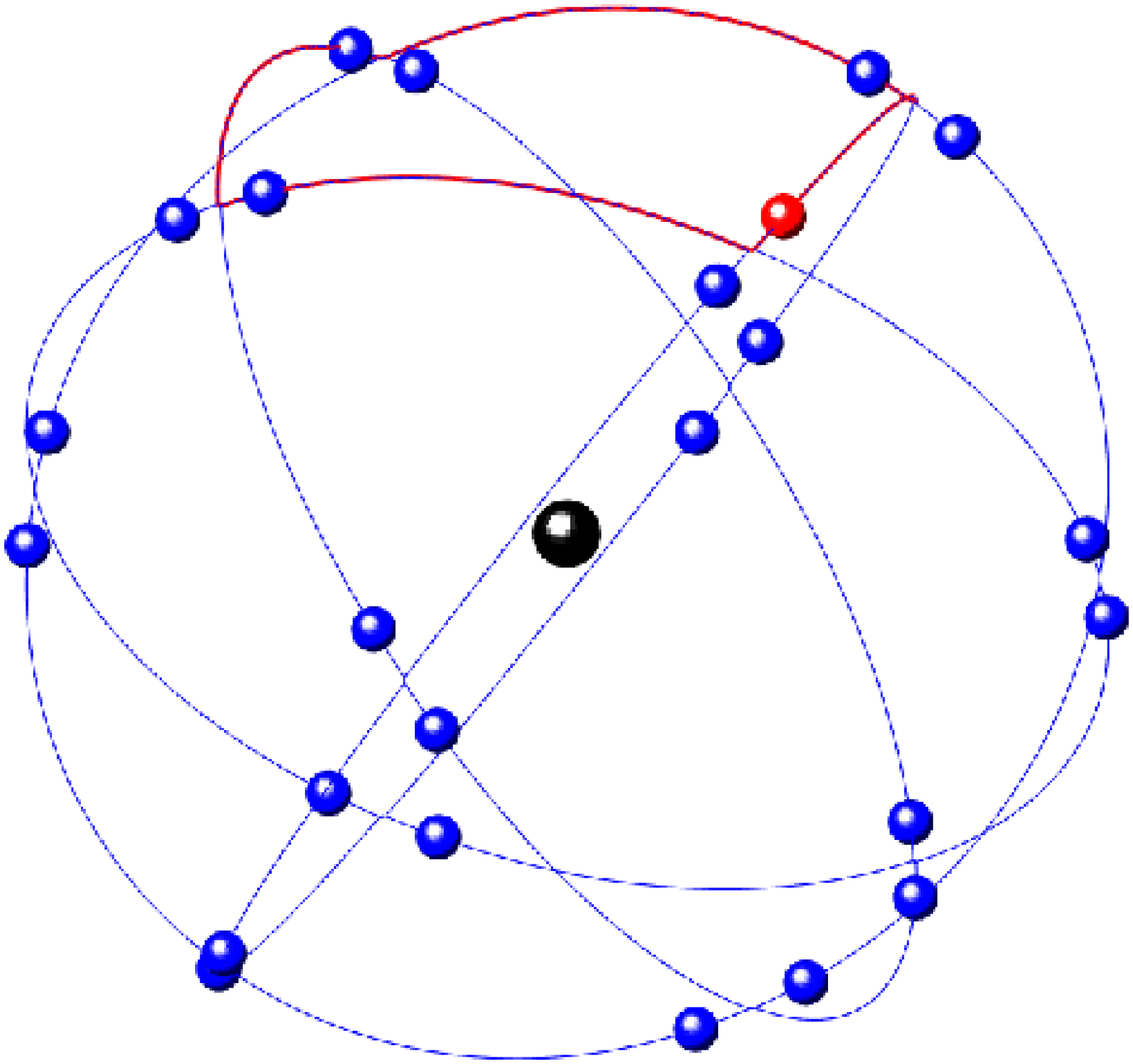}
      \includegraphics[width=4.5cm, height=4.5cm]{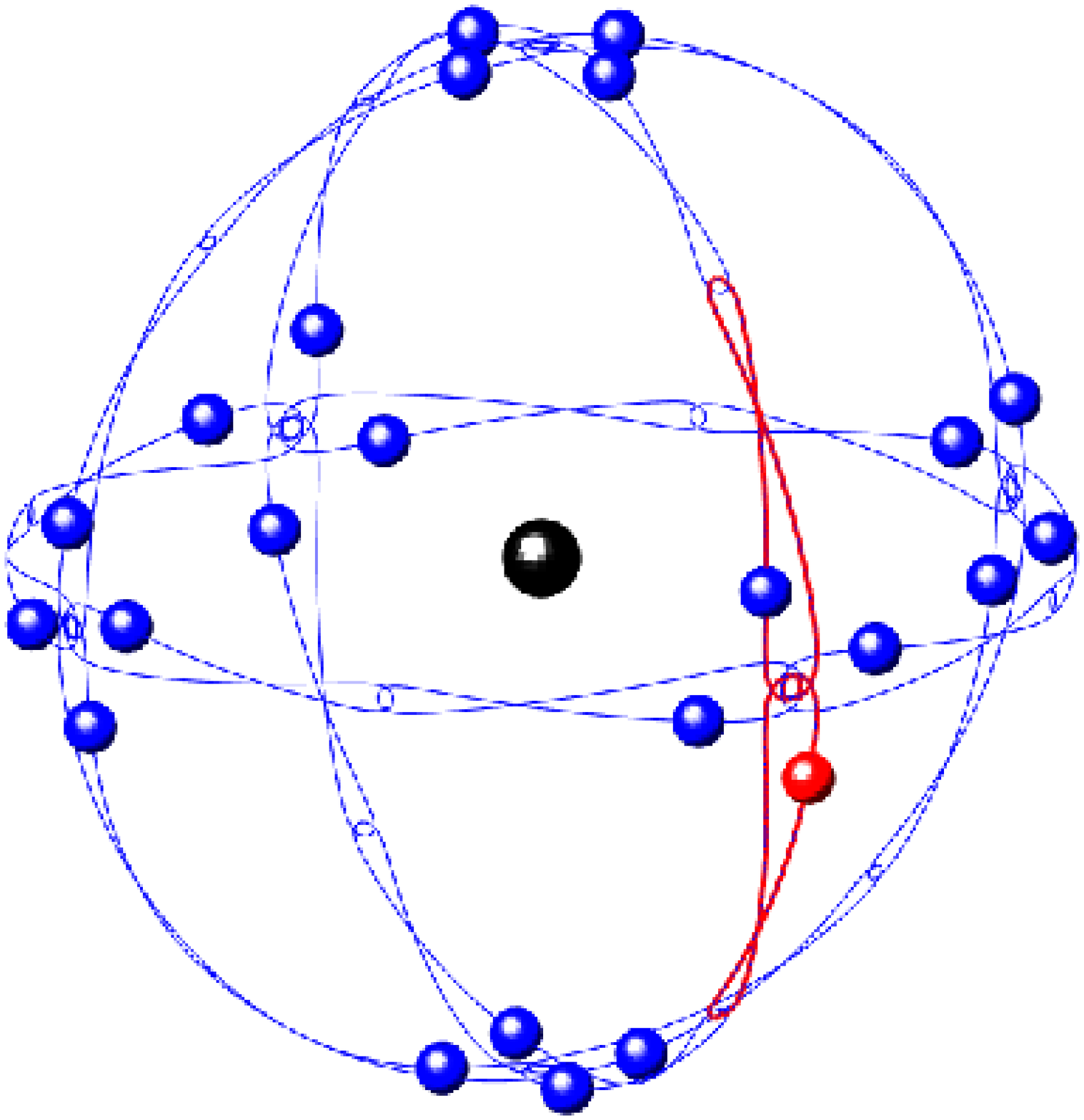} 
      \includegraphics[width=4.5cm, height=4.5cm]{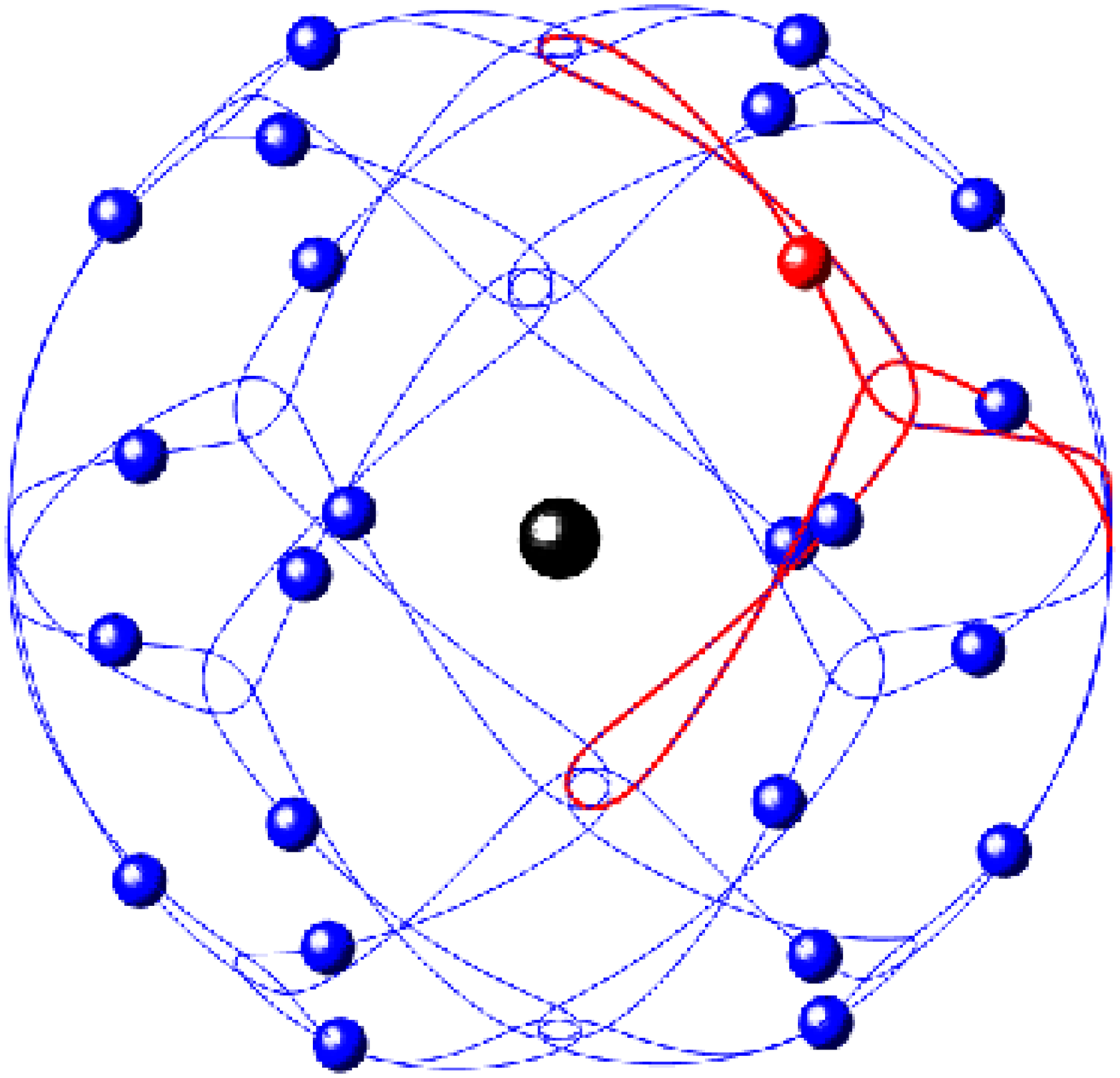} 
   }
   \caption{Some periodic orbits with the symmetry of the Cube. The
     topological constraints are given, from the left to the right, by
     the sequences $\nu_5, \nu_8, \nu_9$ of Table
     \ref{tab:sequences}. Orbits in the same column belong to the same
     free-homotopy class.  The mass of the central body for the figures on the top
     is $m_0=0$, while $m_0=100$ for the figures on the bottom. As the
     value of $m_0$ increases, the minimizer approaches an
     orbit composed by circular arcs, joined at some points on the
     rotation axes, where partial collisions occur.}
   \label{fig:exampleOrbits}
\end{figure}
\begin{theorem}
   For each sequence $\nu$ and the corresponding value of
   the exponent $\alpha$ listed in Table \ref{tab:sequences}, there
   exists a sequence $\{ v^*_{I,\varepsilon} \}_{\varepsilon>0}$ of
   collision-free minimizers of $\A^\alpha_\varepsilon$ in the cone $\K=\K(\nu)$, such that
   each $v^*_{I,\varepsilon}$ is a classical $T$-periodic solution of the $(1+N)$-body problem 
   with $\alpha$-homogeneous potential.
   \label{th:minExist}
\end{theorem}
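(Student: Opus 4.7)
My plan is to assemble, for each sequence $\nu$ in Table~\ref{tab:sequences}, the three ingredients already established in Sections~\ref{ss:platoTotColl} and \ref{ss:platoPartialColl}: coercivity, exclusion of total collisions, and exclusion of partial collisions. First, I would apply Proposition~\ref{prop:PlatoCoercivity} to the cone $\K=\K(\nu)$, after observing that none of the sequences $\nu$ in the table represents a loop winding around a single rotation axis; this is immediate from the fact that each $\nu$ visits vertexes of $\mathcal{Q}_{\mathcal{R}}$ attached to several distinct poles. By coercivity plus the direct method of the Calculus of Variations, for every $\varepsilon>0$ (equivalently every $m_0\geq 0$) there exists a minimizer $v^*_{I,\varepsilon}\in\overline{\K}^{H^1}$ of $\A^\alpha_\varepsilon$, a priori possibly with collisions.

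Next, to exclude total collisions I would exploit inequality~\eqref{actionless}. Proposition~\ref{prop:platoLE} gives a lower bound on the action of any total-collision trajectory whose leading term in $m_0$ is of order $m_0^{2/(2+\alpha)}$, with explicit coefficient depending on $\tilde U_{\alpha,0}$ and on the number $M$ of total collisions per period. Proposition~\ref{prop:actionTest} (for $\alpha=1$) and its estimate~\eqref{eq:testLoopActionEst} (for $\alpha\in(1,2)$) supply an explicit upper bound on the action of the rescaled test loop $\bar\lambda v_I^\nu\in\K$ of the same order $m_0^{2/(2+\alpha)}$. Since both sides share this asymptotic order, it suffices to compare the leading coefficients as $m_0\to\infty$; for each row of Table~\ref{tab:sequences} this is a numerical verification using the tabulated values of $\delta_i$ and $\zeta_{1,i}$ in Table~\ref{tab:zetaAlphaEst} together with the combinatorial data $k_1, k_2, k_\nu, M$ extracted from $\nu$. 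The sequences in the table have been selected precisely so that this comparison yields \eqref{actionless}, producing a threshold $\bar m_0(\nu)$ beyond which minimizers cannot have total collisions.

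Third, I would apply Theorem~\ref{th:noPartialCollisions}. This requires two combinatorial checks on each $\nu$: that the associated cone is $\alpha$-simple in the sense of Definition~\ref{def:simpCone}, and that it is not tied to two coboundary axes in the sense of Definition~\ref{def:2ax}. Both are purely combinatorial properties of the tessellation sequence $\mathfrak{t}$ induced by $\nu$ and can be verified directly from the lists in the table: $\alpha$-simplicity is a uniform bound on the number of consecutive triangles of $\mathfrak{t}$ sharing a common pole, while the second condition fails only if $\mathfrak{t}$ essentially oscillates between two poles of a single spherical triangle, a configuration manifestly avoided by the listed sequences.

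Combining the three steps, for $m_0$ large enough (equivalently $\varepsilon$ small enough) the minimizer $v^*_{I,\varepsilon}$ is free of both total and partial collisions; it then satisfies the Euler–Lagrange equation~\eqref{eq:eqMotionAlpha} in the classical sense, and the $\mathcal{R}$-equivariance $u_R=Rv^*_{I,\varepsilon}$ extends it to a smooth $T$-periodic solution of the $(1+N)$-body problem. The main obstacle is the verification of~\eqref{actionless} for each specific $\nu$: neither the bound coming from Proposition~\ref{prop:platoLE} nor the one from Proposition~\ref{prop:actionTest} admits a closed-form simplification, so the inequality has to be checked row by row, and the delicate point is that for large $\alpha$ the constant $\tilde U_{\alpha,0}$ grows sharply, forcing an appropriate choice of $M$, $k_1$, $k_2$ to keep the test-loop action below the collision estimate.
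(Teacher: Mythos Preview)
Your overall structure coincides with the paper's: coercivity via Proposition~\ref{prop:PlatoCoercivity}, exclusion of total collisions by the level estimate versus the test loop, and exclusion of partial collisions via Theorem~\ref{th:noPartialCollisions}. The combinatorial checks ($\alpha$-simplicity, not tied to two coboundary axes, not winding around a single axis) are exactly what the paper verifies.

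There is, however, a genuine gap in the total-collision step. The theorem asserts the existence of collision-free minimizers for \emph{every} $\varepsilon>0$ (equivalently every $m_0\ge 0$), not only for $m_0$ beyond some threshold $\bar m_0(\nu)$. Comparing only the leading coefficients as $m_0\to\infty$ yields \eqref{actionless} for large $m_0$ but says nothing about small or vanishing central mass. The paper closes this gap by noting that, after raising both sides of \eqref{actionless} to the power $(2+\alpha)/2$ and clearing common factors, each side becomes an \emph{affine} function of $m_0$: one obtains a condition of the form
\[
A_0 + A_1 m_0 \;<\; B_0 + B_1 m_0,\qquad m_0\ge 0,
\]
which holds for all $m_0\ge 0$ if and only if $A_0<B_0$ \emph{and} $A_1<B_1$. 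Concretely these are the two inequalities in \eqref{eq:sufCondGammaEstTrue} for $\alpha=1$ and in \eqref{eq:sufCondAlphaTrue} for $\alpha\in(1,2)$; Tables~\ref{tab:totCollEst} and~\ref{tab:totCollEstAlpha} record the four numbers needed for each $\nu$ and show that both inequalities are satisfied. Your asymptotic comparison only checks $A_1<B_1$; you also need $A_0<B_0$ (the $m_0=0$ case), and then the conclusion indeed holds for every $\varepsilon>0$, as stated.
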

\begin{proof}
  Sequences of Table~\ref{tab:sequences} do not wind
  around one axis only, hence the action functional $\A^\alpha_\varepsilon$ is coercive in the cone $\K(\nu)$, independently from the value of $m_0$, and
  minimizers therefore exist.

  We use the estimates of Section
  \ref{ss:platoTotColl} to exclude that minimizers have total collisions.  Let us distinguish two cases: $\alpha=1$ and
  $\alpha \in (1,2)$. If $\alpha=1$ the action
  \eqref{eq:testLoopAction} of the test loop $v_I^\nu$, associated to the
  sequence $\nu$, can be computed through elementary
  functions.  Therefore, we use the
  estimates \eqref{eq:totCollEstAlpha} and \eqref{eq:testLoopAction},
  and check that inequality \eqref{actionless} holds.
   After some computations we find that the above inequality is
   satisfied if and only if
   \begin{equation}
      k_1\zeta_{1,1} + k_2\zeta_{1,2}+ m_0(k_1+k_2)\zeta_{1,0} <
      \frac{2\pi M}{\ell} (\tilde{U}_{1,0}+2m_0).
      \label{eq:sufCondGammaEst}
   \end{equation}
   Note that inequality \eqref{eq:sufCondGammaEst} is verified for
   every value of $m_0\geq0$ if and only if
   \begin{equation}
   \begin{cases}
      \displaystyle k_1\zeta_{1,1}+k_2\zeta_{1,2} < \frac{2\pi M}{\ell}\tilde{U}_{1,0},
      \\[2ex]
      \displaystyle (k_1+k_2)\zeta_{1,0} < \frac{4\pi M}{\ell}.
   \end{cases}
   \label{eq:sufCondGammaEstTrue}
   \end{equation}
   Values of the members of the above inequalities for the sequences of Table~\ref{tab:sequences}
   are reported in Table~\ref{tab:totCollEst}.
   In the case $\alpha \in (1,2)$, the action of the test loop $v_I^\nu$ has to be estimated. For this purpose 
   we use \eqref{eq:testLoopActionEst} to estimate the left hand side of
   \eqref{actionless}. A sufficient condition to exclude total collisions is
   therefore 
   \begin{equation}
      k_1\frac{\zeta_{1,1}}{\delta_1} + k_2\frac{\zeta_{1,2}}{\delta_2} +
      \frac{8(k_1+k_2)}{4-\ell^2} m_0 <
      C
      \big(\tilde{U}_{\alpha,0} + 2m_0 \big),
      \label{eq:sufCondAlpha}
   \end{equation}
   where
   \begin{equation}
      C = 
      \frac{2k_\nu}{(2-\alpha)^{\frac{2+\alpha}{2}}}
      \bigg(\frac{\pi M}{\alpha^{1/2}\ell k_\nu}\bigg)^\alpha.
      \label{eq:C1C2forEst}
   \end{equation}
   Condition \eqref{eq:sufCondAlpha} is verified for every value of $m_0\geq0$
   if and only if
   \begin{equation}
   \begin{cases}
      \displaystyle 
      k_1\frac{\zeta_{1,1}}{\delta_1} + k_2\frac{\zeta_{1,2}}{\delta_2} 
       <
       C
       \tilde{U}_{\alpha,0},
      \\[2ex]
      \displaystyle 
      \frac{8(k_1+k_2)}{4-\ell^2}  <
      C.
   \end{cases}
   \label{eq:sufCondAlphaTrue}
   \end{equation}
   Values of the members of the above inequalities for the sequences of Table~\ref{tab:sequences}
   are reported in Table~\ref{tab:totCollEstAlpha}.
   In all the cases, inequality \eqref{actionless} holds true, hence the minimizers are free
   of total collisions for all the values of $m_0 \geq 0$.

\begin{table}
   \centering
   \begin{tabular}{cccccc}
      \toprule
      $\mathcal{R}$ & $\nu$ & $k_1\zeta_{1,1}+k_2\zeta_{1,2}$ & $\displaystyle\frac{2\pi M
         \tilde{U}_{1,0}}{\ell}$ &
         $(k_1+k_2)\zeta_{1,0}$ & $\displaystyle\frac{4\pi M}{\ell}$ \\
      \midrule
      $\mathcal{T}$& $\nu_1$ &  $76.6704$ &  $80.0636$ &  $17.5776$ &  $25.1327$ \\
                   & $\nu_2$ &  $76.6704$ &  $80.0636$ &  $17.5776$ &  $25.1327$ \\
                   & $\nu_3$ & $115.0056$ &  $120.0954$ &  $26.3664$ &  $37.6991$ \\
      $\mathcal{O}$ & $\nu_1$ &  $199.7300$ &  $253.2198$ &  $20.9230$ &  $35.1556$ \\
                    & $\nu_2$ &  $239.2100$ &  $253.2198$ &  $25.1076$ &  $35.1556$ \\
                    & $\nu_3$ &  $240.3750$ &  $379.8298$ &  $25.1076$ &  $53.7334$ \\
                    & $\nu_4$ &  $160.2500$ & $506.4397$ &  $16.7384$ &  $70.3112$ \\
                    & $\nu_5$ &  $241.5400$ & $506.4397$ &  $25.1076$ &  $70.3112$ \\
      $\mathcal{I}$ & $\nu_1$ &  $ 849.7033$ & $1151.4$ &  $32.5513$ &  $56.1123$ \\
                    & $\nu_2$ &  $1274.5550$ & $1727.1$ &  $48.8270$ &  $84.1685$ \\
                    & $\nu_3$ &  $ 795.7130$ & $2878.5$ &  $30.5169$ & $140.2809$ \\
                    & $\nu_4$ &  $1072.7354$ & $2878.5$ &  $40.6892$ & $140.2809$ \\
      \bottomrule
   \end{tabular}
   \caption{Values of the terms in the inequalities
      \eqref{eq:sufCondGammaEstTrue} corresponding to the sequences with $\alpha=1$ in
      Table~\ref{tab:sequences}.}
   \label{tab:totCollEst}
\end{table}

\begin{table}
   \centering
   \begin{tabular}{cccccc}
      \toprule
      $\mathcal{R}$ & $\nu$ & $\displaystyle k_1\frac{\zeta_{1,1}}{\delta_1} +
      k_2\frac{\zeta_{1,2}}{\delta_2}$ & $C\tilde{U}_{\alpha,0}$ &
      $\displaystyle \frac{8(k_1+k_2)}{4-\ell^2}$ & $C$ \\
      \midrule
      $\mathcal{T}$& $\nu_4$ & $321.7840$  &  $410.0057$ &    $32.0000$   &   $94.0390$ \\
                   & $\nu_5$ & $375.4147$  &  $558.0238$ &    $37.3333$  &  $138.7856$  \\
                   & $\nu_6$ & $482.6760$  &  $507.4591$ &    $48.0000$  &  $126.8925$  \\
%
      $\mathcal{O}$ & $\nu_6$ &  $760.4405$  &  $1588.5795$ &   $32.1016$  &  $143.9767$ \\
                    & $\nu_7$ &  $539.8787$  &   $882.2706$ &   $27.5157$  &   $83.5095$ \\
                    & $\nu_8$ &  $852.3135$  &  $1155.3966$ &   $45.8595$  &  $114.1789$ \\
                    & $\nu_9$ &  $809.8181$  &  $1779.5666$ &   $41.2735$  &  $172.1174$ \\
      \bottomrule
   \end{tabular}
   \caption{Values of the terms in the inequalities
      \eqref{eq:sufCondAlphaTrue} corresponding to the sequences with $\alpha \in (1,2)$  in
      Table~\ref{tab:sequences}.}
   \label{tab:totCollEstAlpha}
\end{table}

   Partial collisions are excluded by the results of Section
   \ref{ss:platoPartialColl}. Indeed, the 
   sequences in Table \ref{tab:sequences} correspond to $\alpha$-simple cones, for the listed values of
   $\alpha$. Moreover, they are not tied to two coboundary axes.
   Hence, for each sequence $\nu$ of Table \ref{tab:sequences}, there exists
   a sequence $\{ v_{I,\varepsilon}^* \}_{\varepsilon>0}$ of
   collision-free minimizers of $\A^\alpha_\varepsilon$, corresponding
   to a classical $T$-periodic solution of the $(1+N)$-body problem.
\end{proof}

\begin{theorem}
   For each sequence $\nu$ with $\alpha=1$ listed in Table \ref{tab:sequences}, 
   the sequence of minimizers $\{ v^*_{I,\varepsilon} \}_{\varepsilon>0}$ converges as
   $\varepsilon\to0$ to a minimizer $v^*_I$ of the Kepler problem \eqref{eq:gammaLim}, 
   which is composed by circular arcs centered at the origin and passes through the minimum number of rotation axis compatible with the cone $\K$.
   \label{th:minExist2}
\end{theorem}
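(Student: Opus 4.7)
The plan is to combine the $\Gamma$-convergence framework of Theorems~\ref{th:gammaConv} and \ref{th:convMin} with the structural description of the minimizers of the Kepler $\Gamma$-limit provided by Proposition~\ref{prop:gammalimitKeplerian}. First, by the equi-coercivity of $\{\A^1_\varepsilon\}_{\varepsilon>0}$ (Theorem~\ref{th:gammaConv}(iv)) and the convergence of minimizers (Theorem~\ref{th:convMin}(iii)), I would extract from $\{v^*_{I,\varepsilon}\}_{\varepsilon>0}$ an $L^2$-convergent subsequence whose limit $v^*_I\in\overline{\K}^{L^2}$ is a minimizer of $\overline{\widetilde{\A}_0^1}$.

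Next, to rule out total collisions in $v^*_I$, I would compare the minimum value of $\overline{\widetilde{\A}_0^1}$ with the action of the test loop $\bar{\lambda}v_I^\nu$ from Proposition~\ref{prop:actionTest}. By Theorem~\ref{th:convMin}(ii) and the monotonicity of $\A^1_\varepsilon(v)$ in $\varepsilon$,
\[
\min \overline{\widetilde{\A}_0^1} \;\leq\; \lim_{\varepsilon\to 0}\A^1_\varepsilon(\bar{\lambda}v_I^\nu) \;=\; \A^1_0(\bar{\lambda}v_I^\nu),
\]
while the level estimate \eqref{eq:totCollEstAlpha} at $m_0=0$ bounds from below the action of any total collision loop in $\overline{\K}$. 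The first inequality in \eqref{eq:sufCondGammaEstTrue}, verified for every row of Table~\ref{tab:totCollEst} corresponding to $\alpha=1$, is precisely the statement that $\A^1_0(\bar{\lambda}v_I^\nu)$ is strictly less than this lower bound, hence $v^*_I$ cannot have total collisions.

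Then Proposition~\ref{prop:gammalimitKeplerian} applies and forces $v^*_I$ to be a concatenation of circular arcs centered at the origin, joined at passages through rotation semi-axes, all swept with uniform motion. By \eqref{eq:minAction}, the action of such a configuration equals $\frac{3}{2^{1/3}}(\Delta\theta)^{2/3}T^{1/3}$ and depends only on the total swept angle $\Delta\theta$. The minimality of $v^*_I$ therefore amounts to minimizing $\Delta\theta$ over all admissible circular-arc loops in $\overline{\K}^{L^2}$; since each elementary angle $\theta_i$ between two consecutive semi-axes lying on a common great circle through the origin is bounded below by a positive constant depending only on $\mathcal{R}$, this is equivalent to passing through the smallest possible number of rotation axes compatible with the free-homotopy class encoded by $\nu$.

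The main obstacle I anticipate is the last, combinatorial step: characterizing the patterns of rotation semi-axes (together with the supporting great circle) that can be realized as $L^2$-limits of loops of the cone $\K(\nu)$, and checking, case by case for the four Platonic entries of Table~\ref{tab:sequences} with $\alpha=1$, that the assumption $(\Delta\theta)_{\min}/M<2\pi$ of Remark~\ref{rmk:deltaThetaMin} is satisfied, so that the $\Gamma$-limit minimizer is of pure circular-arc type, and that the discrete minimum of $\Delta\theta$ in the homotopy class of $v_I^\nu$ is indeed attained by the configuration with the fewest axes per period.
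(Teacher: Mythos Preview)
Your overall strategy matches the paper's: invoke Theorems~\ref{th:gammaConv} and \ref{th:convMin} to pass to the $\Gamma$-limit, observe that none of the cones $\K(\nu)$ is central so that Proposition~\ref{prop:gammalimitKeplerian} applies, and then verify case by case that $(\Delta\theta)_{\min}/M<2\pi$ so that Remark~\ref{rmk:deltaThetaMin} forces the minimizer to be the circular-arc loop realizing $(\Delta\theta)_{\min}$.

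There is however a genuine error in your total-collision step. You write that ``the level estimate \eqref{eq:totCollEstAlpha} at $m_0=0$'' together with ``the first inequality in \eqref{eq:sufCondGammaEstTrue}'' excludes total collisions for the $\Gamma$-limit. This conflates two different limits: the $\Gamma$-limit functional $\A^1_0$ is the pure Kepler functional obtained as $m_0\to\infty$ (equivalently $\varepsilon\to 0$) after rescaling, \emph{not} the $m_0=0$ functional. The first inequality in \eqref{eq:sufCondGammaEstTrue} concerns the $m_0$-independent terms and is precisely the condition excluding total collisions for the $N$-body problem \emph{without} central body; it says nothing about the Kepler limit. If you insist on comparing the polygonal test loop with a Sundman/Gordon bound for the Kepler functional $\A^1_0$, the correct computation leads to $(k_1+k_2)\zeta_{1,0}<4\pi M/\ell$, which is the \emph{second} inequality of \eqref{eq:sufCondGammaEstTrue}.

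The paper avoids this detour entirely: once the cone is not central, Proposition~\ref{prop:gammalimitKeplerian} reduces the competition to total-collision loops versus circular-arc loops, and the condition $(\Delta\theta)_{\min}/M<2\pi$ alone suffices, because the circular-arc loop realizing $(\Delta\theta)_{\min}$ then has Kepler action strictly below Gordon's lower bound $\frac{3}{2}(2\pi M)^{2/3}T^{1/3}$ for any loop in $\overline{\K}$ with at least one pass through the origin (there being at least $M$ such passes by the symmetry \eqref{eq:extrasym}). So the separate total-collision argument you propose is unnecessary; the single combinatorial check you correctly flag as the remaining obstacle already does the whole job.
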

\begin{proof}
   The convergence of the sequence of minimizers $\{ v^*_{I,\varepsilon}
   \}_{\varepsilon>0}$ to a minimizer of the $\Gamma$-limit $\A^1_0$ 
   is guaranteed by Theorem~\ref{th:convMin}.
   To understand the shape of the minimizers of $\A^1_0$, we note that none the cones $\K$ 
   identified by the sequences of Table~\ref{tab:sequences}
   is central, hence for $\alpha=1$ we are in the hypotheses of
   Proposition~\ref{prop:gammalimitKeplerian}. 
   Moreover, each cone is such that
   $(\Delta\theta)_{\min}/M < 2\pi$, hence by Remark~\ref{rmk:deltaThetaMin} the minimizer
   of the $\Gamma$-limit functional can be composed only by circular arcs, joined at the
   rotation semi-axes that realize $(\Delta\theta)_{\min}$. 
\end{proof}

\begin{remark}[Minimizers of the $\Gamma$-limit for $\alpha\in(1,2)$]
   For the remaining sequences $\nu$ of Table~\ref{tab:sequences} with $\alpha\in(1,2)$,
   we can still prove that the sequence of minimizers $\{ v^*_{I,\varepsilon}
   \}_{\varepsilon>0}$ converges to a minimizer $v_I^*$ of $\A^\alpha_0$, that passes
   through some rotation semi-axes.
   However, by the remarks reported in the previous section, we cannot make a
   conclusion on the shape of $v_I^*$. 

   On the other hand, in the numerical computations that we performed we 
   did not find
   any example in which the sequence of minimizers converges to a loop that is not composed by
   circular arcs. This provides numerical evidence for Conjecture~\ref{th:minColl}.
\end{remark}


\appendix
\section{Level estimates for total collisions}
\label{app:LE}

\begin{proposition}
Let $\alpha\in[1,2)$ and let $u:[0,\textsc{T}] \to \R^{3N}$ be a
  motion of $N$ masses $m_1,..., m_N$ 
such that $u(0) = u(T) = 0$.
  Then for
  the action
\[ 
\A^\alpha(u) = \int_0^\textsc{T}\Bigl(\frac{1}{2}\sum_{h=1}^N m_h|\dot{u}_h|^2 +
\sum_{1\leq h<k\leq N}\frac{m_h\,m_k}{|u_h-u_k|^\alpha} \Bigr)dt,
\]
we have the estimate
\begin{equation}
\A^\alpha(u) \geq 
\mathcal{M} \bar{c}_{U_{\alpha,0}}^\alpha(\textsc{T}),
\quad \mathcal{M}=\sum_{h=1}^N m_h,
\label{eq:propLL}
\end{equation}
where
\[
U_{\alpha,0} = \min_{\rho(u)=1} \mathcal{U}_\alpha(u), \quad
\mathcal{U}_\alpha(u) = \frac{1}{\mathcal{M}}\sum_{\stackrel{h,k=1}{h\neq k}}^N
\frac{m_h m_k}{|u_h-u_k|^\alpha}, \quad
\rho(u)=\biggl(\sum_{h=1}^N
\frac{m_h}{\mathcal{M}}|u_h|^2\biggr)^{1/2}, 
\]
and
\begin{equation}
  \bar{c}_{U_{\alpha,0}}^\alpha(\textsc{T}) =
\textsc{T}^{\frac{2-\alpha}{2+\alpha}}\frac{2+\alpha}{2-\alpha}
{(2\alpha^2)}^{-\frac{\alpha}{2+\alpha}}\left(\frac{U_{\alpha,0}}{2}\right)^{\frac{2}{2+\alpha}}\left(\int_0^{2\pi}
|\sin t|^{\frac{2}{\alpha}}\right)^\frac{2\alpha}{2+\alpha} .
\label{eq:cbar}
\end{equation}
\label{prop:lower_bound_gen}
\end{proposition}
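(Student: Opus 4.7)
The strategy is to reduce the multi-body action to a one-dimensional variational problem in the mean quadratic radius
\[
\rho(u) = \Bigl(\sum_{h=1}^N \tfrac{m_h}{\mathcal{M}}|u_h|^2\Bigr)^{1/2},
\]
and then to minimize that reduced problem explicitly over paths $\rho:[0,T]\to[0,\infty)$ with $\rho(0)=\rho(T)=0$.

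First, I would obtain the one-dimensional bound. Applying the Cauchy--Schwarz inequality to $\dot I = 2\sum_h m_h u_h\cdot\dot u_h$, with $I = \mathcal{M}\rho^2$, yields the Sundman-type estimate
\[
\frac{1}{2}\sum_{h=1}^N m_h|\dot u_h|^2 \;\geq\; \frac{\mathcal{M}\dot\rho^2}{2}.
\]
Since $\mathcal{U}_\alpha$ is $(-\alpha)$-homogeneous and $\rho$ is $1$-homogeneous in $u$, one has $\mathcal{U}_\alpha(u) = \rho(u)^{-\alpha}\mathcal{U}_\alpha(u/\rho(u))\geq U_{\alpha,0}\rho^{-\alpha}$, so that the interaction term satisfies $\sum_{h<k}m_hm_k/|u_h-u_k|^\alpha = \mathcal{M}\mathcal{U}_\alpha/2\geq \mathcal{M}U_{\alpha,0}/(2\rho^\alpha)$. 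Since $u(0)=u(T)=0$ forces $\rho(0)=\rho(T)=0$, we obtain
\[
\A^\alpha(u) \;\geq\; \mathcal{M}\int_0^T\!\Bigl(\frac{\dot\rho^2}{2}+\frac{k}{\rho^\alpha}\Bigr)\,dt, \qquad k:=\frac{U_{\alpha,0}}{2}.
\]

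Second, I would compute the infimum of $F(\rho):=\int_0^T(\dot\rho^2/2 + k\rho^{-\alpha})dt$ among admissible $\rho$ in closed form. An extremal of $F$ solves $\ddot\rho = -\alpha k\rho^{-\alpha-1}$, is symmetric with respect to $T/2$, and attains a maximum $\rho_0$ at $t=T/2$; conservation of the reduced energy gives $\dot\rho^2/2-k\rho^{-\alpha}=-k\rho_0^{-\alpha}$, hence $\dot\rho=\sqrt{2k(\rho^{-\alpha}-\rho_0^{-\alpha})}$ on $(0,T/2)$. The substitution $\rho=\rho_0\sin^{2/\alpha}\theta$, $\theta\in[0,\pi/2]$, converts the period relation into
\[
T \;=\; \frac{2\sqrt{2}}{\alpha\sqrt{k}}\,\rho_0^{(2+\alpha)/2}\int_0^{\pi/2}\sin^{2/\alpha}\theta\,d\theta,
\]
which determines $\rho_0$ in terms of $T$ and $k$. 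Using the identity $F = 2\int_0^T k/\rho^\alpha\,dt + TE$ produced by energy conservation, the same substitution expresses $F$ as an explicit monomial in $T$, $k$ and $\int_0^{\pi/2}\sin^{2/\alpha}\theta\,d\theta$. Since $\int_0^{\pi/2}\sin^{2/\alpha}\theta\,d\theta = \tfrac14\int_0^{2\pi}|\sin t|^{2/\alpha}dt$, collecting powers produces precisely $F = \bar c^\alpha_{U_{\alpha,0}}(T)$ as in~\eqref{eq:cbar}, and multiplication by $\mathcal{M}$ yields~\eqref{eq:propLL}.

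The main obstacle is essentially the bookkeeping in the second step: correctly tracking the exponents of $T$, $k$, $\alpha$ and of the $\sin$-integral so that they combine to the factors $T^{(2-\alpha)/(2+\alpha)}$, $(U_{\alpha,0}/2)^{2/(2+\alpha)}$, $(2\alpha^2)^{-\alpha/(2+\alpha)}$ and $(\int|\sin t|^{2/\alpha})^{2\alpha/(2+\alpha)}$ displayed in~\eqref{eq:cbar}. A secondary point is the rigorous justification that the one-dimensional infimum is attained by the smooth symmetric solution: this follows from coercivity and weak lower semicontinuity of $F$ on $H^1$ paths joining the endpoints (a direct-method argument on the 1D problem), combined with the observation that the $\alpha$-homogeneity of $k/\rho^\alpha$ forces the minimizing $\rho$ to satisfy the Euler--Lagrange equation on $\{\rho>0\}$ and hence to coincide with the extremal parametrized above.
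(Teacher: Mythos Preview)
Your reduction step is exactly the one in the paper: Cauchy--Schwarz on the mass-weighted inner product gives $\sum m_h|\dot u_h|^2\geq \mathcal{M}\dot\rho^2$, and $\alpha$-homogeneity of $\mathcal{U}_\alpha$ gives the potential bound, yielding
\[
\A^\alpha(u)\;\geq\;\mathcal{M}\int_0^T\Bigl(\frac{\dot\rho^2}{2}+\frac{U_{\alpha,0}}{2\rho^\alpha}\Bigr)dt.
\]
The only difference is in the second step: the paper does not carry out the explicit computation of the one-dimensional infimum but simply quotes the formula
\[
\bar c^\alpha_a(T)=\inf_{\mathfrak{S}}\int_0^T\Bigl(\frac{|\dot u|^2}{2}+\frac{a}{|u|^\alpha}\Bigr)dt
\]
(over $H^1$ loops hitting the origin) from a cited reference. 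Your proposal to derive this value directly via the substitution $\rho=\rho_0\sin^{2/\alpha}\theta$ is correct and is in fact the computation underlying that reference; it has the advantage of being self-contained. The only point to handle carefully is the one you flag yourself: showing that the smooth symmetric ejection--collision profile actually realizes the infimum of the singular 1D functional (the direct method works, using Fatou for the potential and weak lower semicontinuity of the kinetic term, together with the observation that a minimizer with an interior zero would split into two pieces whose total action strictly exceeds that of the single-arch solution by scaling).
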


\begin{proof}
  The proof follows the same steps of \cite{fgn10}, where only the
  case $\alpha=1$ was considered.
Using the $\alpha$-homogeneity of the potential we obtain
\[
\A^\alpha(u) \geq
\frac{\mathcal{M}}{2}\int_0^\textsc{T}\Bigl(\dot\rho^2(u) +
\frac{1}{\rho^\alpha(u)} \mathcal{U}_\alpha
\Bigl(\frac{u}{\rho(u)}\Bigr)\Bigr)\,dt \geq
\mathcal{M}\int_0^{\textsc{T}}\Bigl(\frac{1}{2}\dot\rho^2 +
\frac{U_{\alpha,0}}{2\rho^\alpha}\Bigr)dt.
\]
Then the result follows from the relation
\[
\bar{c}^\alpha_{a}(\textsc{T}) = \inf_{\mathfrak{S}}
\int_0^{\textsc{T}}\left(\frac{1}{2}|\dot u|^2 +
\frac{a}{|u|^\alpha}\right)dt,\quad
\mathrm{ for }\ a>0,
\]
where
\[
\mathfrak{S} = \{u\in H^1_T(\R,\R^3) : u(t)=0 \mathrm{\ for\ some\ } t\},
\]
see \cite{RT95}.
\end{proof}


\begin{corollary}
   In the same hypotheses of Proposition \ref{prop:lower_bound_gen} we
   have
\begin{equation}
  \A^\alpha(u) >
  \frac{2+\alpha}{2-\alpha}\frac{\mathcal{M}}{2}
  \left[U_{\alpha,0}\Bigl(\frac{\pi}{\alpha}\Bigr)^\alpha\right]^{\frac{2}{2+\alpha}}
  \textsc{T}^{\frac{2-\alpha}{2+\alpha}}.
\label{eq:sndbnd}
\end{equation}
\label{cor:LE}
\end{corollary}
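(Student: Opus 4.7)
The plan is to derive Corollary~\ref{cor:LE} directly from Proposition~\ref{prop:lower_bound_gen} by replacing the sine integral appearing in the explicit formula \eqref{eq:cbar} with a convenient closed-form lower bound. So I would start from the inequality $\A^\alpha(u)\geq \mathcal{M}\,\bar{c}^\alpha_{U_{\alpha,0}}(\textsc{T})$ supplied by Proposition~\ref{prop:lower_bound_gen}, substitute the explicit value \eqref{eq:cbar} for $\bar{c}^\alpha_{U_{\alpha,0}}(\textsc{T})$, and then bound $\int_0^{2\pi}|\sin t|^{2/\alpha}\,dt$ from below to obtain the clean expression \eqref{eq:sndbnd}.

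The key observation is the pointwise inequality $|\sin t|^{2/\alpha}\geq \sin^2 t$ valid for every $t\in\R$, which holds because $\alpha\in[1,2)$ forces $2/\alpha\in(1,2]$ and $|\sin t|\leq 1$. Integrating over a period yields
\[
\int_0^{2\pi}|\sin t|^{2/\alpha}\,dt \;\geq\; \int_0^{2\pi}\sin^2 t\,dt \;=\;\pi,
\]
with strict inequality as soon as $\alpha>1$, since in that case $|\sin t|^{2/\alpha}>\sin^2 t$ on the set where $0<|\sin t|<1$, which has positive measure.

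The remaining step is purely algebraic bookkeeping. Substituting $\left(\int_0^{2\pi}|\sin t|^{2/\alpha}\,dt\right)^{\frac{2\alpha}{2+\alpha}}\geq \pi^{\frac{2\alpha}{2+\alpha}}$ in \eqref{eq:cbar} and regrouping the powers of $2$, $\alpha$, $\pi$ and $U_{\alpha,0}$, I expect the coefficient $(2\alpha^2)^{-\alpha/(2+\alpha)}\cdot 2^{-2/(2+\alpha)}$ to merge with $\pi^{2\alpha/(2+\alpha)}$ into $\tfrac{1}{2}\bigl(\pi/\alpha\bigr)^{2\alpha/(2+\alpha)}=\tfrac{1}{2}[(\pi/\alpha)^\alpha]^{2/(2+\alpha)}$. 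Collecting the remaining factors $\tfrac{2+\alpha}{2-\alpha}\cdot U_{\alpha,0}^{2/(2+\alpha)}\cdot \textsc{T}^{(2-\alpha)/(2+\alpha)}$ and multiplying by $\mathcal{M}$ then reproduces exactly the right-hand side of \eqref{eq:sndbnd}.

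There is essentially no genuine obstacle in this derivation; the argument is a one-line integral estimate followed by an exponent computation. The only mildly subtle point is the strictness of the inequality in \eqref{eq:sndbnd}: it is automatic for $\alpha\in(1,2)$ from the pointwise comparison above, while at the Keplerian endpoint $\alpha=1$ the bound $\int_0^{2\pi}\sin^2 t\,dt=\pi$ is sharp, so the "$>$" in that case must be attributed to the fact that the motion $u$ need not realize the infimum defining $\bar{c}^\alpha_{U_{\alpha,0}}(\textsc{T})$, e.g.\ because the Cauchy--Schwarz step $\sum_h(m_h/\mathcal{M})|\dot u_h|^2\geq\dot\rho^2$ used inside the proof of Proposition~\ref{prop:lower_bound_gen} is strict unless all $\dot u_h$ are radial and aligned, a condition incompatible with the symmetric collision loops to which the corollary is subsequently applied.
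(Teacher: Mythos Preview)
Your proposal is correct and follows exactly the same route as the paper: the paper's proof consists of one line, namely that \eqref{eq:sndbnd} follows from Proposition~\ref{prop:lower_bound_gen} together with the estimate $\int_0^{2\pi}|\sin t|^{2/\alpha}\,dt>\pi$ for $\alpha\in(1,2)$, without further justification. You supply the natural proof of that integral bound (via $|\sin t|^{2/\alpha}\ge\sin^2 t$) and carry out the exponent bookkeeping explicitly, which the paper omits; your discussion of the borderline case $\alpha=1$ is an extra observation that the paper does not address.
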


\begin{proof}
  The inequality \eqref{eq:sndbnd} follows immediately from
  Proposition~\ref{prop:lower_bound_gen} and the estimate
\[
   \int_0^{2\pi} |\sin t|^{\frac{2}{\alpha}}dt > \pi,
\quad \forall\,\alpha\in(1,2).
\]
\end{proof}

\section{Marchal's Lemma for $\alpha$-homogeneous potentials}
\label{app:marchal}
In the literature, Marchal's Lemma is often referred to the following
result (see for example \cite{marchal01, ch02, FT2004, montgomery2018}
for a proof).
\begin{lemma}[Marchal's Lemma]
   Let $\tau > 0$, $\alpha \in [1,2)$ and $x_A, x_B \in \R^2\setminus
     \{ 0 \}$ be two points in the plane.  Then any minimizer of the
     action
   \begin{equation}
      \A(x)=
      \int_{0}^{\tau}\bigg(\frac{|\dot{x}|^2}{2}+\frac{1}{|x|^\alpha}\bigg)dt,
      \label{eq:actionMarchalFixedPoints}
   \end{equation}
   on the set of curves $x:[0,\tau] \to \R^2$ such
   that $x(0) = x_A, \, x(\tau) = x_B$, is free of interior
   collisions.
   \label{lemma:usualMarchalLemma}
\end{lemma}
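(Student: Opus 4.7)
I would prove Marchal's Lemma by contradiction, adapting Marchal's averaging argument. Suppose, for contradiction, that $x^* \in H^1([0,\tau]; \R^2)$ is a minimizer of $\A$ with $x^*(t_c) = 0$ at some interior time $t_c \in (0, \tau)$. The first step is to invoke the standard blow-up analysis for $\alpha$-homogeneous potentials (see, e.g.,~\cite{FT2004}): near $t_c$, the trajectory $x^*$ is asymptotic to a parabolic ejection-collision motion, so that $|x^*(t)| \sim c_0\,|t - t_c|^{2/(2+\alpha)}$ with an explicit constant $c_0 > 0$, and the two asymptotic directions $\mathsf{n}^\pm := \lim_{t \to t_c^\pm} x^*(t)/|x^*(t)|$ are well-defined unit vectors. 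Since $\alpha < 2$, this asymptotic guarantees the integrability of $|x^*|^{-\alpha}$ and hence $\A(x^*) < +\infty$.

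Next, for a small parameter $h > 0$ and each $\xi \in \R^2$, I would construct a comparison path $x_\xi \in H^1([0,\tau]; \R^2)$ that coincides with $x^*$ outside $[t_c - h, t_c + h]$ and, on each half-interval $[t_c - h, t_c]$ and $[t_c, t_c + h]$, is given by the action-minimizing curve connecting $x^*(t_c \pm h)$ to $\xi$ in time $h$. Define $\Psi(\xi) := \A(x_\xi)$; then $\Psi$ is continuous in $\xi$, and by the minimality of $x^*$ one has $\Psi(\xi) \geq \Psi(0)$ for every $\xi \in \R^2$. The contradiction would follow if we can exhibit a radius $r = r(h) > 0$, with $h$ small enough, such that
\begin{equation*}
\fint_{|\xi| = r} \Psi(\xi)\, d\sigma(\xi) \;<\; \Psi(0),
\end{equation*}
since this forces at least one $\xi$ on the circle to satisfy $\Psi(\xi) < \Psi(0)$.

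The decisive and technically most delicate step is proving this strict inequality, and this is where I expect the main obstacle to lie. The strategy is to expand the action of each optimal half-interval arc asymptotically in $\xi$ using the parabolic estimate $|x^*(t_c \pm h)| \sim c_0 h^{2/(2+\alpha)}$ together with the scaling symmetry of the $\alpha$-Kepler problem $(x,t)\mapsto(\lambda x, \lambda^{(2+\alpha)/2} t)$. Heuristically, the circular average of the kinetic part of $\Psi$ exceeds that of $\Psi(0)$ by a quantity of order $r^2/h$, while the circular average of the potential part \emph{decreases} by a quantity of order $h\,r^{-\alpha} - h^{(2-\alpha)/(2+\alpha)}$, since a singular integrand $|x^*|^{-\alpha}$ is replaced by a family of uniformly bounded ones. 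Choosing $r \sim h^{2/(2+\alpha)}$ balances the two orders, and a careful inspection of the constants shows that the net contribution is strictly negative precisely because $\alpha < 2$. Rigorously controlling the remainders in the expansion of $\Psi$, uniformly in $\xi\in B(0,r)$ and as $h\to 0^+$, is the technical crux of the argument, and can be carried out along the lines of~\cite{FT2004}, exploiting the $\alpha$-homogeneity of the potential in an essential way.
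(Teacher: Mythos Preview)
The paper does not prove this lemma itself; it states it as a known result and refers to \cite{marchal01, ch02, FT2004, montgomery2018} for proofs. Your averaging-over-a-circle strategy is essentially Marchal's original argument \cite{marchal01}, so it is consistent with the cited literature rather than with any argument internal to the paper.

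Two technical cautions on your implementation. First, defining $x_\xi$ on the half-intervals as action-\emph{minimizing} arcs is logically permissible (it yields the smallest possible $\Psi(\xi)$) but makes the expansion of $\Psi(\xi)$ opaque, and those half-interval minimizers are not a priori collision-free themselves. The standard implementation uses an explicit competitor, typically $x_\xi(t)=x^*(t)+\xi\,\varphi(t)$ for a fixed cutoff $\varphi$ supported near $t_c$ with $\varphi(t_c)=1$; this only gives an upper bound for $\Psi(\xi)$, but that is all the argument needs, and it makes both kinetic and potential contributions computable. Second, in $\R^2$ the function $|x|^{-\alpha}$ is \emph{subharmonic} for $\alpha\geq 1$, so the circular mean of the potential is not automatically below the central value in the region $\{t:|x^*(t)|>r\}$; the net gain must come entirely from the region $\{t:|x^*(t)|<r\}$, where the original integrand is singular while the averaged one is bounded. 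This is precisely where the parabolic asymptotic and the scaling $r\sim h^{2/(2+\alpha)}$ are decisive, and your heuristic balance is correct once this is made explicit. With these adjustments the sketch can be completed. Note, finally, that for the versions of Marchal's lemma the paper actually uses (Lemmas~\ref{lemma:barTerrVerz_Marchal} and~\ref{lemma:marchal}) it relies instead on the blow-up/obstacle technique of \cite{FT2004, terrvent07, barutello2013}, which is a genuinely different route.
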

In the Keplerian case different proofs of this lemma are given in \cite{ch02}.
However, this statement does not provide any information on the number
of minimizers, and it does not relate the action of other stationary
points to the action of the collision-ejection solution.

Another version of Marchal's Lemma, stated for the Keplerian case $\alpha=1$, can be found
in \cite{fgn10}. Here, only the parabolic collision-ejection solution is taken into account
and it states that there are actually two Keplerian arcs, 
connecting any two distinct points at the same distance from the origin and in the same
time as the parabolic collision-ejection solution, whose action is less than the action 
of the parabolic collision-ejection solution itself. In general, the action of the two
Keplerian arcs is different.
Moreover, if the end points coincide, there is only one non-collision connecting
solution.
%
Hence we can conclude that the action of any non-collision solution of
\eqref{eq:actionMarchalFixedPoints} with $\alpha=1$ is less than the action of the parabolic
collision-ejection solution.
Furthermore, the number of solutions is related to the angle between
$x_A$ and $x_B$: this is also important in the proof of the exclusion of
partial collisions given in Section \ref{s:platoGamma}.
However, the proof in \cite{fgn10} of this version of the lemma relies on 
the explicit form of the solutions of the Kepler problem, hence it cannot 
be adapted to the case $\alpha \in (1,2)$.

A more general statement of Marchal's Lemma is contained in \cite{barutello2013}, and 
the technique used for the proof was already present in \cite{terrvent07}. The result can be summarized
as follows.
\begin{lemma}
Let $x_A, x_B \in \R^2 \setminus \{ 0 \}$ be two points in the plane 
   and $\tau>0$. Let 
   \[
      x_A=r_A(\cos\varphi_A, \sin\varphi_A), \quad x_B=r_B(\cos\varphi_B, \sin \varphi_B),
   \]
   be the polar coordinates of the two points. Given an integer $k \in \Z$ such that
   \[
      |\varphi_A-(\varphi_B+2k\pi)|<\frac{2\pi}{2-\alpha},
   \]
   define
   \[
      \begin{split}
      \mathcal{G} = \big\{
         x \in H^1([-\tau,\tau], \R^2\setminus \{ 0 \} )&: x(-\tau) = x_A, \, x(\tau) = x_B \\ & \text{ and the total
         angle swept by } x\text{ is } \varphi_B+2k\pi-\varphi_A
      \big\}.
   \end{split}
   \]
   Then any minimizer of the action
   \[
      \A(x) = \int_{-\tau}^{\tau} \bigg(
      \frac{|\dot{x}|^2}{2}+\frac{1}{|x|^\alpha}\bigg)dt,
   \]
   in the $H^1$-closure of $\mathcal{G}$ is free of collisions.
   \label{lemma:barTerrVerz_Marchal}
\end{lemma}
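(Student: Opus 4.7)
The approach is by contradiction: assuming that a minimizer $x^* \in \overline{\mathcal{G}}^{H^1}$ has an interior collision at some $t_c \in (-\tau, \tau)$, I would construct a nearby competitor $\widetilde{x} \in \mathcal{G}$ with $\mathcal{A}(\widetilde{x}) < \mathcal{A}(x^*)$. Existence of a minimizer in the $H^1$-closure of $\mathcal{G}$ follows from coercivity on the space of paths with fixed endpoints on $[-\tau,\tau]$ together with weak lower semicontinuity of $\mathcal{A}$, the singular term being handled by Fatou's lemma exactly as in the first part of the proof of Theorem~\ref{th:gammaConv}. The signed total swept angle is preserved in the $H^1$-closure away from collision instants, so interior collisions are the only potential obstruction to having $x^* \in \mathcal{G}$ itself.

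The first technical step would be the blow-up analysis of Ferrario and Terracini \cite{FT2004} at $t_c$: after the standard parabolic rescaling, $x^*$ is asymptotic for $t \to t_c^{\pm}$ to the parabolic ejection-collision motion
\[
\omega(\pm(t-t_c)) = \mathsf{n}^{\pm}\, s^{\alpha}(|t-t_c|),
\]
for two unit asymptotic directions $\mathsf{n}^{\pm}\in\mathbb{S}^1$ and with $s^\alpha$ as in \eqref{parab_eject_coll}. One then associates to the collision a signed angle $\theta\in\R$ between $\mathsf{n}^-$ and $\mathsf{n}^+$, incorporating the integer number of full revolutions accumulated by a regularizing sequence $\{x_j\}\subset\mathcal{G}$ converging to $x^*$ in $H^1$. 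Because the global signed swept angle of $x^*$ equals $\varphi_B + 2k\pi - \varphi_A$, summing the local contributions of all collision instants and of the smooth pieces in between forces $|\theta| < 2\pi/(2-\alpha)$ at $t_c$; this is precisely where the hypothesis on $k$ enters.

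The heart of the argument is then a local surgery: I would replace $x^*$ on a small interval $[t_c-\eta, t_c+\eta]$ by a non-colliding arc $\widetilde{x}_\eta$ of the $\alpha$-central force problem joining the endpoints $x^*(t_c\pm \eta)$ in time $2\eta$, sweeping a signed angle that matches the local asymptotics so that the concatenated loop still lies in $\mathcal{G}$. The strict inequality $|\theta|<2\pi/(2-\alpha)$ guarantees a non-empty range of admissible winding indices for such an arc: in the Keplerian case this reduces to the geometry of conics, while for $\alpha\in(1,2)$ it requires a shooting argument on the reduced radial equation in the spirit of \cite{barutello2013, terrvent07}. One then shows that
\[
\mathcal{A}(\widetilde{x}_\eta) \;<\; \mathcal{A}\bigl(x^*\vert_{[t_c-\eta,\,t_c+\eta]}\bigr)
\]
for all sufficiently small $\eta > 0$, by comparing the asymptotic action of the parabolic ejection-collision piece with the action of the perturbing arc.

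The hard part is this final estimate: producing, for every admissible $\theta$, an explicit connecting arc and controlling its action strictly from above by that of the parabolic piece, uniformly as $\eta \to 0$. For $\alpha = 1$ one can exploit the explicit integration of the Kepler problem; for $\alpha\in(1,2)$ no closed-form solutions exist and one has to appeal to variational characterizations of radial arcs together with the fine asymptotic expansions of \cite{FT2004, barutello2013}. The restriction $\alpha<2$ is essential here, since for $\alpha\geq 2$ the maximal deflection angle diverges and the dichotomy between direct and indirect arcs breaks down.
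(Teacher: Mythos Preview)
Your approach differs from the paper's. The paper does not argue by local surgery with connecting arcs; it uses the obstacle--problem technique of \cite{barutello2013,terrvent07}: remove an $\varepsilon$--disk about the origin, note that if a minimizer in $\overline{\mathcal G}$ had a collision then the obstacle minimizers $x^*_\varepsilon\in\mathcal G$ would have to touch the boundary circle, analyze $x^*_\varepsilon$ by blow--up as two parabolic arcs joined by a circular arc on $\partial B_\varepsilon$, and conclude that the total signed angle swept by $x^*_\varepsilon$ has absolute value at least $2\pi/(2-\alpha)$. Since $x^*_\varepsilon\in\mathcal G$, this contradicts the hypothesis on $k$ directly, with no need to define or bound a local collision angle.

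Your surgery route is viable, but there is a genuine gap at the step where you claim that ``summing the local contributions of all collision instants and of the smooth pieces in between forces $|\theta|<2\pi/(2-\alpha)$ at $t_c$''. The global bound $|\varphi_B+2k\pi-\varphi_A|<2\pi/(2-\alpha)$ is a constraint on the \emph{total} swept angle, not on any individual collision angle; as stated, nothing prevents cancellations between large local angles of opposite sign. What is missing is the observation that, for $\alpha\in[1,2)$, any classical arc of the central--force problem that reaches the origin must have zero angular momentum (the centrifugal term $L^2/r^2$ would otherwise dominate $1/r^\alpha$ near $0$), and is therefore purely radial. Hence the smooth pieces of $x^*$ adjacent to a collision contribute nothing to the swept angle, $\mathsf n^-$ and $\mathsf n^+$ coincide with $x_A/|x_A|$ and $x_B/|x_B|$, and in the single--collision case your $\theta$ is exactly $\varphi_B+2k\pi-\varphi_A$. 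If you allow several collisions you still only get $\sum_i\theta_i=\varphi_B+2k\pi-\varphi_A$, and you must say how you choose the windings of the replacing arcs so that they still sum to this value while each stays below $2\pi/(2-\alpha)$; this is easy, but it has to be said.

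In short: the obstacle argument sidesteps all of this by working with non--colliding competitors whose swept angle is the prescribed one throughout, whereas your surgery argument needs the zero--angular--momentum/radial observation to tie the local collision angle to the global constraint. Both approaches ultimately work, but the paper's is the cleaner route for this particular statement; the surgery technique you describe is precisely what the paper deploys later (Proposition~\ref{prop:SoloDoppiaFreccia}) in the symmetric setting, where the bound on $\theta$ comes instead from the assumption that the cone is $\alpha$--simple.
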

This more general version of Marchal's Lemma gives information not
only on the minimizer, but also on other solutions of the
fixed-ends problem. The idea for the proof is the
following. A disk of radius $\varepsilon>0$ centered at the origin is
removed from the plane and an obstacle problem is introduced. If a
minimizer in the $H^1$-closure of $\mathcal{G}$ has a collision, then the minimizers
$x^*_\varepsilon$ of the obstacle problem touch the border of the
disk, for every $\varepsilon>0$.  Using a blow-up technique we can
prove that $x_\varepsilon^*$ is composed by two parabolic arcs,
connected by a circular arc on the border of the disk. Then, the
total variation of the angle can be estimated and it results to be
greater than or equal to $2\pi/(2-\alpha)$ for every
$\epsilon$. For the original problem without obstacles, passing to the limit as $\epsilon\to 0$, we obtain the 
same estimate for the total variation of the angle, 
and this is in contradiction with the admissible arcs being in the
set $\mathcal{G}$. Hence minimizers are collision free.

A version of Marchal's Lemma which is suitable for our purposes, can
be stated as follows.
\begin{lemma}
   Let $\alpha \in (1,2)$ and $\bar{\rho}>0$. Let $\tau(0)$ be the time needed
   to arrive at the collision for the parabolic collision-ejection solution of
   \begin{equation}
     \ddot{x}=-\alpha\frac{x}{|x|^{\alpha+2}},\qquad x\in\R^2\setminus\{0\}
     \label{ahomogkep}
   \end{equation}
   starting at distance $\bar{\rho}$ from the origin. Let $\varphi \in [0,2\pi)$ and set
   \[
   \begin{cases}
      x_A = (\bar{\rho}, 0), \\
      x_B = \bar{\rho} (\cos\varphi, \sin\varphi). 
   \end{cases}
   \]
   Let $\bar{x}:[-\tau(0), \tau(0)]\to \R^2$ be any non-collision solution of the
   Euler-Lagrange equation \eqref{ahomogkep}
   such that
   \[
      \bar{x}\big(-\tau(0)\big) = x_A, \quad \bar{x}\big( \tau(0) \big) = x_B,
   \]
   and denote with $x_{0}:[-\tau(0), \tau(0)]\to \R^2$ the parabolic collision-ejection 
   solution with the same boundary conditions.
   Then we have that
   \begin{equation}
      \A(\bar{x}) < \A(x_{0}),
      \label{eq:marchalTheorem}
   \end{equation}
   where
   \[
      \A(x) = \int_{-\tau(0)}^{\tau(0)} \bigg(
      \frac{|\dot{x}|^2}{2}+\frac{1}{|x|^\alpha}\bigg)dt.
   \]
\label{lemma:marchal}
\end{lemma}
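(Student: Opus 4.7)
I would prove this via Lemma \ref{lemma:barTerrVerz_Marchal} (the general Marchal Lemma) applied to an appropriate angular homotopy class, together with an explicit perturbation of $x_0$.

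First, I would identify the total angular sweep $\theta_{\bar x}$ of $\bar x$ around the origin, which is well-defined since $\bar x$ avoids $0$. Deflection estimates for non-collision orbits of an $\alpha$-homogeneous central potential (of the kind invoked in the proof of Proposition \ref{prop:SoloDoppiaFreccia}) give $|\theta_{\bar x}|<2\pi/(2-\alpha)$, so that Lemma \ref{lemma:barTerrVerz_Marchal} applies to the class $\mathcal{H}$ of $H^1$ curves from $x_A$ to $x_B$ in time $2\tau(0)$ with total angular variation $\theta_{\bar x}$. The lemma produces a collision-free minimizer $x^\star\in\mathcal{H}$ of the action, and since $\bar x\in\mathcal{H}$ we have $\A(x^\star)\le\A(\bar x)$.

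Second, a Lambert-type uniqueness for the $\alpha$-homogeneous Kepler problem, namely that fixed endpoints, travel time, and angular sweep select at most one non-collision orbit, forces $\bar x=x^\star$ and hence $\A(\bar x)=\min_{\overline{\mathcal{H}}}\A$. For $\alpha=1$ this is classical Lambert; for $\alpha\in(1,2)$ I would derive it from the monotone dependence of the angular sweep on the angular momentum along orbits of fixed energy, combined with the energy--time relation inverted using the effective radial equation. This is the step where I expect the most work, because closed-form orbits are unavailable in the non-Keplerian range; a fallback would be to weaken the statement to the specific non-collision arcs actually used in Section \ref{ss:platoPartialColl}, for which direct uniqueness can be checked from their construction.

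Finally, I would beat $\A(x_0)$ by an explicit competitor $\tilde x\in\overline{\mathcal H}$. Cut $x_0$ at the times $\pm t_\varepsilon$ defined by $|x_0(\pm t_\varepsilon)|=\varepsilon$ and splice in a short arc of length $O(\varepsilon)$ avoiding the origin, chosen so that the total angular sweep equals $\theta_{\bar x}$. Using the explicit profile $r(t)=C\,|t|^{2/(\alpha+2)}$ of the parabolic collision-ejection solution of \eqref{ahomogkep}, one checks that both the kinetic and the potential cost of the splice tend to $0$ as $\varepsilon\to 0$, strictly slower than the corresponding pieces of $x_0$ that it replaces, so $\A(\tilde x)<\A(x_0)$ for $\varepsilon$ small enough. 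Chaining the three steps yields $\A(\bar x)=\A(x^\star)\le \A(\tilde x)<\A(x_0)$, completing the proof.
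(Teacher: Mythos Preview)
The paper does not give its own proof of this lemma: it is stated as deducible from the literature (\cite{barutello2013}, \cite{terrvent07}), with Lemma~\ref{lemma:barTerrVerz_Marchal} and its proof sketch serving as the only supporting material. So there is little to compare against directly; your outline is in fact more explicit than what the paper offers.

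That said, your Step~3 contains a concrete error. For the parabolic profile $r(t)=C|t|^{2/(2+\alpha)}$, the time to reach radius $\varepsilon$ is $t_\varepsilon\sim\varepsilon^{(2+\alpha)/2}$, and both the kinetic and potential parts of $\A(x_0)$ over $[-t_\varepsilon,t_\varepsilon]$ scale like $\varepsilon^{(2-\alpha)/2}$. A splice of length $O(\varepsilon)$ traversed in time $2t_\varepsilon$ has speed $\sim\varepsilon^{-\alpha/2}$, hence kinetic cost $\sim\varepsilon^{-\alpha}\cdot\varepsilon^{(2+\alpha)/2}=\varepsilon^{(2-\alpha)/2}$, and the potential cost is of the same order. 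The removed piece and the splice scale at the \emph{same} rate, so $\A(\tilde x)<\A(x_0)$ does not follow from orders of magnitude; it would require comparing leading constants, which is precisely the delicate content of Marchal-type estimates. Fortunately Step~3 is unnecessary: once $\bar x=x^\star$ is the minimizer on $\overline{\mathcal H}$, note that $x_0\in\overline{\mathcal H}$ (it is an $H^1$ limit of your spliced loops---only convergence is needed here, not an action comparison), and Lemma~\ref{lemma:barTerrVerz_Marchal} says every minimizer on $\overline{\mathcal H}$ is collision-free, so $x_0$ is not a minimizer and $\A(\bar x)<\A(x_0)$ strictly.

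Your Step~1 bound $|\theta_{\bar x}|<2\pi/(2-\alpha)$ cannot be read off Proposition~\ref{prop:SoloDoppiaFreccia}: that proposition bounds the collision angle of minimizers in an $\alpha$-simple cone via the topological constraint, not via an intrinsic two-body deflection estimate for fixed endpoints and travel time. The bound is correct, but its source is the obstacle-problem/blow-up argument sketched after Lemma~\ref{lemma:barTerrVerz_Marchal} (and it is essentially the content of Lemma~\ref{lemma:connectingArcs}). You rightly flag Step~2 (one non-collision orbit per angular class) as the substantive gap; the paper itself asserts this in Lemma~\ref{lemma:connectingArcs} with only a citation of \cite{barutello2013} in lieu of proof, so you are not missing something the paper supplies.
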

Moreover, the total number of connecting arcs that are solutions of
\eqref{ahomogkep} is a function of the angle $\varphi$ between the
points and of the exponent $\alpha \in (1,2)$.
\begin{lemma}
   In the hypotheses of Lemma \ref{lemma:marchal}, connecting arcs all have a
   different winding number with respect to the origin. In particular, the total angle swept by each arc
   is
   \[
      \varphi + 2\pi k, \quad k=k_{\text{min}},\dots,k_{\text{max}},
   \]
   where
\begin{equation}
   k_{\text{min}} = 
   \begin{cases}
      \displaystyle -\bigg[ \frac{1}{2-\alpha} + \frac{\varphi}{2\pi} \bigg]
      &\displaystyle \text{if  }
      \frac{1}{2-\alpha} + \frac{\varphi}{2\pi} \notin \Z, \\[3ex]
      \displaystyle -\bigg( \frac{1}{2-\alpha} + \frac{\varphi}{2\pi} \bigg)+1
      &\displaystyle \text{if  }
      \frac{1}{2-\alpha} + \frac{\varphi}{2\pi} \in \Z, \\
   \end{cases}
   \label{eq:kappaMinApp}
\end{equation}
\begin{equation}
   k_{\text{max}} = 
   \begin{cases}
      \displaystyle \bigg[ \frac{1}{2-\alpha} - \frac{\varphi}{2\pi} \bigg] &
      \displaystyle \text{if  }
      \frac{1}{2-\alpha} - \frac{\varphi}{2\pi} \notin \Z, \\[3ex]
      \displaystyle \bigg( \frac{1}{2-\alpha} - \frac{\varphi}{2\pi} \bigg)-1
      &\displaystyle \text{if  }
      \frac{1}{2-\alpha} - \frac{\varphi}{2\pi} \in \Z, \\
   \end{cases}
   \quad
   \label{eq:kappaMaxApp}
\end{equation}
   and $[\, \cdot \,]$ denotes the integer part. Moreover, the total number of arcs is
   given by
   \[
      k_{\text{tot}}(\alpha,\varphi) = k_{\text{max}}-k_{\text{min}}+1.
   \]
   \label{lemma:connectingArcs}
\end{lemma}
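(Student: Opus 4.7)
The plan is to parametrise non-collision orbits by their conserved energy $h$ and angular momentum $L \neq 0$, and then use symmetries to count solutions for each winding number. Since $L \neq 0$, the orbit lies in the plane spanned by $x_A$ and $x_B$; fixing polar coordinates $(r,\theta)$ in this plane with $x_A$ at angle $0$, the equation $\ddot x = -\alpha x/|x|^{\alpha+2}$ reduces to $\tfrac{1}{2}\dot r^2 = h + 1/r^\alpha - L^2/(2r^2)$ together with $r^2\dot\theta = L$. The time-reversal symmetry of this system, combined with the boundary data $|\bar x(-\tau(0))| = |\bar x(\tau(0))| = \bar\rho$, forces $r(t)$ to be even in $t$, so $t=0$ is an apse of the orbit. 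By the reflection symmetry of central-force trajectories about the apsidal line, the apse at $t=0$ lies at angle $(\varphi + 2\pi k)/2$ in the plane, where $k \in \Z$ is the signed winding number of the arc; distinct $k$ thus correspond to geometrically distinct arcs.

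Next I would bound the admissible winding numbers via the apsidal angle. For any non-collision orbit with parameters $(h, L)$,
\[
\Theta_{\mathrm{ap}}(h, L) = \int_{r_{-}}^{r_{+}} \frac{L/r^2}{\sqrt{2(h + 1/r^\alpha) - L^2/r^2}}\, dr,
\]
where $r_{\pm}$ are the two apses (with $r_{+} = +\infty$ if $h \geq 0$). The standard change of variable $w = (r_{-}/r)^{2-\alpha}$, carried out explicitly in the parabolic limit $h=0$, shows that $\Theta_{\mathrm{ap}} < \pi/(2-\alpha)$, with the supremum $\pi/(2-\alpha)$ approached only along the parabolic collision-ejection family. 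Since the angle from the apse at $t=0$ to $x_A$ equals $|\varphi + 2\pi k|/2$ and must be strictly less than $\pi/(2-\alpha)$, we obtain $|k + \varphi/(2\pi)| < 1/(2-\alpha)$, which is equivalent to $k_{\min} \leq k \leq k_{\max}$ with the formulas \eqref{eq:kappaMinApp}--\eqref{eq:kappaMaxApp}; the $-1$ corrections in the integer cases encode the fact that the supremum is not attained by non-collision orbits.

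Finally, for each admissible $k$ I would establish existence and uniqueness of a non-collision connecting arc. Denoting by $\Theta(h, L; \bar\rho)$ the angle swept and by $\mathcal{T}(h, L; \bar\rho)$ the time elapsed from the apse at $t=0$ to the first crossing of $\{|x| = \bar\rho\}$, the system to solve is
\[
\Theta(h, L; \bar\rho) = \tfrac{1}{2}|\varphi + 2\pi k|, \qquad \mathcal{T}(h, L; \bar\rho) = \tau(0),
\]
a pair of equations in the two unknowns $(h, L)$. The hard part is showing that this system has exactly one solution in the admissible range. My first attempt would be to differentiate the quadrature representations of $\Theta$ and $\mathcal{T}$ under the integral sign and verify that the Jacobian of $(h, L) \mapsto (\Theta, \mathcal{T})$ is nowhere vanishing on the admissible set of $(h, L)$. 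As a fallback, I would run a continuation argument: start from a near-circular orbit (where $\Theta$ and $\mathcal{T}$ are controlled by the linearised radial oscillation) and deform toward the parabolic limit along a one-parameter family indexed by $k$, using the monotone growth of $\Theta$ from the near-circular value $\pi/\sqrt{2-\alpha}$ (per half radial period) up to the parabolic value $\pi/(2-\alpha)$ to pin down a unique orbit realising each admissible angle and time.
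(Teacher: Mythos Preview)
The paper's proof is a single sentence: the lemma follows immediately from Lemma~\ref{lemma:barTerrVerz_Marchal}, the variational result of Barutello--Ferrario--Terracini, which asserts that for every integer $k$ with $|\varphi + 2k\pi| < 2\pi/(2-\alpha)$ the action minimizer in the homotopy class sweeping total angle $\varphi+2k\pi$ is collision-free. That inequality is equivalent to $k_{\min}\le k\le k_{\max}$ with the formulas \eqref{eq:kappaMinApp}--\eqref{eq:kappaMaxApp}, and the resulting existence of one connecting arc per admissible winding number is all that is actually used downstream (in Proposition~\ref{prop:SoloDoppiaFreccia}). Your route through conserved quantities and the apsidal angle is genuinely different and in principle yields more---uniqueness within each winding class and non-existence outside the range---but the paper simply outsources the hard analysis to the literature.

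There are, however, real gaps in your sketch. In step~2, the inequality $|\varphi+2\pi k|/2 < \Theta_{\mathrm{ap}}$ tacitly assumes that the half-arc from the apse at $t=0$ to $x_A$ lies inside a single apsidal segment; nothing you have written uses the time constraint $\tau(0)$ to exclude arcs that cross several apses, for which the swept half-angle can be any multiple of $\Theta_{\mathrm{ap}}$. (Incidentally, for $h=0$ one has $\Theta_{\mathrm{ap}}=\pi/(2-\alpha)$ exactly for every $L>0$, not only in the collision limit.) In step~1, the claim that $r(t)$ must be even fails when $2\tau(0)$ is an integer multiple of the radial period $T_r(h,L)$: then $r(-\tau(0))=r(\tau(0))$ holds for an entire phase-shifted family of arcs, none of which need have an apse at $t=0$, and the reflection $\tilde x(t)=Sx(-t)$ (with $S$ the isometry swapping $x_A,x_B$) produces a distinct arc with the \emph{same} winding number. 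You would need to rule this degeneracy out before the counting goes through. Step~3, as you acknowledge, is the crux and is left open; the variational route bypasses it entirely via coercivity, lower semicontinuity, and the obstacle-problem argument already in \cite{terrvent07,barutello2013}.
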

\noindent This lemma follows immediately from
Lemma \ref{lemma:barTerrVerz_Marchal}.
\begin{figure}[!ht]
   \centering
   \includegraphics[scale=0.4]{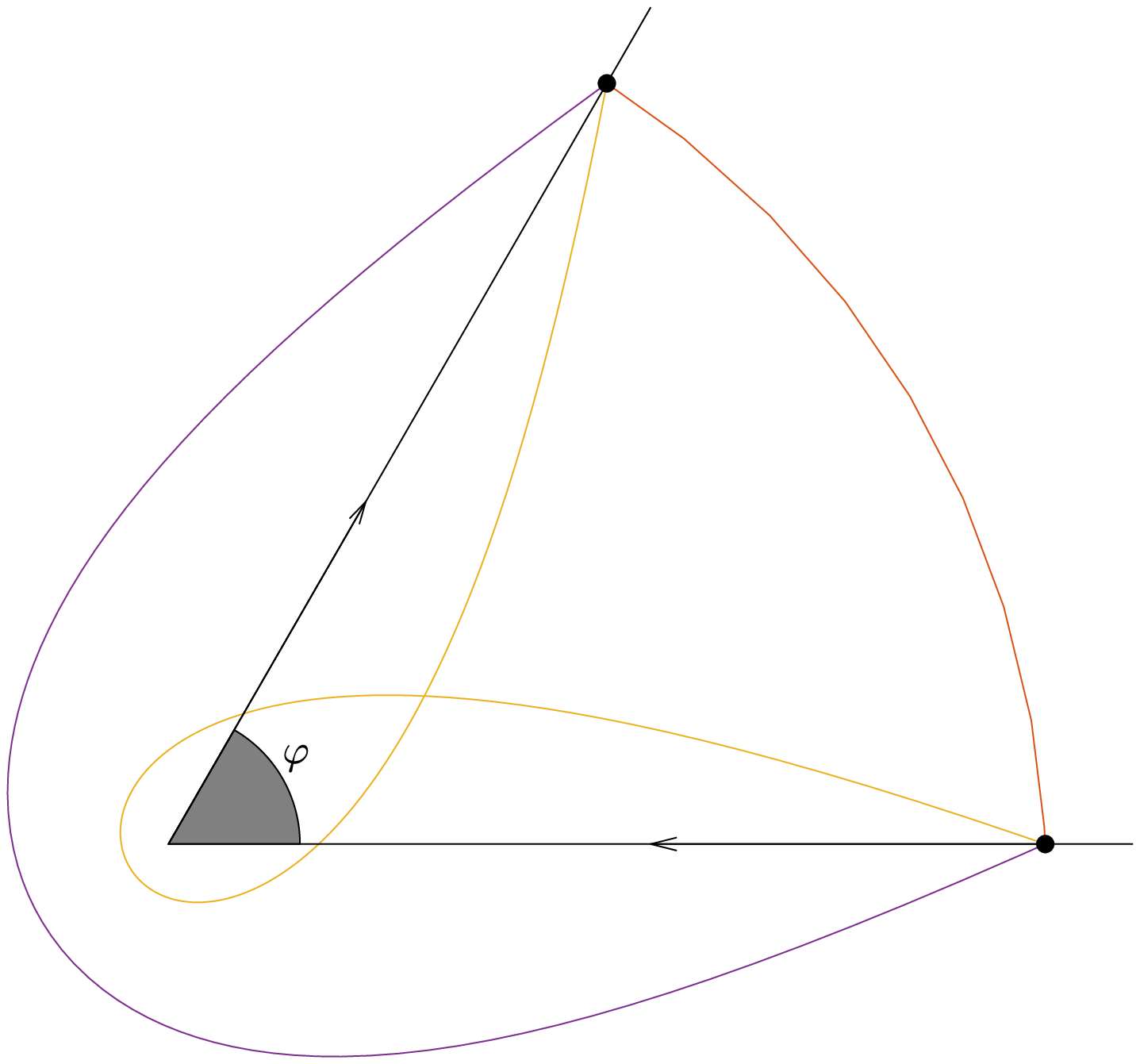}
   \hskip 0.8cm
   \includegraphics[scale=0.4]{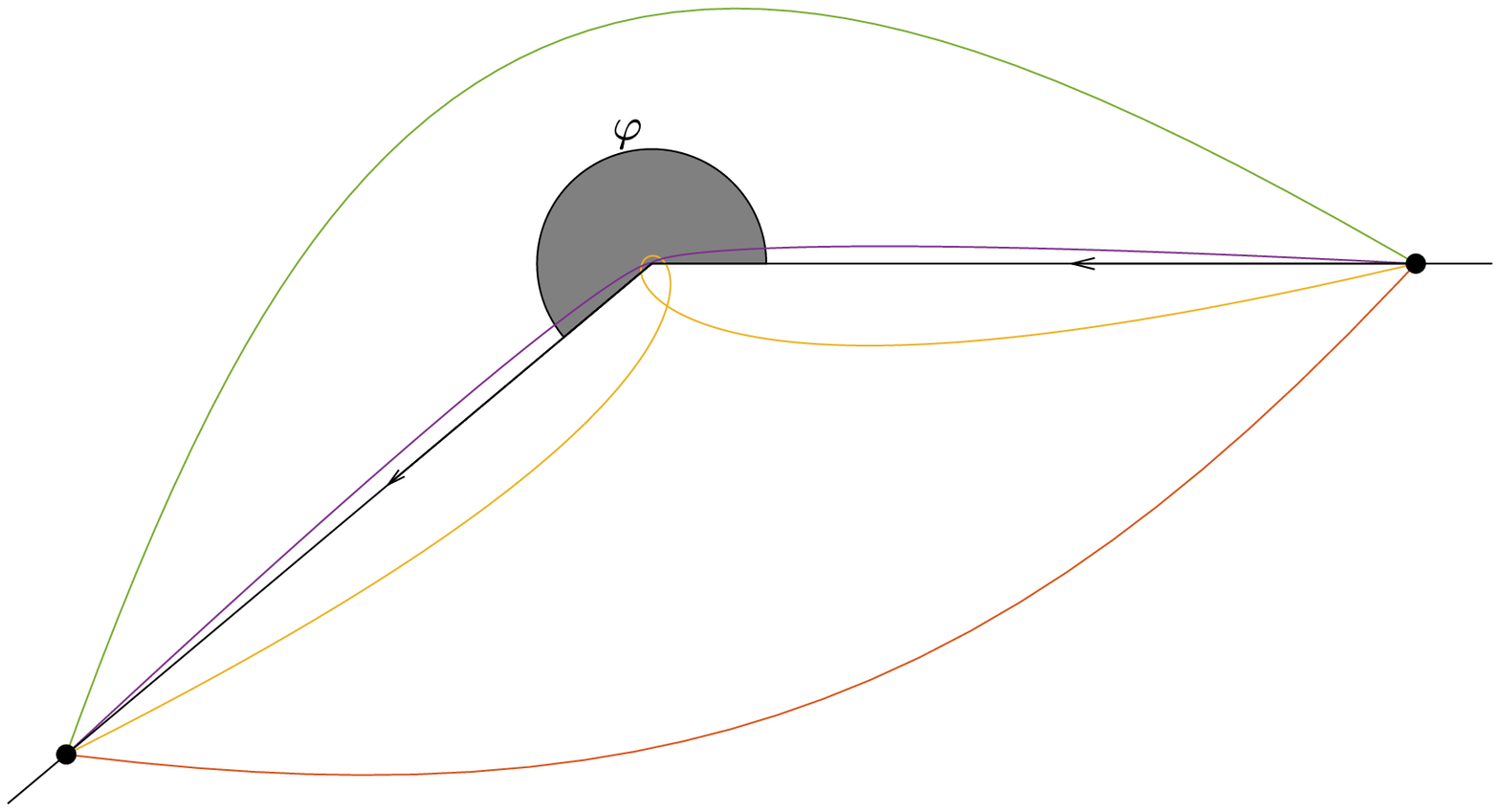}
   \caption{Connecting arcs computed for $\alpha=7/5$ and $\varphi = \pi/3$ (on the left),
   $\varphi=11\pi/9$ (on the right). Note that the total number of connecting arcs changes
with the angle $\varphi$ between the ending points.}
   \label{fig:connArcs}
\end{figure}

\section{The collision angle}
\label{app:collang}
Let us denote with $\Pi$ the union of the symmetry planes of the selected Platonic polyhedron, then 
\[
\R^3 \setminus \Pi = \bigcup_{\widetilde{R} \in \widetilde{\mathcal{R}}} \widetilde{R}D,
\]
where $D \subset \R^3$ is an open fundamental region for the action of
$\widetilde{\mathcal{R}}$ on $\R^3$, i.e. the cone over a triangle of the tessellation of
the sphere. 
We denote with $\mathcal{D} = \{ \widetilde{R}D \}_{\widetilde{R} \in
\widetilde{\mathcal{R}}}$ the set of all connected components of $\R^3 \setminus \Pi$, and we refer to one of its elements as a \emph{chamber}.
We also denote with $\mathcal{S}$ the set of the connected components of $\Pi \setminus \Gamma$, and we refer to one of its elements as a \emph{wall}. 
\begin{remark}
Note that there is a $1$-$1$ correspondence between the chambers in $\mathcal{D}$ and the triangles of the tessellation $\mathscr{T}_{\cal R}$, hence a free-homotopy class can be described also with a periodic sequence $\{ D_k \}_{k \in \Z}$.
\end{remark}
From Proposition 4.4 and Remark 4.4 of \cite{fgn10}, given a set of loops $\K$ represented 
by a periodic sequence of chambers $\{ D_k \}_{k \in \Z}$ and a minimizer $u_I^* \in \overline{\K}$ of $\A^\alpha$, 
there exists a minimizing sequence $\{ \hat{u}^\ell_I \}_{\ell \in \N} \subset \K$ uniformly converging 
to $u_I^*$. Moreover, up to reflections with respect to the symmetry planes in $\Pi$ (that do not change the value of the action), we can assume that for each $\hat{u}_I^\ell$ there exist sequences $\{ t^{\ell}_k \}_{k \in 
\Z} \subset \R$, $\{ S_k \}_{k \in \Z} \subset \mathcal{S}$, such that $t^{\ell}_k < t^{\ell}_{k+1}$, $S_k \subset \overline{D_k}$, $k\in\Z$, and
\begin{itemize}
    \item[(i)] $\hat{u}_I^\ell([t_k^{\ell}, t_{k+1}^{\ell}]) \subset \overline{D_k}\setminus \Gamma$,
    \item[(ii)] $\hat{u}_I^\ell(t_k^{\ell}) \in S_k$, 
    \item[(iii)] $S_{k+1}\neq S_k$.
\end{itemize}

Let us denote with $S^k \notin \{S_k, S_{k+1} \}$ the third wall of $D_k$, i.e. the wall that is not crossed 
by the trajectory of $\hat{u}_I^\ell$ in the interval $[t^\ell_{k-1}, t^\ell_{k+2}]$.
We introduce a {\em collision angle} following the steps of Proposition~5.8 in \cite{fgn10}. If $\minloop$ has a collision at time $t_c$, then we
necessarily have $\minloopgen(t_c)\in r_h$ for some $r_h$ in the
sequence $k\to r_k=(\overline{S_{k+1}}\cap
\overline{S^k})\setminus \{ 0 \}$. 
Let $\tilde{h}>h$ be defined by 
\begin{eqnarray*}
  \left\{\begin{array}{cl} r_h\subset
  \overline{S^h}\cap\overline{S^{\tilde h}}\\ h<k<\tilde h&\Rightarrow
  \overline{S^h}\cap\overline{S^k}=\{0\}.\end{array}\right.
\end{eqnarray*}
Note that the walls $S^h$ and $S^{\tilde h}$ 
both include the collision axis $r_h$, see Figure~\ref{fig:Dk_seq}.
\begin{figure}[ht]
    \centering
    \includegraphics[width=0.5\textwidth]{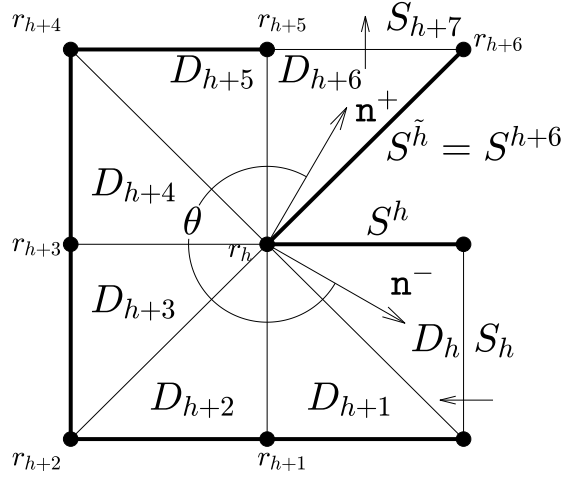}
    \caption{Sketch for the definition of the collision angle.}
    \label{fig:Dk_seq}
\end{figure}

Hereafter we denote with
$\mathfrak{p}^\pm$ the open half--planes parallel to $\npm$, with origin
the line of $r_h$.  In a similar way, we denote with
$\mathfrak{p}^\ell(t)$ the open half--plane passing through
$\hat{u}^\ell_I(t)$, with origin the line of $r_h$.

Up to passing to a
subsequence of $\{t_k^\ell\}_{\ell\in\Z}$, we can assume that $t_k^*=\lim_{\ell\rightarrow
  +\infty}t_k^\ell$ exists for all $k\in\Z$.
Let $i$, $j$ be respectively the largest and the smallest integers such that $t_i^*<t_c$ and $t_c<t_j^*$. We observe that
\begin{equation}
i+1\leq j\,,\qquad h\leq i\leq \tilde{h}\,,\qquad h<j\leq\tilde{h}+1\ .
\label{stime_per_ij}
\end{equation}
Let $\theta^\ell$ be the signed angle swept by $\mathfrak{p}^\ell(t)$ while $t$ increases from ${t}_i^\ell$ to ${t}_j^\ell$.  
We take as
positive the versus of rotation around $r_h$ defined by $D_k$ when $k$
increases from $h$ to $\tilde{h}$. 
We also introduce the signed angles $\theta^{\ell,+}$, $\theta^{\ell,-}$ swept 
to rotate respectively from $\mathfrak{p}^\ell({t}_j^\ell)$  to
$\mathfrak{p}^+$ and from $\mathfrak{p}^\ell({t}_i^\ell)$ to $\mathfrak{p}^-$.
%

\noindent We associate to the collision at time $t_c$ the collision angle $\theta$ (see Figure~\ref{fig:Dk_seq}) defined by
\begin{equation}
\theta:= \theta^\ell + \theta^{\ell,+} - \theta^{\ell,-} .
\label{thetaang}
\end{equation}
From 
the definitions of $\theta^\ell$, $\theta^{\ell,\pm}$ 
it follows that the collision angle $\theta$
is independent from $\ell$. 

\begin{proposition}
Assume that $\K$ is $\alpha$-simple. Then the collision angle $\theta$ satisfies the inequalities
\begin{equation}
-\frac{\pi}{\ord_h} \leq\theta\leq 2\bigg[\frac{1}{2-\alpha}\bigg]\pi .
\label{stimapertheta}
\end{equation}
\label{thetabounds}
\end{proposition}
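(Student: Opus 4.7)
The strategy is to use the decomposition $\theta = \theta^\ell + \theta^{\ell,+} - \theta^{\ell,-}$ and to bound each summand separately by exploiting the geometry of the fan of chambers meeting at the collision axis $r_h$. Around $r_h$ there are exactly $2\mathfrak{o}_h$ chambers of $\mathcal{D}$, each subtending an angular width $\pi/\mathfrak{o}_h$, and the positive sense of rotation around $r_h$ is fixed to be the one induced by the traversal $D_h, D_{h+1}, \dots, D_{\tilde h}$.

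The first step is to identify the chambers visited by $\hat u_I^\ell$ during $[t_i^\ell, t_j^\ell]$: they are exactly $D_i, D_{i+1}, \dots, D_{j-1}$, and since all intermediate wall-crossing times $t_{i+1}^\ell,\dots,t_{j-1}^\ell$ converge to $t_c$ with $u_I^*(t_c) \in r_h$, each of the walls $S_{i+1},\dots,S_{j-1}$ must contain $r_h$ in the limit. Hence $D_i,\dots,D_{j-1}$ form a consecutive substring of the fan at $r_h$, and, passing to the triangle encoding in $\mathscr{T}_{\mathcal{R}}$, the closures of the corresponding triangles all contain the pole $p = r_h \cap \unitsphere$. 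The $\alpha$-simple hypothesis (Definition~\ref{def:simpCone}) then forces $\tilde h - h + 1 \leq 2[\tfrac{1}{2-\alpha}]\mathfrak{o}_h$, and in particular $j - i \leq 2[\tfrac{1}{2-\alpha}]\mathfrak{o}_h$.

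To bound $\theta^\ell$, I would argue that each chamber $D_k$ with $i \leq k \leq j-1$ contributes a nonnegative rotation of at most $\pi/\mathfrak{o}_h$ to $\mathfrak{p}^\ell(t)$: for an intermediate $k$, the trajectory enters through $S_k$ and exits through $S_{k+1}$, two walls both containing $r_h$ and separated by exactly $\pi/\mathfrak{o}_h$ around the axis, whereas for the end chambers $D_i$ and $D_{j-1}$ the internal sweep is likewise capped by the angular width $\pi/\mathfrak{o}_h$. Summing yields $0 \leq \theta^\ell \leq (j-i)\pi/\mathfrak{o}_h \leq 2[\tfrac{1}{2-\alpha}]\pi$. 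For the boundary corrections, $\mathfrak{p}^+$ lies in the fan cell of $D_{j-1}$ (since the asymptotic ejection direction $\mathfrak{n}^+$ lies in the chamber into which $u_I^*$ emerges for $t \to t_c^+$) and $\mathfrak{p}^-$ lies in the cell of $D_i$, so $|\theta^{\ell,+}|$ and $|\theta^{\ell,-}|$ are each bounded by $\pi/\mathfrak{o}_h$. A direct inspection of the configuration fixes their signs so that the combined effect keeps $\theta$ within the stated range, giving $\theta\leq 2[\tfrac{1}{2-\alpha}]\pi$.

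The hard part will be the sign bookkeeping for the boundary corrections, especially in the degenerate case $j = i+1$ in which $\theta^\ell$ contributes essentially nothing and the entire collision angle is carried by $\theta^{\ell,\pm}$. To obtain the sharp lower bound $-\pi/\mathfrak{o}_h$, one must exploit the fact that $\mathfrak{p}^\pm$ are not arbitrary half-planes in the end cells but are determined by the asymptotic parabolic collision-ejection directions of the limiting trajectory, ruling out configurations where the two corrections would combine to push $\theta$ below $-\pi/\mathfrak{o}_h$.
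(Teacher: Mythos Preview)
Your approach is essentially the same as the paper's: both use the $\alpha$-simple hypothesis to get $\tilde h - h + 1 \leq 2[\tfrac{1}{2-\alpha}]\mathfrak{o}_h$, combine this with the a priori inequalities $h \leq i \leq \tilde h$ and $h < j \leq \tilde h + 1$ to deduce $j - i \leq \tilde h - h + 1$, and then bound $\theta$ in terms of $(j-i)\pi/\mathfrak{o}_h$.

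The only difference is tactical. The paper does not split the bound termwise: it asserts directly that the definition of $\theta$ yields $\theta \leq (j-i)\pi/\mathfrak{o}_h$, since $\mathfrak{p}^-$ lies in the angular sector of $D_i$ and $\mathfrak{p}^+$ in that of $D_{j-1}$, so the net signed angle from $\mathfrak{p}^-$ to $\mathfrak{p}^+$ cannot exceed the total angular width of the $j-i$ consecutive chambers. Your termwise bounds $0 \leq \theta^\ell \leq (j-i)\pi/\mathfrak{o}_h$ and $|\theta^{\ell,\pm}| \leq \pi/\mathfrak{o}_h$ are correct individually, but summed naively they overshoot by $2\pi/\mathfrak{o}_h$; this is exactly the ``sign bookkeeping'' you flag and then leave unresolved. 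The paper's direct bound sidesteps this entirely and is what you should use. For the lower bound the paper just says ``by a similar argument'': the worst case is $j = i+1$, where both half-planes lie in the single sector $D_i$ of width $\pi/\mathfrak{o}_h$, giving $\theta \geq -\pi/\mathfrak{o}_h$; no appeal to the parabolic asymptotics of $\mathsf{n}^\pm$ is needed.
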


\begin{proof}
By the definition \ref{def:simpCone} of $\alpha$-simple cones we have
\begin{eqnarray}
\tilde h-h + 1&\leq& 2\bigg[\frac{1}{2-\alpha}\bigg]\ord_h,
\label{simple}
\end{eqnarray}
where $\ord_h$ is the order of the pole $r_h\cap\unitsphere.$ 
%
We note that the definition of $\theta$
implies
\[
\theta\leq\frac{\pi}{\ord_h}(j-i).
\]
On the other hand we have, using (\ref{simple}), (\ref{stime_per_ij}), 
\[
\frac{\pi}{\ord_h}(j-i) \leq \frac{\pi}{\ord_h}(\tilde{h}-h+1) \leq \bigg[\frac{1}{2-\alpha}\bigg]2\pi.
\]
%
%
The inequality 
\[
\theta \geq -\frac{\pi}{\ord_h}
\]
can be proved by a similar argument.

\end{proof}

\section{Properties of solutions with partial collisions}
\label{app:collparz}

Here we list some properties of {\em ejection} solutions to
\eqref{eq:newtEqPartAlpha}.
We recall that an ejection solution $w(t)$ is such that
\[
\lim_{t\to t_c^+} w(t) = w(t_c)\in r\setminus\{0\},
\]
for some collision time $t_c\in\R$, and up to translations of time and
space\footnote{Note that when translating, also the term $V_1(w)$ in \eqref{eq:newtEqPartAlpha} needs to be
modified accordingly to obtain solutions of the differential equations, i.e. also the
center of attraction should be moved.} we can assume
\[
w(t_c)=0,\qquad t_c=0.
\]
Analogous statements apply to {\em collision} solutions of
\eqref{eq:newtEqPartAlpha}, that is solutions satisfying $\lim_{t\to
  0^-}w(t)=0$.
Let us denote by $\er$ a unit vector parallel to $r$.  The following
result generalizes Proposition~5.6 in \cite{fgn10} to the case of
$\alpha$-homogeneous potentials.  We omit the proof, that can be
derived from the results in \cite{FT2004}, \cite{barutello2008}.
\begin{proposition}
Let $w:(0,\bar t)\rightarrow \R^3$ be a maximal solution of
\eqref{eq:newtEqPartAlpha}.
Assume that
\begin{equation}
\lim_{t\rightarrow 0^+}w(t) = 0.
\label{eq:P2.14}
\end{equation}
Then \\
\noindent
{\rm ($i$)} there exist $\olda \in \R$ and a unit vector ${\mathsf n}$,
orthogonal to $r$, such that
\begin{equation}
\lim_{t\rightarrow 0^+}\frac{\dot w(t) + R_\pi\dot w(t)}{2} = \olda \er, 
\label{lim_pdot}
\end{equation}
\begin{equation}
\lim_{t\rightarrow 0^+}\frac{w(t) - R_\pi w(t)}{|w(t) - R_\pi w(t)|} =
\lim_{t\rightarrow 0^+}\frac{w(t)}{|w(t) |} = {\mathsf n}.
\label{n_vector}
\end{equation}

\noindent{\rm ($ii$)} The rescaled function $w^\lambda:[0,1]\to\R^3$
defined by $w^\lambda(0)=0, w^\lambda(\tau) = \lambda^{2/(2+\alpha)}
w(\tau/\lambda), \lambda>1/\bar t$, satisfies
\begin{equation}
\begin{array}{l}
\displaystyle\lim_{\lambda\to+\infty}w^\lambda(\tau) =
 s^\alpha(\tau){\mathsf n} \mbox{ uniformly in } [0,1],\cr
\displaystyle\lim_{\lambda\to+\infty} \dot{w}^\lambda(\tau) = 
\dot{s}^\alpha(\tau){\mathsf n} \mbox{ uniformly in } [\delta,1],
0<\delta<1,\cr
\end{array}
\label{wlambda_eqs}
\end{equation}
where
\[
s^\alpha(\tau) = \frac{(2+\alpha)^{2/(2+\alpha)}}{2}\calpha^{1/(2+\alpha)} \tau^{2/(2+\alpha)},
\tau\in[0,+\infty)
\]
is the parabolic ejection motion, that is the solution of
\[
\dot{s} = (\calpha/2^\alpha)^{1/2} s^{-\alpha/2}
\]
that satisfies $\lim_{\tau\to 0^+}s(\tau) = 0$.

\noindent{\rm ($iii$)} The following estimates hold for positive
constants $t_0$, $\rho_0$, $x_0$, $C_j$, $j=1,..,7$ that depend only
on $\olda$ and on the energy constant $h$.
\begin{equation}
\left\{
\begin{array}{ll}
C_1 t^{2/(2+\alpha)} \le \rho(t) \le C_2 t^{2/(2+\alpha)},&\cr 
C_3 t^{-\alpha/(2+\alpha)} \le \dot \rho(t) \le 
C_4 t^{-\alpha/(2+\alpha)},  &t\in (0, t_0], \cr
|\frac{dx}{d\rho} - 1| \le C_5 \rho^{1+\alpha},  
&\rho \in (0, \rho_0],\cr
|y'| \le C_6 x^{\alpha/2}, 
\qquad
|z'| \le  C_7 x^{1+\alpha},
&x\in(0,x_0],\cr
\end{array}
\right.
\label{est_collect}
\end{equation}
where $\rho = \frac{1}{2}|(R_\pi-I)w|$ and $x,y,z$ are the components
of $w$ on ${\mathsf n}$, $\er$, $\eort = \er\times{\mathsf n}$ and $'$
denotes differentiation with respect to $x$.
\label{prop:limcoll}
\end{proposition}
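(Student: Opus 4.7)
The plan is to decouple the regular part of the dynamics along $r$ from the singular part transverse to $r$, apply a self-similar blow-up, and reduce to the well-studied asymptotic analysis of Keplerian-type collisions for $\alpha$-homogeneous potentials in \cite{FT2004,barutello2008}. The key algebraic observation is that $(R_\pi - \Id)w = -2\,w_\perp$, where $w_\perp$ is the component of $w$ orthogonal to $r$. Hence the only singular term of \eqref{eq:newtEqPartAlpha} is a function of $w_\perp$ alone, while the component $y = w\cdot\er$ along $r$ satisfies the regular ODE $\ddot y = V_1(w)\cdot\er$ with $V_1$ smooth in a neighborhood of the (translated) collision point $w=0$.

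First I would derive (i). Since $V_1$ is bounded near $w=0$, the integral $\int_0^t V_1(w(s))\cdot\er\,ds$ converges as $t\to 0^+$, so $\dot y(t)$ admits a finite limit, which we call $\olda$. The identity $\tfrac{1}{2}(\dot w + R_\pi\dot w) = \dot y\,\er$ then gives \eqref{lim_pdot}. From the energy equation \eqref{eq:energyPartAlpha} one reads $|\dot w_\perp|^2 = \calpha/|(R_\pi-\Id)w|^\alpha + O(1)$, so all the singularity of $\dot w$ is carried by the perpendicular component.

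For the blow-up limit (ii) and the existence of $\mathsf n$, I would use the self-similar rescaling $w^\lambda(\tau) = \lambda^{2/(2+\alpha)} w(\tau/\lambda)$, chosen so that the leading $\alpha$-homogeneous term is invariant. Under this rescaling $V_1(w)$ becomes $\lambda^{-2}V_1(\lambda^{-2/(2+\alpha)}w^\lambda) = O(\lambda^{-2})$. Compactness in $C^0([0,1])$, combined with the a priori bounds inherited from \eqref{eq:energyPartAlpha} (as in \cite{FT2004}), extracts a subsequential limit $w^\infty$ solving the unperturbed $\alpha$-homogeneous central-force equation in the two-dimensional plane $\Pi_r$ orthogonal to $r$, with $w^\infty(0)=0$ and total energy zero. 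A classical Sundman-type argument in two dimensions shows that any zero-energy collision orbit has vanishing angular momentum, hence must be a rectilinear parabolic ejection $w^\infty(\tau) = s^\alpha(\tau)\,\npiu$ for a unit vector $\npiu\perp r$. Uniqueness of $\npiu$ across all subsequences (see below) identifies the full limit $\mathsf n := \npiu$, yielding both convergences in \eqref{wlambda_eqs} and the directional limits in \eqref{n_vector}.

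The quantitative estimates in (iii) follow by integrating the energy equation in the form $\dot\rho^2/2 = \calpha/(2\rho^\alpha) + O(1)$, giving the two-sided bounds $C_1 t^{2/(2+\alpha)}\le \rho(t)\le C_2 t^{2/(2+\alpha)}$ and the matching bounds on $\dot\rho$. Reparametrizing by $x = w\cdot\mathsf n$ and expanding around the parabolic ejection, the perturbation produces the $O(\rho^{1+\alpha})$ correction in $|dx/d\rho - 1|$ and the bounds on $y'$ and $z'$ in the adapted orthonormal frame $(\mathsf n, \er, \eort)$. The main obstacle is precisely the uniqueness of $\mathsf n$: a priori $w_\perp$ could spiral around $r$ infinitely often as $t\to 0^+$, and its angular momentum is no longer conserved because of $V_1$. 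The key estimate, established in \cite{FT2004} and extended to the full range $\alpha\in[1,2)$ in \cite{barutello2008}, is that this angular momentum vanishes at rate $O(t^{\alpha/(2+\alpha)})$; this produces an integrable angular velocity in the Sundman time and forces the angular coordinate to converge to a definite limit, which is what pins down $\mathsf n$.
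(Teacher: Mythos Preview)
The paper does not actually prove this proposition: it states that the result generalizes Proposition~5.6 of \cite{fgn10} and explicitly omits the proof, saying it ``can be derived from the results in \cite{FT2004}, \cite{barutello2008}.'' Your sketch is precisely the standard route one takes to extract the statement from those references---decompose $w$ into its component along $r$ (regular ODE) and its perpendicular component (perturbed $\alpha$-homogeneous Kepler problem), apply the self-similar blow-up of \cite{FT2004}, and invoke the angular-momentum decay to rule out spiralling and pin down $\mathsf n$. So your approach matches what the paper has in mind.

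One minor slip: under the rescaling the perturbation $V_1$ picks up a factor $\lambda^{2/(2+\alpha)-2}$, not $\lambda^{-2}$ (you dropped the $\lambda^{2/(2+\alpha)}$ coming from the definition of $w^\lambda$). This does not affect the argument, since the exponent is still negative for all $\alpha\in[1,2)$ and the perturbation vanishes in the limit.
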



\begin{remark}
The estimates in $(iii)$ are valid for all the solutions of
\eqref{eq:newtEqPartAlpha},\eqref{eq:energyPartAlpha} that satisfy
(\ref{lim_pdot}), (\ref{n_vector}) for fixed $\olda$ and $h$.
This is essential for the analysis of partial collisions.
\end{remark}
%


The same result stated in Proposition 5.9 of \cite{fgn10} holds also in the
case of $\alpha$-homogeneous potentials. We recall the statement
below.
\begin{proposition}
Let $w_i:(0,\bar t_i)\rightarrow \R^3, \bar{t}_i>0, i=1,2$ be two maximal
solutions of \eqref{eq:newtEqPartAlpha}
such that
\[
\lim_{t\rightarrow 0^+}w_i(t) = 0.
\]
If $h_i$, $\olda_i$, ${\mathsf n}_i$  are the corresponding values
of the energy and the values of $\olda$ and ${\mathsf n}$ given
by Proposition~\ref{prop:limcoll}, then
\[
\left\{
\begin{array}{l}
h_1 = h_2\cr
\olda_1 = \olda_2\cr
{\mathsf n}_1 = {\mathsf n}_2\cr
\end{array}
\right.
\hskip 1cm\Longrightarrow
\hskip 1cm
\left\{
\begin{array}{l}
\bar{t}_1 = \bar{t}_2\cr
w_1 = w_2\cr
\end{array}
\right.
.
\]
\label{prop:unique}
\end{proposition}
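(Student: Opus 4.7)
The proof strategy is to show that, in a right neighbourhood of the ejection time $t=0^+$, the solution $w(t)$ is uniquely determined by the three invariants $(h,\olda,\mathsf{n})$. I would work in the orthonormal basis $(\mathsf{n},\er,\eort)$ and decompose $w = x\,\mathsf{n} + y\,\er + z\,\eort$. By the estimates in Proposition~\ref{prop:limcoll}$(iii)$, $\dot x (t)>0$ near $t=0^+$, so $x(t)$ is strictly increasing and can be inverted, playing the role of an independent variable on an interval $(0,x_0]$. I would then rewrite the Newton equation \eqref{eq:newtEqPartAlpha} together with the energy integral \eqref{eq:energyPartAlpha} as an autonomous first-order ODE system in $x$ for the unknowns $(y,z,t)$ and their $x$-derivatives.

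The bounds $|y'|\le C_6 x^{\alpha/2}$ and $|z'|\le C_7 x^{1+\alpha}$ in \eqref{est_collect} force $y(0^+)=z(0^+)=0$ together with $y'(0^+)=z'(0^+)=0$; likewise $t(0^+)=0$ and $t'(x)=1/\dot x\sim c_{\olda,h}\,x^{\alpha/2}$ at $x=0^+$, with $c_{\olda,h}$ fixed by \eqref{lim_pdot} and \eqref{eq:energyPartAlpha}. To desingularize the system at $x=0$ I would perform a Sundman-type change of variable tailored to the $\alpha$-homogeneous asymptotics, for instance replacing $x$ by $\sigma$ with $x=\kappa\,\sigma^{2/(2+\alpha)}$ (so that the reference parabolic ejection $s^\alpha$ becomes linear in $\sigma$) and rescaling $y,z,t$ and their derivatives by the appropriate powers of $\sigma$. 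In these regularized variables the system becomes autonomous with a locally Lipschitz right-hand side at $\sigma=0$, and the Cauchy datum at $\sigma=0$ is completely and uniquely specified by $(h,\olda,\mathsf{n})$: the latter enters through the choice of basis, so the rescaled initial values are plain zeros while $h$ and $\olda$ appear only as parameters of the vector field.

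The Picard--Lindelöf theorem then yields uniqueness of the regularized solution; pulling back gives $w_1\equiv w_2$ on a common right neighbourhood of $0$, and classical ODE uniqueness on the regular set propagates the equality over the whole common interval of existence. Finally, $\bar t_1=\bar t_2$ follows from maximality, since if $\bar t_1<\bar t_2$ one could extend $w_1$ by $w_2$, contradicting the maximality of $\bar t_1$. The main obstacle will be verifying that the change of variable really produces a Lipschitz vector field at $\sigma=0$: the smooth perturbation $V_1$ and the higher-order contributions of the symmetry-reduced potential have to be expanded around the reference motion $s^\alpha(\tau)\mathsf{n}$ using the fractional estimates of Proposition~\ref{prop:limcoll}$(iii)$ rather than smooth Taylor expansions, and the leading-order cancellations must be made explicit in terms of $(h,\olda,\mathsf{n})$; once this regularity is in place, uniqueness is essentially automatic.
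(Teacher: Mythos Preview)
Your setup coincides with the paper's: decompose $w=x\,\mathsf{n}+y\,\er+z\,\eort$, use the energy relation to express $\dot x$ in terms of $(x,y,z,y',z')$, and take $x$ as independent variable. The divergence is in the desingularization. The paper does \emph{not} attempt a Sundman-type power substitution followed by Picard--Lindel\"of; it sets $x=e^s$, $s\in(-\infty,s_0]$, rewrites the reduced system for $\gamma=(y,z,\eta,\zeta)^T$ as $\frac{d\gamma}{ds}=M\gamma+\mathcal N(\gamma,s)$ with constant $M$ having eigenvalues $0,\ \alpha/2,\ 1,\ (2+\alpha)/2$, and shows that solutions obeying the a~priori bound $|\gamma(s)|\le C_0 e^{(2+\alpha)s/2}$ are fixed points of a contraction
\[
(T\gamma)(s)=\gamma_\delta(s)+\int_{-\infty}^s e^{M(s-r)}\mathcal N(\gamma(r),r)\,dr
\]
on a weighted space, with $\gamma_\delta(s)=\delta\,e^{(2+\alpha)s/2}\rho_4$ and $\delta$ determined by $\olda$. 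This is a stable-manifold argument at $s=-\infty$, not an initial-value problem at a finite point.

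The gap in your proposal is the Lipschitz claim. Under $x=\kappa\,\sigma^{2/(2+\alpha)}$ the smooth perturbation $V_1(w)$ produces, at lowest order, terms of size $x\sim\sigma^{2/(2+\alpha)}$, and the correction factors $\mathcal U,\mathcal V,\mathcal W$ in the paper's notation are $O(x^\alpha)=O(\sigma^{2\alpha/(2+\alpha)})$; for generic $\alpha\in(1,2)$ these exponents lie strictly in $(0,1)$, so the right-hand side is only H\"older continuous at $\sigma=0$, not Lipschitz, and Picard--Lindel\"of does not apply. Rescaling $y,z$ componentwise does not cure this: the four linear rates $0,\ \alpha/2,\ 1,\ (2+\alpha)/2$ are incommensurate for generic $\alpha$, so no polynomial change of variables linearizes the problem, and the cross-terms generated by $V_1$ reintroduce fractional powers. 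This is exactly why the paper passes to the logarithmic variable $s$: there the perturbative terms become $O(e^{\alpha s})$, which is harmless for a contraction argument in a space with weight $e^{-s}$, and no regularity at a singular endpoint is needed. If you want to salvage your route, you would have to replace Picard--Lindel\"of by a weighted fixed-point argument, at which point you have essentially reproduced the paper's proof in different coordinates.
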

\begin{proof}
  The proof follows exactly the same steps as in the case
  $\alpha=1$. Here we recall the main points.  Projecting the equation
  of motion \eqref{eq:newtEqPartAlpha} onto the basis ${\mathsf n},
  \er, \eort$ and setting
  \[
  w = x\mathsf{n} + y\er + z\eort\] we get
\begin{equation}
\left\{
\begin{array}{l}
\ddot x = -\displaystyle\frac{\alpha\calpha}{(2x)^{1+\alpha}
(1+\frac{z^2}{x^2})^{\frac{2+\alpha}{2}}} +
V_1 \cdot {\mathsf n}\cr
\ddot y = V_1 \cdot \er\cr
\ddot z = -\displaystyle\frac{\alpha\calpha z}{2^{1+\alpha} x^{2+\alpha}
(1+ \frac{z^2}{x^2})^{\frac{2+\alpha}{2}}} +
V_1 \cdot\eort\cr
\end{array}
\right.
.
\label{xyzddot}
\end{equation}
We take $x$ as independent variable and write the energy equation
\eqref{eq:energyPartAlpha} as
\begin{equation}
\label{eq:P2.53}
\dot x^2(1 + |y'|^2 + |z'|^2) = \frac{\calpha}{{(2x)}^\alpha
(1+\frac{z^2}{x^2})^{\frac{\alpha}{2}}} + V + h,
\end{equation}
where $'$ denotes differentiation with respect to $x$.
Setting
\begin{equation}
x = e^s, \hskip 1cm s \in(-\infty, s_0],
  \label{xes}
  \end{equation}
where $s_0 < 0$ is chosen later, and introducing the variables
\[
\eta = \frac{d y}{ds}, \qquad \zeta = \frac{dz}{ds}
\]
we can write the first order system
%
\begin{equation}
\left\{
\begin{array}{ll}
\displaystyle\frac{dy}{ds} = \eta,
\hskip 0.5cm
&\displaystyle\frac{d\eta}{ds}= \Bigl(\frac{2+\alpha}{2} + 
\frac{\alpha}{2}\mathcal U\Bigr)
\eta + e^{(2+\alpha)s}\mathcal A \cr &\cr
\displaystyle\frac{dz}{ds} = \zeta,
\hskip 0.5cm
&\displaystyle\frac{d\zeta}{ds}= \Bigl(\frac{2+\alpha}{2} + 
\frac{\alpha}{2}\mathcal U\Bigr) \zeta - 
\Bigl(\frac{\alpha}{2} + \frac{\alpha}{2}\mathcal V\Bigr)z +
e^{(2+\alpha)s}\mathcal B\cr
\end{array}
\right.,
\label{eq:P2.58}
\end{equation}
where $\mathcal{U}, \mathcal{A}, \mathcal{V}, \mathcal{B}$ are defined
by relations
\[\begin{split}
&1+\mathcal{U} = \frac{(1 + |y'|^2 + |z'|^2)}
{(1+\frac{z^2}{x^2})^{\frac{2+\alpha}{2}}} 
\frac{\left(1 - (2x)^{1+\alpha} (1+ \frac{z^2}{x^2})^{\frac{2+\alpha}{2}} 
\frac{V_1\cdot \mathsf{n}}{\alpha\calpha}\right)}
{\left(1 + {(2x)}^\alpha (1+\frac{z^2}{x^2})^{\frac{\alpha}{2}}
  \frac{V + h}{\calpha}\right)},\cr
&1+\mathcal{W} = \frac{(1 + |y'|^2 + |z'|^2)
( 1+ \frac{z^2}{x^2})^{\frac{\alpha}{2}}}
{1 + {(2x)}^\alpha(1+\frac{z^2}{x^2})^{\frac{\alpha}{2}}\frac{V+h}{\calpha}},\cr
&\mathcal{A} = \frac{2^\alpha}{\calpha} V_1\cdot\er(1+\mathcal{W}),\cr
&1 + \mathcal{V} = \frac{1}{( 1+
    \frac{z^2}{x^2})^{\frac{2+\alpha}{2}}}(1 + \mathcal{W}),\cr
&\mathcal{B} = \frac{2^\alpha}{\calpha} V_1\cdot\eort(1+\mathcal{W}).\cr
\end{split}
\]
System~\ref{eq:P2.58} can be written in compact form as
\begin{equation}
\frac{d \gamma}{ds} = M \gamma + \mathcal{N}(\gamma,s)
\label{dgammads}
\end{equation}
where $\gamma = (y, z, \eta, \zeta)^T$,
\[
\mathcal{N}(\gamma,s) =\Bigl(0,0,
\frac{\alpha}{2}\mathcal U\eta + e^{(2+\alpha)s}\mathcal A, 
\frac{\alpha}{2}\mathcal U \zeta -
\frac{\alpha}{2}\mathcal V z + e^{(2+\alpha)s}\mathcal B\Bigr)^T
\]
and $M$ is the constant matrix
\[
M=
\left[
\begin{array}{cccc}
0&0&1&0\\
0&0&0&1\\
0&0& \frac{2+\alpha}{2}&0\\
0&-\frac{\alpha}{2}&0& \frac{2+\alpha}{2}
\end{array}
\right].
\]
To each solution $w$ of
\eqref{eq:newtEqPartAlpha}
satisfying (\ref{eq:P2.14}) there corresponds a solution $\gamma_w$ of
(\ref{dgammads}) and this correspondence is 1--1.
%
Moreover, from the estimates (\ref{est_collect}), we can find a
constant $C_0>0$, depending only on $h, \olda$, such that
\begin{equation}
|\gamma_w(s)| \leq \Cquad e^{\frac{2+\alpha}{2}s},\qquad s\in(-\infty,s_0].
\label{stima_gammaw}
\end{equation}
%
Computing explicitly the eigenvalues
\[
\lambda_1 = 0,\qquad\lambda_2 = \frac{\alpha}{2},\qquad\lambda_3 = 1,\qquad
\lambda_4 = \frac{2+\alpha}{2},
\]
and the eigenvectors
\[
\begin{array}{llll}
  \rho_1 =(1,0,0,0)^T,\qquad
  &\rho_2 =(0,1,0,\frac{\alpha}{2})^T,\qquad
  &\rho_3 =(0,1,0,1)^T,\qquad
  &\rho_4 =(1,0,\frac{2+\alpha}{2},0)^T, \cr
\end{array}
\]
of the matrix $M$, we find that
there exists a constant $C_1>0$ such that
\begin{equation}
|e^{Ms}| \le C_1 e^{\frac{2+\alpha}{2}s},\quad s\in[0,+\infty).   
\label{stima_eMs}
\end{equation}
Let us consider a solution of the homogeneous equation
$\frac{d\gamma}{ds} = M \gamma$ of the form
\begin{equation}
\label{eq:P2.63}
\gamma_\delta(s) = 
e^{\frac{2+\alpha}{2}s} \delta \rho_4,\hskip 1cm \delta\in\R.
\end{equation}
Given $K > 0$ and $c\in (0, \frac{\alpha}{2}]$,
consider the complete metric space of the continuous maps 
\begin{equation}
X =\{ \gamma :(-\infty, s_0]\rightarrow \R^4 : |(\gamma -\gamma_\delta)(s)| \le
K e^{(1 + c)s}\},
\label{eq:P2.64}
\end{equation}
endowed with the distance
\begin{equation}
\label{eq:P2.64.1}
d(\gamma, \tilde \gamma)=\max_{s\in(-\infty, s_0]}|\gamma(s) -
\tilde\gamma(s)| e^{-s}.
\end{equation}
For each fixed $\delta$ and for $K$ large enough we have
\[
\gamma_w\in X
\]
for all solutions $w$ of \eqref{eq:newtEqPartAlpha},
(\ref{eq:P2.14}) corresponding to given
values of $h, \olda, \mathsf{n}$. Moreover, solutions of
\eqref{eq:newtEqPartAlpha}
correspond
to continuous solutions $\gamma :(-\infty,
s_0]\rightarrow \R^4$ of the nonlinear integral equation
\begin{equation}
\gamma(s)=\gamma_\delta(s) + \int_{-\infty}^s e^{M(s - r)}{\mathcal
N}(\gamma(r),r)dr.
\label{integral_eq}
\end{equation}
We can show that the map
\begin{equation}
\label{eq:P2.66}
(T\gamma)(s)=\gamma_\delta(s) + \int_{-\infty}^s e^{M(s - r)}{\mathcal
N}(\gamma(r),r)dr
\end{equation}
defines a contraction on $X$ for $-s_0> 0$ sufficiently large,
implying that equation (\ref{integral_eq}) has a unique solution for
each $\delta \in \R$.  Moreover, the choice of $\delta$ is uniquely
determined by the value of $\olda$ in (\ref{lim_pdot}).

\noindent From the estimates 
\begin{equation}
\mathcal U, \mathcal V, \mathcal W = O(e^{\alpha s}),\qquad 
\mathcal A, \mathcal B = O(1)
\hskip 0.5cm s\in(-\infty,s_0]
\label{UVAB_est}
\end{equation}
and, for the gradients,
\begin{equation}
\mathcal {U}_\gamma, \mathcal {V}_\gamma, 
\mathcal {A}_\gamma, \mathcal {B}_\gamma = O(e^{-\frac{(2-\alpha)}{2}s}),
\qquad s\in(-\infty,s_0],
\label{grad_est}
\end{equation}
%
%
we get
\begin{equation}
  |\mathcal N(\gamma(s),s)| \le Ce^{\frac{2+3\alpha}{2}s}
  \qquad
|\mathcal N(\gamma(s),s) -{\mathcal N}(\tilde{\gamma}(s),s)|  
\le Ce^{(1+\alpha)s}d(\gamma, \tilde \gamma)
\label{N_DeltaN}
\end{equation}
and, by \eqref{stima_eMs},
\begin{equation}
|(T\gamma)(s) - \gamma_\delta(s)|  \le Ce^{\frac{2+3\alpha}{2}s},
\quad s\in(-\infty,s_0],\label{eq:P2.72}
\end{equation}
  %
and
\begin{equation}
d(T\gamma, T\tilde\gamma)\le Ce^{\alpha s_0} d(\gamma, \tilde \gamma), \qquad \forall \gamma, \tilde{\gamma}\in X .
\label{eq:P2.73}
\end{equation}
Relation \eqref{eq:P2.73} shows that the map $T: X \rightarrow X$ is a
contraction, provided $-s_0>0$ is sufficiently large.

If $\gamma=(y,z,\eta,\zeta)^T$ is the fixed point of $T$,
(\ref{eq:P2.72}) implies that
\begin{equation}
\label{eq:P2.75}
\lim_{s\rightarrow - \infty}|\gamma(s) - \gamma_\delta(s)|
e^{-\frac{2+\alpha}{2}s} = 0.
\end{equation}
Proposition~\ref{prop:limcoll} and the variable change \eqref{xes}
imply the asymptotic estimates

\begin{equation}
  t \propto
  \frac{2}{2+\alpha} \sqrt{\frac{2^\alpha}{\calpha}}e^{\frac{2+\alpha}{2}s},
  \qquad
\frac{dt}{dx} \propto
\sqrt{\frac{2^\alpha}{\calpha}} e^{\frac{\alpha}{2}s}.
\label{t_dtdx}
\end{equation}
From these asymptotic formulas and \eqref{eq:P2.75}
%
%
it follows that
\[
\delta = \frac{2}{2+\alpha} \sqrt{\frac{2^\alpha}{\calpha}}\olda.
\]

\end{proof}

\section*{Acknowledgments}
We wish to thank S. Terracini for her useful suggestions and comments. We also wish to thank the anonymous referee 
for his/her careful reading of the manuscript, and for the several comments which really helped us to 
improve the final version of the manuscript.
Both authors acknowledge the project MIUR-PRIN 20178CJA2B ``New
frontiers of Celestial Mechanics: theory and applications'' and the GNFM-INdAM (Gruppo Nazionale
per la Fisica Matematica).
This work was partially supported through the H2020 MSCA ETN Stardust-Reloaded, grant agreement number 813644.
\bibliography{mybib}{} 

\begin{thebibliography}{10}

\bibitem{barrabes2006}
E.~Barrab{\'e}s, J.~M. Cors, C.~Pinyol, and J.~Soler.
\newblock Hip-hop solutions of the {$2N$}-body problem.
\newblock {\em Celestial Mechanics and Dynamical Astronomy}, 95(1):5--66, May
  2006.

\bibitem{barutello2008}
V.~Barutello, D.~L. Ferrario, and S.~Terracini.
\newblock On the singularities of generalized solutions to $n$-body type
  problems.
\newblock {\em Int. Math. Res. Not.}, 9:1--78, 2008.

\bibitem{barutello2013}
V.~Barutello, S.~Terracini, and G.~Verzini.
\newblock Entire minimal parabolic trajectories: The planar anisotropic kepler
  problem.
\newblock {\em Archive for Rational Mechanics and Analysis}, 207(2):583--609,
  Feb 2013.

\bibitem{bcz91}
U.~Bessi and V.~Coti~Zelati.
\newblock Symmetries and noncollision closed orbits for planar {$N$}-body-type
  problems.
\newblock {\em Nonlinear Anal.}, 16(6):587--598, 1991.

\bibitem{braides2002gamma}
A.~Braides.
\newblock {\em Gamma-convergence for Beginners}.
\newblock Oxford Lecture Series in Mathe. Oxford University Press, 2002.

\bibitem{chen03}
K.-C. Chen.
\newblock Binary decompositions for planar {$N$}-body problems and symmetric
  periodic solutions.
\newblock {\em Arch. Ration. Mech. Anal.}, 170(3):247--276, 2003.

\bibitem{ch02}
A.~Chenciner.
\newblock Action minimizing solutions of the newtonian {$n$}-body problem: from
  homology to symmetry.
\newblock In {\em Proceedings of the International Congress of Mathematicians,
  Vol. {III} (Beijing, 2002)}, pages 279--294. Higher Ed. Press, Beijing, 2002.

\bibitem{ch05}
A.~Chenciner.
\newblock Symmetries and ``simple'' solutions of the classical {$n$}-body
  problem.
\newblock In {\em X{IV}th {I}nternational {C}ongress on {M}athematical
  {P}hysics}, pages 4--20. World Sci. Publ., Hackensack, NJ, 2005.

\bibitem{CM2000}
A.~Chenciner and R.~Montgomery.
\newblock A remarkable periodic solution of the three-body problem in the case
  of equal masses.
\newblock {\em Annals of Mathematics}, 152(3):881--901, 2000.

\bibitem{hiphop}
A.~Chenciner and A.~Venturelli.
\newblock Minima de l'int\'egrale d'action du probl\`eme newtonien de 4 corps
  de masses \'egales dans {${\bf R}^3$}: orbites ``hip-hop''.
\newblock {\em Cel. Mech. Dyn. Ast.}, 77(2):139--152, 2000.

\bibitem{zelati90}
V.~Coti~Zelati.
\newblock Periodic solutions for {$N$}-body type problems.
\newblock {\em Annales de l'I.H.P. Analyse non lin\'eaire}, 7(5):477--492,
  1990.

\bibitem{dalmaso1993}
G.~Dal~Maso.
\newblock {\em An Introduction to {G}amma-convergence}.
\newblock Progress in nonlinear differential equations and their applications.
  Birkh{\"a}user, 1993.

\bibitem{deGiorgi75}
E.~De~Giorgi.
\newblock Sulla convergenza di alcune successioni d'integrali del tipo
  dell'area.
\newblock {\em Rend. Mat. (6)}, 8:277--294, 1975.

\bibitem{deGiorgi77}
E.~De~Giorgi.
\newblock {$\Gamma$}-convergenza e {$G$}-convergenza.
\newblock {\em Boll. Un. Mat. Ital. A (5)}, 14(2):213--220, 1977.

\bibitem{deGiorgi75_I}
E.~De~Giorgi and T.~Franzoni.
\newblock Su un tipo di convergenza variazionale.
\newblock {\em Atti Accad. Naz. Lincei Rend. Cl. Sci. Fis. Mat. Natur. (8)},
  58(6):842--850, 1975.

\bibitem{dgg89}
M.~Degiovanni and F.~Giannoni.
\newblock Dynamical systems with {N}ewtonian type potentials.
\newblock {\em Ann. Scuola Norm. Sup. Pisa Cl. Sci. (4)}, 15(3):467--494, 1988.

\bibitem{MFwebGamma}
M.~Fenucci.
\newblock \url{http://adams.dm.unipi.it/~fenucci/research/gammaconv.html}.

\bibitem{FG18}
M.~Fenucci and G.~F. Gronchi.
\newblock On the stability of periodic {$N$}-body motions with the symmetry of
  {P}latonic polyhedra.
\newblock {\em Nonlinearity}, 31(11):4935, 2018.

\bibitem{FT2004}
D.~L. Ferrario and S.~Terracini.
\newblock On the existence of collisionless equivariant minimizers for the
  classical {$n$}-body problem.
\newblock {\em Inventiones mathematicae}, 155(2):305--362, 2004.

\bibitem{ferrario07}
D.L. Ferrario.
\newblock Transitive decomposition of symmetry groups for the $n$-body problem.
\newblock {\em Adv. Math.}, 213(2):763--784, 2007.

\bibitem{fgfg14}
G.~Fusco and G.~F. Gronchi.
\newblock Platonic polyhedra, periodic orbits and chaotic motions in the
  {$N$}-body problem with non-newtonian forces.
\newblock {\em Journal of Dynamics and Differential Equations}, 26:817--841,
  2014.

\bibitem{fgn10}
G.~Fusco, G.~F. Gronchi, and P.~Negrini.
\newblock Platonic polyhedra, topological constraints and periodic solutions
  of the classical {$N$}-body problem.
\newblock {\em Inventiones mathematicae}, 185(2):283--332, 2011.

\bibitem{gordon77}
W.~B. Gordon.
\newblock A minimizing property of {K}eplerian orbits.
\newblock {\em Amer. J. Math.}, 99(5):961--971, 1977.

\bibitem{FJ19}
Fenucci M. and Á~Jorba.
\newblock Braids with the symmetries of {P}latonic polyhedra in the {C}oulomb
  {$(N+1)$}-body problem.
\newblock {\em Communications in Nonlinear Science and Numerical Simulation},
  83, 2020.

\bibitem{marchal01}
C.~Marchal.
\newblock How the method of minimization of action avoids singularities.
\newblock {\em Celestial Mech. Dynam. Astronom.}, 83(1-4):325--353, 2002.
\newblock Modern celestial mechanics: from theory to applications (Rome, 2001).

\bibitem{maxwell1859}
J.C. Maxwell.
\newblock {\em On the stability of the motion of {S}aturn's Rings}.
\newblock Macmillan \& Company, 1859.

\bibitem{mcgehee1981}
R.~McGehee.
\newblock Double collisions for a classical particle system with
  nongravitational interactions.
\newblock {\em Commentarii Mathematici Helvetici}, 56(1):524--557, Dec 1981.

\bibitem{montgomery2018}
R.~Montgomery.
\newblock Metric cones, n-body collisions, and {M}archal's lemma.
\newblock {\em arXiv e-prints}, page arXiv:1804.03059, Apr 2018.

\bibitem{poincare1896}
H.~Poincar\'{e}.
\newblock Sur les solutions p\'{e}riodiques et le principe de moindre action.
\newblock {\em C. R. Acad. Sci.}, 123:915--918, 1896.

\bibitem{poincare1897}
H.~Poincar\'{e}.
\newblock {Sur les solutions p\'{e}riodiques et le principe de moindre action}.
\newblock {\em Comptes rendus hebdomadaires de l'Acad\'{e}mie des sciences de
  Paris}, 124:713--716, 1897.

\bibitem{RT95}
M.~Ramos and S.~Terracini.
\newblock Noncollision periodic solutions to some singular dynamical systems
  with very weak forces.
\newblock {\em Journal of Differential Equations}, 118(1):121 -- 152, 1995.

\bibitem{saariBook}
D.~G. Saari.
\newblock {\em Collisions, rings, and other {N}ewtonian {$N$}-body problems},
  volume 104 of {\em CBMS Regional Conference Series in Mathematics}.
\newblock Published for the Conference Board of the Mathematical Sciences,
  Washington, DC; by the American Mathematical Society, Providence, RI, 2005.

\bibitem{simo:newfamilies}
C.~Sim{\'o}.
\newblock New families of solutions in {$N$}-body problems.
\newblock In Carles Casacuberta, Rosa~Maria Mir{\'o}-Roig, Joan Verdera, and
  Sebasti{\`a} Xamb{\'o}-Descamps, editors, {\em European Congress of
  Mathematics: Barcelona, July 10--14, 2000, Volume I}, pages 101--115, Basel,
  2001. Birkh{\"a}user Basel.

\bibitem{simo2000}
C.~Sim\'o.
\newblock {\em Periodic orbits of the planar {$N$}-body problem with equal
  masses and all bodies on the same path}, pages 265--284.
\newblock IoP Publishing, 2001.

\bibitem{terrvent07}
S.~Terracini and A.~Venturelli.
\newblock Symmetric trajectories for the {$2N$}-body problem with equal masses.
\newblock {\em Arch. Ration. Mech. Anal.}, 184(3):465--493, 2007.

\bibitem{wintner1941}
A.~Wintner.
\newblock {\em The Analytical Foundation of Celestial Mechanics}.
\newblock Princeton mathematical series. H. Milford, Oxford University Press,
  1941.

\end{thebibliography}
\bibliographystyle{plain}
\end{document}